\newcolumntype{Y}{>{\centering\arraybackslash}X}
\newtheorem{lemma}{Lemma}[section]
\newtheorem{theorem}[lemma]{Theorem}
\newtheorem{corollary}[lemma]{Corollary}
\newtheorem{observation}[lemma]{Observation}
\theoremstyle{definition}\newtheorem{definition}[lemma]{Definition}
\newcommand{\accept}[0]{\textsc{accept}\xspace}
\newcommand{\reject}[0]{\textsc{reject}\xspace}
\newcommand\esttoolow{\textup{\textsc{low}}\xspace}
\newcommand{\abs}[1]{\left|{#1}\right|}
\newcommand{\cond}{\middle |}
\newcommand{\floor}[1]{{\left\lfloor{#1}\right\rfloor}}
\newcommand{\ceil}[1]{{\left\lceil{#1}\right\rceil}}
\newcommand{\E}{\mathop{{\rm E}\/}}
\newcommand{\Var}{\mathop{{\rm Var}\/}}
\newcommand{\Ct}{\mathop{{\rm Ct}\/}}
\newcommand{\eps}{\varepsilon}
\newcommand{\poly}{\mathrm{poly}}
\newcommand{\Bin}{\mathrm{Bin}}
\newcommand{\Geo}{\mathrm{Geo}}
\newcommand{\CDF}{\mathrm{CDF}}
\newcommand{\median}{\mathrm{median}}
\newcommand{\dhist}{\ensuremath{D_\mathrm{H}}}
\newcommand{\dtv}{\ensuremath{d_\mathrm{TV}}}
\title{Optimal mass estimation in the conditional sampling model}
\author{Tomer Adar\thanks{Technion - Israel Institute of Technology, Israel. Email: \href{mailto:tomer-adar@campus.technion.ac.il}{tomer-adar@campus.technion.ac.il}.} \and Eldar Fischer\thanks{Technion - Israel Institute of Technology, Israel. Email: \href{mailto:eldar@cs.technion.ac.il}{eldar@cs.technion.ac.il}. Research supported by an Israel Science Foundation grant number 879/22.} \and Amit Levi\thanks{University of Haifa, Israel. Email: \href{mailto:alevi@cs.haifa.ac.il}{alevi@cs.haifa.ac.il}.}}
\newcommand\procnameZmainZsingle{\textsf{Estimate-element}\xspace}
\newcommand\procnameZmainZsingleHREF{\hyperref[fig:alg:estimation-task]{\procnameZmainZsingle}\xspace}
\newcommand\procnameZsaturationZawareZest{\textsf{SA-Est}\xspace}
\newcommand\procnameZsaturationZawareZestHREF{\hyperref[fig:alg:saturation-aware-estimator]{\procnameZsaturationZawareZest}\xspace}
\newcommand\procnameZtargetZtestZexplicit{\textsf{Target-test-explicit}\xspace}
\newcommand\procnameZtargetZtestZexplicitHREF{\hyperref[fig:alg:target-test-explicit]{\procnameZtargetZtestZexplicit}\xspace}
\newcommand\procnameZtargetZtest{\textsf{Target-test}\xspace}
\newcommand\procnameZtargetZtestHREF{\hyperref[fig:alg:target-test]{\procnameZtargetZtest}\xspace}
\newcommand\procnameZtargetZtestZgross{\textsf{Target-test-gross}\xspace}
\newcommand\procnameZtargetZtestZgrossHREF{\hyperref[fig:alg:target-test-gross]{\procnameZtargetZtestZgross}\xspace}
\newcommand\procnameZpreamble{\textsf{Reference-estimation}\xspace}
\newcommand\procnameZpreambleHREF{\hyperref[fig:alg:est-mu-x-s-x]{\procnameZpreamble}\xspace}
\newcommand\procnameZuncertainZcomparator{\textsf{Weak-uncertain-comparator}\xspace}
\newcommand\procnameZuncertainZcomparatorHREF{\hyperref[fig:alg:uncertain-comparator-new]{\procnameZuncertainZcomparator}\xspace}
\newcommand\procnameZfindZgoodZalpha{\textsf{Find-good-$\alpha$}\xspace}
\newcommand\procnameZfindZgoodZalphaHREF{\hyperref[fig:alg:find-good-alpha-new]{\procnameZfindZgoodZalpha}\xspace}
\newcommand\procnameZestZexpectedZbeta{\textsf{Estimate-$\E[\beta_{x,\alpha}]$}\xspace}
\newcommand\procnameZestZexpectedZbetaHREF{\hyperref[fig:alg:est-beta]{\procnameZestZexpectedZbeta}\xspace}
\newcommand\procnameZinitializeZnewZVx{\textsf{Initialize-new-$V_x$}\xspace}
\newcommand\procnameZinitializeZnewZVxHREF{\hyperref[fig:alg:draw-secret-vx]{\procnameZinitializeZnewZVx}\xspace}
\newcommand\procnameZVxZquery{\textsf{$V_x$-Query}\xspace}
\newcommand\procnameZVxZqueryHREF{\hyperref[fig:alg:query-lazy-vx]{\procnameZVxZquery}\xspace}
\newcommand\procnameZestZsingleZbeta{\textsf{Estimate-$\beta_{x,\alpha}$}\xspace}
\newcommand\procnameZestZsingleZbetaHREF{\hyperref[fig:alg:est-beta-single]{\procnameZestZsingleZbeta}\xspace}
\newcommand\procnameZestZexpectedZhZbeta{\textsf{Estimate-scaled-result}\xspace}
\newcommand\procnameZestZexpectedZhZbetaHREF{\hyperref[fig:alg:est-h-beta]{\procnameZestZexpectedZhZbeta}\xspace}
\newcommand\procnameZstrictZbinaryZsearch{\textsf{Uncertain-binary-search}\xspace}
\newcommand\procnameZstrictZbinaryZsearchHREF{\hyperref[fig:alg:strict-binary-search]{\procnameZstrictZbinaryZsearch}\xspace}
\newcommand\procnameZlearnZhistogramZbuckets{\textsf{Learn-histogram-buckets}\xspace}
\newcommand\procnameZlearnZhistogramZbucketsHREF{\hyperref[fig:alg:peekaboo-learn-histogram-buckets]{\procnameZlearnZhistogramZbuckets}\xspace}
\newcommand\procnameZestimateZboundedZratio{\textsf{Estimate-bounded-ratio}\xspace}
\newcommand\procnameZestimateZboundedZratioHREF{\hyperref[fig:alg:est-bounded-ratio]{\procnameZestimateZboundedZratio}\xspace}
\newcommand\procnameZinitializeZcondZsimulator{\textsf{Initialize-COND-simulator}\xspace}
\newcommand\procnameZsimulateZcond{\textsf{Sample-COND-simulator}\xspace}
\newcommand\paperZtargeterrZpreamble{{\ensuremath{\frac{1}{4}c\eps}}}
\newcommand\paperZtargeterrZevbeta{{\ensuremath{\frac{1}{10^9}}}}
\newcommand\paperZtargeterrZestsingle{{\ensuremath{\frac{\eps^5}{10^{20} (\ln \eps^{-1})^5}}}}
\newcommand\paperZhardcodeZdelta{\ensuremath{\frac{1}{200}}}
\newcommand\paperZvalofZkappa{{\ensuremath{10^{-9} / 45}}}
\newcommand\branchZlbndZexpZdiv{\ensuremath{{\log N}}}
\newcommand\branchZubndZexp{\ensuremath{{8 \log^3 N}}}
\newcommand\branchZubndZq{\ensuremath{{\log \log N - 2 \log \log \log N}}}
\newcommand\branchZfracZbadZkZchoices{\ensuremath{{\frac{1}{\log \log \log N}}}}
\newcommand\branchZfracZbadZleafZforZgoodZk{\ensuremath{{\frac{\log \log \log N}{10 \log \log N}}}}
\begin{document}

\begin{titlepage}
    \maketitle
    \thispagestyle{empty}
    \pagestyle{empty}
    
    \begin{abstract}
        The conditional sampling model, introduced by Canonne, Ron and Servedio (SODA 2014, SIAM J.\ Comput.\ 2015) and independently by Chakraborty, Fischer, Goldhirsh and Matsliah (ITCS 2013, SIAM J.\ Comput.\ 2016), is a common framework for a number of studies concerning strengthened models of distribution testing. A core task in these investigations is that of estimating the mass of individual elements. The above mentioned works, and the improvement of Kumar, Meel and Pote (AISTATS 2025), provided polylogarithmic algorithms for this task.

        In this work we shatter the polylogarithmic barrier, and provide an estimator for the mass of individual elements that uses only $O(\log \log N) + O(\poly(1/\eps))$ conditional samples. We complement this result with an $\Omega(\log\log N)$ lower bound.
        
        We then show that our mass estimator provides an improvement (and in some cases a unifying framework) for a number of related tasks, such as testing by learning of any label-invariant property, and distance estimation between two (unknown) distributions. In light of some known lower bounds for common restricted models, our results imply that the full power of the conditional model is indeed required for the doubly-logarithmic upper bound.
        
        Finally, we exponentially improve the previous lower bound on testing by learning of label-invariant properties from double-logarithmic to $\Omega(\log N)$ conditional samples, whereas our testing by learning algorithm provides an upper bound of $O(\poly(1/\eps)\cdot\log N \log \log N)$.
    \end{abstract}

    \newpage
    \setcounter{tocdepth}{2}
    \tableofcontents
\end{titlepage}

\section{Introduction}
\label{sec:intro}

The property testing framework~\cite{GGR98,RS96} deals with approximate decision making in situations where the input data cannot be read in its entirety. Instead, the algorithm is only allowed to read a very small fraction of the data and deduce some global property based on the observed information.

A well-investigated area of property testing focuses on examining the properties of distributions. In this context, the algorithm can access independent samples from a discrete distribution over $\{1,\ldots,N\}$, and must determine whether to accept or reject the input based on these samples. Specifically, the algorithm receives a parameter $\eps > 0$ and is required to accept any input that meets the property to be tested (with high probability), while rejecting any input that is $\eps$-far (in terms of total variation) from any distribution that fulfills the property (again, with high probability). This model was explicitly defined in \cite{batu-FRSW2000,batuFFKRW2001,gr11} and has garnered considerable attention over the past few decades. 

Somewhat unsurprisingly, a typical sample complexity for distribution testing algorithms is $\widetilde{O}(N^\delta)$ for some constant $\delta<1$. Even for testing whether a distribution is uniform, one of the most basic and simple distribution properties, a tight bound of $\Theta(\sqrt{N}/\eps^2)$ is known~\cite{paninski2008coincidence,gr11}.  When studying distributions supported over extremely large domains, this sample complexity effectively makes testing intractable. To circumvent this problem, several competing approaches were considered.

The first approach involves restricting the class of input distributions (e.g., restricting the input distribution to be monotone~\cite{RS09} or a product distribution~\cite{baynets17,daskalakis2019testing}). The second approach considers a model equipped with a more relaxed distance metric (usually coupled with an even weaker query model), such as the Huge Object Model~\cite{GR23,chakraborty2023testing,adar2024refining,supp24,chakraborty2024testing}, which uses the earth-mover distance metric (as defined in those works). The third approach, which is the main focus of this work, investigates stronger query models.

One of the earliest models suggested to tackle the scaling problem is the \emph{conditional sampling} model. This model was introduced independently by Chakraborty, Fischer, Goldhirsh, and Matsliah~\cite{CFGM16}, and Canonne, Ron, and Servedio~\cite{ crs15}. The conditional model allows more general queries: namely, the algorithm may specify an arbitrary subset of the domain and request a sample from the distribution conditioned on it belonging to the subset. In many cases, the conditional sampling model circumvents sample-complexity lower bounds. Since its introduction, there has been significant study into the complexity of testing a number of properties of distributions under conditional samples, in both adaptive and non-adaptive settings~\cite{canonne2020survey, falahatgar2015faster, ACK15b, BCG19, BC18, KT19, FLV19}.
Beyond distribution testing, this model of conditional sampling has found applications in sublinear algorithms~\cite{GTZ17},  group testing~\cite{ACK15a}, and crowdsourcing~\cite{GTZ18}.

In this work we concentrate on a core task that is useful to many investigations of distribution testing. Consider a task that, given $x\in\{1,\ldots,N\}$, attempts to provide a multiplicative approximation of the probability of drawing $x$ according to the input distribution $\mu$, denoted by $\mu(x)$. It was first described as the \emph{evaluation oracle} in \cite{RS09}. If we are able to do it efficiently for all but a small probability set of the possible elements, then we can solve other tasks. Algorithms that simulate the multiplicative estimation task appear in \cite{crs15} and in \cite{CFGM16}. Both works independently define an implementation with $\poly(\log(N),1/\eps)$ sample-complexity of the evaluation oracle and use it to show their results (equivalence testing in \cite{crs15}, a universal tester for label-invariant properties in \cite{CFGM16}). Our main contribution is a radically improved algorithm for this task, that uses only $\log(\log(N))$ many samples, with a polynomial dependency on $\eps$ and an additional approximation parameter that we de-couple from $\eps$ and specify later.

As an example application, consider the task of \emph{distance estimation} in the conditional sampling model. In this task the algorithm receives (conditional) sampling access to two unknown distributions $\mu$ and $\tau$ supported over $\{1,\ldots, N\}$, and is required to estimate their total-variation distance within an additive error parameter. In the standard sampling model, tight bounds of $\Theta(N/\log N)$ are known even for estimating the distance from uniformity \cite{vv11, vv10a, vv10b}.

When considering the conditional sampling model, one can do much better. In \cite{crs15, subcubetolerant23}, an algorithm using $O(\poly(\log N)/\poly(\eps))$ conditional samples was established. Later, \cite{falahatgar2015faster} improved the complexity in the easier task of equivalence testing, which is distinguishing between zero distance and a distance greater than the approximation parameter, to $\tilde{O}(\log \log N / \eps^5)$, and \cite{chakraborty2024tight} obtained a corresponding lower bound of $\tilde{\Omega}(\log \log N)$.

In this paper we apply our improved distribution approximator to drastically improve the upper bound for estimating the distance between two unknown distributions using conditional samples to $O(\log \log N / \eps^2 + 1 / \eps^7)\cdot\poly\log(\eps^{-1})$. Based on our core approximator, we also improve the polynomial $\eps$-factors of \cite{falahatgar2015faster} and eliminate the polynomial triple-logarithmic $N$-factors. We also use this enahnced estimation module to approximate the histogram of an unknown distribution, being optimal up to poly-double-logarithmic $N$-factors and polynomial $\eps$-factors, and use it to obtain a universal tester for every label-invariant property at this cost.

To complement the picture, we show that there exists a label-invariant property that requires $\Omega(\log N / \eps)$ samples to test, which implies that the above-mentioned universal tester is optimal up to polynomial double logarithmic $N$-factors and polynomial $\eps$-factors. We also show that the core approximation task in itself is nearly optimal in its number of samples.

\subsection{Summary of our results}
\label{subsec:summary::of:intro}

\paragraph{Table of results} The following table summarizes our results, except for the two lower bounds marked by ``$(*)$'' which are due to \cite{chakraborty2024tight}. The following paragraphs provide more details. In the estimation task of $\mu(x)$, a correct output (with high probability) is guaranteed for every $x$ in some set $G \subseteq \Omega$ satisfying $\mu(G) > 1 - c$. This task has two sample-complexity upper bounds: the first holds for every $x$ in the domain of $\mu$ (even if it does not belong to $G$), and the second is the expected sample complexity when $x$ is unconditionally drawn from $\mu$.

\begin{center}
    \begin{tabularx}{\textwidth}{ |l|c|lY| }
        \hline
        \multicolumn{1}{|c|}{Task} & Lower Bound & \multicolumn{2}{c|}{Upper Bound} \\
        \hline
        \multirow{ 2}{\widthof{$\mu=\tau$ vs.\ $\dtv(\mu,\tau) > \eps$}}{Estimate $(1 \pm \eps)\mu(x)$} & \multirow{2}{*}{$\Omega(\log \log N)$} & 
        all $x$ & $O(\log \log N) + \tilde{O}(\frac{1}{\eps^2 c} + \frac{1}{\eps^5})$ \\
        & &  $x \sim \mu$ & $O(\log \log N) + O(\frac{1}{\eps^4}) \cdot \mathop\mathrm{polylog} (c^{-1}, \eps^{-1})$ \\
        \hline
        $\mu=\tau$ vs.\ $\dtv(\mu,\tau) > \eps$ & $(*)~\tilde{\Omega}(\log \log N)$ & & {$O\left(\frac{\log \log N}{\eps} + \frac{1}{\eps^5}\right) \cdot \mathop\mathrm{polylog} \eps^{-1}$} \\
        \hline
        Estimate $\dtv(\mu,\tau) \pm \eps$ & $(*)~\tilde{\Omega}(\log \log N)$ & & {$O\left(\frac{\log \log N}{\eps^2} + \frac{1}{\eps^7}\right) \cdot \mathop\mathrm{polylog} \eps^{-1}$} \\
        \hline
        Learn histogram of $\mu$ & $\Omega(\log N / \eps)$ & &
        \multirow{3}{*}{$\tilde{O}(1 / \eps^7) \cdot \log N \log \log N$} \\
        \cline{1-2}
        \multirow{2}{\widthof{$\mu=\tau$ vs.\ $\dtv(\mu,\tau) > \eps$}}{Label-invariant universal tester} & $\Omega(\log N / \eps)$ & & \\
        & (worst property) & & \\
        \hline
    \end{tabularx}
\end{center}

\paragraph{Single-element mass estimation} This is our core contribution. We use a new approach to show a tight upper bound for estimating the mass of a given element within $(1 \pm \eps)$-factor in the fully conditional (adaptive) model, where the mass of the set of eligible elements is at least $1-c$ for a second approximation parameter $c$ (all prior works implicitly set $c=O(\eps)$, which is sufficient for most applications). The old approaches (for example descending the tree of dyadic intervals as in \cite{CFGM16}) all have poly-logarithmic factors, which are unavoidable since they are also implementable in more restricted conditional models (such as interval conditioning \cite{crs15} and subcube conditioning \cite{BC18}) in which equivalence testing is known to be poly-logarithmic hard.

\begin{theorem}[Informal statement of Theorem \ref{th:estimation-task}] \label{th:summary:estimation-task}
    Let $\mu$ be a distribution over $\Omega = \{1,\ldots,N\}$ that is accessible through the fully conditional oracle, and let $\eps,c>0$ be the given approximation parameters. There exists a set $G \subseteq \Omega$ of mass $\mu(G) > 1 - c$ such that for every $x \in G$, we can algorithmically estimate $\mu(x)$ within $(1 \pm \eps)$-factor with probability $2/3$. The sample complexity is bounded by $O(\log \log N) + \tilde{O}\left(\frac{1}{\eps^2 c} + \frac{1}{\eps^5}\right)$. If $x$ is drawn from $\mu$, then the expected sample complexity is only $O(\log \log N) + O\left(\frac{1}{\eps^4}\right) \cdot \poly(\log c^{-1}, \log \eps^{-1})$.
\end{theorem}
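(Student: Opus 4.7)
The plan is to break the polylogarithmic barrier by replacing the dyadic-tree descent of depth $\Theta(\log N)$ used in earlier works with a \emph{doubly-exponential scan} over the mass scale of $x$, which needs only $O(\log\log N)$ candidate scales. For a scale parameter $\alpha \in (0,1]$, I would implicitly define a random subset $V_x \subseteq \Omega$ that contains $x$ and includes each other element independently with probability $\alpha$ (lazily realized by querying the conditional oracle on demand). A conditional sample from $\mu$ restricted to $V_x$ equals $x$ with probability $\beta_{x,\alpha} := \mu(x)/\mu(V_x)$. When $\alpha \gg \mu(x)$, $\E[\mu(V_x)] \approx \alpha$ and so $\beta_{x,\alpha} \approx \mu(x)/\alpha$; when $\alpha \ll \mu(x)$ the set $V_x$ is essentially $\{x\}$ and $\beta_{x,\alpha} \approx 1$. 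The transition thus sits at $\alpha = \Theta(\mu(x))$.

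For the coarse phase, I would consider scales $\alpha_i = 2^{-2^i}$ for $i = 0, 1, \ldots, \lceil \log_2 \log_2 N \rceil$, so that consecutive scales span a square-root gap in magnitude, and locate the transition index $i^*$ using a weak comparator: a constant-sample Bernoulli-style test that distinguishes $\beta_{x,\alpha_i} \approx 1$ from $\beta_{x,\alpha_i}$ bounded away from $1$. Whether run as an uncertain binary search over indices or as a straight scan, this uses $O(\log\log N)$ conditional samples in total and pins $\mu(x)$ down to a constant multiplicative window. For the fine phase, once $\alpha^* = \Theta(\mu(x))$ has been identified, I would estimate $\E[\beta_{x,\alpha^*}]$ to within a $(1 \pm \eps)$ factor by averaging $\poly(1/\eps)$ independent conditional samples, each drawn against a freshly independent $V_x$, and then invert the approximate identity $\mu(x) \approx \alpha^* \cdot \E[\beta_{x,\alpha^*}]$. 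The total cost is then additive: $O(\log\log N) + \poly(1/\eps)$.

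The main obstacle I foresee is robustness of the coarse phase against adversarial mass distributions near $x$. If some element $y$ with $\mu(y) \gg \mu(x)$ is included in $V_x$ at the transition scale, then $\mu(V_x)$ is dominated by $\mu(y)$ and $\beta_{x,\alpha}$ is deflated unpredictably, confusing the comparator. The theorem statement already anticipates this by restricting correctness to $x$ in a set $G$ with $\mu(G) > 1 - c$. A key technical step is thus to define $G$ precisely -- informally, the set of $x$ whose mass is not dwarfed by too many heavier elements at any of the $O(\log\log N)$ relevant scales -- and to bound $\mu(\Omega \setminus G) \leq c$ via a union bound across scales. Secondary challenges include implementing $V_x$ lazily but consistently across rounds, so that shared and fresh randomness are cleanly separated for the concentration arguments, and showing that the $c^{-1}$ dependence disappears in expectation when $x \sim \mu$, so that the failure set contributes only an additive $O(c)$ to the expected cost rather than a multiplicative $1/c$ blow-up.
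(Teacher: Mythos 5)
There are two genuine gaps in your plan, each by itself fatal.

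First, your $V_x$ is just a uniformly $\alpha$-thinned copy of $\Omega$; it makes no attempt to exclude elements much heavier than $x$. The paper's central construction is precisely the opposite: it builds a \emph{target set} $V_x$ using $O(\log\frac{1}{\eps c})$ pairwise conditional samples per candidate $y$ (the procedure \procnameZtargetZtest) so that, with overwhelming probability, $V_x$ contains all $y$ with $\mu(y)\le\mu(x)$, none with $\mu(y)\ge 1.2\mu(x)$, and a random subset in between; only then does it intersect with a plain $\alpha$-filter $A_\alpha$. This filtering is what makes $\mu(V_x\cap A_\alpha)$ a sum of variables each bounded by $1.2\mu(x)$, which is the entire content of the Chernoff/MGF bounds and the variance bound $\Var[h(\beta_{x,\alpha})]\le 100$. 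Without it, a single heavy atom $y$ is occasionally thinned into $V_x$ and dominates $\mu(V_x)$, giving $\beta_{x,\alpha}$ unbounded variance. You propose to absorb this failure mode into the exception set $G$, but that cannot work: take $\mu$ with one atom of mass $1/2$ and $N-1$ atoms of mass $\frac{1}{2(N-1)}$ each. Every light atom has $\CDF_\mu(x)=1/2$, so they all must lie in $G$ for any $c<1/2$, yet your $V_x$ at the transition scale is contaminated by the heavy atom with exactly the problematic probability. The paper's $G$ (essentially $\{x: w_x\ge c\}$ where $w_x=\mu(x)+s_x\ge\CDF_\mu(x)$) handles only the orthogonal issue of $x$ being in the extreme light tail; it is not a dumping ground for heavy-contamination failures.

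Second, your coarse phase cannot deliver a constant-factor window. Scales $\alpha_i=2^{-2^i}$ for $i\le\lceil\log_2\log_2 N\rceil$ have consecutive ratios $\alpha_i/\alpha_{i+1}=2^{2^i}$, which at the top of the range is about $\sqrt N$. A straight scan (or any search) over these $O(\log\log N)$ indices therefore localizes $\alpha$ only to a polynomially wide multiplicative window, not a constant one; you cannot then ``invert'' to recover $\mu(x)$ to within $1\pm\eps$. The paper obtains $O(\log\log N)$ queries by a different route: the candidate scales are $\alpha=2^{-i}$ for $i\in\{0,\dots,O(\log N)\}$ (constant-ratio spacing, $O(\log N)$ of them), and the cost is compressed by running an \emph{uncertain-comparator binary search} over those $O(\log N)$ indices at depth $O(\log\log N)$, using the monotonicity of $\alpha\mapsto\E[\beta_{x,\alpha}]$ and a bespoke random-walk analysis (Lemma \ref{lemma:linear-random-walk}) to avoid the $O(\log\log\log N)$ amplification penalty. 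The ``square-root gap'' scan in your plan is a different algorithm and it does not have the required precision, so the fine phase that follows it has nothing to work with.

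A secondary issue: the inversion $\mu(x)\approx\alpha^\ast\E[\beta_{x,\alpha^\ast}]$ silently assumes $\E[\mu(V_x)]\approx\alpha^\ast$, i.e.\ $s_x\approx 1$. In the paper $s_x=\E[\mu(V_x)]$ is an unknown quantity in $[0,1]$ that must be estimated separately (the \procnameZpreamble step) and fed into Observation \ref{obs:key-observation}, $\mu(x)=\alpha s_x / \E\bigl[\tfrac{\beta_{x,\alpha}}{1-\beta_{x,\alpha}}\bigr]$. Once you add the target-set filtering, $s_x$ is genuinely variable and this extra estimation stage is mandatory.
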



The ``$O(\log\log N)$'' part comes from a binary search performed over a range of size $O(\log N)$. By a reduction from a binary search problem back to an estimation task, we also show that this part of the bound is tight (even if we only need to estimate a ``typical'' element from its support).

\begin{theorem}[Informal statement of Theorem \ref{th:lbnd-tight-c-eps-task}]
    There exists $\eps > 0$, so that every algorithm, that can with probability at least $2/3$ estimate the probability mass of an element drawn (unconditionally) from $\mu$ within $(1 \pm \eps)$-multiplicative factor, must draw at least $\Omega(\log \log N)$ conditional samples in expectation.
\end{theorem}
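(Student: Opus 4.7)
The plan is to reduce from a support-size identification problem to the mass-estimation task, and then prove an $\Omega(\log\log N)$ lower bound for identification. Fix a constant $\eps<1/4$. For each $k\in\{1,\dots,\log N\}$, draw a uniformly random size-$2^k$ subset $T_k\subseteq\{1,\dots,N\}$, and let $\mu_k$ be uniform on $T_k$. Every element of the support of $\mu_k$ has mass exactly $2^{-k}$, and these values differ by multiplicative factors of at least $2$ across different $k$. Hence an algorithm that returns a $(1\pm\eps)$-multiplicative estimate $\hat p$ of $\mu(x)$ for $x\sim\mu$ with probability $\ge 2/3$ can identify $k$ by rounding $-\log_2\hat p$ to the nearest integer; this yields an identification algorithm using the same expected number of conditional queries up to an additive constant.

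It thus suffices to lower-bound the expected query complexity of identifying $k$ against the prior in which $k$ is uniform on $\{1,\dots,\log N\}$ and $T_k$ is a uniform size-$2^k$ subset. By Yao's minimax principle, I would fix a deterministic adaptive algorithm producing a transcript $\tau=(Q_1,r_1,\dots,Q_T,r_T)$, and apply Fano's inequality: any algorithm succeeding with probability $\ge 2/3$ must satisfy $I(k;\tau)=\Omega(\log\log N)$. By the chain rule, $I(k;\tau)\le\sum_{i=1}^T I(k;r_i\mid Q_i,r_{<i})$, so the bound reduces to showing a per-query bound $I(k;r_i\mid Q_i,r_{<i})=O(1)$. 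A standard Markov-style argument then converts the resulting worst-case bound into the claimed ``in expectation'' statement.

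The main obstacle is the per-query information bound. Intuitively, a conditional query on a set $Q$ of size $s$ behaves like a single threshold test: by the randomness of $T_k$ it returns a genuine sample (as opposed to an empty-intersection event) with probability $\approx\min(1,\,s\cdot 2^k/N)$, and conditioned on success the returned element is approximately uniform on $Q$ and nearly independent of $k$. So the query effectively resolves a single bit of the form ``is $k\gtrsim\log(N/s)$?'', yielding $O(1)$ bits of information about $k$. Making this rigorous requires analyzing the posterior distribution of $T_k$ given the history: after $T\ll\log\log N$ queries only $O(T)$ elements of $\{1,\dots,N\}$ have been revealed, so $T_k$ is still close to a uniformly random size-$2^k$ subset consistent with the observations, and the desired conditional mutual information bound follows from a careful conditional-entropy computation on this posterior. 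The randomness of $T_k$ is essential, since a deterministic support (e.g.\ $\mu_k$ uniform on $\{1,\dots,2^k\}$) lets a single full-domain sample reveal $k$ up to an additive constant, already collapsing the lower bound.
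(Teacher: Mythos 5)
Your reduction (encode $k$ as the common mass $2^{-k}$ of a random uniform distribution, recover $k$ from a multiplicative estimate) is essentially identical to the paper's Observation~\ref{obs:estimation-to-calc-k}; one small fix is that you should restrict $k$ to a middle range such as $\{\lfloor\frac13\log N\rfloor,\dots,\lceil\frac23\log N\rceil\}$, as the paper does, so that the support is neither so small that a constant number of samples enumerates it nor so large that the binomial/hypergeometric concentration degenerates. Where you genuinely diverge from the paper is in how you turn ``each query gives one threshold bit about $k$'' into a lower bound. The paper does this by constructing two coupled interim oracles: a \emph{uniform conditional model}, which is a pure simulation driven only by comparisons of $\log(\text{net query size})$ to $k$ (and is therefore provably equivalent to a deterministic binary-search tree, to which Observation~\ref{obs:binary-search-lbnd} applies), and a \emph{leaking conditional model}, which hands the algorithm both a genuine conditional sample and the comparison bit for free. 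Lemmas~\ref{lemma:tons-of-ks-with-hp-good-leaf}--\ref{lemma:leaking-close-to-uniform} then show these two transcript distributions are $o(1)$-close in total variation for most $k$. You replace all of this with Fano's inequality plus a chain-rule decomposition $I(k;\tau)\le\sum_i I(k;r_i\mid r_{<i})$ and a per-query $O(1)$-bit bound. That is a legitimate and arguably cleaner accounting; the two frameworks are attacking the same invariant (the answer to a query is, up to symmetry, a ternary outcome ``collision / fresh / empty'' whose distribution depends on $k$ only through a threshold).

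The gap is exactly where you flag it: the per-query bound $I(k;r_i\mid Q_i,r_{<i})=O(1)$ is the whole ballgame and is not proved. Your sketch (``returns a genuine sample with probability $\approx\min(1,s2^k/N)$, conditioned on success the element is nearly uniform on $Q$'') captures the first query but not the adaptive dynamics. To make it work you need to (i) make the ternary partition of the outcome explicit: $r_i$ collides with some already-seen element of $Q_i$, $r_i$ is a fresh element of $Q_i$, or the intersection is empty; (ii) verify that, conditioned on the ternary outcome and the history, the identity of $r_i$ is uniform on the appropriate set and hence carries zero further information about $k$ --- this is a symmetry/exchangeability argument about the posterior of $T_k$ given the ``known-in'' elements and the ``known-out'' elements accumulated from empty answers, and it is precisely the content the paper packages into Observation~\ref{obs:pr-good-u-given-k-in-the-uniform-model} and Lemma~\ref{lemma:good-u-is-multiplicative-large}; and (iii) bound $H(\text{ternary outcome}\mid r_{<i})\le\log 3$. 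Note also that the ``only $O(T)$ elements have been revealed'' line is slightly misleading: empty responses rule out \emph{all} of $Q_j$, which can be large; the symmetry argument survives (the posterior is still exchangeable over the complement), but one has to say so rather than appeal to only a few elements being touched. Once these steps are in place, Fano gives $T\log 3\ge\frac23\log n-O(1)$ with $n=\Theta(\log N)$, i.e.\ $T=\Omega(\log\log N)$, and the Markov step you mention converts the worst-case bound to expectation. So the proposal is a viable alternative proof with a different (information-theoretic) engine, but the central technical lemma is stated rather than established.
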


We leverage Theorem \ref{th:summary:estimation-task} to obtain the following upper-bound results.

\paragraph{Equivalence $\eps$-testing} We improve the state-of-the-art upper bound of \cite{falahatgar2015faster} that uses $\tilde{O}(\log \log N / \eps^5)$ for $\eps$-testing equivalence of two distributions over $\{1,\ldots,N\}$ in the fully conditional sampling model. We provide two asymptotical speedups: first, the $1/\eps^5$-factor becomes additive rather than multiplicative, whereas $\log \log N$ is only multiplied by $1/\eps$. This is an $\eps^4$-speedup over the previous best for $N$ that is large enough with respect to $\eps$. Second, we remove the poly-triple-logarithmic factors over $N$, leaving a clean $O(\log \log N)$ dependency on the domain size.

\begin{restatable}[Almost-tight upper bound for equivalence testing]{theorem}{thZubndZnontolZequivalence}
\label{th:ubnd-nontol-equivalence}
    Let $\mu$, $\tau$ be two distributions over $\Omega = \{1,\ldots,N\}$ and $\eps > 0$. There exists an algorithm for distinguishing between the case where $\mu=\tau$ and the case where $\dtv(\mu,\tau) > \eps$, using $O((\log \log N / \eps + 1/\eps^5) \cdot \poly(\log \eps^{-1}))$ conditional samples.
\end{restatable}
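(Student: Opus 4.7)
The plan is to reduce equivalence testing to single-element mass estimation, applying Theorem~\ref{th:estimation-task} (as a black box) in parallel to $\mu$ and $\tau$. Draw $k=\Theta(1/\eps)$ independent unconditional samples $x_1,\ldots,x_k$ from $\mu$. For each $i$, run the estimator on $\mu$ and on $\tau$ with multiplicative accuracy $\eps/10$, per-call failure probability $\delta_0=\Theta(\eps)$ (boosted via the standard median trick, contributing a $\poly\log\eps^{-1}$ factor), and coverage parameter $c=\Theta(\eps^2)$, yielding estimates $\widehat\mu(x_i)$ and $\widehat\tau(x_i)$. The algorithm rejects if some ratio $\widehat\mu(x_i)/\widehat\tau(x_i)$ falls outside $[1-\eps/4,\,1+\eps/4]$, or if the total number of conditional samples consumed by all estimator calls exceeds a budget $B=\Theta\bigl((\log\log N/\eps + 1/\eps^5)\cdot\poly\log\eps^{-1}\bigr)$; otherwise it accepts.

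\textbf{Completeness.} When $\mu=\tau$, the choice $c=\Theta(\eps^2)$ and a union bound over the $k$ samples guarantee that with probability $1-o(1)$ every $x_i$ lies simultaneously in the good sets $G_\mu$ and $G_\tau$, so both estimator calls succeed. The $(1\pm\eps/10)$-approximation then forces every observed ratio into the interval $[(1-\eps/10)/(1+\eps/10),\,(1+\eps/10)/(1-\eps/10)]\subset[1-\eps/4,\,1+\eps/4]$, so the ratio test never fires. Because $x_i\sim\mu=\tau$, the ``$x$ drawn from $\mu$'' expected-complexity clause of Theorem~\ref{th:estimation-task} applies symmetrically to both the $\mu$- and the $\tau$-estimator, yielding an expected total complexity of $O(\log\log N/\eps) + O(1/\eps^5)\cdot\poly\log\eps^{-1}$; Markov's inequality bounds the probability of exceeding $B$ by a small constant, for a sufficiently large absolute constant hidden inside $B$.

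\textbf{Soundness.} Assume $\dtv(\mu,\tau)>\eps$ and set $A=\{x:\mu(x)>(1+\eps/2)\tau(x)\}$. Writing $\dtv(\mu,\tau)=\sum_{x:\mu(x)>\tau(x)}(\mu(x)-\tau(x))$ and splitting according to whether $x\in A$ or $\tau(x)<\mu(x)\le(1+\eps/2)\tau(x)$ (on the latter set $\mu(x)-\tau(x)\le(\eps/2)\mu(x)$ and so the contribution is at most $\eps/2$) yields $\mu(A)>\eps/2$. Therefore $k=\Theta(1/\eps)$ samples from $\mu$ hit $A$ with constant probability. For any such $x_i\in A$, either both estimator calls finish within budget and the estimates satisfy $\widehat\mu(x_i)/\widehat\tau(x_i)\ge(1+\eps/2)(1-\eps/10)/(1+\eps/10)>1+\eps/4$, triggering the ratio test; or the overall budget is exhausted first and the budget rule fires. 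Either way the algorithm correctly outputs \reject.

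\textbf{Main obstacle.} The delicate point is that in the soundness case the $\tau$-estimator is invoked on $x_i\sim\mu\neq\tau$, so the expected-complexity clause of Theorem~\ref{th:estimation-task} does not directly apply and the $\tau$-estimator may be slow on atypical $x_i$. The budget-based rejection rule reconciles this asymmetry: in completeness the identity $\mu=\tau$ makes the expected bound valid for both estimators, while in soundness arbitrary behavior of the $\tau$-estimator is harmless because any budget overrun is itself a legitimate reject signal. The remaining work---tuning the constants $c$, $\delta_0$ and the ratio threshold, and carrying the failure-probability amplification through---is routine bookkeeping, and the resulting overhead is absorbed by the $\poly\log\eps^{-1}$ slack in the stated complexity.
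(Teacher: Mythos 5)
Your proposal is correct and takes essentially the same route as the paper: sample $\Theta(1/\eps)$ points from $\mu$, estimate $\mu(x_i)$ and $\tau(x_i)$ via the core estimator, reject on a large ratio deviation, and crucially use a running sample budget as an additional reject signal so that the ``$x\sim\mu$'' expected-complexity clause of Theorem~\ref{th:estimation-task} can be exploited in the completeness case without having to bound the $\tau$-estimator's running time in the far case. The paper packages this as an intermediate peek-oracle algorithm (Algorithm~\ref{fig:alg:peekaboo-non-tolerant-core} plus Lemma~\ref{lemma:almost-ubnd-nontol-equivalence}) followed by the Markov/budget cutoff in the proof of Theorem~\ref{th:ubnd-nontol-equivalence}, but the underlying idea and the way the asymmetry is resolved are the same as in your write-up.
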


\paragraph{Distance estimation between two distributions} We show an almost-tight upper bound for estimating the total variation distance between two distributions over $\{1,\ldots,N\}$ in the fully conditional model. The surprising aspect of this result is the almost-tight gap between the lower bound of the equivalence testing problem and the upper bound of the (harder) distance estimation problem. Having only a small gap depending on the domain size between testing and estimation tasks is relatively uncommon, and in various models there exist examples for polynomial gaps (for example, uniformity in the sampling model \cite{paninski2008coincidence,vv10a, vv11}), and even examples for constant-cost tests with non-constant cost corresponding estimation tasks (for example, \cite{FF06} in the string model, later improved in \cite{pcpp19}).

\begin{restatable}[Almost-tight upper bound for distance estimation]{theorem}{thZubndZestZdtv}
\label{th:ubnd-est-dtv}
    Let $\mu$, $\tau$ be two distributions over $\Omega = \{1,\ldots,N\}$ and $\eps > 0$. There exists an algorithm for estimating $\dtv(\mu,\tau)$ within $\eps$-additive error using $O((\log \log N / \eps^2 + 1/\eps^7) \cdot \poly(\log \eps^{-1}))$ conditional samples.
\end{restatable}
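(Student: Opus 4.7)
The plan is to reduce distance estimation to repeated use of the element-mass estimator from Theorem~\ref{th:summary:estimation-task}. Using
\[
\dtv(\mu,\tau) = \sum_{x:\,\mu(x) > \tau(x)}(\mu(x)-\tau(x)) = \E_{x \sim \mu}\!\left[\psi(x)\right], \qquad \psi(x) := \max\!\left(0,\, 1 - \tfrac{\tau(x)}{\mu(x)}\right),
\]
and noting that $\psi(x) \in [0,1]$, I would draw $k = \Theta(\eps^{-2})$ unconditional samples $x_1,\ldots,x_k \sim \mu$ and approximate each $\psi(x_i)$ on the fly; by Hoeffding this yields an additive $\eps/3$ approximation of $\dtv(\mu,\tau)$ with high probability.

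For each $x_i$ I would invoke the mass estimator of Theorem~\ref{th:summary:estimation-task} twice: once against $\mu$ to get $\hat\mu_i \in (1 \pm \eps/10)\mu(x_i)$, and once against $\tau$ to get $\hat\tau_i \in (1 \pm \eps/10)\tau(x_i)$, each amplified by a median of $O(\log \eps^{-1})$ independent runs so that the per-call failure probability drops to $O(\eps^2)$ and a single union bound absorbs all $2k$ invocations. I would then set $\hat\psi_i := \max(0,\, 1 - \hat\tau_i / \hat\mu_i)$ when both estimators succeed, fall back to $\hat\psi_i = 1$ when the $\tau$-estimator reports failure (treating $\tau(x_i)$ as effectively zero) and to $\hat\psi_i = 0$ when the $\mu$-estimator reports failure, and output $\frac{1}{k}\sum_i \hat\psi_i$. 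Whenever $x_i$ lies in both good sets, the per-sample error is $O(\eps)$ because $\psi$ is $1$-Lipschitz in $\tau(x)/\mu(x)$ on the region where it is non-trivial, and a multiplicative $(1\pm\eps/10)$ estimate of each mass propagates to an $O(\eps)$ additive error on the ratio when the ratio is at most $1$.

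For the sample complexity I exploit that $x_i$ is drawn from $\mu$, which is exactly the ``expected-case'' regime of Theorem~\ref{th:summary:estimation-task}: the $\mu$-estimator costs $O(\log \log N) + O(\eps^{-4}) \cdot \poly(\log \eps^{-1})$ per call in expectation, summing by linearity of expectation (and Markov for the high-probability version) to $O(\log\log N / \eps^2) + O(\eps^{-6}) \cdot \poly(\log \eps^{-1})$ across the $k$ invocations. The $\tau$-estimator, however, is being applied to $x_i \sim \mu$ rather than to $x_i \sim \tau$, so only the worst-case guarantee with ``good-set'' parameter $c = \Theta(\eps)$ applies, yielding $O(\log \log N) + \tilde O(\eps^{-2} c^{-1} + \eps^{-5}) = O(\log \log N) + \tilde O(\eps^{-5})$ per call and $O(\log\log N / \eps^2 + \eps^{-7}) \cdot \poly(\log \eps^{-1})$ in total, matching the target bound.

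The main obstacle will be bounding the bias contributed by elements outside the good sets. Elements $x \notin G_\tau$ have total $\tau$-mass at most $c$, so defaulting to $\hat\psi_i = 1$ there agrees with $\psi(x_i) = 1 - \tau(x_i)/\mu(x_i)$ up to an additive $\tau(x_i)/\mu(x_i)$ per sample, and taking expectation over $x_i \sim \mu$ bounds the aggregate bias by $\sum_{x \notin G_\tau}\tau(x) \le c$. Elements $x \notin G_\mu$ occur with probability at most $c$ under $\mu$, so defaulting to $\hat\psi_i = 0$ adds at most $c$ more to the bias. Choosing $c = \Theta(\eps)$ absorbs both contributions into the $\eps/3$ error budget. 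Aligning the exact ``estimator-failure'' interface with the internals of Theorem~\ref{th:summary:estimation-task} is a technical step but does not affect the asymptotic accounting.
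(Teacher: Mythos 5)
Your proof is correct, and it takes a genuinely different route from the paper's. You use the asymmetric identity $\dtv(\mu,\tau) = \E_{x\sim\mu}\big[\max\{0,\, 1-\tau(x)/\mu(x)\}\big]$ and sample only from $\mu$, whereas the paper's Lemma~\ref{lemma:c-truncation-additive-err-2c} uses the symmetric decomposition $\dtv(\mu,\tau) = \tfrac{1}{2}\big(\E_\mu[\max\{0,1-f_\tau/\mu\}] + \E_\tau[\max\{0,1-f_\mu/\tau\}]\big) \pm 2c$ in terms of $c$-truncated functions, with samples drawn from \emph{both} $\mu$ and $\tau$, and then funnels the whole argument through the generic peek-oracle reduction (Lemma~\ref{lemma:generic-application-k-distributions}), which charges the worst-case cost of Corollary~\ref{cor:worst-case-complexity-of::th:estimation-task} to every estimator call. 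Both decompositions handle the low-CDF elements -- yours via asymmetric defaults ($\hat\psi_i=1$ on a $\tau$-side failure, $\hat\psi_i=0$ on a $\mu$-side failure), the paper's via truncation to zero absorbed into a $\pm 2c$ bias -- and both yield an aggregate bias of $O(c)$ that is killed by setting $c=\Theta(\eps)$, with Hoeffding over $k=\Theta(\eps^{-2})$ samples controlling both the empirical fluctuation and the concentration of the bias contributions. Your accounting is slightly sharper in that the $\mu$-side calls are charged the expected-case $O(\log\log N)+\tilde O(\eps^{-4})$ bound of Corollary~\ref{cor:expected-complexity-of::th:estimation-task} (legitimate since $x_i\sim\mu$), while only the $\tau$-side calls pay the worst-case $O(\log\log N)+\tilde O(\eps^{-5})$; the paper applies the worst-case bound uniformly. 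Asymptotically this makes no difference -- the misaligned $\tau$-estimates dominate and both routes give $O\big((\log\log N/\eps^2 + 1/\eps^7)\cdot\poly\log\eps^{-1}\big)$ -- but your observation that the cost is lopsided is worth noting, and it is exactly the same asymmetry the paper \emph{does} exploit in the equivalence-testing upper bound (Lemma~\ref{lemma:almost-ubnd-nontol-equivalence}).
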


Since the distance estimation task cannot be easier than equivalence testing, whose lower bound is $\tilde{\Omega}(\log \log N)$ \cite{chakraborty2024tight}, this upper bound for distance estimation is optimal, up to poly-triple-logarithmic factors of $N$ and polynomial $\eps$-factors.

\paragraph{Learning the histogram of $\mu$} We show an upper bound for learning the histogram of a given distribution $\mu$ up to a threshold parameter $\eps$ using a quasi-logarithmic in $N$ number of conditional samples. Since histogram learning directly implies testing of any label-invariant property without additional samples, this improves over \cite{CFGM16}, whose label-invariant tester only guarantees a polylogarithmic upper bound. A nearly-matching lower bound for this task follows from specially constructed label-invariant properties (see below).

\begin{theorem}[Informal statement of Theorem~\ref{th:ubnd-histogram}]
    There exists an algorithm that approximates the histogram of an input distribution $\mu$ with accuracy $\eps$, using $\tilde{O}(1/\eps^7) \cdot \log N \log \log N $ conditional samples.
\end{theorem}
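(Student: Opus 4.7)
The plan is to reduce histogram learning to many invocations of the single-element mass estimator of Theorem~\ref{th:summary:estimation-task}. The key observation is that the histogram of a distribution over $\{1, \ldots, N\}$ can be approximated within accuracy $\eps$ by recording, for each dyadic probability bucket $B_i = [2^{-i-1}, 2^{-i})$ with $i = 0, \ldots, O(\log(N/\eps))$, both its total mass $\mu(B_i)$ and (via the dyadic structure) approximately how many elements it contains. Elements of mass smaller than $\eps/N$ contribute at most $\eps$ in total and can be absorbed into the error budget.

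The algorithm draws $T = \tilde O(\log N / \eps^{O(1)})$ i.i.d.\ samples $x_1, \ldots, x_T$ from $\mu$ using conditional sampling on the full domain, and for each sample invokes the mass estimator of Theorem~\ref{th:summary:estimation-task} with accuracy $\Theta(\eps)$ and failure probability $1/T^2$ (the latter amplified by a standard median trick, incurring only a polylogarithmic overhead in $\eps^{-1}$). Each sample is then placed into the dyadic bucket containing its estimated mass, and the empirical fraction $\hat m_i$ of samples in each bucket serves as an estimate of $\mu(B_i)$; combined with the typical dyadic mass $2^{-i}$ in the bucket, this yields the number of elements in each bucket, from which the output histogram is assembled.

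For the analysis, two sources of error must be controlled. First, the multiplicative $(1 \pm O(\eps))$-error in each single-element estimate shifts an element by at most one dyadic bucket, contributing $O(\eps)$ to the final histogram distance. Second, standard Chernoff concentration together with a union bound over the $O(\log(N/\eps))$ buckets ensures that with $T = \tilde O(\log N / \eps^{O(1)})$ samples the empirical bucket masses are jointly accurate to within $O(\eps)$ total. Because each $x_j$ is drawn from $\mu$, the expected cost of a single invocation is $O(\log \log N) + \tilde O(1/\eps^4)$ by the second clause of Theorem~\ref{th:summary:estimation-task}, so the total sample cost is $T \cdot (\log \log N + \tilde O(1/\eps^4)) = \tilde O(1/\eps^7) \cdot \log N \log \log N$ after absorbing polylogarithmic factors in $\eps^{-1}$.

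The main technical obstacle is handling \emph{bucket-boundary effects}: an element whose true mass sits near the boundary between $B_i$ and $B_{i+1}$ may be misclassified with non-negligible probability even when the mass estimator meets its guarantee, so either the bucket partition must be perturbed randomly (to smooth out boundary mass across runs) or a small fraction of ``near-boundary'' samples must be refined with extra accuracy. A second delicate point is the joint calibration of the two error sources so that each contributes at most $O(\eps)$ to the final histogram distance, while simultaneously keeping the per-estimate accuracy inside the $\tilde O(\log\log N + 1/\eps^4)$ regime of Theorem~\ref{th:summary:estimation-task} and the failure-probability union bound over all $T$ estimations costing only a polylogarithmic factor.
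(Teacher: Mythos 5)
Your overall strategy --- sample from $\mu$, estimate each sample's mass via the single-element estimator of Theorem~\ref{th:summary:estimation-task}, sort into multiplicative mass buckets, and read off the histogram from bucket frequencies --- is indeed the paper's strategy. But there is a genuine gap at the heart of your ``key observation'': \emph{dyadic} buckets, of multiplicative width $2$, are far too coarse to produce an $\eps$-accurate histogram. By Definition~\ref{def:histogram-divergence}, $\dhist(\mu;\tau)\le\eps$ requires a permutation $\pi$ under which all but an $\eps$-fraction of $\mu$'s mass satisfies $\mu(x)\in(1\pm\eps)\tau(\pi(x))$. If your output assigns the representative mass $2^{-i}$ to every reconstructed element in bucket $B_i=[2^{-i-1},2^{-i})$, then even for elements classified into the correct bucket, the reported mass can be off by a factor approaching $2$, so $\dhist$ is $\Omega(1)$ rather than $O(\eps)$. (Concretely, take $\mu$ uniform on $n$ elements with $1/n$ just below a dyadic boundary: every output element is assigned roughly double its true mass, the reconstructed support has roughly half the right size, and $\dhist(\mu;\tau)\ge 1-o(1)$.) The ``bucket-shift-by-one'' issue you isolate as the main technical obstacle is secondary; the primary failure is already inside a single bucket and cannot be repaired by randomly perturbing boundaries.

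The fix --- and what the paper does --- is to use buckets of width $e^{\pm 2\hat\eps}$ with $\hat\eps=\Theta(\eps)$, so that there are $t=\Theta(\log(N/\eps)/\eps)$ buckets rather than $\Theta(\log(N/\eps))$. Estimating the resulting bucket distribution to within $O(\eps)$ in total variation requires $\Theta(t/\eps^2)=\Theta(\log(N/\eps)/\eps^3)$ samples (Lemma~\ref{lemma:learn-simple-distribution}), and multiplying by the expected per-sample estimation cost $O(\log\log N)+\tilde O(1/\eps^4)$ gives the claimed $\tilde O(\log N\log\log N/\eps^7)$; your dyadic version would suspiciously beat this bound precisely because its output is too coarse to be correct. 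Your proposal also omits the reconstruction step: noisy bucket-mass estimates must be converted into an actual valid distribution whose bucket distribution is close to the estimated one, which the paper handles with a small constrained optimization (Lemma~\ref{lemma:shitty-integer-optimization}) and a bucket-transform argument. The boundary issue you flag is handled cleanly there by building a slack of $\pm 2$ bucket indices directly into the definition of a bucket function (Definition~\ref{def:bucket-function}), which is exactly what a $(1\pm O(\eps))$-accurate mass estimate can guarantee when buckets have width $e^{\pm 2\hat\eps}$; no random perturbation of the partition is required.
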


\begin{theorem}[Informal statement of Corollary~\ref{cor:lbnd-histogram}]
    For every sufficiently small $\eps > 0$, every algorithm that approximates the histogram of its input distribution with accuracy $\eps$ must draw at least $\Omega(\log N / \eps)$ conditional samples.
\end{theorem}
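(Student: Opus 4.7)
The plan is a direct reduction from $\eps$-testing the hard label-invariant property $P$ mentioned earlier in the introduction (for which an $\Omega(\log N / \eps)$ conditional-sample lower bound is established) to histogram approximation with accuracy roughly $\eps/3$. The key enabling fact is that, for any label-invariant property, there is a set $H_P$ of histograms such that a distribution $\mu$ satisfies $P$ iff $h(\mu) \in H_P$, and more generally $\dtv(\mu, P) = \inf_{h \in H_P} \dhist(h(\mu), h)$; that is, TV farness from $P$ coincides with histogram-distance farness from $H_P$, because within each histogram class one may freely permute labels and choose the optimal alignment.

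Given any algorithm $A$ that approximates the histogram of its input distribution to within $\eps/3$ with constant success probability, I would build an $\eps$-tester for $P$ as follows: run $A$ on $\mu$ to obtain an estimate $\tilde{h}$ of $h(\mu)$, then accept iff $\inf_{h \in H_P} \dhist(\tilde{h}, h) \le 2\eps/3$. The second step is a purely offline computation using no additional conditional samples. For correctness, if $\mu \in P$ then $h(\mu) \in H_P$ and $\dhist(\tilde{h}, h(\mu)) \le \eps/3$, so the tester accepts. If $\mu$ is $\eps$-far from $P$, then every $h \in H_P$ satisfies $\dhist(h(\mu), h) \ge \eps$; combined with $\dhist(\tilde{h}, h(\mu)) \le \eps/3$ this forces $\dhist(\tilde{h}, h) \ge 2\eps/3$ for every $h \in H_P$, so the tester rejects.

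Since the constructed tester uses exactly the conditional samples consumed by $A$, any upper bound on histogram approximation with accuracy $\eps/3$ transfers immediately to an upper bound for $\eps$-testing $P$, contradicting the assumed $\Omega(\log N / \eps)$ lower bound. Rescaling the accuracy parameter by a constant (replacing $\eps$ by $3\eps$ throughout) converts this into the stated $\Omega(\log N / \eps)$ lower bound on histogram approximation, for all sufficiently small $\eps$. The heavy combinatorial lifting — the explicit construction of the hard label-invariant property $P$ and its $\Omega(\log N / \eps)$ lower bound — is done in the earlier theorem and is not repeated here; the only remaining subtlety is the clean identification of TV farness with histogram distance for label-invariant properties, which I would state as a short lemma before the reduction.
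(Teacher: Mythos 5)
Your high-level strategy matches the paper exactly: Corollary~\ref{cor:lbnd-histogram} is proved by invoking the testing-by-learning reduction (Corollary~\ref{cor:universal-tester-label-invariant}) against the lower bound of Theorem~\ref{th:lbnd-label-invariant-testing-new}. However, your ``key enabling fact'' is not correct as stated, and the details of your reduction rely on properties $\dhist$ does not have.

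First, $\dhist(\cdot\,;\cdot)$ (Definition~\ref{def:histogram-divergence}) is an \emph{asymmetric} multiplicative divergence, not a metric: the probability is taken over the first argument and the multiplicative tolerance is applied to the second. Your proposal silently swaps arguments (the learning guarantee is $\dhist(\mu;\tilde h)\le\eps/3$, but your acceptance condition and completeness argument use $\dhist(\tilde h,h(\mu))\le\eps/3$), and your soundness step invokes a reverse triangle inequality for $\dhist$ (``$\dhist(h(\mu),h)\ge\eps$ and $\dhist(\tilde h,h(\mu))\le\eps/3$ force $\dhist(\tilde h,h)\ge 2\eps/3$''). Neither symmetry nor the triangle inequality is available for $\dhist$.

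Second, the claimed equality $\dtv(\mu,P)=\inf_{h\in H_P}\dhist(h(\mu),h)$ is false. The paper establishes only the one-sided bound of Lemma~\ref{lemma:dtv-permutation-le-twice-dhist}: there exists a permutation $\pi$ with $\dtv(\mu,\pi\tau)\le 2\,\dhist(\mu;\tau)$. The paper's Corollary~\ref{cor:universal-tester-label-invariant} is designed precisely to sidestep these issues: it keeps the learned $\tau$ as the \emph{second} argument of $\dhist$ in both the learning guarantee and the acceptance condition ($\dhist(\mu;\tau)\le\eps/4$ and $\exists\mu'\in\mathcal P:\dhist(\mu';\tau)\le\eps/4$), converts to $\dtv$ via Lemma~\ref{lemma:dtv-permutation-le-twice-dhist} (paying a factor of $2$ each time), and only then applies the triangle inequality in the metric $\dtv$. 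This is why the paper uses $\eps/4$-histogram learning rather than $\eps/3$. With that correction --- keep the second argument fixed, go through $\dtv$ rather than claiming a $\dhist$ triangle inequality, and use the one-sided Lemma~\ref{lemma:dtv-permutation-le-twice-dhist} --- your reduction becomes exactly the paper's.
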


\paragraph{Almost tight label-invariant testing} The histogram learning algorithm immediately implies a corresponding test for any label-invariant property: one can just perform the histogram approximation up to a distance of $\eps/2$, and then accept if and only if this histogram corresponds to a distribution that is $\eps/2$-close to satisfying the property. We complement this with an existence proof of label-invariant properties with a nearly matching lower bound on the number of required samples, an exponential improvement over the $\tilde{\Omega}(\log \log N)$ bound recently shown in \cite{chakraborty2024tight}.

\begin{theorem}[Informal statement of Corollary~\ref{cor:universal-tester-label-invariant}]
    There exists a universal tester for $\eps$-testing every label-invariant property using $\tilde{O}(1 / \eps^7) \cdot \log N \log \log N$ conditional samples.
\end{theorem}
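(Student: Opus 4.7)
The plan is to invoke the histogram learning algorithm (the theorem preceding this corollary) as a black box with accuracy parameter $\eps/2$, and then perform a purely offline decision. First I would run the histogram approximator on the input distribution $\mu$, spending $\tilde{O}(1/\eps^7)\cdot\log N\log\log N$ conditional samples and obtaining, with probability at least $2/3$, a candidate histogram $\hat h$ that is within distance $\eps/2$ of the true histogram of $\mu$, measured in whatever histogram metric the learner guarantees (i.e.\ an earth-mover-like distance on sorted/bucketed mass profiles that coincides with $\dtv$ up to optimal relabeling, which is the standard setup when one talks about ``learning the histogram'' of a distribution).

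Having obtained $\hat h$, no further queries are issued. The tester accepts if and only if there exists some histogram $h'$ realizable by a distribution in the target label-invariant property $P$ with $\mathrm{dist}(\hat h, h') \le \eps/2$, and otherwise rejects. Since $P$ is label-invariant, membership in $P$ depends only on the histogram, so this ``is there a close-enough histogram in $P$'' question is a well-defined combinatorial optimization over a (finite, though potentially huge) set of histograms. The corollary only concerns sample complexity, so the fact that the decision step may be computationally expensive is irrelevant.

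Correctness follows from a standard triangle-inequality argument. If $\mu\in P$, then the true histogram $h_\mu$ lies in the admissible set, and $\mathrm{dist}(\hat h,h_\mu)\le\eps/2$ by the learner's guarantee, so the tester accepts. Conversely, if $\mu$ is $\eps$-far from $P$ in $\dtv$, then every distribution in $P$ has histogram at distance strictly greater than $\eps$ from $h_\mu$ (using that the histogram metric bounds $\dtv$ from below for label-invariant comparisons), and then by the triangle inequality no $h'\in P$ can be within $\eps/2$ of $\hat h$, so the tester rejects.

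The only real step that needs care, and the main obstacle worth flagging, is aligning the metric output by the histogram approximator with the $\dtv$-based notion of ``$\eps$-close to $P$'' used in property testing of label-invariant properties. Concretely, one needs that for any two distributions $\mu,\nu$, there is a relabeling of $\nu$ bringing it within $\dtv$-distance equal to the histogram distance $\mathrm{dist}(h_\mu,h_\nu)$, and vice versa. This is a standard property of the sorted/earth-mover histogram distance and is exactly what the preceding histogram-learning theorem is designed around; once it is in place, the sample bound of the corollary is inherited verbatim from that theorem, and the $\eps/2$ vs.\ $\eps$ accounting affects only constants inside the $\tilde O(\cdot)$.
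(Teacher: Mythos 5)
Your high-level plan matches the paper's exactly: run the histogram learner, then perform a purely offline acceptance check against the property. The gap is in the correctness argument for the reject direction, and it is exactly the issue you flagged at the end without resolving. The paper's $\dhist(\cdot\,;\cdot)$ (Definition~\ref{def:histogram-divergence}) is \emph{not} a metric --- it is asymmetric and built on a multiplicative-error condition --- and the only relationship to $\dtv$ established in the paper is one-directional with a factor of two: Lemma~\ref{lemma:dtv-permutation-le-twice-dhist} says there is a permutation $\pi$ with $\dtv(\mu,\pi\tau)\le 2\,\dhist(\mu;\tau)$. Your argument uses two things that are not available: (a)~a triangle inequality \emph{on the histogram divergence itself} (``no $h'\in P$ can be within $\eps/2$ of $\hat h$''), and (b)~implicitly, a factor-$1$ conversion from histogram closeness to relabeled-$\dtv$ closeness. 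With the actual factor-$2$ conversion, learning to accuracy $\eps/2$ and accepting when something in $P$ is within $\eps/2$ only gives $\dtv(\mu,\text{relabeled }\mu')\le 2\eps$, which does not contradict $\eps$-farness.

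The paper fixes this by learning to accuracy $\eps/4$, accepting iff some $\mu'\in\mathcal P$ has $\dhist(\mu';\tau)\le\eps/4$, then applying Lemma~\ref{lemma:dtv-permutation-le-twice-dhist} \emph{twice} to obtain permutations $\pi,\pi'$ with $\dtv(\mu,\pi\tau)\le\eps/2$ and $\dtv(\mu',\pi'\tau)\le\eps/2$, and finally using the triangle inequality on $\dtv$ (which genuinely is a metric and is permutation-invariant) to conclude $\dtv(\mu,(\pi\circ(\pi')^{-1})\mu')\le\eps$. So the fix you need is: shrink the learning accuracy to $\eps/4$, and transfer to $\dtv$ before applying any triangle inequality rather than reasoning inside the histogram-divergence ``metric.'' The sample complexity claim is unaffected since the change only affects constants inside the $\tilde O$.
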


\begin{theorem}[Informal statement of Theorem \ref{th:lbnd-label-invariant-testing-new}]
    For every sufficiently small $\eps > 0$, there exists a label-invariant property $\mathcal P$ such that every $\eps$-testing algorithm for $\mathcal P$ draws at least $\Omega(\log N / \eps)$ conditional samples.
\end{theorem}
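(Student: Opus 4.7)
The plan is to construct a single label-invariant property $\mathcal{P}$ whose every $\eps$-tester must make $\Omega(\log N/\eps)$ conditional queries, via a direct-sum style construction across $\Theta(\log N)$ independent probability ``scales'' combined with Yao's minimax principle.

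Set $k = \floor{\log N}$ and partition $\Omega$ into scales $B_1,\dots,B_k$ with $|B_i| = \Theta(2^i)$, assigning every element of $B_i$ a base probability $p_i = \Theta(2^{-i}/k)$ so that each scale carries total mass $\Theta(1/k)$. Define two local histogram patterns per scale: $H_0^{(i)}$ places mass $p_i$ on every element of $B_i$, while $H_1^{(i)}$ splits $B_i$ into two equal halves of per-element masses $(1+\eps)p_i$ and $(1-\eps)p_i$, preserving $\mu(B_i)$. Let $\mathcal{P}$ be the label-invariant set of histograms whose scale-$i$ restriction matches $H_0^{(i)}$ for every $i$. A distribution built from $H_1^{(i)}$ at every scale is at $\dtv$-distance $\Theta(\eps)$ from $\mathcal{P}$: the best histogram matching pairs each $H_0^{(i)}$-element of value $p_i$ with an $H_1^{(i)}$-element of value $(1\pm\eps)p_i$ inside the same scale, contributing $\eps/(2k)$ per scale for a total of $\Theta(\eps)$.

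For the lower bound, apply Yao's principle to a distribution on inputs in which each scale independently picks $H_0^{(i)}$ or $H_1^{(i)}$ with probability $\tfrac{1}{2}$ and the labels within each $B_i$ are independently permuted. The key step is that a conditional query on a set $S$ provides information about at most one scale at a time: the returned element falls in $B_i$ with probability $\mu(S\cap B_i)/\mu(S)$, and once the scale is fixed, the sampled label is distributed according to a scale-local process depending only on the $H^{(i)}$-choice in that single scale. Thus after $T$ queries the expected number of samples landing in any specific scale is $O(T/k)$. Combined with a Le Cam / chi-squared hypothesis-testing bound showing that $\Omega(1/\eps)$ within-scale samples are required to identify whether scale $i$ used $H_0^{(i)}$ or $H_1^{(i)}$, together with the fact that any $\eps$-tester must correctly classify a constant fraction of the $k$ scales, this yields $T = \Omega(k/\eps) = \Omega(\log N/\eps)$.

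The main obstacle is ruling out the possibility that an \emph{adaptive} algorithm can amortize evidence across scales by, say, choosing a query set $S$ that straddles many $B_i$'s in order to learn an aggregate statistic in one shot. I would resolve this by building an explicit simulator for the conditional oracle that factorizes by scale: given $S$, first sample a scale index $i$ with probability $\mu(S\cap B_i)/\mu(S)$, then draw the returned element from a purely scale-local process that only consults the $H^{(i)}$-choice for the selected $i$. Because the simulator's per-scale components are independent (the random permutations inside distinct $B_j$'s carry no cross-scale information), the entire adaptive interaction decouples into $k$ independent per-scale hypothesis-testing instances, and a direct-sum theorem (or a Chernoff bound on per-scale sample counts combined with per-scale Le Cam lower bounds) then delivers the claimed $\Omega(\log N/\eps)$ bound.
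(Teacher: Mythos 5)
Your plan fails at the direct-sum step, and the failure is concrete: with \emph{independent} per-scale choices, the property you construct is actually $\eps$-testable with $O(\poly(1/\eps))$ conditional queries, with no dependence on $N$ at all. A tester need not ``correctly classify a constant fraction of the $k$ scales'' --- it only has to distinguish the fixed YES histogram (all scales $H_0^{(i)}$) from instances that are $\eps$-far. Since an instance is $\eps$-far from $\mathcal P$ only when a constant fraction of scales is set to $H_1^{(i)}$, the tester can condition on a single random scale $B_i$, test it with $O(1/\eps^2)$ pair queries (query $O(1)$ random pairs, each $O(1/\eps^2)$ times, and look for a $\Theta(\eps)$-biased pair), and reject iff it sees $H_1^{(i)}$. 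On a YES instance it never finds $H_1$; on an $\eps$-far instance a random scale is $H_1$ with constant probability. Boosting over $O(1)$ scales gives a correct $O(1/\eps^2)$-query tester, contradicting the claimed $\Omega(\log N / \eps)$ bound. The tester is allowed to reject an instance that is only $\Theta(\eps/k)$-close but not in $\mathcal P$, and this ``gray-zone freedom'' is exactly what your per-scale independence hands to the adversary. Your own simulator observation (queries decouple by scale) makes matters worse, not better: it is precisely what shows the $k$ scales are $k$ independent, cheaply testable sub-instances, so any direct-sum theorem here would go in the wrong direction.

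The missing idea, which the paper supplies, is to \emph{correlate} the per-chunk indicator bits so that no small number of bits reveals whether the input is a YES or a far-NO instance. Concretely, the paper uses random linear codes (Lemma~\ref{lemma:cite:pcpp-code-ensembles-exist}) to build two families $\mathcal I_1,\mathcal I_2 \subseteq 2^{\{1,\ldots,k\}}$ that are $\frac{1}{30}$-pairwise far yet both \emph{paired $q$-uniform} for $q=\Theta(k)$: any $q$ membership bits of $I\sim\mathcal I_j$ (even adaptively chosen) are jointly uniform, hence contain zero information about $j$. The geometric chunk sizes play your role of separating scales, the support pattern $I$ is encoded into the histogram via ``chunk distributions,'' and a coupling with a weighted sampling oracle (Lemma~\ref{lemma:numeric-bound-for-simulating-cond-by-weighted-paired}) shows that $q$ conditional queries expose at most $O(q)$ membership bits. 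The ``paired'' condition additionally handles zero-mass condition sets. Without the code-based correlation, the single-scale attack above breaks the construction, so this is not an optional technicality but the core of the argument.
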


\subsection{Related work}
Closely related to the distance estimation problem is the problem of \emph{equivalence testing}, which asks to determine whether two unknown distributions are equal or far from each other. In the standard sampling model, the sample complexity of the problem was pinned down to $\Theta(N^{2/3}/\eps^{4/3}+\sqrt{N}/\eps^2)$~\cite{batu-FRSW2000,valiant2008testing,cdvv14}. In the conditional sampling model, Canonne Ron and Servedio~\cite{crs15} designed a testing algorithm with query complexity $\tilde O (\log^5N/\eps^4)$. This was later improved to $\tilde O(\log \log N/\eps^5)$ by \cite{falahatgar2015faster}, and complemented with an almost matching lower bound of $\tilde\Omega(\log \log N)$ \cite{chakraborty2024tight}. That lower bound can be used for a relatively easy derivation of a $\tilde\Omega(\log\log N)$ lower bound on the $(c,\eps)$-estimation task for small enough $\eps>0$ and $c>0$, but we directly prove a clean $\Omega(\log\log N)$ bound.

One interesting special case of the distance estimation problem is the case where one of the distributions is explicitly given to the algorithm. In this setting, \cite{crs15} showed that one can estimate the distance to the known distribution using $\tilde O(\log^5N/\eps^5)$ conditional queries, which was later improved by ~\cite{narayanan2020distribution} to $\tilde{O}(1/\eps^4)$. In contrast, in the standard sampling model, estimating the distance to the uniform distribution requires at least $\Omega(N/\log N)$ samples~\cite{vv11, vv10a, vv10b}. 

A special type of conditional access which gained popularity in recent years is the \emph{subcube} conditioning model~\cite{BC18,crs15}. In this model, the distributions are given over a product set $\{0,1\}^n$, and the algorithm can query subcube subsets, which are sets of the form $\prod_{i=1}^n D_i$ where $D_i\subseteq \{0,1\}$ for every $1\le i\le n$ (note that here $N=2^n$).
In this model, uniformity can be tested using $\tilde{\Theta}(\sqrt{n}/\eps^2) = \tilde\Theta\left(\sqrt{\log N}/\eps^2\right)$ samples \cite{ccklw21}, and the best-known test for equivalence is $\tilde{O}(n/\eps^2)$ \cite{adar2024improved}, with a lower bound of $\Omega(n^{3/4} / \eps + \sqrt{n} / \eps^2)$~\cite{baynets17}. Other properties studied under the subcube conditional model include monotonicity~\cite{chakrabarty2025monotonicity}, and having a probability density function supported on a low-dimensional subspace~\cite{chen2021learning}.

A related line of work aims to circumvent the polynomial dependency in the domain's size by considering restricted classes of input distributions. Some of the cases studied are those where the distribution is known to be monotone~\cite{RS09,chakrabarty2025monotonicity}, a low-degree Bayesian Network \cite{baynets17,daskalakis2017square,acharya2018learning}, a Markov Random Field~\cite{daskalakis2019testing,gheissari2018concentration,bezakova2020lower}, or having a ``histogram by intervals'' structure~\cite{DKP19}. Considering a distribution having a histogram structure, a learning algorithm was given in \cite{FLV19} under several sampling models (for such distributions there is little difference between learning the histogram and learning the entire distribution).

\section{Overview}
\label{sec:technical-overview}

\subsection{Technical overview}
\subsubsection*{The core result}

Given a distribution $\mu$ over $\Omega = \{1,\ldots,N\}$, an element $x \in \Omega$ and two estimation parameters $\eps, c \in (0,1)$, our task is to estimate $\mu(x)$ within a $(1 \pm \eps)$-factor or to indicate that it is among the smallest elements, whose cumulative mass is at most $c$. Note that if $\mu(x) = \Omega(c)$ then it can easily be estimated directly using unconditional sampling, and therefore, this overview focuses on the case where $\mu(x) = O(c)$ (with an appropriate hidden constant factor).

At top level, our algorithm looks for a reference set $R$ whose probability mass (as an event) is both comparable to $\mu(x)$ (that is, $\Pr_\mu\left[x | R \cup \{x\}\right]$ lies in a reasonable range, between two constants) and estimable with high accuracy. This way we can arithmetically estimate $\mu(x)$ using estimations of $\Pr_\mu\left[x | R \cup \{x\}\right]$ and $\mu(R)$. For estimating $\Pr_\mu\left[x | R \cup \{x\}\right]$, it should be possible to efficiently draw samples from $R\cup\{x\}$. This can be directly done under the conditional model if we know $R$ in its entirety, but as we describe below there are ways around this problem when we do not have complete access to $R$.

Our construction refers to two sets: the \emph{target set} $V_x$, which is a set that includes all elements with mass smaller than $\mu(x)$ and no element whose mass is significantly higher than $\mu(x)$, and the \emph{filter set} $A_\alpha$, which is the result of independently choosing every $z \in \Omega$ with probability $\alpha$. The intersection $V_x \cap A_\alpha$ is a good reference set whenever the order of $\alpha$ is about the quotient of $\mu(x)$ and $\mu(V_x)$. The algorithm spends most of its effort on finding a good $\alpha$. In fact, if a good $\alpha$ is already given, then the rest of the algorithm can complete its estimation of $\mu(x)$ using a number of samples that does not depend on $N$ at all.

The target set $V_x$, whose mass is $\Omega(c)$ (for $x$ whose cumulative mass is at least $c$ and for which $\mu(x) = O(c)$), is estimable directly by virtue of having a high mass. Since it does not contain elements much heavier than $x$, we can use a large deviation inequality to deduce that the mass of $V_x \cap A_\alpha$ is highly concentrated around $\alpha\mu(V_x)$.

However, $V_x$ cannot be found explicitly. We can only construct a ``membership-oracle'' by comparing the weight of potential elements to the weight of $x$ using pair conditionals. In particular, $V_x$ is probabilistic, but for any element $y$ with strict demands (either lighter than $x$ or much heavier than $x$) there is a very small probability to misclassify the membership of $y$. For medium-weight elements, which are only slightly heavier than $x$, our analysis embraces the probabilistic nature of belonging to $V_x$.

Another problematic consequence of the oracle-membership implicit-construction of $V_x$ (and thereby of $R = V_x \cap A_\alpha$) is the inability to use it as a condition, since we can only restrict to explicit sets. Instead, we restrict to $A_\alpha$ (whose construction uses internal randomness and no samples from $\mu$), and use rejection-sampling to simulate the restriction to $V_x \cap A_\alpha$. Since $V_x$ has a globally high weight but contains no elements whose mass is too high, the relative weight of $V_x \cap A_\alpha$ as a subset of $A_\alpha$ is usually high as well. This allows a lazy construction of $V_x$, where we only query candidate elements drawn from $A_\alpha$ for belonging to $V_x$, instead of drawing $V_x$ in its entirety in advance.

To find an $\alpha$ of the correct magnitude, we first observe that it suffices to consider powers of $\frac{1}{2}$ that lie between $1$ and $\frac{1}{O(N)}$. This observation reduces the search range to $O(\log N)$ possible choices. To reduce the needed work to $O(\log \log N)$, we show a monotone estimable function of $\alpha$ that can characterize the range of good $\alpha$s based on their respective values of this function. This allows a binary search, but since the estimation of the function is probabilistic, we construct a binary search scheme that allows the comparator to be wrong with small fixed probability. Our binary search scheme removes the triple-logarithmic penalty required by the straight-forward approach of amplifying the success probability of the comparator to the point that even a single error is unlikely to occur during the binary search.

The result of the binary search is a $\Theta(1)$-approximation of the ideal choice of $\alpha$. Referring to $R$ constructed using such an $\alpha$, a simple arithmetical function of $\mu(R)/\mu(R\cup\{x\})$ (which can be approximated by inspecting a sequence of samples from $R\cup\{x\}$) gives us the missing factor that allows us to calculate the approximation of $\mu(x)$. Our procedure uses the roughly-estimated $\alpha$ to estimate this expectation within $1 \pm O(\eps)$-factor, which we then use to obtain a $(1 \pm \eps)$-factor estimation of $\mu(x)$.

\subsubsection*{The applications}

We present three applications of our core estimator. All of them are the result of plugging our estimator (each time with different parameters and under different circumstances) into an algorithm that achieves the corresponding task when it has some access to the actual values of the distribution function $\mu$.

For the task of histogram learning, knowing the exact value of $\mu(x)$ for each $x$ that was received as an unconditional sample would have allowed us to just approximate the weight of each ``bucket'' $B_i$ that contains all $x$ of weight between $(1-\eps)^{i-1}$ and $(1-\eps)^i$ (ignoring buckets with $i > O(\eps^{-1} \log N)$). From the bucket weights one can then write down a distribution that is an approximation of a permutation of $\mu$, providing the histogram of $\mu$. Receiving only approximate values can cause some ``bucket shift'' to (say) $i\pm 2$, but the resulting error is not significant.

Estimating the distance between two distributions, for which element mass estimations are provided, is generally achievable by (unconditionally) drawing elements from the distributions, and for each drawn element examining the ratio of its masses according to the two distributions. To implement this in the conditional sampling model, we plug in our estimator. Since we need to use it also for elements which were not drawn from the distribution to be queried, this increases the dependency on $\eps$ to one that follows from the ``all $x$'' bound. However, when we only want to solve the equivalence testing task, we show that we can still use the ``$x\sim \mu$'' bound as long as both distributions are identical to the same $\mu$, which allows us to automatically reject if the algorithm happens to require more samples than that bound. For both tasks, the lower bound is that of \cite{chakraborty2024tight}.

\subsubsection*{The lower bounds}

The tight lower bound for the estimation task is essentially an ad-hoc reduction from the task of finding an unknown value $k$, whose range of possible values has size $O(\log N)$, through binary search. Such a value can be ``encoded'' by a uniform distribution over a subset of $\{1,\dots,N\}$ whose size is $(1\pm o(1))2^{-k}\cdot N$, and then retrieved by successfully approximating the mass of any of the support elements. Since the bottleneck of the upper bound algorithm is a binary search task as well, this implies that a binary search task (in an appropriate range) is indeed a crucial component of the estimation task.

The demonstration that the full conditional model is essential for a doubly logarithmic algorithm follows from using the framework of some of our applications ``in the other direction'': a doubly logarithmic solution to the estimation task in a weak model would have implied a solution to a testing task that contradicts a known lower bound.

A label-invariant property with a logarithmic in $N$ lower bound is constructed by encoding maximally hard to test linear codes as histograms, and proving that a test for such a code can be converted to a classical string property test in this case.

\subsection{Organization of the paper}
Section \ref{sec:prelims} describes the sampling model and the notation scheme that we use throughout this paper. Within it, Subsection \ref{sec:prelims:subsec:paper-specific-notations} defines the quantities and constructed sets used for our algorithm. Appendix \ref{apx:notation-table} provides a concise table for these, for the reader's convenience.

Sections \ref{sec:our-algorithm} through \ref{sec:est-mu-x-using-alpha} contain the proof of our core result. Appendix \ref{apx:call-chart} provides a simplified chart of the calling structure and dependencies between the various procedures defined in these sections.

Section \ref{sec:our-algorithm} provides the top layer of the algorithm, which is the procedure \procnameZmainZsingleHREF.

Section \ref{sec:estimation-procs} provides the implementation of the target test scheme (\procnameZtargetZtestHREF in Subsection \ref{subsec:target-test::of-sec:estimation-procs}), along with a few algorithmic tools to assess the target sets, and specifically tools related to the quantity $\beta_{x,\alpha} = \mu(R)/\mu(R\cup\{x\})$: a ``cheaper'' \procnameZtargetZtestZgrossHREF in Subsection \ref{subsec:target-test::of-sec:estimation-procs}, a virtualization of the target set (\procnameZinitializeZnewZVxHREF, \procnameZVxZqueryHREF) in Subsection \ref{subsec:individual-drawing-V-x::of-sec:estimation-procs}, \procnameZestZexpectedZbeta in Subsection \ref{subsec:est-E-beta::of-sec:estimation-procs}, and \procnameZestZsingleZbetaHREF in Subsection \ref{subsec:est-individual-beta::of-sec:estimation-procs} (for a virtual single instance of $V_x$). The estimator here, while satisfying the optimal asymptotic guarantees, has an unrealistic numerical constant factor. In Appendix \ref{apx:target-test-galactic} we show how to reduce this by adjusting the target test, at the cost of a small asymptotic penalty which carries over to the estimator.

The following sections provide the three main components of the \procnameZmainZsingleHREF procedure: \procnameZpreambleHREF in Section \ref{sec:preamble}, \procnameZfindZgoodZalphaHREF in Section \ref{sec:find-good-alpha} and \procnameZestZexpectedZhZbetaHREF (an estimator for $\E\left[\frac{\beta_{x,\alpha}}{1-\beta_{x,\alpha}}\right]$) in Section \ref{sec:est-mu-x-using-alpha}. The estimator for $\E\left[\frac{\beta_{x,\alpha}}{1 - \beta_{x,\alpha}} \right]$, whose precise definition is $\E_{A_\alpha, V_x}\left[\frac{\beta_{x,\alpha}\left(A_\alpha, V_x\right)}{1 - \beta_{x,\alpha}\left(A_\alpha, V_x\right)}\right]$, uses the procedures for draws of $V_x$ and estimations of $\beta_{x,\alpha}$ (for a provided $A_\alpha$) of Section \ref{sec:estimation-procs}.

We provide the uncertain-comparator binary search used in \procnameZfindZgoodZalphaHREF, which is a probabilistic variant of binary search that can use a comparator which is allowed to be wrong with a small fixed probability, in Subsection \ref{subsec:binary-search::of-sec:find-good-alpha}.

In Section \ref{sec:applications} we provide applications of our core result in the fully conditional model: in Subsection \ref{subsec:learning-histograms::of-sec:applications} we provide an almost-tight histogram learning algorithm, and as a corollary a universal $\eps$-testing algorithm for label-invariant properties. In Subsection \ref{subsec:dtv-est::of-sec:applications} we provide an almost tight $\eps$-estimation for total-variation distance, and in Subsection \ref{subsec:eps-test-equivalence::of-sec:applications} we provide an improved $\eps$-test for equivalence. These applications use a trio of general-purpose application lemmas, Lemmas \ref{lemma:generic-application-single-distribution}, \ref{lemma:generic-application-k-distributions} and \ref{lemma:generic-application-k-distributions-testing}. In Appendix \ref{apx:one-more-lemma} we provide another lemma of this type, for which we hope to be useful in future applications.

In Section \ref{sec:lbnds} we provide lower bounds for the tasks discussed in this paper. In Subsection \ref{subsec:tight-lbnd-c-eps-estimation::of-sec:lbnds} we prove the tight lower bound on the $\mu(x)$ estimation task (as a function of $N$). Then, in Subsection \ref{subsec:lbnd-c-eps-est::of-sec:lbnds} we provide quick lower bounds for this task under more restricted conditional models, mostly derived from known lower bounds on equivalence testing in conjunction with interim algorithms from Section \ref{sec:applications}. In Subsection \ref{subsec:lbnd-lbl-inv::of-sec:lbnds} we construct a specific label-invariant property, for which we prove an almost-tight testing lower bound.

The most technical (and mechanical) proofs across the paper are deferred to Appendix \ref{apx:tech-SHEET} (bounds relevant to $\E[\beta_{x,\alpha}]$ and $\E[\beta_{x,\alpha}/(1 - \beta_{x,\alpha})]$) and Appendix \ref{apx:tldr} (miscellaneous ad-hoc proofs).

\section{Preliminaries}
\label{sec:prelims}

\subsection{Distribution access oracles and tasks}

In this work we consider algorithms that access an input \emph{indirectly} through oracles. In particular, the complexity of our algorithms is measured in terms of the number of calls to the provided oracle. In all oracles, the output distribution is determined by the distribution $\mu$ and the arguments of the call, and is completely independent of past calls and any other algorithmic behavior.

The following is the weakest oracle, the one allowed in the traditional distribution testing model.

\begin{definition}[Sampling oracle]
    Let $\mu$ be an input distribution over a set $\Omega$. The \emph{sampling oracle} for $\mu$ has no additional input, and outputs an element $x \in \Omega$ that distributes like $\mu$.
\end{definition}

The following oracle corresponds to the algorithms that we analyze in this paper.
\begin{definition}[Conditional sampling oracle]
    Let $\mu$ be an input distribution over a set $\Omega$. The \emph{conditional sampling oracle} for $\mu$ gets a set $A \subseteq \Omega$ as input, and outputs an element $x\in A$ that distributes like $\mu$ when conditioned on $x$ belonging to $A$.
\end{definition}

The above definition still leaves open the question as to what happens when a sample conditioned on $A$ is requested for a probability zero set $A$ (two common variants are the oracle returning a special error symbol \cite{crs15} or the oracle returning a value uniformly drawn from $A$ \cite{CFGM16}). Our estimation algorithm is designed to never ask for such a sample, and hence works for all variants of this model.

We next define some notions of distances and approximations.

\begin{definition}[Total variation distance]
    Let $\mu$ and $\tau$ be two distributions over $\Omega$. Their \emph{total variation distance} is defined as
    \[ \dtv(\mu, \tau)
    \overset{\mathrm{def}}= \frac{1}{2} \sum_{x \in \Omega} \abs{\mu(x) - \tau(x)}
    = \sum_{x \in \Omega : \mu(x) > \tau(x)} (\mu(x) - \tau(x))
    = \max_{E \subseteq \Omega} \left(\mu(E) - \tau(E)\right) \]
\end{definition}

\begin{definition}[$\eps$-test]
    Let $\mathcal R$ be a metric space and let $\mathcal P$ be a closed set, which we call a \emph{property}. For an input element $x \in \mathcal R$ and a parameter $\eps > 0$, the goal of an \emph{$\eps$-test} for $\mathcal P$ is to distinguish between the case where $x \in \mathcal P$ and the case where $d(x,y) > \eps$ for every $y \in \mathcal P$.
\end{definition}

The following notion is a major building block in our algorithms.

\begin{definition}[Saturation-aware estimator]
    Let $f : [0,1] \to [0,1]$ be a non-decreasing monotone function. An algorithm is an $(\eps;p_\ell,p_m)$-$f$-saturation-aware estimator of an unknown probability $p$ if the following hold:
    \begin{itemize}
        \item If $f(p) \le p_\ell$, then with probability at least $2/3$ the output is a special value \esttoolow.
        \item If $p_\ell < f(p) < p_m$, then with probability at least $2/3$ the output is either the special value \esttoolow or in the range $(1 \pm \eps)p$.
        \item If $f(p) \ge p_m$, then with probability at least $2/3$ the output is in the range $(1 \pm \eps)p$.
    \end{itemize}
    We use ``$(\eps;p_\ell,p_m)$-saturation-aware estimator'' to denote the case where $f$ is the identity function.
\end{definition}
Usually, a test can distinguish between the first and the third cases, and an estimator can guarantee the third part. The saturation-aware estimator also requires correctness in the ``middle'' case.

Recall that our main goal is to estimate the probability mass of individual elements. The exact mass cannot be provided since it can be any value in the continuous range $[0,1]$. Moreover, the effort needed to estimate the mass of an extremely rare element is unbounded. Hence, every estimation algorithm must allow a small mass of elements whose probabilities cannot be approximated at all, and in this paper these are characterized by the notion of the \emph{cumulative distribution function}.

\begin{definition}[Cumulative distribution function]
    Let $\mu$ be a distribution over $\Omega$. The \emph{cumulative distribution function} of $\mu$ is the function $\CDF_\mu : \Omega \to [0,1]$ defined as $\CDF_\mu(x) = \Pr_{y \sim \mu}[\mu(y) \le \mu(x)]$.
\end{definition}

\begin{definition}[The $(\eps,c)$-estimation task] \label{def:eps-c-estimation-task}
    Let $\eps>0$ and $c>0$ be our parameters. For a distribution $\mu$ over a finite domain $\Omega$, let $\mathcal A$ be an algorithm that gets $x \in \Omega$ and outputs some $\hat p$. The goal is an $(\eps;0,c)$-$\CDF_\mu$-saturation-aware estimation of $\mu(x)$.
\end{definition}

Our main result is an algorithm that solves the estimation task, whose dependency on $N$ is doubly-logarithmic for fixed $\eps$ and $c$. Note that in particular the set $G=\{x\in\Omega:\CDF_\mu(x)\geq c\}$ has mass strictly larger than $1-c$. We next describe a major application of our estimator.

\begin{definition}[Histogram divergence $\dhist(\cdot ; \cdot)$] \label{def:histogram-divergence}
    Let $\mu$ and $\tau$ be two distributions over $\Omega$, and let $S(\Omega)$ denote the set of all permutations over $\Omega$. The \emph{histogram divergence} of $\mu$ and $\tau$ is defined as:
    \[ \dhist\left(\mu ; \tau\right) = \min \{ \eps \ge 0 : \min_{\pi\in S(\Omega)}\; \Pr_{x \sim \mu} \left[ \mu(x) \notin (1 \pm \eps) \tau(\pi(x)) \right] \le \eps \} \]
\end{definition}

The following lemma, which we prove in Appendix \ref{apx:tldr}, states that distributions with low histogram divergence are close up to a permutation of the labels.

\begin{restatable}{lemma}{lemmaZdtvZpermutationZleZtwiceZdhist} \label{lemma:dtv-permutation-le-twice-dhist}
    For every two distributions $\mu$, $\tau$ over $\Omega$ there exists a permutation $\pi$ over $\Omega$ for which $\dtv(\mu, \pi \tau) \le 2 \dhist(\mu ; \tau)$.
\end{restatable}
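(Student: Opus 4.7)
Let $\eps = \dhist(\mu;\tau)$ and let $\pi$ be a permutation that attains the minimum in Definition~\ref{def:histogram-divergence}, so that writing $\nu = \pi\tau$ (that is, $\nu(x) = \tau(\pi(x))$) we have
\[ \Pr_{x\sim\mu}\bigl[\mu(x)\notin(1\pm\eps)\nu(x)\bigr] \le \eps. \]
The plan is to partition $\Omega$ into a \emph{good} set $G = \{x\in\Omega : \mu(x)\in(1\pm\eps)\nu(x)\}$ and a \emph{bad} set $B = \Omega\setminus G$, then bound $\sum_x|\mu(x)-\nu(x)|$ separately on $G$ and $B$, and conclude via $\dtv(\mu,\nu) = \tfrac{1}{2}\sum_x|\mu(x)-\nu(x)|$.

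On the good set the multiplicative control is direct: for $x\in G$ we have $|\mu(x)-\nu(x)|\le \eps\,\nu(x)$, so
\[ \sum_{x\in G}|\mu(x)-\nu(x)| \le \eps\,\nu(G) \le \eps. \]
For the bad set I use the crude bound $|\mu(x)-\nu(x)|\le \mu(x)+\nu(x)$, giving $\sum_{x\in B}|\mu(x)-\nu(x)| \le \mu(B) + \nu(B)$. By the hypothesis $\mu(B)\le\eps$, so the only remaining ingredient is an upper bound on $\nu(B)$.

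The key observation is that $\nu(B)$ can be controlled via $G$: for every $x\in G$ the inequality $\mu(x)\le(1+\eps)\nu(x)$ yields $\nu(x)\ge \mu(x)/(1+\eps)$, so summing over $G$ gives $\nu(G)\ge \mu(G)/(1+\eps)\ge (1-\eps)/(1+\eps)$. Therefore $\nu(B) = 1-\nu(G) \le 2\eps/(1+\eps) \le 2\eps$. Combining, $\sum_{x\in B}|\mu(x)-\nu(x)|\le 3\eps$, and together with the bound on $G$ we get $\sum_x|\mu(x)-\nu(x)|\le 4\eps$, hence $\dtv(\mu,\pi\tau)\le 2\eps = 2\dhist(\mu;\tau)$.

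I do not anticipate a real obstacle: the only subtle step is the two-sided use of the multiplicative bound, first to control the distance on $G$ directly, and second to transfer $\mu$-mass information to $\nu$-mass on $G$ (and hence on $B$). If the $\min$ in the definition of $\dhist$ is not actually attained (only approached), one replaces $\eps$ by $\eps+\delta$ throughout and lets $\delta\to 0$; since $\Omega$ is finite in our setting and there are finitely many permutations, this issue does not arise here.
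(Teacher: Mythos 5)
Your proof is correct, and the core strategy matches the paper's: split $\Omega$ into the good set $G$ where $\mu(x)\in(1\pm\eps)\nu(x)$ and the bad set $B$, use the $\eps$-multiplicative control on $G$ and an absolute bound on $B$. The only substantive difference is that the paper writes $\dtv(\mu,\nu)=\sum_{x:\mu(x)>\nu(x)}(\mu(x)-\nu(x))$ rather than $\tfrac12\sum_x|\mu(x)-\nu(x)|$. With the one-sided formula the bad set contributes only through $\mu(x)-\nu(x)\le\mu(x)$, so the two terms $\mu(B\cap\{\mu>\nu\})\le\eps$ and $\eps\,\nu(G\cap\{\mu>\nu\})\le\eps$ give the $2\eps$ bound immediately, with no need to touch $\nu(B)$. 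Your use of the symmetric formula forces the additional step $\nu(B)\le 2\eps/(1+\eps)\le 2\eps$, which you derive correctly by transferring mass control from $\mu$ to $\nu$ on $G$ via $\nu(G)\ge\mu(G)/(1+\eps)$; that observation is sound but is the reason your accounting ends up with three terms rather than two. Your closing remark on attainment of the minimum is fine: with $\Omega$ finite there are finitely many permutations, and for a fixed $\pi$ each inequality $\mu(x)\le(1+\eps)\nu(x)$ and $\mu(x)\ge(1-\eps)\nu(x)$ defines a closed condition in $\eps$, so the infimum is achieved and no limiting argument is actually needed.
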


\begin{definition}[The $\eps$-histogram learning task] \label{def:eps-histogram-learning-task}
    Let $\mu$ be a distribution over $\Omega$. The \emph{$\eps$-histogram learning task} requires finding a distribution $\tau$ for which $\dhist(\mu ; \tau) \le \eps$.
\end{definition}

\subsection{Paper-specific notations}
\label{sec:prelims:subsec:paper-specific-notations}
To describe our estimation algorithm we need various ad-hoc notations. Most of them involve $x \in \Omega$, $c \in (0,1)$ and $\eps \in (0,1)$, and some additionally involve $0 < \alpha \le 1$. We usually use short-form notations ignoring $c$ and $\eps$, but never ignore $x$ and $\alpha$. See Appendix \ref{apx:notation-table} for a concise table that summarizes these notations.

Given $x\in\Omega$ for which we would like to assess $\mu(x)$, our proofs rely heavily on a categorization of $\Omega\setminus\{x\}$ by masses.

\begin{definition}[The three scale-sets with respect to $x$]\label{def:LMH-x}
    Let $x \in \Omega$. We divide the rest of the domain $\Omega$ according to their probability masses as compared to $\mu(x)$ as follows:
    \begin{itemize}
        \item The $x$-light set is $L_x = \{ y \in \Omega \setminus \{x\} : \mu(y) \le \mu(x) \}$.
        \item The $x$-medium set is $M_x = \{ y \in \Omega : \mu(x) < \mu(y) < 1.2 \mu(x) \}$.
        \item The $x$-heavy set is $H_x = \{ y \in \Omega : \mu(y) \ge 1.2 \mu(x) \}$.
    \end{itemize}
\end{definition}

Distinguishing between $L_x$-elements and $H_x$-elements cannot be certain since it uses random samples. We require the probability to be very small. The affect of the target error on our algorithm's complexity is logarithmic.

\begin{definition}[$\eta_{c,\eps}$, the target error] \label{def:target-error}
    The \emph{target error} is $\eta_{c,\eps} = \min\left\{ \paperZtargeterrZpreamble, \paperZtargeterrZevbeta, \paperZtargeterrZestsingle \right\}$.
\end{definition}

We define the constraints of the categorization algorithm based on the target error.

\begin{definition}[An $(x,c,\eps)$-target test] \label{def:target-test}
    An algorithm $\mathcal T$ is an \emph{($x,c,\eps$)-target-test} if:
    \begin{itemize}
        \item The probability to accept $y \in \Omega \setminus \{x\}$ only depends on $x$ and $y$ (and $\mu$), and in particular is independent of past executions.
        \item For every $y \in L_x$, the probability to accept $y$ is greater than $1 - \eta_{c,\eps}$.
        \item For every $y \in H_x$, the acceptance probability is less than $\eta_{c,\eps}$.
    \end{itemize}
\end{definition}

\begin{definition}[An $(x,c,\eps)$-target-test scheme] \label{def:target-test-scheme}
    A mapping from every triplet $(x,c,\eps)$ to an $(x,c,\eps)$-target-test $\mathcal T_{x,c,\eps}$ is called an \emph{$(x,c,\eps)$-target-test scheme}.
\end{definition}

The target-test scheme is a crucial part in our proof, and can be implemented algorithmically.

\begin{lemma}[Informal statement of Lemma \ref{lemma:alg:target-function-correct}]
    Procedure \procnameZtargetZtest (Algorithm \ref{fig:alg:target-test} in Section \ref{sec:estimation-procs}), with parameters $(\mu,c,\eps;x,y)$, is an $(x,c,\eps)$-target-test scheme.
\end{lemma}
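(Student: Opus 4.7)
The plan is to implement \procnameZtargetZtestHREF as a pair-conditional sampler followed by a threshold. On input $(\mu,c,\eps;x,y)$ the procedure draws $k$ independent conditional samples from $\mu$ restricted to the two-element set $\{x,y\}$, lets $\hat p$ denote the empirical fraction of samples equal to $x$, and accepts $y$ if and only if $\hat p \ge p^\ast$ for a fixed threshold $p^\ast$ strictly between $1/2.2$ and $1/2$ (for concreteness, $p^\ast = 21/44$ sits in the middle of this window).

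Correctness reduces to a single Hoeffding estimate. A pair-conditional sample equals $x$ with probability $\mu(x)/(\mu(x)+\mu(y))$, which is at least $1/2$ when $y \in L_x$ and at most $1/2.2$ when $y \in H_x$. Hence in the light case the true mean of each indicator exceeds $p^\ast$ by a positive constant $\gamma_L = 1/2 - p^\ast$, and in the heavy case it falls below $p^\ast$ by a positive constant $\gamma_H = p^\ast - 1/2.2$. Hoeffding's inequality bounds the failure probability of the empirical-vs-threshold comparison by $\exp(-2k\gamma^2)$ where $\gamma = \min\{\gamma_L,\gamma_H\}$, so taking $k$ to be a sufficiently large absolute-constant multiple of $\log(1/\eta_{c,\eps})$ pushes both the light-side and heavy-side error bounds below the $\eta_{c,\eps}$ threshold required by Definition~\ref{def:target-test}.

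The first bullet of Definition~\ref{def:target-test}, that the acceptance probability is a function of $x$ and $y$ only, holds tautologically because each call to \procnameZtargetZtestHREF draws fresh conditional samples whose joint distribution is determined by $\{x,y\}$ and $\mu$ alone; no internal state persists between invocations.

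The only conceptual point I would verify carefully is that the constant $1.2$ built into Definition~\ref{def:LMH-x} really does leave room for amplification: since $1/2 - 1/2.2 = 1/11$ is a universal positive constant, the per-call sample count is $O(\log(1/\eta_{c,\eps})) = O(\log c^{-1} + \log \eps^{-1})$, a modest contribution to the global complexity. No other obstacles arise; the argument is essentially a constant-gap majority vote, with the bulk of the work being the choice of threshold and one line of Hoeffding.
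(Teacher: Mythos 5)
Your fixed-threshold implementation is correct for the lemma as stated: for $y \in L_x$ the true fraction of $x$-samples from $\{x,y\}$ is at least $\tfrac{1}{2}$, for $y \in H_x$ it is at most $\tfrac{1}{2.2} = \tfrac{5}{11}$, and a Hoeffding bound with a constant gap to a threshold strictly between them yields the required $1-\eta_{c,\eps}$ confidence with $O(\log \eta_{c,\eps}^{-1}) = O(\log\tfrac{1}{\eps c})$ pair-conditional samples. The independence/consistency bullet of Definition~\ref{def:target-test} also holds as you argue. (One small slip: $\tfrac12 - \tfrac{1}{2.2} = \tfrac{1}{22}$, not $\tfrac{1}{11}$; inconsequential, since the gap to a middle threshold $p^\ast = \tfrac{21}{44}$ is $\tfrac{1}{44}$ on each side and still an absolute constant.)

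Where you diverge from the paper is the threshold: the paper's \procnameZtargetZtestZexplicitHREF draws the cutoff $t$ \emph{uniformly at random} from a narrow interval strictly inside $(\tfrac12,\tfrac{6}{11})$, rather than fixing it. This randomization is not needed for Lemma~\ref{lemma:alg:target-function-correct} itself --- that lemma only constrains behavior on $L_x$ and $H_x$, exactly the cases your Hoeffding argument handles --- but it is essential for Lemma~\ref{lemma:target-test-approx}, which bounds the deviation between the full target test and a cheaper ``gross'' test (with a fixed coarse error parameter $10^{-9}$) for \emph{all} $y$, including medium-mass $y \in M_x$. For a $y$ whose pair-conditional probability sits near a fixed deterministic threshold, two tests with different sample counts can have acceptance probabilities differing by up to $1$; randomizing $t$ over an interval makes the probability of landing in any $\kappa$-neighborhood only $O(\kappa)$, which smooths the transition and lets the gross test remain within $10^{-8}$ of the refined one uniformly. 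That approximation is what lets \procnameZestZexpectedZbetaHREF use the cheap gross test inside the $O(\log\log N)$-call binary search instead of the expensive $O(\log\tfrac{1}{\eps c})$-sample target test. Your fixed-threshold design is in fact what the paper presents in Appendix~\ref{apx:target-test-galactic} (\textsf{Target-test-lightweight}), and the paper notes that substituting it costs an extra $O(\log\tfrac{1}{\eps c})$ multiplicative factor on the $\log\log N$ term precisely because the gross/refined separation is no longer available.
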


We prove this lemma in Section \ref{sec:estimation-procs}, right after the implementation of \procnameZtargetZtest. Based on this lemma, we have a canonical target-test for every triplet $(x,c,\eps)$, and can omit the recurring parameter of ``the specific $(x,c,\eps)$-test we refer to'', encapsulating it as a function.

\begin{definition}[$f_{x,c,\eps}$, Target function] \label{def:target-function}
    The target function $f_{x,c,\eps} : \Omega \setminus \{x\}$ is the probability of the (canonical) $(x,c,\eps)$-target-test to accept $y$.
\end{definition}

The ``ideal reference set'' is defined similarly to the output of the target test, only here we do not allow any error with respect to the ``important'' sets $L_x$ and $H_x$.

\begin{definition}[$V_{x,c,\eps}$, the target set] \label{def:V-x}
    Let $x \in \Omega$. The target set $V_{x,c,\eps} \subseteq \Omega \setminus \{x\}$ is a random set that contains $L_x$, is disjoint from $H_x$, and contains every element $y \in M_x$ with probability $f_{x,c,\eps}(y)$, independently.
\end{definition}

We define two masses based on the expected mass of the target set corresponding to $x$, with or without $x$ itself. Both masses play a role.

\begin{definition}[$s_{x,c,\eps}$, the scale mass] \label{def:s-x}
    The scale mass with respect to $x$ is denoted by $s_{x,c,\eps} = \E[\mu(V_{x,c,\eps})] = \mu(L_x) + \sum_{y \in M_x} \mu(y) f_{x,c,\eps}(y)$. The expectation is over the choice of $V_{x,c,\eps}$ as a random set.
\end{definition}

\begin{definition}[$w_{x,c,\eps}$, the weight of $x$] \label{def:w-x}
    The weight of $x$ is denoted by $w_x = \mu(x) + s_{x,c,\eps}$.
\end{definition}

As mentioned in the technical overview, we also define a filter set that is randomly constructed according to a parameter $\alpha$. Since the filter set only depends on the internal coin-tosses of the algorithm, we can fully characterize it and use it for drawing conditional samples. Our reference set is the intersection of the target set and the filter set.

\begin{definition}[$A_\alpha$, the $\alpha$-filter set] \label{def:A-alpha}
    Let $0 < \alpha \le 1$. The $\alpha$-filter set, $A_\alpha$, is a random set where every element in $\Omega$ belongs to $A_\alpha$ with probability $\alpha$, independently.
\end{definition}

\begin{definition}[$V_{x,c,\eps,\alpha}$, the $\alpha$-filtered target set] \label{def:V-x-alpha}
    The $\alpha$-filtered target set is denoted by $V_{x,c,\eps,\alpha} = V_{x,c,\eps} \cap A_\alpha$.
\end{definition}

The next definition, $\gamma_{x,c,\eps}$, describes the ``best'' value of $\alpha$ with respect to $x$, $c$ and $\eps$, which our algorithm looks for. The rest of the algorithm works by first finding a suitable $\alpha=\Theta(\gamma_{x,c,\eps})$

\begin{definition}[$\gamma_{x,c,\eps}$, the goal magnitude] \label{def:gamma-x}
    The goal magnitude of the filtering parameter $\alpha$ is denoted by $\gamma_{x,c,\eps} = \mu(x) / \E[\mu(V_{x,c,\eps})]$.
\end{definition}

For a good $\alpha$, the distribution of the following probability is concentrated around a value bounded away from both $0$ and $1$.

\begin{definition}[$\beta_{x,c,\eps,\alpha}$, the filtered density] \label{def:beta-x-alpha}
    Let $x \in \Omega$ and $0 < \alpha \le 1$. The filtered density of $x$, with respect to the choices of $V_{x,c,\eps}$ and $A_\alpha$, is $\beta_{x,c,\eps,\alpha} = \Pr_\mu[\neg x | V_{x,c,\eps,\alpha} \cup \{x\}] = \frac{\mu(V_{x,c,\eps,\alpha})}{\mu(x) + \mu(V_{x,c,\eps,\alpha})}$.
\end{definition}

The algorithm first finds a good $\alpha$ by performing a binary search, where for values too close to $0$ or $1$ the expectation of $\beta_{x,c,\eps,\alpha}$ are in the ``too low'' and ``too high'' ranges respectively. A good $\alpha$ allows us to complete our assessment of $\mu(x)$ using $\beta_{x,c,\eps,\alpha}$.

\begin{observation} \label{obs:key-observation}
    For every $0 < \alpha \le 1$, $\mu(x) = \alpha \E[\mu(V_{x,c,\eps})] / \E\left[\frac{\beta_{x,c,\eps,\alpha}}{1 - \beta_{x,c,\eps,\alpha}}\right]$.
\end{observation}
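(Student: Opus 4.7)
The plan is a direct algebraic calculation. Since $\mu(x) > 0$ is deterministic and the only randomness in $\beta_{x,c,\eps,\alpha}$ comes through $\mu(V_{x,c,\eps,\alpha})$, the whole identity reduces to evaluating $\E[\mu(V_{x,c,\eps,\alpha})]$ via linearity of expectation and the independence of the target set $V_{x,c,\eps}$ from the filter set $A_\alpha$.

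First I would unfold the definition of $\beta_{x,c,\eps,\alpha}$ from Definition \ref{def:beta-x-alpha}. Since $1 - \beta_{x,c,\eps,\alpha} = \mu(x) / (\mu(x) + \mu(V_{x,c,\eps,\alpha}))$, a pointwise simplification gives
\[ \frac{\beta_{x,c,\eps,\alpha}}{1 - \beta_{x,c,\eps,\alpha}} = \frac{\mu(V_{x,c,\eps,\alpha})}{\mu(x)}. \]
Taking expectation over the joint law of $(V_{x,c,\eps}, A_\alpha)$ and pulling out the deterministic factor $\mu(x)$ yields $\E[\beta_{x,c,\eps,\alpha}/(1-\beta_{x,c,\eps,\alpha})] = \E[\mu(V_{x,c,\eps,\alpha})]/\mu(x)$. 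All expectations here are finite because $0 \le \mu(V_{x,c,\eps,\alpha}) \le 1$ and $\mu(x)$ is a fixed positive constant.

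Next I would evaluate $\E[\mu(V_{x,c,\eps,\alpha})] = \E[\mu(V_{x,c,\eps} \cap A_\alpha)]$ by writing it as $\sum_{y \in \Omega} \mu(y) \cdot \mathbf{1}[y \in V_{x,c,\eps}] \cdot \mathbf{1}[y \in A_\alpha]$ and applying linearity of expectation. Using that $A_\alpha$ is constructed independently of $V_{x,c,\eps}$ via internal randomness (Definition \ref{def:A-alpha}) and includes each $y$ with probability $\alpha$, this becomes
\[ \E[\mu(V_{x,c,\eps,\alpha})] = \alpha \sum_{y \in \Omega} \mu(y)\,\Pr[y \in V_{x,c,\eps}] = \alpha \cdot \E[\mu(V_{x,c,\eps})]. \]
Substituting back into the previous identity and rearranging for $\mu(x)$ gives exactly the claimed equation. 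There is no real obstacle; the statement is a purely structural observation, and its significance is not in the proof but in the fact that it reduces the approximation of $\mu(x)$ to the two separate subtasks of approximating $\alpha \E[\mu(V_{x,c,\eps})]$ (handled via $s_{x,c,\eps}$) and $\E[\beta_{x,c,\eps,\alpha}/(1-\beta_{x,c,\eps,\alpha})]$ (handled by \procnameZestZexpectedZhZbetaHREF), which is the top-level blueprint of \procnameZmainZsingleHREF.
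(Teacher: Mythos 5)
Your proof is correct and follows essentially the same route as the paper: simplify $\beta_{x,c,\eps,\alpha}/(1-\beta_{x,c,\eps,\alpha})$ pointwise to $\mu(V_{x,c,\eps,\alpha})/\mu(x)$, take expectations, and reduce $\E[\mu(V_{x,c,\eps,\alpha})]$ to $\alpha\E[\mu(V_{x,c,\eps})]$. The only difference is that you spell out the last step via linearity and the independence of $A_\alpha$ from $V_{x,c,\eps}$, where the paper states it directly.
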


\begin{proof} For every fixed $x$,
    \[
        \E\left[\frac{\beta_{x,c,\eps,\alpha}}{1 - \beta_{x,c,\eps,\alpha}}\right]
        = \E\left[\frac{\frac{\mu(V_{x,c,\eps,\alpha})}{\mu(x) + \mu(V_{x,c,\eps,\alpha})}}{1 - \frac{\mu(V_{x,c,\eps,\alpha})}{\mu(x) + \mu(V_{x,c,\eps,\alpha})}}\right]
        = \E\left[\frac{\mu(V_{x,c,\eps,\alpha})}{\mu(x)}\right]
        = \frac{\E\left[\mu(V_{x,c,\eps,\alpha})\right]}{\mu(x)}
        = \frac{\alpha \E\left[\mu(V_{x,c,\eps})\right]}{\mu(x)}
    \qedhere \]
\end{proof}


\subsection{Technical lemmas}

\begin{definition}[Binomial distribution, $\Bin(n,p)$]
    The distribution of the sum of $n$ independent trials with success probability $p$ is denoted by $\Bin(n,p)$. Explicitly, $\Pr_{\mathrm{Bin}(n,p)}[k] = \binom{n}{k} p^k (1-p)^{n-k}$.
\end{definition}

\begin{lemma}[Additive Chernoff bound]
    Let $X \sim \Bin(n,p)$. For every $t > 0$, $\Pr[X - \E[X] > t] \le e^{-2 t^2/n}$ and $\Pr[X - \E[X] < -t] \le e^{-2 t^2/n}$.
\end{lemma}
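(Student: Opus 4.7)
The plan is to prove this via the classical Chernoff--Cram\'er exponential moment method, specialized to sums of independent bounded random variables (i.e., Hoeffding's inequality restricted to Bernoulli summands). First I would write $X = \sum_{i=1}^n X_i$ where the $X_i$ are independent $\mathrm{Ber}(p)$, and set $Y_i = X_i - p$, so $\E[Y_i] = 0$ and $Y_i \in [-p, 1-p]$. For any $\lambda > 0$, Markov's inequality applied to $e^{\lambda \sum_i Y_i}$ gives
\[
\Pr[X - \E[X] > t] \;=\; \Pr\!\left[e^{\lambda \sum_i Y_i} > e^{\lambda t}\right] \;\le\; e^{-\lambda t} \prod_{i=1}^n \E[e^{\lambda Y_i}],
\]
where the product form uses independence of the $Y_i$.

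The core technical step is Hoeffding's lemma: for any zero-mean random variable $Y$ supported in an interval $[a,b]$, we have $\E[e^{\lambda Y}] \le e^{\lambda^2 (b-a)^2 / 8}$. I would prove this by noting that $e^{\lambda y}$ is convex in $y$, so on $[a,b]$ it is bounded above by the chord connecting $(a, e^{\lambda a})$ and $(b, e^{\lambda b})$; taking expectations (using $\E[Y]=0$) yields an explicit expression $e^{\psi(\lambda)}$ with $\psi(0)=\psi'(0)=0$ and $\psi''(\lambda) \le (b-a)^2/4$, so Taylor's theorem gives $\psi(\lambda) \le \lambda^2 (b-a)^2/8$. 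Applied to each $Y_i$, whose range has length $1$, this gives $\E[e^{\lambda Y_i}] \le e^{\lambda^2/8}$, so $\prod_i \E[e^{\lambda Y_i}] \le e^{n \lambda^2 / 8}$.

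Combining these bounds, $\Pr[X - \E[X] > t] \le \exp(-\lambda t + n\lambda^2 / 8)$, and I would optimize over $\lambda > 0$ by setting $\lambda = 4t/n$ to obtain the upper tail bound $\Pr[X - \E[X] > t] \le e^{-2t^2/n}$. For the lower tail, I would apply the identical argument to $n - X \sim \Bin(n, 1-p)$, whose mean is $n - np$; the event $\{X - \E[X] < -t\}$ coincides with $\{(n-X) - \E[n-X] > t\}$, so the upper tail bound transfers directly.

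The argument is entirely standard, so I do not anticipate any genuine obstacle; the only delicate computation is the verification of Hoeffding's lemma, and the only choice to make is the optimizing value $\lambda = 4t/n$, which falls out by differentiating the exponent.
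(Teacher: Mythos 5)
Your proof is correct and is the standard Hoeffding-inequality argument (Markov on the moment generating function, Hoeffding's lemma via convexity of $e^{\lambda y}$, optimize $\lambda = 4t/n$, and deduce the lower tail by the $X \mapsto n-X$ symmetry). The paper states this lemma as a well-known fact and gives no proof, so there is no authorial proof to compare against; your argument is the canonical one and is complete.
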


\begin{lemma}[Multiplicative Chernoff bound]
    Let $X \sim \Bin(n,p)$. For every $0 < r \le 1$, $\Pr[X > (1 + r) \E[X]] \le e^{-\frac{1}{3} r^2 \E[X]}$ and $\Pr[X < (1 - r) \E[X]] \le e^{-\frac{1}{3} r^2 \E[X]}$.
\end{lemma}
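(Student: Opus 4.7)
The plan is to follow the standard Chernoff--Cramér recipe: exponentiate, invoke independence, apply Markov, and optimize over the free parameter. I would start by writing $X = \sum_{i=1}^n X_i$ for independent $X_i \sim \Ber(p)$, so that for any $t > 0$ and any threshold $a$, Markov's inequality applied to $e^{tX}$ gives
\[ \Pr[X \ge a] \le e^{-ta}\, \E[e^{tX}] = e^{-ta} \prod_{i=1}^n \E[e^{tX_i}] = e^{-ta}(1-p+pe^t)^n. \]
Using $1+u \le e^u$ with $u = p(e^t-1)$, I would bound $(1-p+pe^t)^n \le e^{np(e^t-1)} = e^{\mu(e^t-1)}$, where I write $\mu = \E[X] = np$ for brevity.

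For the upper tail with $a = (1+r)\mu$, I would optimize by choosing $t = \ln(1+r) > 0$, which yields
\[ \Pr[X > (1+r)\mu] \le \exp\!\bigl(\mu\,[\,r - (1+r)\ln(1+r)\,]\bigr). \]
The main analytic step is then the calculus inequality $r - (1+r)\ln(1+r) \le -r^2/3$ on $r \in (0,1]$. I would verify this by setting $h(r) := (1+r)\ln(1+r) - r - r^2/3$, noting $h(0) = 0$, $h'(r) = \ln(1+r) - 2r/3$, and $h''(r) = 1/(1+r) - 2/3$. Since $h''$ is positive on $[0,1/2]$ and negative on $[1/2,1]$, $h'$ is unimodal with $h'(0)=0$ and $h'(1) = \ln 2 - 2/3 > 0$, hence $h' \ge 0$ throughout $[0,1]$, giving $h \ge 0$ on that interval.

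For the lower tail I would repeat the argument with a negative parameter: applying Markov to $e^{tX}$ with $t = \ln(1-r) < 0$ gives $\Pr[X < (1-r)\mu] \le \exp\!\bigl(\mu\,[\,-r - (1-r)\ln(1-r)\,]\bigr)$. The corresponding inequality $-r - (1-r)\ln(1-r) \le -r^2/3$ is easier: Taylor-expanding $(1-r)\ln(1-r)$ around $r=0$ gives $-r - (1-r)\ln(1-r) = -r^2/2 - r^3/6 - r^4/12 - \cdots \le -r^2/2 \le -r^2/3$ for $r \in (0,1)$ (a second-derivative argument works equally well).

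The only non-mechanical step is the elementary calculus for these two one-variable inequalities, and between the two the \emph{upper} tail is slightly more delicate, since its leading Taylor term is $-r^2/2$ and leaves only a thin margin above $-r^2/3$ --- precisely the reason the standard multiplicative Chernoff bound quotes the constant $1/3$ rather than $1/2$. Once these inequalities are established, both tail bounds follow immediately from the exponentiated Markov estimates above.
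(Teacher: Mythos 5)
Your proof is correct, and it is the completely standard Chernoff--Cram\'er argument for this bound. Note, though, that the paper states this lemma without any proof at all --- it treats the multiplicative Chernoff bound with the constant $1/3$ as a folklore fact --- so there is no paper proof to compare against. Your exponentiate-Markov-optimize derivation, the bound $(1-p+pe^t)^n \le e^{\mu(e^t-1)}$, and the two one-variable calculus inequalities ($r-(1+r)\ln(1+r)\le -r^2/3$ via the sign of $h'$, and $-r-(1-r)\ln(1-r)\le -r^2/2$ via the alternating-free Taylor series) are all sound. One negligible omission: your Taylor argument for the lower tail is stated for $r\in(0,1)$, while the lemma allows $r=1$; that endpoint is trivial since $\Pr[X<0]=0$, but it is worth a half-sentence to say so.
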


\begin{definition}[Geometric distribution, $\Geo(p)$]
    The distribution of the number of independent trials, each with success probability $p$, until the first success (including the successful trial itself) is denoted by $\Geo(p)$. Explicitly, $\Pr_{\Geo(p)}[k] = (1 - p)^{k-1} p$.
\end{definition}
\begin{lemma}[Well-known] \label{lemma:geo-mgf}
    Let $X$ be a random variable that is geometrically distributed with parameter $p$. Then $\E[X] = p^{-1}$, $\Var[X] = \frac{1-p}{p^2}$, and $\E[e^{\lambda X}] = \frac{p e^\lambda}{1 - (1-p)e^\lambda}$ for $\lambda < -\ln (1-p)$.
\end{lemma}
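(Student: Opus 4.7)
The plan is to derive the moment generating function first by direct summation, and then extract the mean and variance either by differentiation at $\lambda=0$ or by re-summing analogous series directly.

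For the MGF, I would write
\[
\E[e^{\lambda X}] \;=\; \sum_{k=1}^{\infty} e^{\lambda k} (1-p)^{k-1} p \;=\; p e^{\lambda} \sum_{k=0}^{\infty} \bigl((1-p)e^{\lambda}\bigr)^{k},
\]
recognize this as a geometric series, and note that it converges precisely when $(1-p)e^{\lambda} < 1$, i.e.\ when $\lambda < -\ln(1-p)$. Summing gives the closed form $\frac{p e^{\lambda}}{1 - (1-p)e^{\lambda}}$ on this range.

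For the mean and variance, the cleanest route is to differentiate the MGF. A short calculation gives
\[
M'(\lambda) \;=\; \frac{p e^{\lambda}}{(1 - (1-p)e^{\lambda})^{2}},
\]
so that $\E[X] = M'(0) = p/p^{2} = 1/p$. Differentiating once more and evaluating at $\lambda = 0$ yields $\E[X^{2}] = (2-p)/p^{2}$, and therefore $\Var[X] = \E[X^{2}] - \E[X]^{2} = (2-p)/p^{2} - 1/p^{2} = (1-p)/p^{2}$. Alternatively, one can compute $\E[X]$ and $\E[X^{2}]$ directly from the series using the identities $\sum_{k \ge 1} k x^{k-1} = (1-x)^{-2}$ and $\sum_{k \ge 1} k^{2} x^{k-1} = (1+x)(1-x)^{-3}$ with $x = 1-p$; these identities themselves come from differentiating the geometric series once and twice, so the two routes are essentially the same computation.

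Since the lemma is labelled ``well-known'' and all three claims reduce to manipulating the same geometric sum, I do not expect any genuine obstacle; the only point requiring any care is stating the domain of convergence correctly, which naturally falls out of the ratio $(1-p)e^{\lambda} < 1$ required for the geometric series to sum.
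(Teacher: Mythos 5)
Your derivation is correct and complete; the paper itself gives no proof for this lemma (it is simply labelled ``well-known''), so there is nothing to compare against. Your MGF computation, its radius of convergence $(1-p)e^{\lambda}<1$, and the extraction of mean and variance by differentiating at $\lambda=0$ are all standard and accurate.
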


\begin{observation}[Folklore] \label{obs:non-decreasing-set-monotone}
    Let $A_\alpha \subseteq \Omega$ be a random set such that, given $\alpha$, every element $y \in \Omega$ belongs to $A_\alpha$ with probability $p_{y,\alpha}$, where $p_{y,\alpha}$ is non-decreasing monotone with respect to $\alpha$ (but possibly not the same for different choices of $y$). Let $f : 2^\Omega \to \mathbb R$ be a non-decreasing monotone function (that is, $U \subseteq V \to f(U) \le f(V)$). In this setting, the mapping $\alpha \to \E[f(A_\alpha)]$ is non-decreasing monotone as well.
\end{observation}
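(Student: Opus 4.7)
My plan is to establish the observation by a standard monotone coupling argument. Concretely, I would introduce a single probability space on which all the random sets $A_\alpha$ for all $\alpha$ are defined simultaneously, in such a way that the family $\{A_\alpha\}$ is almost surely nested. For each $y\in\Omega$, sample an independent uniform random variable $U_y\in[0,1]$, and define $\widetilde A_\alpha = \{y\in\Omega : U_y\le p_{y,\alpha}\}$. Since $\Pr[U_y\le p_{y,\alpha}]=p_{y,\alpha}$ and the $U_y$ are independent across $y$, the marginal distribution of $\widetilde A_\alpha$ coincides with that of $A_\alpha$, so $\E[f(\widetilde A_\alpha)]=\E[f(A_\alpha)]$.

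Next I would use the hypothesis that $p_{y,\alpha}$ is non-decreasing in $\alpha$ for each fixed $y$: whenever $\alpha\le\alpha'$ we have $p_{y,\alpha}\le p_{y,\alpha'}$, so any $y$ with $U_y\le p_{y,\alpha}$ also satisfies $U_y\le p_{y,\alpha'}$. Thus, on this coupled space, $\widetilde A_\alpha\subseteq \widetilde A_{\alpha'}$ deterministically. Invoking the monotonicity of $f$ pointwise then yields $f(\widetilde A_\alpha)\le f(\widetilde A_{\alpha'})$ on the entire probability space, and taking expectations gives $\E[f(A_\alpha)]=\E[f(\widetilde A_\alpha)]\le\E[f(\widetilde A_{\alpha'})]=\E[f(A_{\alpha'})]$, as desired.

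There is essentially no obstacle here; the only subtlety worth noting is that the observation implicitly assumes that the memberships ``$y\in A_\alpha$'' for different $y$ are independent (as is the case for the filter set $A_\alpha$ defined in Definition~\ref{def:A-alpha}), so that the product coupling via independent $U_y$'s reproduces the correct joint distribution. The required integrability of $f(A_\alpha)$ is automatic when $\Omega$ is finite, which is the setting of this paper.
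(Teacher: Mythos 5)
Your proof is correct, and since the paper labels this observation as folklore without supplying a proof, your monotone coupling via independent uniforms $U_y$ is exactly the standard argument the paper is implicitly invoking. Your remark that independence of the memberships across $y$ is tacitly assumed is a genuine catch: the statement as written would be false without it (one can construct counterexamples with a monotone $f$ and marginals that increase, where negative correlation at the larger $\alpha$ decreases $\E[f]$), and independence does hold in all the places the paper applies the observation (e.g.\ to $V_{x,\alpha}=V_x\cap A_\alpha$, where both $V_x$ and $A_\alpha$ have independent memberships).
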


\begin{observation}[Generic]  \label{obs:set-exp-value-continuity}
    Let $f : 2^\Omega \to [a, b]$ be a bounded function. Assume that $A_\alpha \subseteq \Omega$ is drawn such that every element $y \in U$ is drawn with probability $p_{y,\alpha}$, which is continuous with respect to a parameter $\alpha$ (but possibly not the same for different choices of $y$). The mapping $\alpha \to \E[f(A_\alpha)]$ is continuous.
\end{observation}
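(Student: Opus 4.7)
The plan is to reduce the statement to the continuity of a finite sum of continuous functions, by exploiting the finiteness of the domain $\Omega$ (throughout this paper, $\Omega = \{1,\ldots,N\}$). Concretely, since the power set $2^\Omega$ is finite, I would begin by writing
\[ \E[f(A_\alpha)] = \sum_{S \subseteq \Omega} f(S) \cdot \Pr[A_\alpha = S], \]
and then analyze each coefficient $\Pr[A_\alpha = S]$ separately as a function of $\alpha$.

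For the second step, paralleling the setup of Observation \ref{obs:non-decreasing-set-monotone} (and the concrete use case of Definition \ref{def:A-alpha}), the inclusion events $\{y \in A_\alpha\}$ are independent across $y \in \Omega$. Hence
\[ \Pr[A_\alpha = S] = \prod_{y \in S} p_{y,\alpha} \cdot \prod_{y \notin S} (1 - p_{y,\alpha}), \]
which is a finite product of functions that are continuous in $\alpha$ by the hypothesis on each $p_{y,\alpha}$, and is therefore itself continuous in $\alpha$. Since $f(S) \in [a,b]$ is a bounded constant for each fixed $S$, the summand $f(S) \cdot \Pr[A_\alpha = S]$ is continuous in $\alpha$, and taking a finite sum over $S \subseteq \Omega$ preserves continuity, establishing the claim.

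The main subtlety rather than a genuine obstacle is the implicit independence assumption in the statement: without independence, the marginal continuity of the $p_{y,\alpha}$ alone does not pin down the joint law of $A_\alpha$, and the product expansion above is unavailable. Once independence is granted (as is consistent with all uses of this observation in the paper), the argument is a one-line consequence of the fact that finite sums and products of continuous real-valued functions are continuous.
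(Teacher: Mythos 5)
Your proof is correct, but it takes a genuinely different route from the paper's. You expand $\E[f(A_\alpha)]$ as a finite sum over all $2^N$ subsets $S\subseteq\Omega$ and observe that each coefficient $\Pr[A_\alpha=S]=\prod_{y\in S}p_{y,\alpha}\prod_{y\notin S}(1-p_{y,\alpha})$ is a finite product of continuous functions; continuity then follows from preservation of continuity under finite sums. The paper instead gives a direct quantitative bound: for any $\alpha_1<\alpha_2$,
\[
\abs{\E[f(A_{\alpha_2})]-\E[f(A_{\alpha_1})]}\le (b-a)\sum_{y\in\Omega}\abs{p_{y,\alpha_2}-p_{y,\alpha_1}},
\]
which is a coupling/hybrid argument (couple the two sets elementwise, and note that on the event they agree the contribution cancels, while the probability of disagreement is bounded by the union bound). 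Your approach is more elementary and leans harder on the finiteness of $\Omega$ (the sum over subsets has $2^N$ terms, so it only makes sense for finite domains), whereas the paper's bound also yields an explicit modulus of continuity and would survive a countable domain under a summability assumption on the $p_{y,\alpha}$. You correctly flag that independence across elements is an implicit hypothesis; note that the paper's coupling argument also needs it to realize the per-element optimal couplings jointly, so this is not a gap peculiar to your route. Both proofs are valid.
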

\begin{proof}
    Let $\alpha_1 < \alpha_2$. Then:
    \[\abs{\E[f(A_{\alpha_2})] - \E[f(A_{\alpha_1})]}
        \le \left (\max_U f(U)-\min_U f(U)\right) \sum_{y \in \Omega} \abs{p_{y,\alpha_2} - p_{y,\alpha_1}}
        \le (b - a) \cdot \sum_{y \in \Omega} \abs{p_{y,\alpha_2} - p_{y,\alpha_1}}
    \]
    This expression tends to zero for $\alpha_2 \to \alpha_1$ since all $p_{y,\alpha}$s are continuous and $a$ and $b$ are fixed.
\end{proof}

\begin{restatable}[Median amplification]{observation}{obsZmedianZamplification} \label{obs:median-amplification}
    Let $X$ be a random variable, and $[a,b]$ be a range such that $\Pr\left[X \in [a, b]\right] \ge 2/3$. We use ``median-of-$M$'' to denote the process of drawing $M$ independent samples of $X$ and taking their median value. Then:
    \begin{enumerate}[label=(\alph*)]
        \item Median-of-$9$ amplifies the probability of obtaining a value in $[a,b]$ to $5/6$. \label{median-amplification:9-to-5/6}
        \item Median-of-$13$ amplifies it to $8/9$. \label{median-amplification:13-to-8/9}
        \item Median-of-$47$ amplifies it to $99/100$. \label{median-amplification:47-to-99/100}
        \item Median-of-$\ceil{30 \ln c^{-1}}$ amplifies it to $1 - \frac{1}{2}c$ for $c < 1/3$. \label{median-amplification:log-to-2c}
        \item Median-of-$\ceil{30 \ln c^{-1}}$ amplifies it to $1 - \frac{1}{24}c$ for $c < 1/150$. \label{median-amplification:log-to-24c}
    \end{enumerate}
\end{restatable}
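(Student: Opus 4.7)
The starting point for all five parts is a single deterministic observation: if the median of $M$ i.i.d.\ copies $X_1,\dots,X_M$ of $X$ falls outside the target interval $[a,b]$, then strictly more than $M/2$ of the samples must lie outside $[a,b]$, since the $\lceil M/2\rceil$-th order statistic is forced into $[a,b]$ the moment at least $\lceil M/2\rceil$ of the $X_i$'s land in $[a,b]$. Letting $B_i$ denote the indicator of ``$X_i\notin[a,b]$'', the $B_i$'s are i.i.d.\ Bernoulli with parameter at most $1/3$, so
\[ \Pr[\mathrm{median}\notin[a,b]]\;\le\;\Pr\!\left[\sum_{i=1}^M B_i\ge \lfloor M/2\rfloor+1\right]\;\le\;\Pr[\mathrm{Bin}(M,1/3)\ge \lfloor M/2\rfloor+1]. \]
Each of the five items then becomes a concrete upper bound on a binomial tail; the five parts split according to whether $M$ is a fixed small constant or grows with $c$.

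For parts \ref{median-amplification:9-to-5/6}, \ref{median-amplification:13-to-8/9} and \ref{median-amplification:47-to-99/100} the cleanest route is direct evaluation. For $M=9$ one computes $\sum_{k=5}^{9}\binom{9}{k}2^{9-k}/3^9 = 2851/19683 < 1/6$; for $M=13$ the analogous sum evaluates to $165075/1594323 < 1/9$; for $M=47$ the direct sum $\sum_{k=24}^{47}\binom{47}{k}(1/3)^k(2/3)^{47-k}$ is bounded numerically below $1/100$. I would either tabulate this last sum directly or invoke a sharper Kullback--Leibler Chernoff inequality of the form $\exp(-47\cdot D(24/47\,\|\,1/3))$, whose exponent is slightly above $3$, together with a few explicit correction terms; this is the most tedious step, but it is a pure numerical check.

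For the parametric parts \ref{median-amplification:log-to-2c} and \ref{median-amplification:log-to-24c}, I would simply invoke the additive Chernoff bound from the preliminaries with $t=M/6$, giving
\[ \Pr[\mathrm{Bin}(M,1/3)\ge M/2]\;\le\;\exp(-M/18). \]
Substituting $M=\lceil 30\ln c^{-1}\rceil$ yields an upper bound of $c^{5/3}$. For part \ref{median-amplification:log-to-2c} the assumption $c<1/3$ gives $c^{2/3}<(1/3)^{2/3}<1/2$, hence $c^{5/3}\le c/2$; for part \ref{median-amplification:log-to-24c} the assumption $c<1/150$ gives $c^{2/3}<(1/150)^{2/3}<1/24$, hence $c^{5/3}\le c/24$. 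Both inequalities are immediate.

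The only real obstacle is the numerical verification in part \ref{median-amplification:47-to-99/100}: the constants $M=9,13,47$ were evidently chosen so that the bounds hold with only a small margin and no closed form, and standard multiplicative or additive Chernoff inequalities are not quite sharp enough at these sizes. Everything else is mechanical once the reduction to a single binomial tail is set up.
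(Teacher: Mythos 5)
Your reduction to a binomial tail is the same one the paper uses, and so is the additive Chernoff computation for parts (d)--(e) (with $t=M/6$, yielding $e^{-M/18} \le c^{5/3}$ and the same final comparison of $c^{2/3}$ against $1/2$ and $1/24$). The exact sums you compute for $M=9$ and $M=13$ ($2851/19683$ and $165075/1594323$) are correct and match what the paper's formula $\frac{1}{3^k}\sum_{i=0}^{\lfloor k/2\rfloor}\binom{k}{i}2^i$ evaluates to. The real divergence is how part (c) is settled: you punt to ``tabulate directly or use KL-Chernoff with correction terms,'' whereas the paper introduces a small ``separation parameter'' device that collapses the lower tail into a single verifiable closed form, $2^{23}\binom{48}{23}/3^{47}$, which squeaks in just under $1/100$. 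Your remark that the bare KL-Chernoff exponent is only slightly above $3$ (so $e^{-47D(24/47\|1/3)}\approx 0.04$, not good enough on its own) is accurate, and the paper's trick is exactly the kind of extra slack that closes that gap without a full $24$-term sum; it would be worth adopting that step rather than leaving (c) as an unverified numerical claim.

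One small but genuine inaccuracy: the claim that the median leaving $[a,b]$ forces \emph{strictly more} than $M/2$ samples outside is false for even $M$ (take $M=4$, values $0,0,1,1$, interval $[0.9,1.1]$: two values outside, median $0.5\notin[a,b]$). The correct threshold, which the paper uses, is $\Pr[\Bin(M,1/3)\ge M/2]$, i.e.\ $\ge\lceil M/2\rceil$, not $\ge\lfloor M/2\rfloor+1$. For $M=9,13,47$ these coincide, and your Chernoff step for (d)--(e) in fact bounds $\Pr[\ge M/2]$, so your final numbers are unaffected --- but as stated the reduction step does not hold for even $M=\lceil 30\ln c^{-1}\rceil$, and should be corrected.
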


We prove Observation \ref{obs:median-amplification} in Appendix \ref{apx:tldr}.

\begin{lemma} \label{lemma:one-over-cdf-sum-bound}
    Let $r > 0$. For every distribution $\mu$, $\E_{x \sim \mu}\left[\frac{1}{w_x + r}\right] \le \E_{x\sim \mu}\left[\frac{1}{\CDF_\mu(x) + r}\right] = O(\log r^{-1})$.
\end{lemma}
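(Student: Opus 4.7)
The first inequality reduces to showing $w_x \ge \CDF_\mu(x)$ pointwise. By Definition~\ref{def:s-x}, $s_{x,c,\eps} = \mu(L_x) + \sum_{y \in M_x} \mu(y) f_{x,c,\eps}(y) \ge \mu(L_x)$. Since by Definition~\ref{def:LMH-x} the set $L_x \cup \{x\}$ is exactly $\{y \in \Omega : \mu(y) \le \mu(x)\}$, we get
\[ w_x = \mu(x) + s_{x,c,\eps} \ge \mu(x) + \mu(L_x) = \CDF_\mu(x). \]
Monotonicity of $t \mapsto 1/(t+r)$ then gives $1/(w_x + r) \le 1/(\CDF_\mu(x) + r)$ for every $x$, and taking expectations yields the first inequality.

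For the $O(\log r^{-1})$ bound on the right-hand side, I would enumerate $\Omega = \{x_1, x_2, \ldots\}$ in order of nondecreasing mass $\mu(x_1) \le \mu(x_2) \le \cdots$, and set $S_i = \sum_{j \le i} \mu(x_j)$ with $S_0 = 0$. Since every $x_j$ with $j \le i$ satisfies $\mu(x_j) \le \mu(x_i)$, we have $\CDF_\mu(x_i) \ge S_i$, hence
\[ \E_\mu\!\left[\frac{1}{\CDF_\mu(x)+r}\right] = \sum_i \mu(x_i)\cdot \frac{1}{\CDF_\mu(x_i)+r} \le \sum_i \frac{S_i - S_{i-1}}{S_i + r}. \]
The key step is the Riemann-sum comparison: since $s \mapsto 1/(s+r)$ is decreasing on $[S_{i-1},S_i]$, we have $(S_i - S_{i-1})/(S_i + r) \le \int_{S_{i-1}}^{S_i} (s+r)^{-1} ds$. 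Summing over $i$ telescopes the integration range to $[0, 1]$, giving
\[ \sum_i \frac{S_i - S_{i-1}}{S_i + r} \le \int_0^{1} \frac{ds}{s+r} = \ln\frac{1+r}{r} = O(\log r^{-1}). \]

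There is no real obstacle here; the only point requiring a moment's care is verifying that $\CDF_\mu(x_i) \ge S_i$ even in the presence of ties among the $\mu(x_j)$'s, which holds because $\CDF_\mu$ is defined with a non-strict inequality so it accumulates the full mass of every element up to and including index $i$.
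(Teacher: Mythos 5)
Your proof is correct, and it uses a genuinely different technique from the paper for the second bound. For the first inequality the paper just asserts ``by definition, $w_x \ge \CDF_\mu(x)$'' and moves on; you correctly spell out the underlying argument that $s_x \ge \mu(L_x)$ and $L_x \cup \{x\} = \{y : \mu(y) \le \mu(x)\}$, so that $w_x = \mu(x) + s_x \ge \mu(x) + \mu(L_x) = \CDF_\mu(x)$. For the $O(\log r^{-1})$ bound, the paper uses a dyadic-bucketing argument: it first crudely replaces $\CDF_\mu(x) + r$ by $\max\{\CDF_\mu(x), r\}$, then splits the sum by which power-of-two scale $\CDF_\mu(x)$ falls into, and invokes the inequality $\Pr_{x\sim\mu}[\CDF_\mu(x) \le 2^{-t}] \le 2^{-t}$ to charge each of the $O(\log r^{-1})$ buckets a constant. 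You instead sort the atoms by mass, observe $\CDF_\mu(x_i) \ge S_i$ (which is the same structural fact underlying the paper's tail bound, just stated per-element rather than per-bucket), and compare the resulting sum $\sum_i (S_i - S_{i-1})/(S_i + r)$ to the integral $\int_0^1 (s+r)^{-1}\,ds$ via the monotonicity of $s \mapsto 1/(s+r)$. Both routes are valid; yours yields the explicit constant $\ln\frac{1+r}{r}$ cleanly in one step, whereas the paper's dyadic version is the more conventional form in the distribution-testing literature (and incidentally is off by a factor of $2$ in the per-bucket bound, which is harmless inside $O(\cdot)$ but makes your version arguably tidier).
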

\begin{proof}
    By definition, $w_x \ge \CDF_\mu(x)$ for every $x \in \Omega$.
    \begin{eqnarray*}
        \E\left[\frac{1}{w_x + r}\right]
        \le \E\left[\frac{1}{\CDF_\mu(x) + r}\right]
        &\le& \sum_{x \in \Omega} \mu(x) \cdot \frac{1}{\max\{\CDF_\mu(x),r\}} \\
        &\le& \Pr[\CDF_\mu(x) \le r] \cdot \frac{1}{r} + \sum_{t=0}^{\floor{\log_2 r^{-1}}} \Pr[\CDF_\mu(x) \le 2^{-t}] \cdot 2^t  \\
        &\le& r \cdot \frac{1}{r} + \sum_{t=0}^{\floor{\log_2 r^{-1}}} 2^{-t} \cdot 2^t \\
        &=& 1 + \sum_{t=0}^{\floor{\log_2 r^{-1}}} 1
        = O(\log r^{-1})
    \end{eqnarray*}
\end{proof}

Due to the length of some expressions in our proofs, we use here the contribution notation introduced in \cite{supp24}:
\begin{definition}[Contribution of $X$ over $B$] \label{def:contribution}
    Let $X$ be a random variable and $B$ be an event. We denote the \emph{contribution of $X$ over $B$} by $\Ct[X | B] = \sum_{x \in B} \Pr[x] \cdot X(x) = \Pr[B] \E[X | B]$.
\end{definition}

We quickly summarize some equalities of the contribution notation:
\begin{itemize}
    \item $\Ct[\alpha X + \beta Y | B] = \alpha \Ct[X | B] + \beta \Ct[Y | B]$.
    \item If $B_1 \cap B_2 = \emptyset$ then $\Ct[X | B_1 \cup B_2] = \Ct[X | B_1] + \Ct[X | B_2]$.
    \item If $\Pr[X = Y | B] = 1$ then $\E[X] - \E[Y] = \Ct[X - Y | \neg B]$.
\end{itemize}

\section{Our algorithm}
\label{sec:our-algorithm}

Our upper-bound statements assume that $\eps$ and $c$ are ``sufficiently small''. More concretely, $\eps < \frac{1}{10}$ and $c < \frac{1}{16}$. The acronym ``SP'' appearing in some of the algorithms refers to ``Success Probability''.

We state the main theorem of this paper, which refers to the correctness of the procedure \procnameZmainZsingle.

\begin{theorem} \label{th:estimation-task}
    For every individual $x \in \Omega$, Algorithm \ref{fig:alg:estimation-task} solves the $(\eps,c)$-estimation task with expected sample complexity $O(\log \log N) + O\left(\log \frac{1}{\eps c} \cdot \left(\frac{1}{\eps^2 (w_x + c)} + \frac{\log^5 \eps^{-1}}{\eps^4\left(w_x + \eps / \log \eps^{-1}\right)}\right)\right)$ (the expectation is over the random choices of the algorithm), where $N = \abs{\Omega}$ is the size of the domain of $\mu$.
\end{theorem}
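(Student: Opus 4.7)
The plan is to combine the three main subroutines of \procnameZmainZsingle as outlined in the technical overview and Section~\ref{subsec:summary::of:intro}, exploiting the arithmetic identity of Observation~\ref{obs:key-observation}, which gives
\[
\mu(x) \;=\; \alpha\,\E[\mu(V_{x,c,\eps})]\,\Big/\,\E\!\left[\tfrac{\beta_{x,c,\eps,\alpha}}{1-\beta_{x,c,\eps,\alpha}}\right].
\]
So a sufficiently accurate estimate of each of the three quantities on the right-hand side yields a $(1\pm\eps)$-factor estimate of $\mu(x)$. First I would invoke \procnameZpreambleHREF, which either declares ``too low'' (providing the $\CDF_\mu$-saturation guarantee when $\CDF_\mu(x)\le c$), or returns rough estimates $\widehat\mu(x)$ and $\widehat s_x$ of $\mu(x)$ and $s_{x,c,\eps}$ whose ratio pins down the goal magnitude $\gamma_{x,c,\eps}$ up to a small constant factor. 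Next I would call \procnameZfindZgoodZalphaHREF, which performs the uncertain-comparator binary search over $O(\log N)$ dyadic candidates of $\alpha$ and returns an $\alpha=\Theta(\gamma_{x,c,\eps})$. Finally I would run \procnameZestZexpectedZhZbetaHREF with this $\alpha$ to get a $(1\pm O(\eps))$-approximation of $\E[\beta_{x,c,\eps,\alpha}/(1-\beta_{x,c,\eps,\alpha})]$, and substitute into the identity above, using $\widehat s_x$ for $\E[\mu(V_{x,c,\eps})]$.

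For correctness, I would take a union bound over the failure events of the three subroutines. The target-error constant $\eta_{c,\eps}$ of Definition~\ref{def:target-error} is set small enough that the implicit nature of $V_{x,c,\eps}$ does not disturb the estimate (it governs misclassification of $L_x$ vs.\ $H_x$ within the target-test scheme). When \procnameZpreambleHREF declares ``too low'', Definition~\ref{def:eps-c-estimation-task} is satisfied directly. Otherwise, $\alpha$ lies in the band where both the preamble rough estimates and the refined $\beta/(1-\beta)$ estimate are valid, and Observation~\ref{obs:key-observation} combined with the individual error budgets compounds to a $(1\pm\eps)$-factor estimate of $\mu(x)$.

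For the complexity, I would add the expected sample counts of the three subroutines. The binary-search subroutine is the source of the $O(\log\log N)$ term: each probabilistic comparator call costs $O(1)$ samples in expectation (with the success probability handled by the uncertain-binary-search framework of Subsection~\ref{subsec:binary-search::of-sec:find-good-alpha}, avoiding the $\log\log\log N$ penalty from straightforward amplification), and there are $O(\log\log N)$ comparator calls to locate $\alpha$ inside a range of size $O(\log N)$. The preamble contributes the $O(\log(\eps c)^{-1}\cdot 1/(\eps^2(w_x+c)))$ term: the factor $1/(w_x+c)$ arises because estimating $\mu(x)$ and $s_{x,c,\eps}$ is effectively done by conditioning to sets of relative mass $\Theta(w_x)$, with the ``$+c$'' coming from the early-exit rule that stops the procedure as soon as a CDF-based cutoff proves $x$ is too light. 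The $\E[\beta/(1-\beta)]$ estimator contributes the $\log^5\eps^{-1}/(\eps^4(w_x+\eps/\log\eps^{-1}))$ term, where the $\log^5\eps^{-1}$ tracks the $\poly(\log\eps^{-1})$ overhead of Section~\ref{sec:estimation-procs}, and the denominator $w_x+\eps/\log\eps^{-1}$ arises because each individual estimation of $\beta_{x,c,\eps,\alpha}$ proceeds by rejection sampling against the filter set $A_\alpha$, whose efficiency scales with the same $w_x$-dependent quantity.

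The main obstacle I anticipate is not the correctness analysis but the careful accounting of the \emph{expected} (rather than worst-case) complexity with dependence on $w_x$: the preamble must be designed to abort inexpensively in all the regimes it is allowed to (including when $x$ has $\CDF_\mu(x)\le c$ and when $w_x$ is already large), and the $\beta/(1-\beta)$-estimator must stop early once enough ``hits'' into $V_{x,c,\eps,\alpha}$ have been accumulated. Both of these call for amortized arguments over the internal randomness of \procnameZinitializeZnewZVxHREF and \procnameZVxZqueryHREF, with careful use of Lemma~\ref{lemma:geo-mgf} on geometric variables for the early-stop bounds, and of Lemma~\ref{lemma:one-over-cdf-sum-bound} when aggregating expected cost over $x\sim\mu$; all of these are made possible by the carefully-tuned target-test scheme of Section~\ref{sec:estimation-procs}.
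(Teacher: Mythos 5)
Your outline captures the right scaffolding (combine the three subroutines, use Observation~\ref{obs:key-observation}, sum expected costs, median-amplify and union-bound), and your allocation of the three terms in the sample-complexity bound to the preamble, the binary search, and the $\beta/(1-\beta)$-estimator is accurate. But there is a genuine gap in your account of what the preamble does and why it matters.

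You write that \procnameZpreambleHREF ``returns rough estimates $\widehat\mu(x)$ and $\widehat s_x$ \ldots whose ratio pins down the goal magnitude $\gamma_{x,c,\eps}$ up to a small constant factor.'' This is not what happens, and it masks the branching logic that makes the algorithm correct. The preamble is a \emph{saturation-aware} estimator: when $\mu(x)\ll s_x$, it deliberately returns $\hat p=\text{``too low''}$ (not a rough estimate), precisely because in that regime direct sampling cannot pin down $\mu(x)$ at affordable cost, while it still returns a valid $\hat s \approx s_x$. So the preamble's two outputs do not hand you $\gamma_x=\mu(x)/s_x$; the binary search of \procnameZfindZgoodZalphaHREF exists to recover the missing order of magnitude of $\gamma_x$ from scratch. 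Conversely, when $\hat p$ is \emph{not} ``too low'' the algorithm simply returns $\hat p$ and never calls the other two subroutines at all. This branch is not a convenience but a necessity: Lemma~\ref{lemma:main-alg-find-good-alpha} requires the precondition $\mu(x)\le\frac14 s_x$, and Lemma~\ref{lemma:range-of-gamma-x} uses this to guarantee $\gamma_x\le 1$, so that $\alpha$ lives inside the dyadic search range. Your version, which always proceeds to \procnameZfindZgoodZalphaHREF after the preamble, would be incorrect in the regime $\mu(x)\gtrsim s_x$ (e.g.\ when $x$ is the unique heaviest element), where no valid $\alpha\le 1$ exists.

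A smaller but still real omission: the correctness argument needs a three-way case analysis mirroring the three clauses of the $(\eps;0,c)$-$\CDF_\mu$-saturation-aware definition (accept as ``too low''; allow either ``too low'' or $(1\pm\eps)\mu(x)$; force $(1\pm\eps)\mu(x)$). Your sketch only treats the extremes (``too low'' vs.\ full estimate) and never addresses the middle regime where the preamble is permitted to answer either way, which is exactly where the saturation-awareness guarantees of Lemma~\ref{lemma:main-alg-preamble} have to be threaded through to the final output so that a possibly-wrong choice of branch cannot produce a bad numerical answer. Finally, the concluding arithmetic step $\hat p = \alpha\hat s/\hat b = (1\pm\eps/3)(1\pm\eps/2)\mu(x)\subseteq(1\pm\eps)\mu(x)$ is what actually makes the factor budget close, and you should state it explicitly rather than appeal to ``error budgets compound.''
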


\begin{corollary} \label{cor:worst-case-complexity-of::th:estimation-task}
    The expected complexity of Algorithm \ref{fig:alg:estimation-task} is $O(\log \log N) + O\left(\log \frac{1}{\eps c} \!\cdot\! \left(\frac{1}{\eps^2 c} + \frac{\log^6 \eps^{-1}}{\eps^5}\right)\right)$ for the worst-case choice of $x \in \Omega$.
\end{corollary}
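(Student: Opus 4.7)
The plan is to derive the corollary by reading Theorem \ref{th:estimation-task} as a uniform bound and replacing the $w_x$-dependent factors with their worst-case values. The bound in Theorem \ref{th:estimation-task} is decreasing in $w_x$, and since $w_x = \mu(x) + s_{x,c,\eps} \geq 0$ holds by Definition \ref{def:w-x}, we have the trivial universal lower bounds
\[
    w_x + c \;\geq\; c, \qquad w_x + \eps/\log \eps^{-1} \;\geq\; \eps/\log \eps^{-1},
\]
valid for every $x \in \Omega$. Plugging these into the two fractional terms in the theorem's complexity expression gives, uniformly over $x$,
\[
    \frac{1}{\eps^2 (w_x + c)} \;\leq\; \frac{1}{\eps^2 c}, \qquad \frac{\log^5 \eps^{-1}}{\eps^4 (w_x + \eps/\log \eps^{-1})} \;\leq\; \frac{\log^5 \eps^{-1} \cdot \log \eps^{-1}}{\eps^4 \cdot \eps} \;=\; \frac{\log^6 \eps^{-1}}{\eps^5}.
\]

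Substituting these worst-case inequalities into Theorem \ref{th:estimation-task} and retaining the additive $O(\log \log N)$ term that is independent of $x$, we obtain the claimed bound
\[
    O(\log \log N) + O\!\left(\log \tfrac{1}{\eps c} \cdot \left(\tfrac{1}{\eps^2 c} + \tfrac{\log^6 \eps^{-1}}{\eps^5}\right)\right).
\]
There is no real obstacle here; the only thing to check is that both denominators stay strictly positive so that the per-$x$ bound is always finite, which is immediate from $c > 0$ and $\eps > 0$. Since the bound we derive holds for every $x \in \Omega$, it holds in particular for the worst-case choice, completing the proof of the corollary.
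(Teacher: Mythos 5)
Your proof is correct and takes exactly the same route as the paper, which simply notes that the worst case is $w_x = 0$; you have merely spelled out the substitution and the resulting arithmetic. No issues.
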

\begin{proof}
    The worst case is trivially $w_x = 0$.
\end{proof}

\begin{corollary} \label{cor:expected-complexity-of::th:estimation-task}
    The expected complexity of Algorithm \ref{fig:alg:estimation-task}, where $x$ is the result of an unconditional sample from $\mu$, is $O(\log \log N) + O\left(\log \frac{1}{\eps c} \cdot \left(\frac{\log c^{-1}}{\eps^2} + \frac{\log^6 \eps^{-1}}{\eps^4}\right)\right)$.
\end{corollary}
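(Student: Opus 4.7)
The plan is to obtain this as a direct corollary of Theorem \ref{th:estimation-task} by taking expectation over $x \sim \mu$ of the per-element bound, invoking Lemma \ref{lemma:one-over-cdf-sum-bound} to handle the $w_x$-dependent terms.

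First, fix the algorithm's output distribution and the input $x$: Theorem \ref{th:estimation-task} guarantees that, for each individual $x \in \Omega$, the expected number of samples (over the internal randomness of Algorithm~\ref{fig:alg:estimation-task}) is bounded by
\[
    O(\log \log N) + O\!\left(\log \tfrac{1}{\eps c}\cdot\!\left(\frac{1}{\eps^{2}\,(w_x + c)} + \frac{\log^{5}\eps^{-1}}{\eps^{4}\,(w_x + \eps/\log \eps^{-1})}\right)\right).
\]
Since the choice of $x \sim \mu$ is independent of the algorithm's randomness, I would apply linearity of expectation and Fubini to swap the two expectations: the expected complexity over $x \sim \mu$ equals $\E_\mu$ applied to the per-element bound above.

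Next I would use Lemma \ref{lemma:one-over-cdf-sum-bound}, which gives $\E_\mu\!\left[\frac{1}{w_x + r}\right] = O(\log r^{-1})$ for every $r > 0$. Applying it with $r = c$ yields $\E_\mu\!\left[\frac{1}{w_x + c}\right] = O(\log c^{-1})$, and with $r = \eps / \log \eps^{-1}$ it gives $\E_\mu\!\left[\frac{1}{w_x + \eps/\log\eps^{-1}}\right] = O(\log(\log\eps^{-1}/\eps)) = O(\log \eps^{-1})$, since the additive $\log\log\eps^{-1}$ is absorbed. Substituting these two bounds into the expectation of the per-element complexity produces
\[
    O(\log\log N) + O\!\left(\log \tfrac{1}{\eps c}\cdot\!\left(\frac{\log c^{-1}}{\eps^{2}} + \frac{\log^{5}\eps^{-1}}{\eps^{4}} \cdot O(\log \eps^{-1})\right)\right),
\]
and the $\log^{5}\eps^{-1}\cdot\log\eps^{-1}$ factor consolidates into $\log^{6}\eps^{-1}$, yielding exactly the claimed bound.

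There is essentially no obstacle here: the only subtlety is that Theorem \ref{th:estimation-task}'s expected-sample guarantee is an expectation over the algorithm's coins for a fixed $x$, so one must be careful to note that its integrability and independence from the draw of $x$ legitimize exchanging the two expectations. Beyond that, the argument is a direct substitution into Lemma \ref{lemma:one-over-cdf-sum-bound}.
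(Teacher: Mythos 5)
Your proof is correct and is essentially the paper's own proof: both bound the two $w_x$-dependent terms of Theorem~\ref{th:estimation-task} in expectation over $x \sim \mu$ via Lemma~\ref{lemma:one-over-cdf-sum-bound} with $r = c$ and $r = \eps/\log\eps^{-1}$, then absorb the extra $\log\eps^{-1}$ factor into $\log^6\eps^{-1}$. The remarks about Fubini and independence of the two sources of randomness are harmless but not needed, since the claim is just a linearity-of-expectation computation the paper states directly.
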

\begin{proof}
    By Lemma \ref{lemma:one-over-cdf-sum-bound}, the expected value of $\frac{1}{w_x + c}$ is bounded by $O(\log c^{-1})$ and the expected value of $\frac{1}{w_x + \eps/\log \eps^{-1}}$ is $O(\log \eps^{-1})$.
\end{proof}

The algorithmic demonstration of Theorem \ref{th:estimation-task} (Algorithm \ref{fig:alg:estimation-task} below) relies on three core subroutines, whose interface is stated in the following lemmas.

The first lemma, proved in Section \ref{sec:preamble}, provides an estimation of the expected mass of the target set. Additionally, for the edge-case of elements with very high mass, it estimates this mass directly.

\begin{restatable}[\procnameZpreamble]{lemma}{lemmaZmainZalgZpreamble} \label{lemma:main-alg-preamble}
    For every $x \in \Omega$, Algorithm \ref{fig:alg:est-mu-x-s-x} is a joint estimator of $(\mu(x), s_x)$ which is:
    \begin{itemize}
        \item An $\left(\eps; \max\left\{\frac{1}{400}c,\frac{1}{400}s_x\right\}, \max\left\{c,\frac{1}{4}s_x\right\}\right)$-saturation-aware estimator for $\mu(x)$.
        \item A $\left(\frac{1}{3}\eps; \max\left\{\frac{1}{400}c,\frac{1}{400}\mu(x)\right\}, \max\left\{c,\frac{1}{4}\mu(x)\right\}\right)$-saturation-aware estimator for $s_x$.
    \end{itemize}
    Its expected cost is $O\left(\log \frac{1}{\eps c} \cdot \frac{1}{\eps^2 (w_x + c)}\right)$ samples.
\end{restatable}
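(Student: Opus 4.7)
The plan is to have \procnameZpreambleHREF repeatedly draw unconditional samples from $\mu$ (simulated by conditioning on $\Omega$) and, for each drawn $y$, record both whether $y = x$ and whether the canonical $(x,c,\eps)$-target-test $\mathcal{T}_{x,c,\eps}(y)$ of Definition~\ref{def:target-test} accepts. By the construction of $f_{x,c,\eps}$ and $s_x$ (Definitions~\ref{def:target-function} and~\ref{def:s-x}), the first event is a Bernoulli trial with probability exactly $\mu(x)$, while the second is a Bernoulli trial with probability $\sum_{y \ne x} \mu(y) f_{x,c,\eps}(y) = s_x \pm O(\eta_{c,\eps})$, the $O(\eta_{c,\eps})$ slack coming from the small target-test uncertainty on $L_x \cup H_x$. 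After $n$ samples we hold two (conditionally) binomial counters $X_x$ and $X_T$, and the procedure outputs $\hat\mu(x) = X_x/n$ and $\hat s_x = X_T/n$, each replaced by ``too low'' when the corresponding counter falls below a calibrated threshold.

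I would set $n$ adaptively: sampling continues until $X_x + X_T$ reaches a target of $m = \Theta(1/\eps^2)$ ``useful'' hits, subject to a safety cap of $T = \Theta(m/c)$ total draws guaranteeing termination when $w_x$ is tiny. Since each outer iteration invokes \procnameZtargetZtestHREF once, at cost $O(\log \eta_{c,\eps}^{-1}) = O(\log(\eps c)^{-1})$ samples per call (by Lemma~\ref{lemma:alg:target-function-correct}), the total expected cost is the expected number of iterations $O(m/(w_x+c))$ times $O(\log(\eps c)^{-1})$, yielding $O\bigl(\log(\eps c)^{-1} / (\eps^2(w_x + c))\bigr)$ as required. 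For the saturation-aware guarantees I would apply multiplicative Chernoff to each counter separately: in the ``high'' regime $\mu(x) \ge \max\{c, s_x/4\}$, the ratio $\mu(x)/(w_x+c)$ is $\Omega(1)$ (since $s_x \le 4\mu(x)$ and $c \le \mu(x)$), so $\E[X_x] = \Omega(m)$ and Chernoff yields $X_x/n \in (1 \pm \eps)\mu(x)$ with failure probability $\le 2 e^{-\Omega(m \eps^2)} < 1/6$; in the ``too low'' regime $\mu(x) \le \max\{c, s_x\}/400$ the ratio falls below $1/100$, so Chernoff pushes $X_x$ below the ``too low'' threshold with the same confidence. The analysis of $X_T$ versus $s_x$ is symmetric, with the extra $\eps/3$ relative slack absorbing the $O(\eta_{c,\eps})$ bias inherited from $L_x \cup H_x$.

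The main obstacle will be handling the adaptive stopping rule cleanly, since $n$ is itself random and its randomness is shared with $X_x$ and $X_T$, preventing a direct fixed-$n$ Binomial argument. I plan to decouple this in two stages: first condition on $n$ and use the fact that the split of $X_x + X_T$ between the two indicators still has the correct conditional means; then invoke geometric / negative-binomial concentration (via Lemma~\ref{lemma:geo-mgf}) on $n$ itself around $m/(w_x + c)$, with the cap $T$ ensuring the rare heavy-tail event of ``many samples producing few hits'' is both contained and consistent with outputting ``too low'' on both coordinates. A secondary subtlety is verifying that the factor-$100$ gap between the saturation thresholds ($\max\{c, s_x\}/400$ versus $\max\{c, s_x/4\}$) is wide enough to swallow both the Chernoff slack and the cross-contamination from using the observed $\hat s_x$ to calibrate the ``too low'' threshold for $\mu(x)$ and vice versa; I expect a short case split, depending on whether $c$ or $s_x$ dominates in the $\max$, to close this.
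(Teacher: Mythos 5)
Your approach is genuinely different from the paper's and, with care in the constants, can be made to work; let me compare the two.

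The paper decomposes the task into two phases, both handled by the generic saturation-aware estimator (Algorithm~\ref{fig:alg:saturation-aware-estimator}): first a rough $(1\pm1/3)$-factor estimate $\hat{w}$ of $w_x$ with threshold $a=c-\eta_{c,\eps}$, then separate fine estimates of $\mu(x)$ (oracle ``$y=x$'') and $s_x$ (oracle ``$y\ne x$ and target test accepts'') calibrated with threshold $a=\hat{w}/9$. The crucial feature of the generic estimator is that it keeps drawing until it sees a fixed number $M=\Theta(1/\delta^2)$ of \emph{successes}; this automatically guarantees that whenever it returns a number rather than ``too low'', the number is accurate to $(1\pm\delta)$, regardless of the unknown $p$. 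You instead propose one adaptive pass with a single stopping rule on the \emph{combined} count $X_x+X_T$. This avoids the median amplification between phases and is arguably more elegant, but it shifts the burden elsewhere: because you fix the number of joint ``useful hits'' rather than the number of $X_x$-hits, the relative precision of $X_x/n$ is $\Theta\bigl(1/\sqrt{mq}\bigr)$ with $q=\mu(x)/(\mu(x)+p_T)$ unknown and as small as $\approx 1/400$ in the middle regime. You must therefore couple the ``too low'' threshold $\tau$ to the accuracy: set $\tau=\Theta(1/\eps^2)$ so that whenever $X_x\ge\tau$ the estimate is automatically $(1\pm\eps)$-precise (and observe that $\Pr[X_x\ge\tau]$ is negligible when $\E[X_x]<\tau/2$), and then inflate $m$ by a factor of roughly $400$ so that the high regime $q\ge 1/6$ still clears $\tau$ comfortably. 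You flag this (``the factor-$100$ gap... wide enough''), and the numbers do close, but the paper's modular two-phase structure sidesteps exactly this constant-chasing: each fine estimate waits for its own $\Theta(1/\eps^2)$ successes, so precision is decoupled from $q$ by construction.

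Two smaller remarks. Your key observation — that conditioned on the stopping time $n$, the split $X_x\sim\Bin(m,q)$ is independent of $n$ because the stopping rule is blind to the type of a useful hit — is exactly the right decoupling lemma and is indeed what makes the one-pass scheme sound; it deserves to be stated and proved rather than described as an expected fact. Also, the cap $T=\Theta(m/c)$ plays a load-bearing role you should make explicit: if $s_x\approx 0$ then $q$ can be as large as $1$ even when $\mu(x)\le c/400$, so the threshold on $X_x$ alone would not trigger ``too low''; it is only the cap (forcing termination with $X_x+X_T<m$ when $w_x<c/\Theta(1)$) that rescues this case. Your use of $w_x+c$ in the denominator tacitly encodes this, but the case analysis (which of $c$ or $s_x$ dominates, whether the cap is hit) should be spelled out since the ``too low'' direction of the saturation-aware definition is a hard requirement, not merely an option.
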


The second lemma, proved in Section \ref{sec:find-good-alpha}, provides us with a parameter $\alpha$ that would be usable for comparing $\mu(x)$ with the mass of the filtered target set $V_{x,\alpha}$.

\begin{restatable}[\procnameZfindZgoodZalpha]{lemma}{lemmaZmainZalgZfindZgoodZalpha} \label{lemma:main-alg-find-good-alpha}
    Assume that $\mu(x) \le \frac{1}{4}s_x$. The output of Algorithm \ref{fig:alg:find-good-alpha-new} is a random variable $\alpha$ for which, with probability $2/3$, $\gamma_x \le \alpha \le 41\gamma_x$, at the cost of $O(\log \log N)$ samples at worst case.
\end{restatable}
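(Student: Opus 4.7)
The plan is to cast the search for $\alpha$ as an uncertain-comparator binary search over a logarithmic-size grid, using an estimator of $\E[\beta_{x,\alpha}]$ as the (probabilistic) comparator. By Observation \ref{obs:non-decreasing-set-monotone}, $\E[\beta_{x,\alpha}]$ is non-decreasing in $\alpha$, and by Observation \ref{obs:key-observation}, $\E[\beta_{x,\alpha}/(1-\beta_{x,\alpha})] = \alpha/\gamma_x$. Hence the target window $\alpha \in [\gamma_x, 41\gamma_x]$ corresponds to a well-identified ``transition band'' of the monotone signal $\E[\beta_{x,\alpha}]$, and I would design the comparator around a fixed threshold (e.g.\ $\E[\beta_{x,\alpha}] \approx \tfrac12$) that separates ``$\alpha$ too small'' from ``$\alpha$ too big''.

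First I would fix the grid of candidate values $\{2^{-i}\}_{i=0}^{O(\log N)}$. The hypothesis $\mu(x) \le \tfrac14 s_x$ gives $\gamma_x \le \tfrac14$, and the trivial lower bound $\mu(x) \ge 1/N$ pins $\gamma_x$ from below, so the grid covers the whole possible range of $\gamma_x$ and has only $O(\log N)$ points. Crucially, the window $[\gamma_x, 41\gamma_x]$ contains at least $\lfloor \log_2 41 \rfloor = 5$ consecutive grid points, which supplies the slack that the uncertain binary search needs in order to tolerate a constant per-step comparator failure probability.

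Next I would invoke \procnameZstrictZbinaryZsearchHREF of Subsection \ref{subsec:binary-search::of-sec:find-good-alpha} on this grid, with the comparator implemented by \procnameZestZexpectedZbetaHREF (amplified via Observation \ref{obs:median-amplification} by a constant number of repetitions, so that its per-step failure probability fits the budget required by the binary-search scheme). The binary search makes $O(\log \log N)$ comparator invocations, each of constant sample cost, for a total worst-case cost of $O(\log \log N)$ samples. Its correctness guarantee, with probability at least $2/3$, places the returned index within a constant number of steps of the true transition of the monotone signal; combined with the five-grid-point width of the acceptable window, this forces the returned $\alpha$ into $[\gamma_x, 41\gamma_x]$.

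The main obstacle is bookkeeping the chain of error sources and verifying that the numerical constants line up, namely that the slack factor $41$ between the ends of the target window exceeds $2^{k+2}$ where $k$ is the ``spread'' tolerated by \procnameZstrictZbinaryZsearchHREF around its transition point. This reduces to showing that (a) for $\alpha$ sufficiently below $\gamma_x$ on the grid the signal $\E[\beta_{x,\alpha}]$ is bounded away from the threshold from below, and symmetrically for $\alpha$ sufficiently above $41\gamma_x$, so that a comparator of constant sample complexity can distinguish these cases with the required constant confidence; and (b) the uncertain-binary-search lemma's approximation guarantee combined with the grid's geometric spacing yields an output within the target window. Once (a) is verified using the concentration of $\beta_{x,\alpha}$ around its expectation (inherited from the choice of $V_x$ and the Chernoff-style tools of Subsection \ref{sec:prelims}) and (b) is verified by a direct comparison of constants, the lemma follows.
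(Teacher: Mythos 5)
Your plan correctly identifies the high-level architecture — a dyadic grid of size $O(\log N)$, an $\E[\beta_{x,\alpha}]$-based three-way comparator, and an uncertain binary search over the grid — but it misses the obstacle that makes the implementation non-trivial, and the fix the paper uses for it. The lemma \procnameZstrictZbinaryZsearchHREF requires the comparator to satisfy the \emph{conviction} property of Definition \ref{def:uncertain-comparator}: for \emph{every} queried index $i$ there must be a fixed answer that the comparator returns with probability at least $99/100$. The comparator one builds from \procnameZestZexpectedZbetaHREF does not satisfy this on the full grid. Lemma \ref{lemma:uncertain-comparator-new} gives guarantees only on three ranges ($\alpha \le \gamma_x$, $\alpha \ge 41\gamma_x$, and the factor-2 band $[\tfrac12\alpha_x,\alpha_x]$ of some $\alpha_x$ from Lemma \ref{lemma:good-alphas-new}); for the ``gray'' grid points in $(\gamma_x, \tfrac12\alpha_x)$ or $(\alpha_x, 41\gamma_x)$, the output distribution is unconstrained. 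Since $\E[\beta_{x,\alpha}]$ increases continuously across the whole window, there is no single threshold (be it $\tfrac12$ or $0.91$) at which a constant-query estimator can beat a coin flip, so those gray points are unavoidable. The binary-search lemma gives no guarantee once gray points are present, and no ``spread around the transition'' fallback exists in that lemma; it just promises membership in the goal range under the appeasable-uncertain-comparator hypothesis.

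The paper's fix, which your proposal does not supply, is an interleaving plus verification argument. Algorithm \ref{fig:alg:find-good-alpha-new} splits the exponent grid into six sub-grids with stride $6$, so that along each sub-grid consecutive $\alpha$'s differ by a factor $2^6 = 64 > 41$. Consequently each sub-grid has at most one grid point inside $(\gamma_x, 41\gamma_x)$ at all. The sub-grid $r_{\mathrm{hit}}$ that captures the unique grid point in the good band $(\tfrac12\alpha_x,\alpha_x]$ then sees a bona fide appeasable uncertain comparator (every other point is safely classified low or high), and the binary search on that sub-grid succeeds with probability at least $5/6 - 1/100$. For the other sub-grids, which may contain a gray point, the algorithm re-queries the returned index once more (the ``$\textsf{Oracle}_r(i') = \text{``good''}$'' check in Step \ref{fig:alg:find-good-alpha-new:step:regular-return}) and accepts only on a ``good'' verdict; this keeps each bad sub-grid's false-accept probability at most $1/100$. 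The final union bound $(\tfrac56 - \tfrac1{100}) - 6\cdot\tfrac1{100} > \tfrac23$ closes the proof. So the gap in your proposal is precisely the step you labeled as ``bookkeeping'': the binary search cannot be invoked directly on the full grid, and the ``five consecutive grid points'' observation does not supply the needed property — the good band has width only a factor of $2$, and what is actually exploited is that the full bad window has width $< 2^6$, which keeps each interleaved sub-grid clean.

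One smaller note: your lower bound $\gamma_x \ge 1/N$ via ``$\mu(x) \ge 1/N$'' is not valid in general, since $\mu(x)$ can be arbitrarily small. The paper's Lemma \ref{lemma:range-of-gamma-x} instead bounds $s_x \le 1.2 N\mu(x)$ directly, giving $\gamma_x = \mu(x)/s_x \ge \tfrac{1}{2N}$ without any assumption on $\mu(x)$ itself; this is what justifies the grid's finite range.
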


The third lemma, proved in Section \ref{sec:est-mu-x-using-alpha}, produces the relative estimation of the ratio of $\mu(x)$ to the expected mass of the filtered target set, provided we have started with a suitable $\alpha$ (essentially a very rough approximation of the ratio between $\mu(x)$ and the scale mass).

\begin{restatable}[\procnameZestZexpectedZhZbeta]{lemma}{lemmaZmainZalgZuseZalpha} \label{lemma:main-alg-use-alpha}
    Let $0 < \alpha \le 1$ be an explicitly given input, and assume that $\gamma_x \le \alpha \le 50\gamma_x$. The output of Algorithm \ref{fig:alg:est-h-beta}  is a random variable whose value, with probability $2/3$, is $(1 \pm \eps/2)\alpha s_x / \mu(x)$, at the expected cost of $O\left(\log \frac{1}{\eps c} \cdot \frac{\log^5 \eps^{-1}}{\eps^4 \left(w_x + {\eps}/{\log \eps^{-1}}\right)}\right)$ samples.
\end{restatable}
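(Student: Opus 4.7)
The plan is to exploit Observation~\ref{obs:key-observation}, which rewrites the target quantity $\alpha s_x/\mu(x) = \alpha \E[\mu(V_x)]/\mu(x)$ as $\E_{A_\alpha,V_x}\!\left[\beta_{x,\alpha}/(1-\beta_{x,\alpha})\right]$. Algorithm~\ref{fig:alg:est-h-beta} will therefore draw a batch of independent virtual target sets $V_x^{(1)},\ldots,V_x^{(k)}$ (via \procnameZinitializeZnewZVxHREF) paired with independent filter sets $A_\alpha^{(1)},\ldots,A_\alpha^{(k)}$, invoke \procnameZestZsingleZbetaHREF on each pair to produce an approximation $\hat\beta_i$ of $\beta_{x,\alpha}(A_\alpha^{(i)},V_x^{(i)})$, and return the mean of $h(\hat\beta_i)$ for $h(\beta)=\beta/(1-\beta)$. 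A constant-size median-of-means amplification (Observation~\ref{obs:median-amplification}) will then boost the success probability to $2/3$.

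The correctness analysis splits cleanly into an \emph{outer} and an \emph{inner} error budget. On the outer side, I would use $\alpha\le 50\gamma_x$ together with the structural guarantee from Definition~\ref{def:V-x} that $V_x$ contains no element of mass $\ge 1.2\mu(x)$, to establish concentration of $\mu(V_{x,\alpha})$ around its mean $\alpha s_x$ by an additive/multiplicative Chernoff argument on the independent inclusions of atoms. Since the assumption forces $\alpha s_x/\mu(x)\in[1,50]$, both $\beta_{x,\alpha}$ and $h(\beta_{x,\alpha})$ behave like bounded random variables with a variance of order $1$, so Chebyshev on the sample mean of $h(\beta_{x,\alpha})$ yields the required outer $(1\pm\eps/4)$ accuracy after $O(1/\eps^2)$ virtual draws; the concrete variance computations are exactly what is deferred to Appendix~\ref{apx:tech-SHEET}. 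On the inner side, \procnameZestZsingleZbetaHREF must deliver each $\hat\beta_i$ accurately enough that, after pushing it through $h$ (whose derivative is $(1-\beta)^{-2}$), the per-sample error contributes at most $\eps/4$ to the average; calibrating the inner accuracy to $\Theta(\eps)\cdot(1-\beta_{x,\alpha})^2$ suffices, and the target-error setting $\eta_{c,\eps}$ is chosen (Definition~\ref{def:target-error}) precisely to cover this with room to spare.

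For the sample complexity, no $\log\log N$ term appears here since $\alpha$ is provided as input and no binary search takes place inside this subroutine. Each of the $\poly(\eps^{-1},\log\eps^{-1})$ virtual instances of $V_x$ pays $O(\log\eta_{c,\eps}^{-1})=O(\log(1/\eps c))$ for every \procnameZVxZqueryHREF call (to achieve the misclassification bound of Definition~\ref{def:target-test}), and each \procnameZestZsingleZbetaHREF draws $\poly(\eps^{-1})$ conditional samples from $V_{x,\alpha}\cup\{x\}$; the per-sample simulation cost via rejection sampling inside $A_\alpha$ scales like $1/(w_x+\eps/\log\eps^{-1})$, where the additive $\eps/\log\eps^{-1}$ term enters by capping the number of rejection attempts so that very rare $x$ do not blow up the expected cost (with the incurred bias absorbed into the $\eps/2$ budget). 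Multiplying these counts reproduces the stated bound. The main obstacle, and what I would spend the bulk of the analysis on, is controlling the inner-error propagation through $h$ precisely when $\beta_{x,\alpha}$ lies close to $1$: here I would either truncate $\hat\beta_i$ away from $1$ by a threshold driven by the outer concentration (so that truncation affects only a vanishing fraction of virtual draws), or break the expectation on the event $\{\beta_{x,\alpha}>1-\delta\}$ and show, using the no-heavy-elements property of $V_x$ and $\alpha\le 50\gamma_x$, that this event contributes negligibly to both the mean and the variance.
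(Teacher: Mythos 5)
Your high-level architecture matches the paper's: rewrite the target via Observation~\ref{obs:key-observation}, draw $M_1 = O(1/\eps^2)$ pairs $(A_\alpha^{(i)},V_x^{(i)})$, estimate each $\beta_i$ with \procnameZestZsingleZbetaHREF, apply $h$, average, and control variance with Chebyshev. But there are two points where your proposal as written would not go through, both centered on the role of truncation.

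First, you present truncation of $h$ as a fallback remedy for the ``main obstacle'' of $\beta$ near $1$, to be deployed if needed. In the paper it is structurally central: $h(\beta) = \min\{T,\, \beta/(1-\beta)\}$ with $T = 8\ln\eps^{-1}+100$ is baked into the estimator from the start. This matters because your proposed inner accuracy calibration, ``$\Theta(\eps)\cdot(1-\beta_{x,\alpha})^2$,'' is not implementable: $\beta_{x,\alpha}$ is unknown at the time the inner estimator is invoked, and as $\beta\to 1$ that requirement forces $\delta\to 0$ and an unbounded number of inner samples. The truncation is precisely what makes a \emph{fixed}, $\beta$-independent $\delta$ feasible --- after clamping at $T$, $\ln h$ is $(T+3)$-Lipschitz on $[\tfrac{1}{T+1},1]$ (Lemma~\ref{lemma:h-err-scaling}), so $\delta = \Theta(\eps/T)$ yields a $(1\pm\Theta(\eps))$ multiplicative error in $h$ uniformly. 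The bias introduced by truncation is then bounded separately by a tail argument (Lemma~\ref{lemma:E-h-beta-approx-E-beta-over-1-minus-beta-new}, relying on the Chernoff-style bound of Lemma~\ref{lemma:tiny-tail-mu-V-x-alpha-new}, which is exactly the ``no heavy elements plus $\alpha\le 50\gamma_x$'' fact you name in your last sentence). So your second alternative is right, but it needs to drive the whole analysis, not be an optional patch: without truncation, no fixed inner $\delta$ works.

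Second, you write that ``the target-error setting $\eta_{c,\eps}$ is chosen precisely to cover this.'' That conflates two distinct error parameters. The accuracy of each $\hat\beta_i$ is governed by the $\delta$ handed to \procnameZestZsingleZbetaHREF; $\eta_{c,\eps}$ governs something else entirely --- the fidelity of each virtual $V_x^{(i)}$ to a genuine draw of $V_x$. The paper needs $\eta_{c,\eps} M_1 q < \tfrac{1}{12}$ (Observation~\ref{obs:eta-M1-q}) so that, via a coupling argument (Lemma~\ref{lemma:coupling-argument}), the whole sequence $(V_x^{(1)},\ldots,V_x^{(M_1)})$ is $\tfrac{1}{12}$-close to a sequence of honest independent draws. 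This coupling step is absent from your outline, and without it you cannot treat the $h(\beta_i)$s as i.i.d.\ samples in the Chebyshev step. Relatedly, you also need to amplify each inner estimate (the $M_2 = O(\log M_1)$ median repetition) so that all $M_1$ of the $\hat\beta_i$s are simultaneously $\pm\delta$-accurate with constant probability; a single $2/3$-confidence estimate per $i$ is not enough once you union-bound over $M_1$ of them. Finally, a small architectural note: the paper's Algorithm~\ref{fig:alg:est-h-beta} returns the plain sample mean and hits $2/3$ via Chebyshev; the median amplification lives in the caller (Algorithm~\ref{fig:alg:estimation-task}), not inside this subroutine. Your proposed outer median-of-means would also work, it just double-counts with what \procnameZmainZsingleHREF already does.
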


At this point we provide Algorithm \ref{fig:alg:estimation-task}, and prove its correctness, which implies Theorem \ref{th:estimation-task}.

\begin{algo}[ht!]
    \label{fig:alg:estimation-task}
    \procname{$\procnameZmainZsingle(\mu,c,\eps;x)$}
    \alginput{$x$}
    \algoutput{$\hat{p} \in \{\esttoolow\} \cup (1 \pm \eps)\mu(x)$}
    \begin{code}
        \algitem Let $M \gets 13$. \algcomment For median amplification (Observation \ref{obs:median-amplification}\ref{median-amplification:13-to-8/9}).
        \begin{For}{$i$ from $1$ to $M$}
            \algitem Let $(\hat{p}_i, \hat{s}_i) \gets \procnameZpreambleHREF(\mu,c,\eps;x)$.
        \end{For}
        \algitem Set $\hat{p} \gets \median(\hat{p}_1,\ldots,\hat{p}_M)$ and $\hat{s} \gets \median(\hat{s}_1,\ldots,\hat{s}_M)$. \algcomment SP: $\frac{8}{9}$
        \begin{If}{$\hat{p} = \esttoolow$ and $\hat{s} \ne \esttoolow$}
            \begin{For}{$i$ from $1$ to $M$}
                \algitem Let $\alpha_i \gets \procnameZfindZgoodZalphaHREF(\mu,c,\eps;x)$.
            \end{For}
            \algitem Let $\alpha \gets \median(\alpha_1,\ldots,\alpha_M)$. \algcomment SP: $\frac{8}{9}$
            \begin{For}{For $i$ from $1$ to $M$}
                \algitem Let $\hat{b}_i \gets \procnameZestZexpectedZhZbetaHREF(\mu, c, \eps; x)$.
            \end{For}
            \algitem Let $\hat{b} \gets \median(\hat{b}_1,\ldots,\hat{b}_M)$. \algcomment SP: $\frac{8}{9}$
            \algitem Set $\hat{p} \gets \alpha \hat{s} / \hat{b}$.\label{fig:alg:estimation-task::step:set-p-hat}
        \end{If}
        \item Return $\hat{p}$. \algcomment \textbf{Total success probability: $\mathbf{\frac{2}{3}}$}
    \end{code}
\end{algo}

\begin{proof}[Proof of Theorem \ref{th:estimation-task}]
    For the definition of the $(\eps,c)$-estimation task with the required saturation-awareness bounds $p_1$ and $p_2$, let $p_1 = \min_x \!\left(\{c\} \cup \left\{ \mu(x) : w_x \!\ge\! c \right\}\right)$, and $p_2 = \max_x \!\left\{ \mu(x) : w_x \!\le\! \frac{1}{300}c \right\}$ with a fallback of $p_2 = \frac{1}{2}p_1$ if this set is empty. Also, let $G = \{ x : w_x \ge c \} \supseteq \{ \CDF_\mu(x) \ge c \}$ be the set of ``good'' $x$s.

    For a given $x$, Algorithm \ref{fig:alg:estimation-task} draws $O\left(\log \log N + \log \frac{1}{\eps c} \cdot \left(\frac{1}{\eps^2 (w_x + c)} + \frac{\log^5 \eps^{-1}}{\eps^4 \left(w_x + {\eps}/{\log \eps^{-1}}\right)} \right)\right)$ samples in expectation:
    \begin{itemize}
        \item $O(1)$ calls to \procnameZpreamble (Lemma \ref{lemma:main-alg-preamble}) at the expected cost of $O\left(\log \frac{1}{\eps c} \cdot \frac{1}{\eps^2 (w_x+c)}\right)$.
        \item $O(1)$ calls to \procnameZfindZgoodZalpha (Lemma \ref{lemma:main-alg-find-good-alpha}) at the cost of $O(\log \log N)$.
        \item $O(1)$ calls to \procnameZestZexpectedZhZbeta (Lemma \ref{lemma:main-alg-use-alpha}), at total cost of $O\left(\log \frac{1}{\eps c} \cdot \frac{\log^5 \eps^{-1}}{\eps^4 \left(w_x + \frac{\eps}{\log \eps^{-1}}\right)}\right)$ in expectation.
    \end{itemize}

    Correctness (main case): assume that we obtain $\hat{p}$, $\hat{s}$ where at least one of them is not \esttoolow, and each of them which is not \esttoolow is a $(1 \pm \eps/3)$-factor estimation of its goal ($\mu(x)$ or $s_x$). This happens with probability at least $8/9$ if $x \in G$ ($w_x \ge c$).

    If $\hat{p}$ is not \esttoolow, then we just return it as a correct estimation. This happens with probability $8/9$ if $\mu(x) \ge \max\{c, \frac{1}{4}s_x\}$.
    
    If $\hat{p}$ is \esttoolow, then we have $\hat{s} = (1 \pm \eps/3)s_x$. This happens with probability at least $8/9$ if $s_x \ge \max\{c, \frac{1}{4}\mu(x)\}$, which is a superset of the constraint $(x\in G) \wedge (\mu(x) < \frac{1}{4}s_x)$. In this case, $\alpha$ is correct with probability $8/9$ and $\hat{b}$ is correct with probability $8/9$ as well. Overall, with probability $2/3$, we have $\hat{s} = (1 \pm \eps/3)s_x$ and $\hat{b} = (1 \pm \eps/2) \alpha s_x / \mu(x)$, which means that Step \ref{fig:alg:estimation-task::step:set-p-hat} sets $\hat{p} = (1 \pm \eps/3)(1 \pm \eps/2) \mu(x) = (1 \pm \eps)\mu(x)$ as desired.
    
    Correctness (reject case): assume that we obtain $\hat{p}$, $\hat{s}$ which are both \esttoolow. This happens with probability at least $8/9$ if $w_x \le \frac{1}{100}c$. In this case, we return \esttoolow, which is a correct output for $x \notin G$.
    
    Correctness (middle case): the saturation-awareness of Lemma \ref{lemma:main-alg-preamble} guarantees that, with probability $8/9$, we obtain a pair $(\hat{p}, \hat{s})$ that correctly matches the main case or the reject case.
\end{proof}

\paragraph{Avoiding probability zero sets} In Section \ref{sec:preamble} we explain how $\procnameZpreambleHREF(\mu,c,\eps;x)$ in itself takes samples only from sets that include an element that was already sampled (unconditionally) from $\mu$, thus ensuring that they have positive probability. We also explain there why, in the case where $\mu(x)=0$, $\procnameZpreambleHREF(\mu,c,\eps;x)$ always answers $(\esttoolow,\esttoolow)$, causing $\procnameZmainZsingle(\mu,c,\eps;x)$ to skip the next steps and immediately return $\esttoolow$. The other procedures used by $\procnameZmainZsingle(\mu,c,\eps;x)$ only take samples from sets that include $x$ itself, and hence if they are invoked they only take samples from positive probability sets.

\section{Target test assessment}
\label{sec:estimation-procs}

\subsection{The target test}
\label{subsec:target-test::of-sec:estimation-procs}

We formulate the algorithm whose (randomized) output is used as part of the definition of the target function $f_x$. Procedure \procnameZtargetZtestHREF uses $O(\log \frac{1}{\eps c})$ pair conditional samples to distinguish between $\mu(y) \le \mu(x)$ and $\mu(y) \ge 1.2\mu(x)$ with probability at least $1 - \eta_{c,\eps}$.

First, we provide \procnameZtargetZtestZexplicit (Algorithm \ref{fig:alg:target-test-explicit}) as a common logic for the actual target test, and a cheaper approximation of the target test that is used for finding a good $\alpha$. For this implementation we use a hard-coded tuning parameter $\kappa = \paperZvalofZkappa$. In Appendix \ref{apx:target-test-galactic} we provide an alternative implementation of the target test with reasonable constant factors (removing the dependency on $\kappa$) but with an additional asymptotic penalty, which carries over to the estimator.

\begin{algo}[ht!]
    \procname{$\textsf{Target-test-explicit}(\mu, \eta; x, y)$}
    \label{fig:alg:target-test-explicit}
    \alginput{$y \in \Omega$}
    \algoutput{\accept or \reject}
    \begin{code}
        \algpushcomment{Technical guarantee}
        \begin{If}{$y = x$}
            \algitem \reject
        \end{If}
        \algitem Let $\kappa = \paperZvalofZkappa$.
        \algitem Let $\ell \gets \ceil{\ln \eta^{-1} / 2\kappa^2}$.
        \algitem Draw $t \sim [\frac{1}{2} + \kappa, \frac{6}{11} - \kappa]$ uniformly.
        \algitem Draw $z_1,\ldots,z_\ell$ independent samples from $\mu$ conditioned on $\{x, y\}$.
        \algitem Let $Y = \abs{\{i : z_i = y \}}$.
        \begin{If}{$Y < t \cdot \ell$}
            \algitem \accept.
        \end{If}
        \begin{Else}
            \algitem \reject.
        \end{Else}
    \end{code}
\end{algo}

We use the common logic to define the target test and the approximate target test.

\begin{algo}
    \procname{$\procnameZtargetZtest(\mu, c, \eps; x, y)$}
    \label{fig:alg:target-test}
    \alginput{$y \in \Omega$}
    \algoutput{\accept or \reject}
    \begin{code}
        \algitem Call $\procnameZtargetZtestZexplicitHREF(\mu, \eta_{c,\eps}; x, y)$ and return its answer.
    \end{code}
\end{algo}

\begin{algo}
    \procname{$\procnameZtargetZtestZgross(\mu; x, y)$}
    \label{fig:alg:target-test-gross}
    \alginput{$y \in \Omega$}
    \algoutput{\accept or \reject}
    \begin{code}
        \algitem Call $\procnameZtargetZtestZexplicitHREF(\mu, 10^{-9}; x, y)$ and return its answer.
    \end{code}
\end{algo}

We provide some essential bounds of the explicit test.

\begin{lemma} \label{lemma:target-test-explicit-base-calc}
    In Algorithm \ref{fig:alg:target-test-explicit}, $\Pr\left[\procnameZtargetZtestZexplicit(\mu,\eta;x,y) = \accept \cond t \notin \frac{\mu(y)}{\mu(x) + \mu(y)} \pm \kappa \right]$ is at least $1 - \eta$ if $\frac{\mu(y)}{\mu(x) + \mu(y)} \le t - \kappa$ and at most $\eta$ if $\frac{\mu(y)}{\mu(x) + \mu(y)} \ge t + \kappa$.
\end{lemma}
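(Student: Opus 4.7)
The plan is a direct application of the additive Chernoff bound to the binomial random variable $Y$. Conditioning on the drawn value of $t$ (and on $t$ being outside the uncertain window $p \pm \kappa$ where $p = \mu(y)/(\mu(x)+\mu(y))$), the samples $z_1,\ldots,z_\ell$ are i.i.d.\ with $\Pr[z_i = y] = p$, so $Y \sim \Bin(\ell, p)$ with $\E[Y] = \ell p$. The algorithm accepts iff $Y < t\ell$, so I need to control the tail probabilities $\Pr[Y \ge t\ell]$ and $\Pr[Y < t\ell]$ in the two cases.

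First I would handle the ``light'' case $p \le t - \kappa$. Here $\E[Y] = \ell p \le \ell t - \kappa\ell$, so
\[
\Pr[Y \ge t\ell] \;\le\; \Pr\bigl[Y - \E[Y] \ge \kappa \ell\bigr] \;\le\; e^{-2(\kappa \ell)^2 / \ell} \;=\; e^{-2 \kappa^2 \ell}
\]
by the additive Chernoff bound. The choice $\ell = \lceil \ln \eta^{-1}/(2\kappa^2) \rceil$ gives $2\kappa^2 \ell \ge \ln \eta^{-1}$, whence $e^{-2\kappa^2 \ell} \le \eta$, and the acceptance probability is at least $1 - \eta$ as required.

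The ``heavy'' case $p \ge t + \kappa$ is symmetric: now $\E[Y] \ge \ell t + \kappa \ell$, so by the lower-tail additive Chernoff bound,
\[
\Pr[Y < t \ell] \;\le\; \Pr\bigl[Y - \E[Y] < -\kappa \ell\bigr] \;\le\; e^{-2\kappa^2 \ell} \;\le\; \eps \eta,
\]
so the acceptance probability is at most $\eta$. Both bounds hold pointwise for every admissible $t$, so the same inequalities survive when one additionally conditions on $t$ lying in the relevant half of $[\tfrac{1}{2}+\kappa, \tfrac{6}{11}-\kappa]\setminus(p-\kappa, p+\kappa)$.

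There is really no obstacle here: the only thing to verify is that the choice of $\ell$ correctly calibrates the Chernoff exponent against the target error $\eta$, and the conditioning in the statement is exactly what removes the ``near the threshold'' values of $t$ for which neither Chernoff inequality would be informative. The role of the interval $[\tfrac{1}{2}+\kappa, \tfrac{6}{11}-\kappa]$ itself plays no role in this particular lemma; it will matter later when combining this pointwise guarantee over the random $t$ to bound the behavior on medium-weight elements.
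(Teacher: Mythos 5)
Your proof is correct and takes essentially the same approach as the paper: condition on the drawn $t$, observe that $Y \sim \Bin(\ell, p)$ with $p = \mu(y)/(\mu(x)+\mu(y))$, and apply the additive Chernoff bound in each of the two cases, with the choice of $\ell$ calibrating the exponent to $\ln \eta^{-1}$. One typographical slip: in the heavy case you write $\le \eps\eta$ where you clearly mean $\le \eta$; the surrounding prose is correct.
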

\begin{proof}
    If $t \ge \frac{\mu(y)}{\mu(x) + \mu(y)} + \kappa$, then by Chernoff bound,
    \[ \Pr[Y \ge t \ell | t]
    \le \Pr\left[\Bin\left(\ell, \frac{\mu(y)}{\mu(x) + \mu(y)}\right) \ge \left(\frac{\mu(y)}{\mu(x) + \mu(y)} + \kappa\right)\ell\right]
    \le e^{-2\kappa^2 \ell} \le e^{-\ln \eta^{-1}} = \eta \]
    
    If $t \le \frac{\mu(y)}{\mu(x) + \mu(y)} - \kappa$, then by Chernoff bound,
    \[ \Pr[Y < t \ell | t]
    \le \Pr\left[\Bin\left(\ell, \frac{\mu(y)}{\mu(x) + \mu(y)}\right) < \left(\frac{\mu(y)}{\mu(x) + \mu(y)} - \kappa\right)\ell\right]
    \le e^{-2\kappa^2 \ell} \le e^{-\ln \eta^{-1}} = \eta \qedhere \]
\end{proof}

\begin{lemma} \label{lemma:target-test-explicit-bad-event}
    In the setting of Algorithm \ref{fig:alg:target-test-explicit}, $\Pr\left[t \in \frac{\mu(y)}{\mu(x) + \mu(y)} \pm \kappa\right] \le 10^{-9}$.
\end{lemma}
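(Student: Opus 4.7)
The key observation is that $t$ is drawn uniformly from the interval $[\tfrac{1}{2}+\kappa,\tfrac{6}{11}-\kappa]$, which has length exactly $\tfrac{1}{22}-2\kappa$. The ``bad'' event is that $t$ lands in the window $[\tfrac{\mu(y)}{\mu(x)+\mu(y)}-\kappa,\tfrac{\mu(y)}{\mu(x)+\mu(y)}+\kappa]$ of length $2\kappa$, whose intersection with the sampling interval has length at most $2\kappa$ regardless of where $\tfrac{\mu(y)}{\mu(x)+\mu(y)}$ lies. So the probability is at most
\[
\frac{2\kappa}{\tfrac{1}{22}-2\kappa}.
\]

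The remaining work is just plugging in the concrete value $\kappa = 10^{-9}/45$ and checking the resulting inequality. Equivalently, we want to verify $44\,(1+10^{-9}) \le 45$, which is immediate since $44 \cdot 1.000000001 < 45$. This gives the claimed $10^{-9}$ bound.

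The only thing to be a little careful about is that the length of the window $[\tfrac{\mu(y)}{\mu(x)+\mu(y)}-\kappa,\tfrac{\mu(y)}{\mu(x)+\mu(y)}+\kappa]$ really is bounded by $2\kappa$ \emph{after} intersecting with the support of $t$; this is obvious but worth stating. There is no real obstacle here—the lemma is essentially a sanity check that the specific constants chosen in the definition of \procnameZtargetZtestZexplicitHREF leave the ``ambiguous'' interval of values of $t$ small enough, so the proof is just one line of arithmetic after noting that the density of $t$ is $1/(\tfrac{1}{22}-2\kappa)$.
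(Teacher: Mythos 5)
Your proof is correct and takes exactly the same approach as the paper: bound the bad event's window length by $2\kappa$, divide by the support length $\tfrac{1}{22}-2\kappa$, and verify the resulting inequality for $\kappa = 10^{-9}/45$. The paper states this in a single line; your version is the same argument spelled out with the explicit arithmetic reduction to $44(1+10^{-9})\le 45$.
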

\begin{proof}
    $t$ is uniformly drawn in $[ \frac{1}{2} + \kappa, \frac{6}{11} - \kappa]$. Hence, the probability that the segment $t \pm \kappa$ contains a given number $p$ is at most $\frac{2\kappa}{6/11 - 1/2 - 2\kappa} \le 10^{-9}$.
\end{proof}

These bounds allow us to prove the following two lemmas.

\begin{lemma}[\procnameZtargetZtest] \label{lemma:alg:target-function-correct}
    Procedure \procnameZtargetZtestHREF uses $O(\log \frac{1}{\eps c})$ conditional samples, accepts with probability at least $1 - \eta_{c,\eps}$ if $y \in L_x$ and rejects with probability at least $1 - \eta_{c,\eps}$ if $y \in H_x$.
\end{lemma}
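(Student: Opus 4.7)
The plan is straightforward since \procnameZtargetZtest is just a wrapper around $\procnameZtargetZtestZexplicit(\mu, \eta_{c,\eps}; x, y)$, and we already have Lemma \ref{lemma:target-test-explicit-base-calc} which handles the main probabilistic analysis. The task reduces to two steps: bounding the sample complexity in terms of the target error $\eta_{c,\eps}$, and verifying that the support of the random threshold $t$ is well-separated from $\frac{\mu(y)}{\mu(x)+\mu(y)}$ whenever $y \in L_x \cup H_x$.

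First I would bound the sample complexity. The algorithm performs $\ell = \lceil \ln \eta_{c,\eps}^{-1} / 2\kappa^2 \rceil$ pair-conditional queries, and $\kappa = \paperZvalofZkappa$ is a fixed constant. Plugging in the definition of $\eta_{c,\eps}$ (Definition \ref{def:target-error}), each of the three terms in the minimum satisfies $\log \eta_{c,\eps}^{-1} = O(\log \frac{1}{\eps c})$, so $\ell = O(\log \frac{1}{\eps c})$ as required.

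Next I would handle the two correctness statements separately, both reducing to a single geometric observation: $t$ is drawn uniformly from $[\frac{1}{2}+\kappa,\, \frac{6}{11}-\kappa]$, so this interval is designed exactly to avoid the ``ambiguous'' band around $\frac{\mu(y)}{\mu(x)+\mu(y)}$ for every $y \notin M_x$. For $y \in L_x$ we have $\mu(y) \le \mu(x)$, hence $\frac{\mu(y)}{\mu(x)+\mu(y)} \le \frac{1}{2}$, and therefore with probability $1$ the drawn $t$ satisfies $t - \kappa \ge \frac{1}{2} \ge \frac{\mu(y)}{\mu(x)+\mu(y)}$. The first clause of Lemma \ref{lemma:target-test-explicit-base-calc} then directly yields acceptance probability $\ge 1 - \eta_{c,\eps}$. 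Symmetrically, for $y \in H_x$ we have $\mu(y) \ge 1.2\mu(x)$, so $\frac{\mu(y)}{\mu(x)+\mu(y)} \ge \frac{6}{11}$, and with probability $1$ we have $t + \kappa \le \frac{6}{11} \le \frac{\mu(y)}{\mu(x)+\mu(y)}$. The second clause of Lemma \ref{lemma:target-test-explicit-base-calc} then yields rejection probability $\ge 1 - \eta_{c,\eps}$.

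There is essentially no hard obstacle here: the ``work'' has already been done in designing the support $[\frac{1}{2}+\kappa, \frac{6}{11}-\kappa]$ of $t$ so that $L_x$-elements sit strictly below the band $t - \kappa$ and $H_x$-elements sit strictly above $t + \kappa$. In particular, the bad event of Lemma \ref{lemma:target-test-explicit-bad-event} never occurs for $y \notin M_x$, so we need not pay any probability penalty for it in the two cases of the current lemma; it is only used later when analysing medium-mass elements. The only mildly careful point is noting that the three components of $\eta_{c,\eps}$ all contribute at most $O(\log \frac{1}{\eps c})$ to $\log \eta_{c,\eps}^{-1}$, which is routine.
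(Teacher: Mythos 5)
Your proof is correct and follows essentially the same route as the paper: reduce to Lemma~\ref{lemma:target-test-explicit-base-calc}, then observe that for $y \in L_x$ one always has $\frac{\mu(y)}{\mu(x)+\mu(y)} \le \frac12 \le t-\kappa$, and for $y \in H_x$ one always has $\frac{\mu(y)}{\mu(x)+\mu(y)} \ge \frac{6}{11} \ge t+\kappa$, so the relevant clause applies deterministically in each case. Your treatment of the sample complexity (bounding $\log\eta_{c,\eps}^{-1}$ term by term) is more explicit than the paper's, which leaves that part implicit, but the content is the same.
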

\begin{proof}
    For $y \in L_x$: $\mu(y) \le \mu(x)$ and hence $t \ge \frac{1}{2} + \kappa \ge \frac{\mu(y)}{\mu(x)+\mu(y)} + \kappa$ with probability $1$. By Lemma \ref{lemma:target-test-explicit-base-calc}, $\procnameZtargetZtest(\mu;x,y)$ accepts with probability at least $1 - \eta_{c,\eps}$. That is, the difference from $\Pr[y \in V_x] = 1$ is bounded by $\eta_{c,\eps}$.

    For $y \in H_x$: $\mu(y) \ge 1.2\mu(x)$ and hence $t \le \frac{6}{11} - \kappa \le \frac{\mu(y)}{\mu(x)+\mu(y)} - \kappa$ with probability $1$. By Lemma \ref{lemma:target-test-explicit-base-calc}, $\procnameZtargetZtest(\mu;x,y)$ accepts with probability at most $\eta_{c,\eps}$. That is, the difference from $\Pr[y \in V_x] = 0$ is bounded by $\eta_{c,\eps}$.
\end{proof}

\begin{lemma}[\procnameZtargetZtestZgross] \label{lemma:target-test-approx}
    $\abs{\Pr\left[y \in V_x\right] - \Pr\left[ \procnameZtargetZtestZgrossHREF(\mu;x,y) = \accept \right]} \le 10^{-8}$ for every $x,y \in \Omega$.
\end{lemma}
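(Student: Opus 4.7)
The plan is to reduce both probabilities $\Pr[y \in V_x]$ and $\Pr[\procnameZtargetZtestZgross(\mu;x,y) = \accept]$ to the same underlying event structure used inside $\procnameZtargetZtestZexplicit$, and then observe that the difference between the canonical test (with error parameter $\eta_{c,\eps}$) and the gross test (with error parameter $10^{-9}$) is negligible because the definition of the target error gives $\eta_{c,\eps}\le 10^{-9}$.

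I would split the argument into the three cases $y\in L_x$, $y\in H_x$, and $y\in M_x$ coming from Definition \ref{def:LMH-x}. In the first two cases the proof is essentially a one-liner from Lemma \ref{lemma:target-test-explicit-base-calc}: if $y \in L_x$ then $\frac{\mu(y)}{\mu(x)+\mu(y)}\le \frac12$, so the randomly drawn $t\in[\frac12+\kappa,\frac{6}{11}-\kappa]$ always satisfies $t\ge \frac{\mu(y)}{\mu(x)+\mu(y)}+\kappa$, whence $\procnameZtargetZtestZgross$ accepts with probability at least $1-10^{-9}$, and $\Pr[y\in V_x]=1$ by Definition \ref{def:V-x}. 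The case $y\in H_x$ is symmetric, giving acceptance probability at most $10^{-9}$ versus $\Pr[y\in V_x]=0$.

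For $y\in M_x$, by Definitions \ref{def:target-function} and \ref{def:V-x} we have $\Pr[y\in V_x]=f_{x,c,\eps}(y)$, i.e.\ exactly the acceptance probability of $\procnameZtargetZtest$. So I would reduce to bounding the difference between the acceptance probabilities of the two instances of $\procnameZtargetZtestZexplicit$ (with errors $\eta_{c,\eps}$ and $10^{-9}$) on the same inputs. The key idea is to condition on $t$, splitting into $T_{\mathrm{hi}}=\{t>p_0+\kappa\}$, $T_{\mathrm{lo}}=\{t<p_0-\kappa\}$, and $T_{\mathrm{mid}}=\{|t-p_0|\le\kappa\}$, where $p_0=\mu(y)/(\mu(x)+\mu(y))$. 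Lemma \ref{lemma:target-test-explicit-base-calc} pins down the conditional acceptance probability on $T_{\mathrm{hi}}$ and $T_{\mathrm{lo}}$ to within $\eta$ of $1$ and $0$ respectively, and Lemma \ref{lemma:target-test-explicit-bad-event} bounds $\Pr[T_{\mathrm{mid}}]\le 10^{-9}$. Writing $A=\Pr[T_{\mathrm{hi}}]$, $B=\Pr[T_{\mathrm{lo}}]$, $C=\Pr[T_{\mathrm{mid}}]$, each acceptance probability lies in the interval $[A(1-\eta),\,A+B\eta+C]$, so the two differ by at most $A(\eta_{c,\eps}+10^{-9})+B(\eta_{c,\eps}+10^{-9})+C\le 3\cdot 10^{-9}$, since $\eta_{c,\eps}\le 10^{-9}$ by Definition \ref{def:target-error}.

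There is no serious obstacle; the only potential pitfall is remembering to use $\eta_{c,\eps}\le 10^{-9}$ (inherited from the hard-coded minimand $\paperZtargeterrZevbeta$ in Definition \ref{def:target-error}) so that the canonical test's error is always at most that of the gross test. Combining the three cases yields the claimed bound $3\cdot 10^{-9}\le 10^{-8}$ uniformly over all $x,y\in\Omega$.
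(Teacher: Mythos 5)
Your proof is correct and takes essentially the same approach as the paper's: split into the three cases $y\in L_x$, $y\in H_x$, $y\in M_x$, dispose of the first two via Lemma~\ref{lemma:target-test-explicit-base-calc}, and handle $M_x$ by conditioning on the random threshold $t$ and combining Lemma~\ref{lemma:target-test-explicit-bad-event} (to bound the probability that $t$ lands in the ``uncertain'' band) with the fact that $\eta_{c,\eps}\le 10^{-9}$ from Definition~\ref{def:target-error}. Your final bound of $3\cdot 10^{-9}$ is marginally looser than the paper's $2\cdot 10^{-9}$ but comfortably within the claimed $10^{-8}$, so nothing is lost.
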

\begin{proof}
    We consider every $y$ individually.
    
    For $y \in L_x$: then $\Pr[y \in V_x] = 1$ and $\procnameZtargetZtestZgrossHREF$ accepts $(x,y)$ with probability at least $1 - 10^{-9}$. The difference is bounded by $10^{-9}$.

    For $y \in H_x$: then $\Pr[y \in V_x] = 0$ and $\procnameZtargetZtestZgrossHREF$ accepts $(x,y)$ with probability at most $10^{-9}$. The difference is bounded by $10^{-9}$.
    
    For $y \in M_x$: then $\Pr[y \in V_x]$ is the probability of $\procnameZtargetZtestHREF$ to accept $(x,y)$. By Lemma \ref{lemma:target-test-explicit-bad-event}, the probability that $t \in p_y \pm \kappa$ is bounded by $10^{-9}$. In this case, our best bound for the variation in behaviors is $1$. Otherwise, $t \notin p_y \pm \kappa$, and hence the variation in behaviors is bounded by $\max\{\eta_{\eps,c}, 10^{-9}\}$ by Lemma \ref{lemma:target-test-explicit-base-calc}.
    
    Combined, for every $x, y \in \Omega$, the difference between the accept probability of $\procnameZtargetZtestHREF$ and $\procnameZtargetZtestZgrossHREF$ is bounded by $10^{-9} + \max\{\eta_{c,\eps}, 10^{-9}\} \le 10^{-9} + 10^{-9} \le 10^{-8}$.
\end{proof}

\subsection{Individual drawing of $V_x$}
\label{subsec:individual-drawing-V-x::of-sec:estimation-procs}

The goal of this subsection is to provide membership query access to a drawing of a set $V_x$ through the following interface:
\begin{itemize}
    \item $\procnameZinitializeZnewZVx(c,\eps;x,q)$ -- draws a secret set $V$ according to a distribution that is $\eta_{c,\eps} q$-close to the correct distribution of $V_x$, and returns an object that supports up to $q$ queries.
    \item $\procnameZVxZquery(\mathit{obj},y)$ -- reports whether $y \in V$ or not, where $V$ is the set being held by the object $\mathit{obj}$. If the initialization parameter of the object is $q$, then only the first $q$ calls are guaranteed to be meaningful. A query may affect the contents of $\mathit{obj}$.
    \item The overall sample complexity of the initialization followed by at most $q$ queries is bounded by $O(\log \frac{1}{\eps c}) \cdot q$.
\end{itemize}

The implementation of the interface is straightforward: during initialization, we initialize an empty list of ``historical records'', which we denote by $\mathit{hist}$. In every query of an element we first look for it in the list. If it exists there, then we report (again) the recorded result, and if it is missing, then we run the \procnameZtargetZtestHREF procedure to determine whether it belongs to the set, and record the answer in the list.

\begin{algo}
    \label{fig:alg:draw-secret-vx}
    \procname{$\procnameZinitializeZnewZVx(c,\eps;x,q)$}
    \begin{code}
        \algitem Return $(c, \eps, x, \mathit{hist})$, where $\mathit{hist}$ is an empty list.
    \end{code}
\end{algo}

\begin{algo}
    \label{fig:alg:query-lazy-vx}
    \procname{$\procnameZVxZquery(\mathit{obj},y)$}
    \alginput{An object $\mathit{obj}$ created by \procnameZinitializeZnewZVxHREF representing a subset of $\Omega$, and $y\in\Omega$}
    \algoutput{Whether or not $y$ belongs to the set represented by the object}
    \algcontract{Side effects}{the $\mathit{hist}$ component of $\mathit{obj}$ may change}
    \begin{code}
        \algitem Let $c$, $\eps$, $x$, $\mathit{hist}$ be the components of $\mathit{obj}$ as a $4$-tuple.
        \begin{If}{$y = x$}
            \algitem Return \reject. \algcomment ($x \in V_x$ never happens)
        \end{If}
        \algpushcomment{($y$ was queried before)}
        \begin{If}{$\mathit{hist}$ contains $(y,\mathit{ans})$ for any $\mathit{ans}$}
        \algitem Return $\mathit{ans}$.
        \end{If}
        \algpushcomment{(new $y$)}
        \begin{Else}
            \algitem Let $\mathit{ans} \gets \procnameZtargetZtest(c,\eps;x,y)$.
            \algitem Add $(y, \mathit{ans})$ to $\mathit{hist}$.
            \algitem Return $\mathit{ans}$.
        \end{Else}
    \end{code}
\end{algo}

In the rest of this subsection we show that Algorithm \ref{fig:alg:draw-secret-vx} and Algorithm \ref{fig:alg:query-lazy-vx} implement their desired interface guarantees.

Lemma \ref{lemma:local-simulation} below states that, from the caller's perspective, drawing $V_x$ while answering one query at a time is logically equivalent to drawing it all at once.
\begin{lemma}[The local-simulation lemma] \label{lemma:local-simulation}
    Given $p_z\in [0,1]$ for every $z\in\Omega$, Consider the following three methods of drawing a random set $U \subseteq \Omega$.
    \begin{enumerate}
        \item Every $z\in U$ is chosen to be in $U$ with probability $p_z$, independently of all other members of $U$.
        \item We start with an empty $U$. For up to $q$ iterations, in the $i$th iteration we receive $z_i$ (which may depend on the results of previous iterations), and then with probability $p_{z_i}$ (independently of all previous results) add $z_i$ to $\Omega$. After this phase is over, every $y\in\Omega\setminus\{z_1,\ldots,z_q\}$ is added to $U$ independently with probability $p_y$.
        \item Same as Item 2, only here in the first phase we use $p'_{z_i}$ instead of $p_{z_i}$, where $|p_{z_i}-p'_{z_i}|\leq \delta$ (in the second phase we still use the original $p_y$).
    \end{enumerate}

    The distribution of the sets as drawn in the first item or the second item (both phases) are identical, and are $\delta q$-close to the distribution of the set as drawn in the third item.
\end{lemma}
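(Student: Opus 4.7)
The plan is to prove all three claims by an explicit coupling on a single probability space. I will fix in advance, for every $z\in\Omega$, an independent uniform random variable $R_z\in[0,1]$, shared across all three processes. Method~1 outputs $U_1=\{z:R_z<p_z\}$. For Methods~2 and~3 I run them against the same adversary using the same internal adversary randomness, and take the first-phase decision on the queried element $z_i$ by comparing $R_{z_i}$ to $p_{z_i}$ in Method~2 and to $p'_{z_i}$ in Method~3; in both Methods~2 and~3 the second-phase decision on an unqueried $y$ is $\mathbbm{1}[R_y<p_y]$.

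The equality of Methods~1 and~2 is then immediate from the principle of deferred decisions: under this coupling, the indicator of membership of any $z\in\Omega$ in the final Method~2 set equals $\mathbbm{1}[R_z<p_z]$, whether $z$ was chosen as some $z_i$ during the first phase or processed in the second phase. This is exactly the rule defining $U_1$, so the two sets coincide pointwise and hence have identical distributions. The adaptive nature of the adversary's queries causes no difficulty, because the answers it sees in Method~2 are themselves $\mathbbm{1}[R_{z_j}<p_{z_j}]$ for the previously queried elements, and changing the order in which the table $\{R_z\}_{z\in\Omega}$ is read does not affect the final set.

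For the closeness of Methods~2 and~3, let $T$ be the first iteration at which their first-phase answers disagree. Until step $T-1$ the two transcripts are identical, and since the adversary's internal randomness is shared, the two methods query the very same $z_i$ at every step $i\le T$. A disagreement at step $i$ occurs iff $R_{z_i}$ falls between $p_{z_i}$ and $p'_{z_i}$, an event of probability at most $|p_{z_i}-p'_{z_i}|\le\delta$ conditional on the past transcript (which determines $z_i$). A union bound over the $q$ steps yields $\Pr[T\le q]\le \delta q$. On the complementary event $\{T>q\}$, all first-phase answers agree, so both methods queried exactly the same elements with the same answers, and their second phases then compare the same $R_y$ against the same thresholds $p_y$; thus $U_2=U_3$ pointwise on this event. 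The coupling inequality therefore gives $\dtv(U_2,U_3)\le \delta q$.

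The main subtlety I expect is in the step ``the two methods query the same $z_i$ on the agreement event'', which is what makes the union bound legitimate; this requires committing to a single adversary with one shared source of randomness used by both runs, so that identical transcripts deterministically force identical next queries. Once that coupling is set up, both conclusions drop out cleanly, and no further calculation is needed.
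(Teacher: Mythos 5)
Your proof is correct and follows essentially the same approach as the paper: the paper also argues that Methods 1 and 2 agree because the order of decisions is immaterial, and bounds the distance between Methods 2 and 3 by a per-step union bound over the probability $\le\delta$ of a first disagreement. You make the underlying coupling explicit (shared uniforms $R_z$ and shared adversary randomness), which the paper leaves implicit, but the substance is identical.
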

\begin{proof}
    Methods 1 and 2 are identical since, regardless of the order of choices, every element $z$ belongs to $U$ with probability $p_z$ independently of the others.

    In every step $1 \le i \le q$, if the first $i-1$ steps in the first phase of Method 2 and Method 3 were the same, then the probability of the $i$th step of Method 2 to deviate from the $i$th step of Method 3 is bounded by $\delta$. By the union bound (and the second phase of both methods being the same), the distributions of a set drawn by Method 2 and a set drawn by Method 3 are are $\delta q$-close to each other.
\end{proof}

\begin{lemma}[\procnameZinitializeZnewZVx, \procnameZVxZquery] \label{lemma:Vx-oracle-new}
    The distribution of the output of a sequence starting with a single call to \procnameZinitializeZnewZVxHREF (Algorithm \ref{fig:alg:draw-secret-vx}), followed by $q$ calls to \procnameZVxZqueryHREF (Algorithm \ref{fig:alg:query-lazy-vx}) over the produced object with the queries $y_1,\ldots,y_q \in \Omega$, is $\eta_{c,\eps} q$-close to the distribution of the output of a sequence that draws $V \sim V_{x,c,\eps}$ and determines whether $y_i$ belongs to $V$ or not for every $1 \le i \le q$. This bound holds also for an adaptive choice of every $y_i$ based on the answers to the queries $y_1,\ldots,y_{i-1}$. The call to \procnameZinitializeZnewZVxHREF has no sample cost, and each call to \procnameZVxZqueryHREF costs $O(\log \frac{1}{\eps c})$ conditional samples.
\end{lemma}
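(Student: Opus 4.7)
The argument is a direct reduction to the local-simulation lemma (Lemma~\ref{lemma:local-simulation}). First, I would identify the per-element probabilities $p_y$ that describe an ideal draw $V\sim V_{x,c,\eps}$ followed by membership queries: by Definition~\ref{def:V-x}, one has $p_y=1$ for $y\in L_x$, $p_y=0$ for $y\in H_x$, and $p_y=f_{x,c,\eps}(y)$ for $y\in M_x$. This ideal process is exactly Method~1 of Lemma~\ref{lemma:local-simulation}. The lazy implementation, on the other hand, plays the role of Method~3: the cache stored in $\mathit{hist}$ guarantees that randomness is spent at most once per distinct queried element, and a repeated (or adaptively chosen) query is answered from the cache, matching the ``first-phase iterations only fire on new elements'' reading of Method~2/3. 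Consequently, at most $q$ iterations of the first phase are ever triggered by a sequence of $q$ queries.

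Next, I would verify that the per-query acceptance probability $p'_y$ used by \procnameZVxZqueryHREF satisfies $|p'_y-p_y|\le \eta_{c,\eps}$ for every $y\in\Omega$. For $y\in L_x$, Lemma~\ref{lemma:alg:target-function-correct} gives $p'_y\ge 1-\eta_{c,\eps}$, so the gap to $p_y=1$ is at most $\eta_{c,\eps}$; symmetrically, for $y\in H_x$, $p'_y\le \eta_{c,\eps}$ and $p_y=0$. For $y\in M_x$ the values coincide exactly by Definitions~\ref{def:target-function} and~\ref{def:V-x}, so the gap is zero. Applying Lemma~\ref{lemma:local-simulation} with $\delta=\eta_{c,\eps}$ then yields total-variation distance at most $\eta_{c,\eps}q$ between the outputs of the lazy sequence and the ideal sequence, uniformly over adaptive strategies for choosing the $y_i$'s (the lemma already permits $z_i$ to depend on previous answers).

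The complexity claim is then immediate: \procnameZinitializeZnewZVxHREF only allocates an empty list and performs no conditional samples, and each call to \procnameZVxZqueryHREF either returns a cached answer at zero sample cost or makes a single invocation of \procnameZtargetZtestHREF, whose cost is $O(\log\frac{1}{\eps c})$ conditional samples by Lemma~\ref{lemma:alg:target-function-correct}. The only step that deserves care is the reconciliation of adaptive duplicate queries with the slightly informal ``iterations'' in Lemma~\ref{lemma:local-simulation}; I expect that to be the main (though still minor) technical hurdle, and I would address it by explicitly describing the coupling in which Method~2/3 only ``consumes'' an iteration on the first occurrence of each $z_i$, exactly mirroring the cache.
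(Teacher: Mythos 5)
Your proposal is correct and follows the same route as the paper, namely observing that \procnameZinitializeZnewZVxHREF and \procnameZVxZqueryHREF together instantiate Method~3 of the local-simulation lemma (Lemma~\ref{lemma:local-simulation}) with $\delta=\eta_{c,\eps}$, from which the $\eta_{c,\eps}q$ bound and the per-call cost follow. The paper's proof is terser than yours — it does not spell out the case-by-case verification of $|p'_y-p_y|\le\eta_{c,\eps}$ or the duplicate-query coupling you flag — but your additions are in the spirit of the same argument rather than a different one.
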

\begin{proof}
    Observe that Algorithm \ref{fig:alg:draw-secret-vx} and Algorithm \ref{fig:alg:query-lazy-vx} together implement Method 3 of Lemma \ref{lemma:local-simulation} for drawing $V_x$ with error parameter $\delta = \eta_{c,\eps}$. Hence, its behavior is $\eta_{c,\eps}$-close to an object that is initialized with an explicit drawing of $V_x$ as a whole and then answers all membership queries (whether $y \in V_x$ for some $y$) with the same deviation probability as Method 3.
\end{proof}

\subsection{Estimation of $\E[\beta_{x,\alpha}]$}
\label{subsec:est-E-beta::of-sec:estimation-procs}

Recall that, given $A_\alpha$ and $V_x$, we define $\beta_{x,\alpha} = \Pr_\mu\left[\neg x \cond V_{x,\alpha} \cup \{x\} \right]$. 

Algorithm \ref{fig:alg:est-beta} estimates $\E[\beta_{x,\alpha}]$ as the expected value of the following indicator: first we draw $A_\alpha$, and then we repeatedly draw $y$ from $\mu$ conditioned on $A_\alpha \cup \{x\}$, until we hit an instance where $y=x$ or Algorithm \procnameZtargetZtest accepts, or until we exceed a pre-defined iteration limit.

\begin{algo}[ht!]
    \label{fig:alg:est-beta}
    \procname
    {\procnameZestZexpectedZbeta$(\mu,c,\eps;x,\alpha)$}
    \algoutput{$\hat{b} \in \E[\beta_{x,\alpha}] \pm \paperZhardcodeZdelta$}
    \begin{code}
        \algitem Let $M \gets 70000$.
        \algitem Set $m \gets 0$.
        \begin{For}{$i$ from $1$ to $M$}
            \algitem Draw $A_\alpha$ according to its definition.
            \algitem Set $b_i \gets 0$.
            \begin{For}{$10000$ iterations or until explicitly terminated\label{fig:alg:est-beta:step:indicator-oracle}}
                \algitem Draw $y$ from $\mu$, conditioned on $A_\alpha \cup \{x\}$.
                \begin{If}{$y=x$}
                    \item Exit FOR loop.
                \end{If}
                \begin{If}{$\procnameZtargetZtestZgrossHREF(\mu;x,y)$ accepts\label{fig:alg:est-beta:step:target-test-call}}
                    \item Set $b_i \gets 1$. \label{fig:alg:est-beta:step:increase-bi}
                    \item Exit FOR loop.
                \end{If}
            \end{For}
        \end{For}
        \algitem Let $\hat{b} \gets \frac{1}{M} \sum_{i=1}^M b_i$.
        \algitem Return $\hat{b}$.
    \end{code}
\end{algo}

\begin{lemma}[\procnameZestZexpectedZbeta]
\label{lemma:est-E-beta-x-alpha-pm-eps}
    Given $0 < \alpha \le 1$, Algorithm \ref{fig:alg:est-beta} estimates $\E[\beta_{x,\alpha}]$ within $\paperZhardcodeZdelta$-additive error with probability $2/3$ using $O(1)$ conditional samples.
\end{lemma}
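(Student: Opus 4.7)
The plan is to bound $|\hat b - \E[\beta_{x,\alpha}]|$ by $\frac{1}{200}$ with probability at least $2/3$ through a standard bias--variance decomposition: first show that $\E[b_i]$ is very close to $\E[\beta_{x,\alpha}]$, then invoke the additive Chernoff bound on the average of $M = 70000$ i.i.d.\ $\{0,1\}$-valued random variables to tightly concentrate $\hat b$ around $\E[b_i]$.

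For the bias step I would condition on $A_\alpha$ and view the inner loop as a memoryless geometric process with three absorbing outcomes per iteration: \emph{stop-$0$} when the drawn $y$ equals $x$, \emph{stop-$1$} when $y \neq x$ and the gross target test accepts, and \emph{continue} otherwise. Summing the geometric series over the $10^4$ inner iterations, $\Pr[b_i = 1 \mid A_\alpha]$ equals the ratio of the stop-$1$ probability to the total stopping probability, scaled by the probability of terminating before the iteration cap. Applying Lemma~\ref{lemma:local-simulation} with per-element parameters $q(y)$ taken to be the acceptance probability of $\procnameZtargetZtestZgrossHREF$ on $y$ identifies this conditional expectation with the one obtained by first drawing a set $V'$ explicitly and then rejection-sampling against $V' \cup \{x\}$; Lemma~\ref{lemma:target-test-approx} then replaces $V'$ by $V_x$ at an additional per-element cost of at most $10^{-8}$. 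The timeout term is doubly-exponentially small when the per-iteration absorption probability is bounded below, and its contribution is itself negligible when that probability is tiny, together yielding $|\E[b_i] - \E[\beta_{x,\alpha}]| \ll \frac{1}{400}$.

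For the concentration step, since the outer $M$ trials are i.i.d.\ and each $b_i \in \{0,1\}$, the additive Chernoff bound yields $\Pr[|\hat b - \E[b_i]| > t] \le 2 \exp(-2 M t^2)$; plugging in $M = 70000$ and $t$ just below $\frac{1}{200}$ gives a failure probability well below $\frac{1}{3}$. Combining the two stages delivers the claimed $\frac{1}{200}$-additive estimate of $\E[\beta_{x,\alpha}]$ with probability at least $2/3$, and the $O(1)$ sample cost is immediate since $M$ and the inner cap are fixed constants and each call to the gross target test itself uses only a constant number of conditional samples.

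The main obstacle is the bias step: one must apply Lemma~\ref{lemma:local-simulation} carefully in the presence of possibly repeated element queries across the $10^4$ inner iterations, and verify that the $10^{-8}$ per-element slack of Lemma~\ref{lemma:target-test-approx} propagates cleanly through the nonlinear $X/(\mu(x) + X)$ functional that defines $\beta_{x,\alpha}$.
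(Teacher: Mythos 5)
Your decomposition is the same one the paper uses: first bound the bias $\lvert\E[b_i]-\E[\beta_{x,\alpha}]\rvert$ coming from the $10^{-8}$ slack of \procnameZtargetZtestZgrossHREF (via a per-iteration union bound over the $10^4$ inner iterations, which the paper does without invoking Lemma~\ref{lemma:local-simulation}) and from the iteration cap, then concentrate $\hat b$ around $\E[b_i]$. The substantive divergence is the choice of concentration inequality, and it does not close numerically.

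With $M = 70000$ and a target failure probability of $1/3$, the two-sided additive Chernoff bound $2\exp(-2Mt^2)\le 1/3$ forces $t\ge\sqrt{\ln 6/(2M)}\approx 1/279$, leaving a bias budget of at most $1/200-1/279\approx 1/706$. But the bias bound the paper actually establishes is $10^4\cdot 10^{-8} + 1/675 = 1/5400 + 1/675 = 1/600$, which does not fit inside $1/706$. Your stated budget is also internally inconsistent on its face: ``$t$ just below $1/200$'' and ``bias $\ll 1/400$'' already sum past $1/200$. The paper avoids the whole issue by using Chebyshev, $\Pr[\lvert\hat b-\E[b_i]\rvert>t]\le 1/(4Mt^2)$, which at $t=1/300$ gives failure probability $9/28<1/3$ and leaves exactly $1/600$ for the bias; Chebyshev genuinely beats Chernoff here because the failure-probability target $1/3$ is not small. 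If you insist on Chernoff you must sharpen the timeout term: the paper's $1/675$ is crude, and one can show $\max_r\min\{r,(1-r)^{10^4}\}\approx W(10^4)/10^4\approx 1/1380$, which would make the budget close, but that calculation is not in your proposal. One additional caution: Lemma~\ref{lemma:local-simulation} compares two \emph{history-keeping} draws, whereas the inner loop of Algorithm~\ref{fig:alg:est-beta} re-queries \procnameZtargetZtestZgrossHREF afresh even for a repeated $y$, so invoking Lemma~\ref{lemma:local-simulation} directly (as you do) needs an extra argument; the paper's per-iteration union bound sidesteps this by not coupling to a fixed drawn set.
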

\begin{proof}
    The worst-case cost of the algorithm is $O(1)$ calls to \procnameZtargetZtestZgrossHREF, each costing $O(1)$ samples.

    Consider the following hypothetical variants of Algorithm \ref{fig:alg:est-beta}:
    \begin{itemize}
        \item Variant A. Algorithm \ref{fig:alg:est-beta} as written (the realizable variant).
        \item Variant B. A variant where instead of $\procnameZtargetZtestZgross(\mu;x,y)$ we use a hypothetical (non-realistic) procedure that accepts $y$ with probability equal to $\Pr[y \in V_x]$.
        \item Variant C. A variant where additionally to the hypothetical procedure used in Variant B, we also remove the iteration limit of the loop in Step \ref{fig:alg:est-beta:step:indicator-oracle}, running it until it is explicitly terminated.
    \end{itemize}
    Let $r_A$, $r_B$ and $r_C$ be the expected values of each of $b_1,\ldots,b_M$ (which are all distributed the same) in each respective variant. By definition (and $\hat b$ being the average of $b_1,\ldots,b_M$ and hence having the same expectation), $r_A$ is also the expected output of Algorithm \ref{fig:alg:est-beta}, and $r_C = \E[\beta_{x,\alpha}]$.

    By the triangle inequality, $|\hat{b} - \E[\beta_{x,\alpha}]| = |\hat{b} - r_C| \le |\hat{b} - r_A| + |r_B - r_A| + |r_C - r_B|$.
    
    For every $1 \le i \le M$, $b_i$ is an indicator and hence its variance is bounded by $\frac{1}{4}$. Since the $b_i$s are independent, the variance of $\hat{b}$ is bounded by $\frac{1}{4M}$, and by Chebyshev's inequality,
    \[
        \Pr\left[\abs{\hat{b} - r_A} > \frac{1}{300}\right]
        = \Pr\left[\abs{\hat{b} - \E[\hat{b}]} > \frac{1}{300}\right]
        \le \frac{\frac{1}{4M}}{(1/300)^2}
        = \frac{22500}{70000}
        < \frac{1}{3}
    \]

    By the union bound over $10000$ iterations of the loop in Step \ref{fig:alg:est-beta:step:indicator-oracle} of Variant A and the corresponding one of Variant B, $\abs{r_B - r_A}
    \le 10000 \abs{\Pr[y \in V_x] - \Pr[\procnameZtargetZtestZgrossHREF(x,y) = \accept]}$.

    We use Lemma \ref{lemma:target-test-approx} to obtain that $\abs{\Pr[y \in V_x] - \Pr[\procnameZtargetZtestZgrossHREF(x,y) = \accept]}$ is bounded by $\frac{1}{10^8}$. Hence, $\abs{r_B - r_A} \le 10000 \cdot \frac{1}{10^8} < \frac{1}{5400}$.
    
    The behaviors of Variant B and Variant C with respect
    to the definitions differ only when $10000$ iterations of Step \ref{fig:alg:est-beta:step:indicator-oracle} are exceeded, and always $r_B \le r_C$.
    
    Let $r$ be the probability to explicitly terminate the loop in an individual iteration. Observe that $r \ge r_C$ since we always terminate after writing $b_i \gets 1$ (which happens with probability $r_C$), but we can also terminate when $y=x$. The number of iterations in Variant C distributes geometrically with parameter $r$, hence the total variation distance between a run of this loop in Variant B and in Variant C is also bounded by $\Pr[\Geo(r) > 10000]$. Combined with $0 \le r_B \le r_C \le r$, we obtain $\abs{r_C - r_B} \le \min\{ r, \Pr[\Geo(r) > 10000] \}$.

    We use $\frac{1}{675}$ as an approximate break-even parameter. If $r \ge \frac{1}{675}$, then
    \begin{eqnarray*}
        \abs{r_B - r_C} \le \Pr\left[\Geo(1/675) \ge 10000\right]
        &\le& \Pr\left[\Geo(1/675) \ge 14 \cdot 675 \right] \\
        &=& \Pr\left[e^{\frac{1}{675}\Geo(1/675)} \ge e^{14}\right] \\
        \text{[Markov]} &\le& e^{-14} \E\left[e^{\frac{1}{675}\Geo(1/675)}\right] \\
        \text{[Lemma \ref{lemma:geo-mgf}]} &=& e^{-14} \frac{\frac{1}{675} e^{1/675}}{1 - \left(1 - \frac{1}{675}\right) e^{1/675}}
        \le \frac{1}{675}
    \end{eqnarray*}
    Hence, $\abs{r_C - r_B} \le \min\left\{ \frac{1}{675}, \Pr\left[\Geo(1/675) \ge 10000\right]\right\} = \frac{1}{675}$.
    
    Overall, with probability at least $\frac{2}{3}$, $\abs{\hat{b} - \E[\beta_{x,\alpha}]} \le \frac{1}{300} + \frac{1}{5400} + \frac{1}{675} = \frac{1}{200}$.
\end{proof}

\subsection{Individual estimation of $\beta_{x,\alpha}$}
\label{subsec:est-individual-beta::of-sec:estimation-procs}

The previous subsection provided a rough approximation of $\E[\beta_{x,\alpha}]$ where the expectation is taken over random choices of $A_\alpha$ and $V_x$, but in Section \ref{sec:est-mu-x-using-alpha} we define some function $h$ and need to approximate $\E[h(\beta_{x,\alpha})]$. Hence, we also need to estimate $\beta_{x,\alpha}$ for \emph{specific} draws of $A_\alpha$ and $V_x$, where the later is given through an output of a call to \procnameZinitializeZnewZVxHREF. In the following we show how to estimate $\beta_{x,\alpha}$ with respect to $A_\alpha$ and the set $V_x$ that is virtually held by the $V_x$-object.

\begin{algo}[ht!]
    \label{fig:alg:est-beta-single}
    \procname {$\procnameZestZsingleZbeta(\mu,c,\eps;x,\alpha,\delta,A_\alpha,\mathit{obj})$}
    \alginput{$\delta < \frac{1}{4}$}
    \alginput{$\mathit{obj}$ represents a subset of $\Omega$ that is the output of \procnameZinitializeZnewZVxHREF}
    \algoutput{$\hat{b} \in \beta_{x,\alpha}(A_\alpha,V_x) \pm \delta$}
    \algcomplexity{At most $25 \frac{\ln (6/\delta)}{\delta^3}$ conditioned samples and one $\procnameZVxZquery(\mathit{obj}, \cdot)$ call per sample}
    \begin{code}
        \algitem Let $M \gets \ceil{8 / \delta^2}$.
        \algitem Set $m \gets 0$.
        \begin{For}{$M$ times}
            \begin{For}{$\ceil{3 \ln (6/\delta) / \delta}$ iterations or until explicitly terminated}
                \algitem Draw $y$ from $\mu$, conditioned on $A_\alpha \cup \{x\}$.
                \begin{If}{$y=x$}
                    \algitem Exit FOR loop.
                \end{If}
                \begin{If}{$\procnameZVxZqueryHREF(\mathit{obj}, y)$ accepts}
                    \algitem Set $m \gets m + 1$.
                    \algitem Exit FOR loop.
                \end{If}
            \end{For}
        \end{For}
        \algitem Let $\hat{b} \gets m/M$.
        \algitem Return $\hat{b}$.
    \end{code}
\end{algo}

\begin{restatable}{lemma}{lemmaZestZbetaZexpectedZtime} \label{lemma:est-beta-expected-time}
    $\E\left[\frac{\mu(A_\alpha \cup \{x\})}{\mu\left((V_x \cap A_\alpha) \cup \{x\}\right)}\right] \le \frac{20}{w_x}$.
\end{restatable}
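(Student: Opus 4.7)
The plan is to decompose the ratio and exploit independence. Writing $U = A_\alpha \setminus \{x\}$ and $W = V_x \cap A_\alpha$, the ratio in question equals $R = (\mu(U) + \mu(x))/(\mu(W) + \mu(x))$. Since $W \subseteq U$, I rewrite $R = 1 + \mu(U \setminus W)/(\mu(W) + \mu(x))$. Conditional on the random set $V_x$, the sets $W$ and $U \setminus W = A_\alpha \cap (\Omega \setminus V_x \setminus \{x\})$ depend on disjoint coordinates of $A_\alpha$ and are therefore independent, yielding
\[\E[R] \;=\; 1 \;+\; \alpha\,\E\!\left[(1 - \mu(V_x) - \mu(x))\cdot\E\!\left[\frac{1}{\mu(W)+\mu(x)} \,\middle|\, V_x\right]\right].\]

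To control the inner expectation, I use the fact that $\mu(W)\mid V_x$ is a sum of independent terms, each bounded by $1.2\mu(x)$ (since $V_x$ avoids $H_x$), with mean $\alpha \mu(V_x)$. A multiplicative Chernoff bound then gives $\mu(W) \ge \alpha \mu(V_x)/2$ with probability at least $1 - \exp(-\alpha \mu(V_x)/(9.6\mu(x)))$; on the good event $1/(\mu(W)+\mu(x)) \le 2/(\alpha \mu(V_x) + 2\mu(x))$, and in the tail I use the trivial bound $1/\mu(x)$. Integrating over $V_x$, and using that $\mu(V_x) - \mu(L_x)$ is itself a sum of independent $1.2\mu(x)$-bounded variables with mean $s_x - \mu(L_x)$, a moment-generating-function calculation yields $\E\!\left[\exp(-\alpha \mu(V_x)/(9.6\mu(x)))\right] \le \exp(-\alpha s_x/(19.2\mu(x)))$.

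Plugging these bounds back gives an estimate of the shape $\E[R] \le 1 + O(\alpha/(\alpha s_x + \mu(x))) + (\alpha/\mu(x))\exp(-\Omega(\alpha s_x/\mu(x)))$, and the proof is completed by case analysis. When $s_x \le 19\mu(x)$ (so $w_x \le 20\mu(x)$) the trivial bound $R \le 1/\mu(x) \le 20/w_x$ already suffices. When $s_x > 19\mu(x)$, the middle term is $O(1/s_x) = O(1/w_x)$, and the Chernoff tail is killed either by the prefactor $\alpha$ (when $\alpha$ is very small relative to $\mu(x)/s_x$) or by the exponential decay (when $\alpha s_x$ is at least a small multiple of $\mu(x)$).

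I expect the main obstacle to be the bookkeeping of constants across the intermediate regime in which $\alpha s_x$ is only mildly larger than $\mu(x)$: concentration of $\mu(W)$ is weak there, so one has to combine it with the prefactor $\alpha$ and the trivial inequalities $1 - \mu(V_x) - \mu(x) \le 1$ and $\alpha \le 1$ to make sure every sub-case contributes at most $20/w_x$. The factor $20$ is loose enough that these pieces fit together, but matching that constant exactly requires care.
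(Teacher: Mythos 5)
Your opening — splitting $\mu(A_\alpha\cup\{x\}) = \mu((V_x\cap A_\alpha)\cup\{x\}) + \mu(A_\alpha\setminus V_x)$ and exploiting the conditional independence of the two pieces given $V_x$ to extract the factor $\alpha(1-\mu(V_x)-\mu(x))\le\alpha$ — is exactly the paper's first step, and it is sound. You then diverge by conditioning on $V_x$ and running a nested two-level concentration argument (Chernoff on $\mu(W)\mid V_x$, then an MGF average over $V_x$), while the paper (Lemma~\ref{apxlemma:one-over-conditional-V-x-by-A-x}, driven by Lemma~\ref{apxlemma:V-x-alpha-MGF}) treats $\mu(V_x\cap A_\alpha)$ as a \emph{single} sum of independent $1.2\mu(x)$-bounded terms — each $y$ included with probability $\alpha f_x(y)$ — and applies one Chernoff bound directly to $\E\!\left[\tfrac{1}{\mu(x)+\mu(V_x\cap A_\alpha)}\right]$. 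Your route can be made to work, but it is strictly more complicated and the sketch glosses over a real issue: after the inner Chernoff, the ``good event'' contribution $\E_{V_x}\!\left[\tfrac{2}{\alpha\mu(V_x)+2\mu(x)}\right]$ is on the \emph{wrong side} of Jensen and does not collapse to $\tfrac{2}{\alpha s_x+2\mu(x)}$; you need a separate lower-tail concentration bound for $\mu(V_x)$ itself before you can read off the $O\!\left(\tfrac{\alpha}{\alpha s_x+\mu(x)}\right)$ term you claim, and that produces yet another exponential tail to account for. The one-shot joint Chernoff in the paper sidesteps all of this. Your closing case split ($s_x\lessgtr 19\mu(x)$) is fine, and you do not actually need to distinguish small $\alpha$ from large $\alpha s_x$ in the tail: with $a=\alpha s_x/\mu(x)$ the tail is $\tfrac{1}{s_x}\cdot a\,e^{-a/19.2}$, and $a\mapsto a\,e^{-a/19.2}$ is uniformly bounded, which is the same observation the paper uses (with $a\,e^{-a/16}\le 16/e$).
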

We prove Lemma \ref{lemma:est-beta-expected-time} in Appendix \ref{apx:tech-SHEET}.

\begin{lemma}[\procnameZestZsingleZbeta] \label{lemma:est-beta-single}
    Let $0 < \delta < \frac{1}{4}$. Algorithm \ref{fig:alg:est-beta-single} outputs an estimation of $\beta_{x,\alpha}(A_\alpha,V_x)$ within additive error $\pm \delta$ with probability at least $\frac{2}{3}$. Its expected query complexity is bounded by $O\left(\log \frac{1}{\eps c} \cdot \frac{1}{\delta^2 \left(w_x + \left(\delta/ \log \delta^{-1}\right)\right)}\right)$. Additionally, the number of $\procnameZVxZqueryHREF(\mathit{obj},\cdot)$-calls is bounded by $25 \frac{\ln (6/\delta)}{\delta^3}$.
\end{lemma}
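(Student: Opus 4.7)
The plan is to analyze Algorithm~\ref{fig:alg:est-beta-single} as a bias-variance decomposition of the average $\hat{b} = \frac{1}{M}\sum_{i=1}^M b_i$, where $b_i \in \{0,1\}$ is the indicator that the $i$th outer iteration exits by incrementing $m$ (and $M = \ceil{8/\delta^2}$). By the triangle inequality,
\[ |\hat{b} - \beta_{x,\alpha}| \le |\hat{b} - \E[b_i]| + |\E[b_i] - \beta_{x,\alpha}|, \]
so it suffices to control the bias and the deviation separately, each by $\delta/2$ with appropriate probability.

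The first step is to compute $\E[b_i]$ exactly. With $A_\alpha$ and the virtual $V_x$ fixed, let $p = (\mu(x) + \mu(V_{x,\alpha}))/\mu(A_\alpha \cup \{x\})$ be the per-inner-iteration stop probability; conditional on a stop, the probability that it was an increment is exactly $\beta_{x,\alpha}(A_\alpha, V_x)$. A geometric-tail calculation then gives $\E[b_i] = \beta_{x,\alpha}(1 - (1-p)^T)$, where $T = \ceil{3\ln(6/\delta)/\delta}$ is the inner-loop cap. Hence the bias equals $\beta_{x,\alpha}(1-p)^T \le e^{-pT}$, and with the chosen $T$ this drops below $\delta/6$ whenever $pT \ge \ln(6/\delta)$, which covers the typical regime since $p \ge \mu(x)+\mu(V_{x,\alpha})$. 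The main obstacle I foresee is handling the residual regime in which $p$ is extremely small: there the naive estimate $\beta_{x,\alpha}(1-p)^T$ is not obviously below $\delta/2$, so I would sharpen the argument either by exploiting the concrete constants baked into $M$ and $T$, or by bookkeeping the fact that when $p$ is tiny the $\beta_{x,\alpha}$ prefactor itself is controlled in the way needed by the downstream use of the estimator.

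Next I would bound the deviation via Chebyshev's inequality. Conditional on the fixed inputs $A_\alpha$ and $\mathit{obj}$, the indicators $b_1,\ldots,b_M$ are i.i.d.\ Bernoulli, so $\Var[b_i] \le 1/4$ and $\Var[\hat{b}] \le 1/(4M) \le \delta^2/32$. Chebyshev then yields $\Pr[|\hat{b} - \E[b_i]| > \delta/2] \le 1/8$. Combined with the bias bound, we obtain $|\hat{b} - \beta_{x,\alpha}| \le \delta$ with probability at least $7/8$, comfortably above the required $2/3$.

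Finally, for the complexity claim I would first note that the worst-case number of inner iterations across the entire run is at most $M \cdot T \le 25\ln(6/\delta)/\delta^3$, matching the stated upper bound on $\procnameZVxZqueryHREF$-calls (one per inner iteration). Each inner iteration also costs one conditional sample plus the $O(\log \frac{1}{\eps c})$ samples needed inside the $\procnameZVxZqueryHREF$ call (Lemma~\ref{lemma:Vx-oracle-new}). The expected number of inner iterations per outer pass is $\E[\min(\Geo(p), T)] \le \min(1/p, T)$; taking expectations over the drawing of $A_\alpha$ and invoking Lemma~\ref{lemma:est-beta-expected-time} bounds $\E_{A_\alpha}[\min(1/p, T)]$ by $O(1/(w_x + \delta/\log\delta^{-1}))$, because $\E_{A_\alpha}[1/p] \le 20/w_x$ while the deterministic cap is $T = \Theta(\log\delta^{-1}/\delta)$. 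Multiplying through by $M = O(1/\delta^2)$ and by the per-iteration cost delivers the stated expected-sample bound.
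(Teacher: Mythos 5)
Your decomposition (bias plus Chebyshev deviation, complexity via Lemma~\ref{lemma:est-beta-expected-time} combined with the deterministic cap) is exactly the structure of the paper's proof; the Chebyshev step, the worst-case $V_x$-query count $M\cdot T\le 25\ln(6/\delta)/\delta^3$, and the final $O\left(\log\frac{1}{\eps c}\cdot\frac{1}{\delta^2(w_x+\delta/\log\delta^{-1})}\right)$ accounting all match (the paper budgets $\frac{2}{3}\delta$ for deviation and $\frac{1}{3}\delta$ for bias rather than $\delta/2$ each, but that is cosmetic).

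On the bias, you correctly flagged the regime where $p = \mu\left((V_x\cap A_\alpha)\cup\{x\}\right)/\mu(A_\alpha\cup\{x\})$ is very small as the point that needs an actual idea, and you should not treat this as settled just because the paper's text passes through it. The paper's proof does a case split on $p$ versus $\frac{1}{3}\delta$: for $p>\frac{1}{3}\delta$ it derives bias $\le(1-\frac{1}{3}\delta)^T\le\frac{1}{6}\delta$, which is your $e^{-pT}$ computation; but for $p\le\frac{1}{3}\delta$ the paper simply states ``the additive penalty is at most $\frac{1}{3}\delta$'' with no derivation. That assertion is not self-evident: the bias equals $\beta_{x,\alpha}(1-p)^T$, and when $p\ll\delta/T$ this stays close to $\beta_{x,\alpha}$, which can itself be close to $1$ whenever $\mu(x)\ll\mu(V_{x,\alpha})\ll\mu(A_\alpha)$. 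So the branch you identified as the ``main obstacle'' is precisely the branch the paper glosses over; if you want a fully rigorous proof you will need to supply the missing reasoning (for instance an implicit constraint on $w_x$, or a sharper argument showing $\beta_{x,\alpha}$ must also be small in this branch), rather than inherit the paper's one-line claim.
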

\begin{proof}
    The algorithm makes at most one $\procnameZVxZqueryHREF(\mathit{obj},\cdot)$-call per sample. The number of samples is bounded by $\ceil{8/\delta^2} \cdot \ceil{3 \ln (6/\delta)/\delta^3}$, which is at most $25 \frac{\ln (6/\delta)}{\delta^3}$ for every $\delta < \frac{1}{4}$. Recall that every $\procnameZVxZqueryHREF$-call costs $O(\log \frac{1}{\eps c})$ samples.
    
    Clearly, the worst case cost of Algorithm \ref{fig:alg:est-beta-single} is $O(1/\delta^2) \cdot O(\ln \delta^{-1} / \delta) \cdot O\left(\log \frac{1}{\eps c}\right) = O\left(\log \frac{1}{\eps c} \cdot \frac{\log \delta^{-1}}{\delta^3}\right)$.

    By Lemma \ref{lemma:est-beta-expected-time}, the expected number of inner-loop iterations is bounded by $O(1/w_x)$ and the expected complexity is bounded by $O\left(\log \frac{1}{\eps c} \cdot \frac{1}{\delta^2 w_x} \right)$. Since the expected cost cannot exceed the worst-case cost, we can reformulate the expected cost as $O\left(\log \frac{1}{\eps c} \cdot \frac{1}{\delta^2 \left(w_x + \delta / \log \delta^{-1} \right)}\right)$.

    If the algorithm does not terminate the inner loop after $\ceil{3 \ln (6/\delta)/\delta}$ iterations, then the expected gain in $m$ in every inner iteration would be exactly $\E\left[\frac{\mu\left((V_x \cap A_\alpha) \cup \{x\}\right)}{\mu(A_\alpha \cup \{x\})}\right]$ for the given $A_\alpha$ and the set being held by $\mathit{obj}$. The actual gain is less, since we must consider the possibility to terminate the loop after $\ceil{3 \ln (6/\delta)/\delta}$. Let $\hat\beta$ be the expectation of this gain.

    If $\E\left[\frac{\mu\left((V_x \cap A_\alpha) \cup \{x\}\right)}{\mu(A_\alpha \cup \{x\})}\right] \le \frac{1}{3}\delta$, then the additive penalty is at most $\frac{1}{3}\delta$. Otherwise, the distance between $\hat{\beta}$ and $\beta_{x,\alpha}$ is bounded by $\left(1 - \frac{1}{3}\delta\right)^\ceil{3 \ln (6/\delta) / \delta}
    \le \left(1 - \frac{1}{3}\delta\right)^{3 \ln (6/\delta) / \delta}
    \le e^{-\ln (6/\delta)}
    = \frac{1}{6}\delta \le \frac{1}{3}\delta$ as well.

    Note that $m$ is the sum of $M$ independent indicators, hence its variance is bounded by $M/4$. Since $M \ge 8/\delta^2$, Chebyshev's inequality implies that with probability at least $\frac{2}{3}$, $m \in \E[m] \pm \frac{2}{3}\delta M$, and in this case, $\hat{b} = \E[\hat{b}] \pm \frac{2}{3}\delta = \beta_{x,\alpha}(A_\alpha, V_x) \pm \delta$.
\end{proof}

\section{The reference estimator}
\label{sec:preamble}

In this section we prove Lemma \ref{lemma:main-alg-preamble} and provide an algorithm that demonstrates it. Towards this proof, we first provide a general saturation-aware estimator for the expectation of an indicator variable.

\subsection{Estimation of an unknown probability}

We implement here a $\left(\delta; \frac{1}{12}a, a\right)$-saturation-aware estimator for $p$, where $p$ is only accessible through an oracle that draws an indicator random variable with expected value $p$. Algorithm \ref{fig:alg:saturation-aware-estimator} draws independent samples of this indicator and then stops after seeing $O(1/\delta^2)$ occurrences of $1$ or until reaching the limit of $O(1/(\delta^2 a))$ samples.

\begin{algo}[ht!]
    \label{fig:alg:saturation-aware-estimator}
    \procname{$\procnameZsaturationZawareZest(a;\mathcal A,\delta)$}
    \alginput{An oracle $\mathcal A$ for sampling a binary variable}
    \algoutput{$\hat{p} \in \{\esttoolow\} \cup (1 \pm \delta) \E[\mathcal A]$}
    \begin{code}            
        \algitem Let $M \gets \ceil{48 / \delta^2}$.
        \algitem Let $L \gets \floor{6M/a}$.
        \algitem Set $m \gets 0$, $\ell \gets 0$.
        \begin{While}{$m < M$ and $\ell < L$}
            \algitem Set $\ell \gets \ell + 1$.
            \algitem Draw $b \sim \mathcal A$. \algcomment ($b \in \{0,1\}$).
            \algitem Set $m \gets m + b$.
        \end{While}
        \algpushcomment{(Sufficiently many occurrences observed)}
        \begin{If}{$m = M$}
            \algitem Set $\hat{p} \gets M / \ell$.
        \end{If}
        \algpushcomment{(Give-up limit reached)}
        \begin{Else}
            \algitem Set $\hat{p} \gets \esttoolow$.
        \end{Else}
        \algitem Return $\hat{p}$.
    \end{code}
\end{algo}

\begin{lemma}[\procnameZsaturationZawareZest] \label{lemma:sat-aware-est-of-black-box-indicator}
    Assume that we have an oracle $\mathcal A$ that draws $1$ with probability $p$ and $0$ with probability $1-p$, where every call is independent of past calls. For every $0 < a < 1$, Algorithm \ref{fig:alg:saturation-aware-estimator} with parameters $(a; \mathcal A,\delta)$ is a $(\delta;a/12, a)$-saturation-aware estimator for $p = \E[\mathcal A]$, at the expected cost of $O\left(\frac{1}{\delta^2 (p + a)}\right)$ oracle calls. Moreover, $\E[\hat{p}^{-1} | \hat{p} \ne \esttoolow] \le p^{-1}$, where $\hat{p}$ is the output of the estimator.
\end{lemma}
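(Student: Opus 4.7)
The plan is to analyze the algorithm by viewing the run as a (capped) negative binomial experiment. Let $\ell^\ast$ denote the hypothetical number of samples the algorithm \emph{would} use without the cap $L$, so $\ell^\ast = G_1 + \cdots + G_M$ where the $G_i$s are i.i.d.\ $\Geo(p)$; the actual counter $\ell$ equals $\ell^\ast$ when $\ell^\ast \le L$ and equals $L$ otherwise. By Lemma \ref{lemma:geo-mgf}, $\E[\ell^\ast] = M/p$ and $\Var[\ell^\ast] = M(1-p)/p^2 \le M/p^2$. The algorithm outputs ``too low'' iff $\ell^\ast > L$, and otherwise outputs $\hat p = M/\ell^\ast$. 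All the claims will be derived from this setup via Chebyshev's inequality.

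For the saturation-awareness, I would first apply Chebyshev's inequality with deviation $(\delta/2)\E[\ell^\ast]$ to get
\[ \Pr\bigl[\,|\ell^\ast - M/p| > (\delta/2)(M/p)\,\bigr] \le \frac{M/p^2}{(\delta/2)^2(M/p)^2} = \frac{4}{\delta^2 M} \le \frac{1}{12}, \]
using $M \ge 48/\delta^2$. Whenever $\ell^\ast \in (1\pm \delta/2)(M/p)$, a direct algebraic check gives $M/\ell^\ast \in (1\pm\delta)p$, so if $\hat p$ is a number it will be a good estimate with failure probability at most $1/12$. This immediately handles the middle regime $a/12 < p < a$: the output is either ``too low'' or a $(1\pm\delta)p$-approximation, with success $\ge 11/12$. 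For $p \ge a$, we note $\E[\ell^\ast] = M/p \le M/a$ while $L \ge 6M/a - 1$, so Chebyshev also bounds $\Pr[\ell^\ast > L]$ by a tiny constant, hence with probability $\ge 2/3$ the cap does not trigger and $\hat p \in (1\pm\delta)p$. For $p \le a/12$, we have $\E[\ell^\ast] \ge 12 M/a \ge 2L$, and Chebyshev with deviation $\E[\ell^\ast]/2$ yields $\Pr[\ell^\ast \le L] \le \Pr[\ell^\ast \le \E[\ell^\ast]/2] \le 4/M \le 1/3$, so the output is ``too low'' with probability $\ge 2/3$.

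For the expected sample complexity, the actual run length is $\ell = \min(\ell^\ast, L)$, so $\E[\ell] \le \min(\E[\ell^\ast], L) \le \min(M/p, 6M/a)$. Using $\max(p, a/6) \ge (p+a)/12$ (verified by the inequality $p/2 \ge p/12$), this gives $\E[\ell] \le M/\max(p, a/6) \le 12M/(p+a) = O(1/(\delta^2(p+a)))$ samples, as claimed.

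For the last assertion, observe that conditioned on $\hat p \ne \text{``too low''}$ we have $\hat p^{-1} = \ell^\ast/M$, so we must show $\E[\ell^\ast \mid \ell^\ast \le L] \le M/p = \E[\ell^\ast]$. This is the standard fact that truncating a non-negative random variable from above can only decrease its conditional mean: writing $\E[\ell^\ast] = \E[\ell^\ast \mid \ell^\ast \le L]\Pr[\ell^\ast \le L] + \E[\ell^\ast \mid \ell^\ast > L]\Pr[\ell^\ast > L]$, and noting $\E[\ell^\ast \mid \ell^\ast > L] \ge L \ge \E[\ell^\ast \mid \ell^\ast \le L]$, one directly obtains $\E[\ell^\ast \mid \ell^\ast \le L] \le \E[\ell^\ast]$, which gives the desired bound on $\E[\hat p^{-1} \mid \hat p \ne \text{``too low''}]$.

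The only mild subtlety I anticipate is keeping the constants consistent across the three cases—in particular verifying that the single choice $M = \lceil 48/\delta^2\rceil$ and $L = \lfloor 6M/a\rfloor$ simultaneously gives failure probability $\le 1/3$ in the ``too low'' regime, in the good-estimate regime (including the tiny contribution of the cap triggering when $p \ge a$), and leaves enough slack for the $(p+a)$-type complexity bound. None of these steps use more than Chebyshev and elementary manipulation of the geometric and negative binomial distributions.
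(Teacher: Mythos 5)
Your proposal is correct and follows essentially the same route as the paper: identify $\ell^\ast$ (the paper calls it $\ell'$) as a sum of $M$ i.i.d.\ geometrics, observe that ``too low'' is equivalent to $\ell^\ast > L$, and then deduce the three saturation cases, the expected complexity, and the $\E[\hat p^{-1}]$ bound from Chebyshev and this equivalence. The only cosmetic difference is that in Case I ($p \ge a$) the paper bounds $\Pr[\ell^\ast > L]$ by Markov's inequality (getting $\le 1/6$) while you invoke Chebyshev (getting a much smaller constant); both suffice, and your explicit truncation argument for $\E[\ell^\ast \mid \ell^\ast \le L] \le \E[\ell^\ast]$ merely spells out a step the paper asserts.
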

\begin{proof}
    Let $\ell'$ be the theoretical value of $\ell$ if we would not terminate the loop after $L$ iterations but let it continue until $m=M$. Observe that $\ell'$ distributes the same as the sum of $M$ independent geometric variables with parameter $p$. Also, $\Pr[\ell' = \ell | m = M] = 1$ (since we terminate the loop due to $m=M$ regardless of whether $\ell$ exceeds $L$ or not) and $\Pr[\ell \le \ell'] = 1$ (since $\ell = \min\{\ell', L\}$).

    Observe that the events ``$m = M$'' and ``$\ell' \le 6M/a$'' are equivalent: in the last iteration, we increased $m$ from $M-1$ to $M$ and also increased $\ell$ and $\ell'$ together (regardless whether the result reached $L$ or not).

    Using standard facts about geometric variables, $\E[\ell'] = M/p$ and $\Var[\ell'] \le M/p^2$. Thus the expected number of samples taken by the algorithm is $O\left(\frac{1}{\delta^2p}\right)$, and the introduction of the hard bound $\ell\leq L$ brings it down to the required $O\left(\frac{1}{\delta^2 (p + a)}\right)$.
    
    By Chebyshev's inequality,
    \begin{eqnarray*}
        \Pr\left[M/\ell' \notin (1 \pm \delta)p\right]
        = \Pr\left[\ell' \notin \frac{1}{1 \pm \delta}M/p\right]
        &\le& \Pr\left[\ell' \notin \left(1 \pm \frac{1}{2} \delta\right)\E[\ell']\right] \\
        &\le& \frac{M/p^2}{(\delta/2)^2 (M/p)^2}
        = \frac{4}{\delta^2 M}
        \le \frac{4}{\delta^2 (48 / \delta^2)}
        < 1/6
    \end{eqnarray*}

    We now go over the cases in the definition of a saturation-aware approximation.

    Case I. $p \ge a$: by Markov's inequality, the probability to stop due to the give-up limit is bounded by $\E[\ell] / \ceil{6M/a} \le (M/p) / (6M/p) = 1/6$, and the overall probability to have the correct output is at least $1 - \Pr[m = M] - \Pr[M/\ell' \notin (1 \pm \delta)p] \ge 1 - 1/6 - 1/6 = 2/3$.

    Case II. $a/12 < p < a$: the probability to return the wrong output is $\Pr[m = M \wedge M/\ell' \notin (1 \pm \delta)p] \le \Pr[M/\ell' \notin (1 \pm \delta)p] < \frac{1}{6} < \frac{1}{3}$.
    
    Case III. $p \le a/12$: the ``bad event'' is $m=M$,
    which is equivalent to $\ell' \le 6M/a$. Hence, the probability to return any real number instead of \esttoolow is:
    \begin{eqnarray*}
        \Pr[M = m]
        = \Pr[\ell' \le 6M/a]
        &=& \Pr[\ell' - \E[\ell'] \le 6M/a - M/p] \\
        &\le& \Pr[\ell' - \E[\ell'] \le M/(2p) - M/p] \\
        &=& \Pr[\ell' - \E[\ell'] \le -M/(2p)] \\
        \text{[Chebyshev]} &\le& \frac{M/p^2}{(M/(2p))^2}
        = \frac{4}{M}
        \le \frac{4}{48 / \delta^2}
        < \frac{1}{3}
    \end{eqnarray*}
    Hence we correctly return \esttoolow with probability at least $2/3$.

    Lastly, observe that:
    \[ \E\left[\hat{p}^{-1} \cond \hat{p} \ne \esttoolow\right]
    = \E\left[\ell'/M \cond m = M \right]
    = \frac{1}{M} \E\left[\ell' \cond \ell' \le 6M/a \right]
    \le \frac{1}{M} \E\left[\ell'\right]
    = p^{-1}
    \qedhere \]
\end{proof}

\subsection{The reference estimation procedure}
We first roughly estimate $w_x = \mu(x) + s_x$, and based on the result, we estimate $\mu(x)$ and $s_x$ using the magnitude of $w_x$ as a reference.

\begin{algo}[ht!]
    \label{fig:alg:est-mu-x-s-x}
    \procname{$\procnameZpreamble(\mu, c, \eps; x)$}
    \algoutput{$\hat{p}, \hat{s}$}
    \begin{code}
        \algitem Let $M \gets 13$. \algcomment For median amplification (Observation \ref{obs:median-amplification}\ref{median-amplification:13-to-8/9}).
        \begin{For}{$i$ from $1$ to $M$}
            \algitem Let $\hat{w}_i \gets \procnameZsaturationZawareZestHREF(a=c-\eta_{c,\eps};\text{Oracle},\delta=1/3)$.
            \begin{enumerate}
                \item Oracle: draw $y \sim \mu$. Return $1$ if $y=x$ or $\procnameZtargetZtestHREF(\mu,c,\eps;x,y)$ accepts. \label{fig:alg:est-mu-x-s-x:step:target-oracle}
            \end{enumerate}
        \end{For}
        \algitem Let $\hat{w} \gets \median(\hat{w}_1,\ldots,\hat{w}_M)$. \algcomment SP: $\frac{8}{9}$
        \begin{If}{$\hat{w} = \esttoolow$}
            \algitem Return $(\esttoolow, \esttoolow)$.
        \end{If}
        \begin{For}{$i$ from $1$ to $M$}
            \algitem Let $\hat{s}_i \gets \procnameZsaturationZawareZestHREF(a=\hat{w}/9;\text{Oracle},\delta=\eps/6)$.
            \begin{enumerate}
                \item Oracle: draw $y \sim \mu$. Return $1$ if $y\ne x$ and $\procnameZtargetZtestHREF(\mu,c,\eps;x,y)$ accepts. \label{fig:alg:est-mu-x-s-x:step:sx-oracle}
            \end{enumerate}
            \algitem Let $\hat{p}_i \gets \procnameZsaturationZawareZestHREF(a=\hat{w}/9; \text{Oracle}, \delta=\eps)$.
            \begin{enumerate}
                \item Oracle: draw $y \sim \mu$. Return $1$ if $y=x$. \label{fig:alg:est-mu-x-s-x:step:mu-x-oracle}
            \end{enumerate}
        \end{For}
        \algitem Let $\hat{p} \gets \median(\hat{p}_i)$. \algcomment SP: $\frac{8}{9}$
        \algitem Let $\hat{s} \gets \median(\hat{s}_i)$. \algcomment SP: $\frac{8}{9}$
        \algitem Return $(\hat{p}, \hat{s})$. \algcomment \textbf{Total success probability: $\mathbf{\frac{2}{3}}$}
    \end{code}
\end{algo}

We recall Lemma \ref{lemma:main-alg-preamble} and then prove it.

\lemmaZmainZalgZpreamble*
\begin{proof}
    The expected value of the oracle in step \ref{fig:alg:est-mu-x-s-x:step:target-oracle} (in Algorithm \ref{fig:alg:est-mu-x-s-x}) is:
    \begin{eqnarray*}
        \mu(x) + \sum_{y \in \Omega_x} \mu(y) f_x(y)
        &=& \mu(x) + \sum_{y \in L_x} \mu(y) f_x(y) + \sum_{y \in M_x} \mu(y) f_x(y) + \sum_{y \in H_x} \mu(y) f_x(y) \\
        (*) &=& \mu(x) + (1 \pm \eta_{c,\eps})\mu(L_x) + \sum_{y \in M_x} \mu(y) f(y) \pm \eta_{c,\eps} \mu(H_x) \\
        &=& \mu(x) + \mu(L_x) + \sum_{y \in M_x} \mu(y) f_x(y) \pm \eta_{c,\eps} (\mu(L_x) + \mu(H_x))
        = w_x \pm \eta_{c,\eps}
    \end{eqnarray*}
    $(*)$: since $1 - \eta_{c,\eps} \le f(y) \le 1$ for every $y \in L_x$ and $0 \le f(y) \le \eta_{c,\eps}$ for every $y \in H_x$.
    
    Also, by Lemma \ref{lemma:sat-aware-est-of-black-box-indicator} (\procnameZsaturationZawareZestHREF), the sample complexity of this estimation is $O\left(\frac{1}{\eps^2 (\hat{w} + (c - \eta_{c,\eps}))}\right) = O\left(\frac{1}{\eps^2(w_x + c)}\right)$ oracle calls, since (again by Lemma \ref{lemma:sat-aware-est-of-black-box-indicator}) $\E[\hat{w}^{-1} | \hat{w} \ne \esttoolow] \le \sum_{i=1}^M \E[\hat{w}^{-1}_i | \hat{w}_i \ne \esttoolow] \le \frac{M}{w_x} = O\left(\frac{1}{w_x}\right)$.

    If $w_x \ge c$, then the oracle's expected value is in the range $w_x \pm \eta_{c,\eps} = w_x \pm \paperZtargeterrZpreamble = (1 \pm \eps/2)w_x$, and hence with probability $8/9$, $\hat{w} = (1 \pm 1/3)(1 \pm \eps/2)w_x = (1 \pm 1/2)w_x$.

    If $s_x \ge \max\{ c, \frac{1}{4}\mu(x) \}$, then $c \le s_x \le w_x \le 5 s_x$. Hence, with probability at least $8/9$, $\hat{w} = (1 \pm 1/2)w_x \le 8s_x$. The expected value of the oracle in step \ref{fig:alg:est-mu-x-s-x:step:sx-oracle} is $s_x \pm \eta_{c,\eps} = s_x \pm \frac{1}{4}\eps c = (1 \pm \eps/4)s_x \ge (1 - 1/40)s_x$, which is more than $\hat{w}/9 \le \frac{8}{9}(s_x - \eta_{c,\eps})$. In this case, the estimation outputs an $(1 \pm \eps/6)$-estimation of $s_x \pm \eta_{c,\eps}$ with probability at least $2/3$. This estimation is in the range $(1 \pm \eps/6)(1 \pm \eps/4)s_x = (1 \pm \eps)s_x$. The cost of this estimation is bounded by $\E\left[\hat{w}^{-1} \cond \hat{w} \ne \esttoolow\right] = O\left(\frac{1}{w_x}\right) = O\left(\frac{1}{w_x + c}\right)$ oracle calls.
    
    If $s_x \le \frac{1}{400}\max\{ c, \mu(x) \}$, then either $\hat{w}$ is \esttoolow (and then we use $\hat{s} = \esttoolow$ as well), or $\hat{w} \ge \frac{1}{2}w_x = \frac{1}{2} \cdot (s_x + \mu(x)) \ge \frac{1}{2} \cdot 401 s_x > 108 s_x$. In this case, for $a = \hat{w}/9$, we have $s_x < \frac{1}{108}w_x = \frac{1}{12}a$ and hence the estimation outputs \esttoolow for $s_x$ with probability $2/3$.

    If $\frac{1}{400}\max\{ c, \mu(x) \} \le s_x \le \max\{ c, \frac{1}{4}\mu(x) \}$, then with probability at least $8/9$, either $\hat{w}$ is \esttoolow (and then we correctly return $\hat{s} = \esttoolow$) or the saturation-aware estimation of $s_x$ returns, with probability at least $2/3$, either \esttoolow or an answer in the range $(1 \pm \eps/6)s_x$. As seen before, the latter is in the range $(1 \pm \eps)s_x$.
    
    The analysis for $\mu(x)$ and $\hat{p}$ is analogous (and even a bit stricter, since the indicator for $\mu(x)$ is exact and does not have the $\pm \eta_{c,\eps}$ additive penalty).
\end{proof}

\paragraph{Avoiding probability zero sets} Note that all calls made by $\procnameZpreambleHREF(\mu, c, \eps; x)$ to $\procnameZtargetZtestHREF(\mu,c,\eps;x,y)$ involve an element $y$ that was sampled (unconditionally) from $\mu$. Since $\procnameZtargetZtestHREF(\mu,c,\eps;x,y)$ is based only on samples drawn from $\{x,y\}$, this means that no probability zero sets are involved. Additionally, if $\mu(x)=0$ then $y=x$ never happens, and the conditioning on $\{x,y\}$ causes $\procnameZtargetZtestHREF(\mu,c,\eps;x,y)$ to reject $y$ with probability $1$, forcing the output of $\procnameZpreambleHREF(\mu, c, \eps; x)$ to be $(\esttoolow,\esttoolow)$.

\section{Finding $\alpha$}
\label{sec:find-good-alpha}

We  prove Lemma \ref{lemma:main-alg-find-good-alpha} in this section. We recall it here.
\lemmaZmainZalgZfindZgoodZalpha*

We look for $\alpha$ using a binary search, adapted to a probabilistic setting as defined below.
\begin{definition}[Uncertain comparator] \label{def:uncertain-comparator}
    Let $\mathcal A$ be an oracle to a probabilistic function from $\{1,\ldots,N\}$ to $\{\text{``low''}, \text{``good''}, \text{``high''}\}$. We say that $\mathcal A$ is an \emph{uncertain comparator} if:
    \begin{itemize}
        \item Conviction: there exists a function $f : \{1,\ldots,N\} \to \{\text{``low''}, \text{``good''}, \text{``high''}\}$ such that for every $1 \le i \le N$ and every event $E$ about past calls, $\Pr[\mathcal A(i) = f(i) | E] \ge 99/100$.
        \item Monotonicity: the above function $f$ is non-decreasing monotone with respect to the full order $\text{``low''} < \text{``good''} < \text{``high''}$.
    \end{itemize}
\end{definition}

\begin{definition}[Goal range of an uncertain comparator]
    Let $\mathcal A$ be an uncertain comparator over $\{1,\ldots,N\}$. The \emph{goal range} of $\mathcal A$ is the set $\{ 1 \le i \le N : \Pr[\mathcal A(i) = \text{``good''}] \ge 99/100 \}$ (due to monotonicity, this is always a segment).
\end{definition}

\begin{definition}[Appeasement of an uncertain comparator]
    An uncertain comparator $\mathcal A$ is \emph{appeasable} if its goal range is non-empty.
\end{definition}

The interface of the binary search is stated in the following lemma:

\begin{restatable}[\procnameZstrictZbinaryZsearch]{lemma}{lemmaZfewZliesZbinaryZsearchZstrict} \label{lemma:few-lies-binary-search-strict}
    Assume that we have access to an appeasable uncertain comparator $\mathcal A$. The output of Algorithm \ref{fig:alg:strict-binary-search} is in the goal range of $\mathcal A$ with probability at least $2/3$, at the cost of $O(\log N)$ oracle calls.
\end{restatable}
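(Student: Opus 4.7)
The plan is to design a noisy binary search that makes $O(\log N)$ queries despite the comparator having a fixed per-query error probability, improving over the straightforward $O(\log N \log \log N)$ approach of first amplifying the comparator via majority vote to failure rate below $1 / \log N$ and then running a classical binary search. First I would binarize the ternary output, grouping ``good'' together with ``high'' so that the task reduces to finding the smallest $m^\star$ with $f(m^\star) \neq \text{``low''}$. By monotonicity of $f$ and appeasability (the goal range is non-empty), $m^\star$ is the leftmost element of the goal range, and the binarized query has correct outcome with probability at least $99/100$ conditioned on any history, by the conviction property.

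Then I would maintain weights $w_{i,0} = 1$ on each $i \in \{1, \ldots, N\}$, and at each round $t$ query at the weighted median $m$ of the current weights (the smallest index with $\sum_{j \le m} w_{j,t} \ge W_t / 2$, where $W_t = \sum_j w_{j,t}$). On a ``low'' answer, multiply $w_{i,t}$ by a fixed constant $\rho \in (0,1)$ for all $i \le m$; on a ``not-low'' answer, multiply $w_{j,t}$ by $\rho$ for all $j > m$. Define the potential $\Phi_t = \log(W_t / w_{m^\star, t})$, with $\Phi_0 = \log N$. Since the weighted median splits the total weight into two nearly balanced halves (after pre-emptively halting if some $w_{i,t} > W_t / 2$), each update reduces $W_t$ by at least a factor of $(1+\rho)/2$. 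By monotonicity of $f$, the correct binarized answer leaves $w_{m^\star}$ unchanged, whereas an incorrect answer shrinks it by $\rho$. Combining these with the conviction bound yields $\E[\Phi_{t+1} - \Phi_t \mid \mathcal F_t] \le -\delta$ for some absolute $\delta > 0$ (for $\rho = 1/9$, a direct calculation gives $\delta \approx 0.56$), and each step changes $\Phi_t$ by at most a constant. Azuma's inequality applied to the resulting supermartingale then yields $\Phi_T \le 0$ after $T = O(\log N)$ rounds with probability at least $2/3$; once $\Phi_T \le 0$, $w_{m^\star, T} \ge W_T / 2$, so $m^\star$ is the weighted median at time $T$, and returning it places the output in the goal range.

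The main obstacle is establishing the drift estimate uniformly over past histories rather than only for a fixed sequence of outcomes. This is where the conditional form of the conviction property is essential: the per-step analysis reduces to a one-line computation of the form $\tfrac{99}{100} \log((1+\rho)/2) + \tfrac{1}{100} \log((1+\rho)/(2\rho)) < 0$, which holds comfortably for $\rho = 1/9$, and these conditional drifts compound in the martingale filtration regardless of the earlier answers. A secondary issue is the edge case where a single weight exceeds $W_t / 2$, in which case the weighted-median split cannot be balanced; but then the offending index is already heavily favored, $\Phi_t$ is at most $\log 2$, and halting with that index (optionally after a final $O(1)$-query confirmation by repeating the comparator at it) yields a valid answer. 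Both subtleties are routine once the structure above is in place.
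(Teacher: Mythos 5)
Your multiplicative-weights approach is a genuinely different route from the paper's (which does a backtracking random walk on a dyadic range tree, making three comparator calls per step and measuring edge-distance to the good leaves), and the overall skeleton is the standard Feige--Raghavan--Peleg--Upfal / Ben-Or--Hassidim framework. However, the step ``the weighted median splits the total weight into two nearly balanced halves (after pre-emptively halting if some $w_{i,t} > W_t/2$), each update reduces $W_t$ by at least a factor of $(1+\rho)/2$'' is false as stated, and the failure is not the edge case you flagged. Take weights $(1, 1, w_3)$ with $w_3 > 0$ small and $m^\star = 1$ (say $f(1)=\text{``good''}$, $f(2)=f(3)=\text{``high''}$, so the goal range is $\{1\}$). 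No single weight exceeds $W/2 = 1 + w_3/2$, yet the smallest $m$ with $\sum_{j\le m} w_j \ge W/2$ is $m=2$, which splits the mass as $2$ versus $w_3$. The (usually correct) ``not-low'' answer then multiplies only $w_3$ by $\rho$, so $W_t \to 2$, $w_{m^\star,t}\equiv 1$, and $\Phi_t \to \log 2$ from above --- the drift $\E[\Phi_{t+1}-\Phi_t \mid \mathcal F_t]$ tends to $0$, not to $-\delta$. Worse, the median stays at $2$ forever, so the algorithm never queries index $1$ and would output $2 \notin$ goal range.

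The root cause is twofold. First, your proof of the $W$-contraction only goes through in the ``low'' branch (where $A = \sum_{j\le m} w_j \ge W/2$ makes $\rho A + (W-A) \le (1+\rho)W/2$); in the ``not-low'' branch you get $A + \rho(W-A) \ge (1+\rho)W/2$, i.e., the inequality flips. Second, your median definition ignores the possibility that $w_m$ itself carries a large fraction of the mass even when it does not exceed $W/2$. The standard fix is to query at $m$ or $m-1$, whichever gives the more balanced split: then $\max(\sum_{j<m} w_j, \sum_{j>m} w_j) \ge (W - w_m)/2$, so when $w_m \le W/2$ both sides of the chosen split are at least $W/4$, the contraction factor becomes $(3+\rho)/4$ rather than $(1+\rho)/2$, and the drift constant $\delta$ and the final threshold ($\Phi_T < \log 2$, not $\Phi_T \le 0$, which is unreachable since weights never vanish) must be re-derived accordingly. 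This is the actual content of making the potential argument close, not a routine subtlety; as written the argument has a hole that a concrete $3$-element instance already exploits.
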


We present Algorithm \ref{fig:alg:strict-binary-search} and prove the above lemma in Subsection \ref{subsec:binary-search::of-sec:find-good-alpha}. Note that the $99/100$ could be substituted by any fixed constant strictly greater than $1/2$. We use it rather than the more standard $2/3$ bound to eliminate the need for amplification in the implementation of Algorithm \ref{fig:alg:strict-binary-search}.

\subsection{The uncertain-comparator for $\alpha$}

We provide here a procedure whose guarantees are weaker than those of an uncertain comparator, in the sense that it allows for some ``gray areas'' where there is more than one correct answer (and hence no probability guarantee). Later, in the proof of Lemma 5.3, we use it in a way that side-steps the issue with the gray areas.

\begin{lemma}[\procnameZuncertainZcomparator] \label{lemma:uncertain-comparator-new}
    There exist $\alpha_x \in \left(2.3 \gamma_x, 38\gamma_x\right)$ such that:
    \begin{itemize}
        \item If $\alpha \le 1 \cdot \gamma_x$, then the output of Algorithm \ref{fig:alg:uncertain-comparator-new} is ``low'' with probability at least $2/3$.
        \item If $\frac{1}{2}\alpha_x \le \alpha \le \alpha_x$, then the output of Algorithm \ref{fig:alg:uncertain-comparator-new} is ``good'' with probability at least $2/3$.
        \item If $\alpha \ge 41\gamma_x$, then the output of Algorithm \ref{fig:alg:uncertain-comparator-new} is ``high'' with probability at least $2/3$.
    \end{itemize}
    Moreover, the number of samples drawn by Algorithm \ref{fig:alg:uncertain-comparator-new} is $O(1)$.
\end{lemma}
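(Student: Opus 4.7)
The plan is for Algorithm \ref{fig:alg:uncertain-comparator-new} to invoke \procnameZestZexpectedZbetaHREF$(\mu,c,\eps;x,\alpha)$ once to obtain a $\frac{1}{200}$-additive estimate $\hat b$ of $g(\alpha):=\E[\beta_{x,\alpha}]$ at cost $O(1)$ (Lemma~\ref{lemma:est-E-beta-x-alpha-pm-eps}), and then return ``low'', ``good'', or ``high'' according to whether $\hat b<t_\ell$, $t_\ell\le\hat b\le t_h$, or $\hat b>t_h$ for two hard-coded constants $t_\ell<t_h$ with $t_h-t_\ell>\frac{2}{200}$. With this structure, the three probabilistic guarantees of the lemma all follow from a single correctness event for \procnameZestZexpectedZbetaHREF; the bulk of the proof then reduces to choosing $t_\ell$, $t_h$, and the pivot value $\alpha_x$ that realize the required numeric properties.

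To locate suitable thresholds I will analyze $g(\alpha)$ as a function of $\alpha$. By Observations~\ref{obs:non-decreasing-set-monotone} and~\ref{obs:set-exp-value-continuity}, $g$ is non-decreasing and continuous, since $\beta_{x,\alpha}$ is a non-decreasing monotone function of $V_{x,\alpha}$, which in turn is drawn by an independent per-element process whose inclusion probabilities are continuous and non-decreasing in $\alpha$. For the upper bound, Observation~\ref{obs:key-observation} gives $\E[\beta_{x,\alpha}/(1-\beta_{x,\alpha})]=\alpha/\gamma_x$, and Jensen's inequality applied to the convex function $z\mapsto z/(1-z)$ on $[0,1)$ yields $g(\alpha)\le(\alpha/\gamma_x)/(1+\alpha/\gamma_x)$; specializing gives $g(\gamma_x)\le\tfrac12$. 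For a matching lower bound I will use that $V_x\subseteq\Omega\setminus H_x$, so every element contributing to $\mu(V_{x,\alpha})$ has mass at most $1.2\mu(x)$. Therefore $\mu(V_{x,\alpha})$ is a sum of independent bounded random variables with expectation $\alpha\E[\mu(V_x)]=\alpha s_x$, and its variance is at most $1.2\mu(x)\cdot\alpha s_x$. A Chernoff/Bernstein tail bound then shows that, as soon as $\alpha s_x/\mu(x)=\alpha/\gamma_x$ exceeds a large absolute constant, $\mu(V_{x,\alpha})$ concentrates multiplicatively around its mean, and consequently $g(\alpha)$ lies within a small additive error of $(\alpha/\gamma_x)/(1+\alpha/\gamma_x)$. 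In particular, $g(41\gamma_x)$ exceeds $\tfrac12$ by a margin substantially greater than $\tfrac{2}{200}$.

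Given these bounds, the argument completes as follows. Fix thresholds $t_\ell$ strictly above $g(\gamma_x)+\tfrac{1}{200}$ and $t_h$ strictly below $g(41\gamma_x)-\tfrac{1}{200}$, with $t_h-t_\ell>\tfrac{2}{200}$; the estimator's guarantee then delivers the ``low'' case at $\alpha\le\gamma_x$ and the ``high'' case at $\alpha\ge 41\gamma_x$ with probability $\ge 2/3$. By continuity and monotonicity of $g$, the preimage $I:=g^{-1}([t_\ell+\tfrac{1}{200},t_h-\tfrac{1}{200}])$ is a non-empty closed subinterval of $(\gamma_x,41\gamma_x)$; I pick $\alpha_x$ to be the right endpoint of a sub-interval $[\tfrac12\alpha_x,\alpha_x]\subseteq I$, which exists because the gap $t_h-t_\ell$ is strictly bigger than needed to traverse a factor of two in $\alpha$ (one uses that the closed-form bound $r/(1+r)$ varies by at most $\Theta(1)$ while $r=\alpha/\gamma_x$ doubles). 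The slack built into the constants $2.3$ and $38$ (as opposed to the hard bounds $1$ and $41$) is exactly what accommodates the factor-of-two halving and the additive $\tfrac{1}{200}$ slab at each threshold, and verifying this amounts to a finite explicit computation once the closed-form bounds on $g$ are in hand.

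The main obstacle I anticipate is the lower bound on $g$ in the moderate regime $\alpha=\Theta(\gamma_x)$: Jensen supplies one-sided information only, and the concentration of $\mu(V_{x,\alpha})$ is weakest precisely when $\E[\mu(V_{x,\alpha})]$ is of the same order as the per-element cap $1.2\mu(x)$, i.e., when $V_x$ may be dominated by a handful of heavy-but-admissible elements. I expect to handle this by splitting $V_x$ into contributions from $L_x$ (where elements of mass at most $\mu(x)$ give Binomial-type concentration after coarse bucketing) and from $M_x$ (whose total mass is small enough to absorb into the error budget), taking the factor $41$ in the lemma to be exactly large enough that the residual concentration slack is below $\tfrac{1}{100}$. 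The remaining verification that $\alpha_x$ lies strictly inside $(2.3\gamma_x,38\gamma_x)$ is then a routine numerical check.
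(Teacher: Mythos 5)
Your proposal describes and analyzes a \emph{different} algorithm than the paper's Algorithm \ref{fig:alg:uncertain-comparator-new}. You query the estimator once at $\alpha$ and compare a single $\hat b$ to two thresholds $t_\ell<t_h$, but the paper's algorithm explicitly queries $\E[\beta_{x,\alpha}]$ \emph{and} $\E[\beta_{x,2\alpha}]$, and the ``good'' verdict is that the fixed level $0.91$ lands between the two estimates ($\hat\ell\le 0.915$ and $\hat h\ge 0.905$). Since the lemma asserts a property of Algorithm~\ref{fig:alg:uncertain-comparator-new} as written, you are not proving the stated claim.

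Beyond the mismatch, the single-query design itself runs into a genuine obstruction at the step you defer. Writing $g(\alpha)=\E[\beta_{x,\alpha}]$, the ``good'' case for a single threshold window $[t_\ell+\tfrac{1}{200},\,t_h-\tfrac{1}{200}]$ requires $g(\alpha)$ to lie in that window for \emph{all} $\alpha\in[\tfrac12\alpha_x,\alpha_x]$, i.e.\ the preimage of the window must contain an interval of multiplicative width at least $2$. You argue this as ``a finite explicit computation once the closed-form bounds on $g$ are in hand,'' but the available two-sided bounds on $g$ are not tight enough to certify this. Concretely: the low-side constraint $g(\gamma_x)\le\tfrac12$ (Jensen, as you note) forces $t_\ell\gtrsim 0.5$, while the high-side bound $g(41\gamma_x)>0.92$ (Lemma~\ref{lemma:good-alphas-new}) caps $t_h\lesssim 0.92$; so the good window lives inside roughly $(0.51,0.91)$. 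The Jensen upper bound $g\le a/(1+a)$ (with $a=\alpha/\gamma_x$) gives $g^{-1}(0.51)\ge 1.04\,\gamma_x$, and the concentration lower bound of Lemma~\ref{apxlemma:lbound-beta} gives $g^{-1}(0.51)\le\approx 10\gamma_x$ and $g^{-1}(0.91)\ge\approx 10.1\gamma_x$. These ranges overlap, so a priori the preimage of the good window could have ratio arbitrarily close to $1$ rather than $2$. Monotonicity alone does not bound the rate of change of $g$, and you would need a genuinely new lemma controlling $g(2\alpha)-g(\alpha)$ uniformly, which neither you nor the paper proves. This is precisely the difficulty that the paper's two-query design sidesteps: by querying at $\alpha$ and $2\alpha$, the ``good'' certificate only needs the \emph{existence} of one $\alpha_x$ with $g(\alpha_x)=0.91$, plus monotonicity, because for any $\alpha\in[\tfrac12\alpha_x,\alpha_x]$ one automatically has $g(\alpha)\le 0.91\le g(2\alpha)$, with no control over how fast $g$ moves required.

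Your observation that Jensen applied to the convex function $z\mapsto z/(1-z)$ gives a cleaner upper bound $g(\alpha)\le(\alpha/\gamma_x)/(1+\alpha/\gamma_x)$ than the paper's Lemma~\ref{apxlemma:ubound-beta} is a nice simplification for that one inequality; but it does not close the gap above.
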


We prove Lemma \ref{lemma:uncertain-comparator-new} in this subsection. We essentially show that, if $\alpha = \Theta(1) \cdot \gamma_x$, then the expectations $\E[\beta_{x,\alpha}]$ and $\E[\beta_{x,2\alpha}]$ lie inside a globally fixed range and are well-separated by an additive stride. Hence, it suffices to estimate $\E[\beta_{x,\alpha}]$ within half of this stride to implement the comparator for the binary search, possibly being wrong once at each side of the correct range. We formally state this sketch in one observation and two lemmas.

\begin{observation} \label{obs:beta-monotone-wrt-alpha}
    $\E[\beta_{x,\alpha}]$ is non-decreasing monotone with respect to the choice of $\alpha$.
\end{observation}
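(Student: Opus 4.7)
The plan is to reduce this observation directly to the folklore monotonicity statement given earlier as Observation \ref{obs:non-decreasing-set-monotone}. The random set in question is $V_{x,\alpha} = V_x \cap A_\alpha$, and the quantity $\beta_{x,\alpha}$ is a deterministic function of this set: writing $f(U) = \mu(U)/(\mu(x) + \mu(U))$, we have $\beta_{x,\alpha} = f(V_{x,\alpha})$, so $\E[\beta_{x,\alpha}] = \E[f(V_{x,\alpha})]$.

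The first thing I would check is that $f$ is monotone non-decreasing with respect to set inclusion: if $U \subseteq U'$ then $\mu(U) \le \mu(U')$, and since $f(U) = 1 - \mu(x)/(\mu(x)+\mu(U))$, this is non-decreasing in $\mu(U)$. The second thing is to note the per-element inclusion probabilities of $V_{x,\alpha}$. Because $V_x$ (which depends only on the randomness of the target tests) is independent of $A_\alpha$ (which is drawn according to its own internal coin tosses), we have, for each $y \in \Omega \setminus \{x\}$,
\[ \Pr[y \in V_{x,\alpha}] = \Pr[y \in V_x] \cdot \Pr[y \in A_\alpha] = f_{x,c,\eps}(y) \cdot \alpha, \]
and for $y = x$ the inclusion probability is identically $0$. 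In each case the inclusion probability is non-decreasing in $\alpha$, and the inclusions of different elements are mutually independent.

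Hence the assumptions of Observation \ref{obs:non-decreasing-set-monotone} are met: $V_{x,\alpha}$ is a random set whose element-wise inclusion probabilities are non-decreasing monotone in $\alpha$, and $f$ is a non-decreasing monotone function. Applying that observation gives $\E[f(V_{x,\alpha})] = \E[\beta_{x,\alpha}]$ non-decreasing monotone in $\alpha$, as required. There is no serious obstacle here; the only thing worth being careful about is making explicit the independence of the coin tosses defining $V_x$ from those defining $A_\alpha$, which is what allows the element-wise inclusion probability to factor into $f_{x,c,\eps}(y)\cdot \alpha$ and thereby to be monotone in $\alpha$.
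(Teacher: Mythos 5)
Your proof is correct and takes exactly the same route as the paper's: the paper's own proof is simply a one-line invocation of Observation \ref{obs:non-decreasing-set-monotone}, noting that $\beta_{x,\alpha}$ is a non-decreasing function of the random set $V_{x,\alpha}$; you have supplied the details the paper leaves implicit (monotonicity of $U\mapsto\mu(U)/(\mu(x)+\mu(U))$, factorization of the inclusion probabilities, and independence across elements), all of which are needed and correct. One cosmetic imprecision worth fixing: writing $\Pr[y\in V_x]=f_{x,c,\eps}(y)$ for every $y\in\Omega\setminus\{x\}$ does not literally match Definition \ref{def:V-x}, which sets the inclusion probability to exactly $1$ on $L_x$ and $0$ on $H_x$ rather than to the target-test acceptance probability; the argument is unaffected since the only thing used is that $\Pr[y\in V_x]$ is a fixed quantity independent of $\alpha$, so $\Pr[y\in V_{x,\alpha}]=\Pr[y\in V_x]\cdot\alpha$ is non-decreasing in $\alpha$ in all three cases.
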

\begin{proof}
    We can apply Observation \ref{obs:non-decreasing-set-monotone}, since the expression that defines $\beta_{x,\alpha}$, which is based on $V_{x,\alpha}$, is non-decreasing monotone.
\end{proof}

\begin{restatable}[Effective bounds for \ensuremath{\E[\beta_{x,\alpha}]}]{lemma}{lemmaZgoodZalphasZnew}
\label{lemma:good-alphas-new}
    There exists $2.3 \gamma_x \le \alpha_x \le 38\gamma_x$ for which $\E[\beta_{x,\alpha_x}] = 0.91$. Additionally, if $\alpha \le 2 \gamma_x$ then $\E[\beta_{x,\alpha}] < 0.9$ and if $\alpha \ge 41\gamma_x$ then $\E[\beta_{x,\alpha}] > 0.92$.
\end{restatable}
We prove Lemma \ref{lemma:good-alphas-new} in Appendix \ref{apx:tech-SHEET}.

At this point we provide Algorithm \ref{fig:alg:uncertain-comparator-new} based on the sketch above, which implements the weak uncertain comparator.

\begin{algo}[ht!]
    \label{fig:alg:uncertain-comparator-new}
    \procname {$\procnameZuncertainZcomparator(\mu,c,\eps;x,\alpha)$}
    \algcontract{Inaccessible data}{$\gamma_x$, $\alpha_x$ (of Lemma \ref{lemma:good-alphas-new})}
    \algoutput{``low'' if $\alpha \le 1 \gamma_x$, ``good'' if $\frac{1}{2}\alpha_x \le \alpha \le \alpha_x$, ``high'' if $\alpha \ge 41\gamma_x$}
    \begin{code}
        \algitem Let $M \gets 9$. \algcomment For median amplification (Observation \ref{obs:median-amplification}\ref{median-amplification:9-to-5/6}).
        \begin{For}{$i$ from $1$ to $9$}
            \algitem Let $\hat{\ell}_i \gets \procnameZestZexpectedZbetaHREF(\mu,c,\eps;x,\alpha)$.
            \algitem Let $\hat{h}_i \gets \procnameZestZexpectedZbetaHREF(\mu,c,\eps;x,\min\{1,2\alpha\})$.
        \end{For}
        \algitem Let $\hat{\ell} \gets \median(\hat{\ell}_1,\ldots,\hat{\ell}_m)$. \algcomment SP: $\frac{5}{6}$
        \algitem Let $\hat{h} \gets \median(\hat{h}_1,\ldots,\hat{h}_m)$. \algcomment SP: $\frac{5}{6}$
        \begin{If}{$\hat{h} < 0.905$}
            \algitem Return ``low''. \algcomment (Inferring that $h < 0.91$)
        \end{If}
        \begin{If}{$\hat{\ell} > 0.915$}
            \algitem Return ``high''. \algcomment (Inferring that $\ell > 0.91$)
        \end{If}
        \algitem Return ``good''. \algcomment (Inferring that $\ell \le 0.91 \le h$)
    \end{code}
\end{algo}

\begin{proof}[Proof of Lemma \ref{lemma:uncertain-comparator-new}]
    Case I: if $\alpha \le \gamma_x$, then by Lemma \ref{lemma:good-alphas-new}, $\E\left[\beta_{x,2\alpha}\right] \le \E\left[\beta_{x,2\gamma_x}\right] < 0.9$. In this case, with probability at least $5/6$, $\hat{h} \le \E\left[\beta_{x,\alpha}\right] + \frac{1}{200} < 0.905$, and the algorithm outputs ``low''.

    Case II: if $\alpha \ge 41\gamma_x$, then by Lemma \ref{lemma:good-alphas-new}, $\E\left[\beta_{x,\alpha}\right] \ge \E\left[\beta_{x,41\gamma_x}\right] > 0.92$. In this case, with probability at least $5/6$, $\hat{\ell} \ge \E\left[\beta_{x,\alpha}\right] - \frac{1}{200} > 0.915$, and the algorithm outputs ``high''.

    Case III. if $\frac{1}{2}\alpha_x \le \alpha \le \alpha_x$, for $\alpha_x$ guaranteed by Lemma \ref{lemma:good-alphas-new}, then with probability at least $5/6$, $\hat{\ell} \le \E\left[\beta_{x,\alpha}\right] + \frac{1}{200} \le \E\left[\beta_{x,\alpha_{x}}\right] + \frac{1}{200} = 0.915$. Also, with probability at least $5/6$, $\hat{h} \ge \E\left[\beta_{x,2\alpha}\right] - \frac{1}{200} \ge \E\left[\beta_{x,\alpha_{x}}\right] - \frac{1}{200} = 0.905$. By the union bound, with probability at least $\frac{2}{3}$, the algorithm outputs ``good''.

    The cost of this procedure is the same as the cost of two estimations of $\E[\beta_{x,\alpha}]$ using Lemma \ref{lemma:est-E-beta-x-alpha-pm-eps}, which is $O(1)$.
\end{proof}

\subsection{Proof of Lemma \ref{lemma:main-alg-find-good-alpha}}

At this point we provide Algorithm \ref{fig:alg:find-good-alpha-new} and use it to prove Lemma \ref{lemma:main-alg-find-good-alpha}.

\begin{algo}[ht!]
    \label{fig:alg:find-good-alpha-new}
    \procname{$\procnameZfindZgoodZalpha(\mu,c,\eps;x)$}
    \algoutput{$\alpha \in (1 \gamma_x, 41 \gamma_x)$}
    \begin{code}
        \algitem Let $M_1 \gets 47$. \algcomment For median amplification (Observation \ref{obs:median-amplification}\ref{median-amplification:47-to-99/100}).
        \algitem Let $M_2 \gets 9$. \algcomment For median amplification (Observation \ref{obs:median-amplification}\ref{median-amplification:9-to-5/6}).
        \algitem Let $N' \gets 1 + \ceil{\log_2 N}$.
        \begin{For}{$r$ from $0$ to $5$}
            \algitem Let $I'_r = \{ 1, \ldots, \floor{\frac{1}{6}(N' - r)} + 1 \}$.
            \algitem Let $\textsf{Oracle}_r(i')$ be the procedure that takes the median of $M_1$ independent calls to $\procnameZuncertainZcomparatorHREF(\mu,c,\eps;x,\alpha = 2^{-(6(i'-1) + r)})$. \algcomment Oracle SP: $\frac{99}{100}$
            \begin{For}{$j$ from $1$ to $M_2$}
                \algitem Let $i'_{r,j} \gets \procnameZstrictZbinaryZsearchHREF(\abs{I'_r}; \textsf{Oracle}_r(\cdot))$.
                \algitem Let $i_{r,j} \gets 6(i'_{r,j} - 1) + r$. \algcomment (For analysis only)
            \end{For}
            \algitem Let $i'_r \gets \median(i'_{r,1},\ldots,i'_{r,M_1})$. \algcomment SP: $\frac{5}{6}$
            \algitem Let $i_r \gets 6(i'_r - 1) + r$ \algcomment ($=\mathrm{median}(i_{r,1},\ldots,i_{r,M_1})$)
            \begin{If}{$\textsf{Oracle}_r(i') = \text{``good''}$}
                \algitem Return $\alpha = 2^{-i_r}$ \label{fig:alg:find-good-alpha-new:step:regular-return}
            \end{If}
        \end{For}
        \algitem Return $2^{-N'}$. \algcomment A fallback output
    \end{code}
\end{algo}

\begin{lemma} \label{lemma:range-of-gamma-x}
    If $\mu(x) \le \frac{1}{4}s_x$, then $\frac{1}{2N} \le \gamma_x \le 1$.
\end{lemma}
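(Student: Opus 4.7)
The hypothesis $\mu(x) \le \tfrac{1}{4} s_x$ forces $s_x > 0$ (in fact $s_x \ge 4\mu(x) > 0$, so we may freely divide by $s_x$), and the upper bound is then immediate from the definition $\gamma_x = \mu(x)/\E[\mu(V_x)] = \mu(x)/s_x$, yielding $\gamma_x \le \tfrac{1}{4} \le 1$.

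For the lower bound, the plan is to show the much stronger estimate $s_x < 1.2 N \mu(x)$, from which $\gamma_x > 1/(1.2N) \ge 1/(2N)$ follows. Starting from Definition~\ref{def:s-x}, bound
\[
s_x \;=\; \mu(L_x) \;+\; \sum_{y \in M_x} \mu(y) f_x(y) \;\le\; \mu(L_x) + \mu(M_x),
\]
using $0 \le f_x \le 1$. Now every $y \in L_x$ satisfies $\mu(y) \le \mu(x)$ and every $y \in M_x$ satisfies $\mu(y) < 1.2 \mu(x)$ by the definitions of the scale sets, while $L_x$ and $M_x$ are disjoint and both omit $x$, so $|L_x| + |M_x| \le N-1$. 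Combining these,
\[
s_x \;\le\; 1.2\, \mu(x)\,(|L_x| + |M_x|) \;\le\; 1.2\,(N-1)\,\mu(x) \;<\; 1.2 N \mu(x),
\]
and therefore $\gamma_x = \mu(x)/s_x > 1/(1.2N) \ge 1/(2N)$.

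There is essentially no obstacle here: the only subtlety is verifying that we are not dividing by zero, which is handled by the hypothesis (it forces $\mu(x) > 0$ only if $s_x > 0$, but in fact $\mu(x) \le s_x/4$ together with finiteness of $\Omega$ rules out the degenerate $\mu(x)=0$ case in any meaningful application, and in particular \procnameZmainZsingleHREF only invokes \procnameZfindZgoodZalphaHREF after the reference estimator certifies $\hat p \ne \text{``too low''}$ or $\hat s \ne \text{``too low''}$).
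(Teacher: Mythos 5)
Your proof is correct and follows essentially the same route as the paper's: the upper bound is immediate from $\gamma_x = \mu(x)/s_x \le 1/4$, and the lower bound comes from bounding $s_x \le (N-1)\cdot 1.2\,\mu(x)$ by summing element masses (the paper phrases this as a max over the possible realizations $L_x \subseteq V_x \subseteq \Omega\setminus(H_x\cup\{x\})$, you phrase it as $\mu(L_x)+\mu(M_x)$, but these are the same estimate). Your closing parenthetical is slightly off — $\mu(x)\le s_x/4$ does \emph{not} by itself rule out $\mu(x)=s_x=0$, and finiteness of $\Omega$ plays no role there — but you correctly identify that the caller only invokes this lemma when the preamble has certified $\hat s \ne$ ``too low'', which is the same implicit assumption the paper relies on.
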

\begin{proof}
    Recall that $\gamma_x = \frac{\mu(x)}{s_x}$, hence $\gamma_x \le \frac{1}{4} < 1$ immediately by $\mu(x) \le \frac{1}{4}s_x$.

    For the lower bound, observe that
    \[  s_x = \E[\mu(V_x)]
        \le \max_{L_x \subseteq V_x \subseteq \Omega \setminus (H_x \cup \{x\})} \mu(V_x)
        \le \max \abs{V_x} \cdot \max_{y \in V_x} \mu(y)
        \le (N - 1) \cdot 1.2 \mu(x)
        \le 1.2 N \mu(x)
    \]
    Hence $\gamma_x = \frac{\mu(x)}{s_x} \ge \frac{1}{1.2 N} > \frac{1}{2N}$.
\end{proof}

\begin{proof}[Proof of Lemma \ref{lemma:main-alg-find-good-alpha}]
    At top level, Algorithm \ref{fig:alg:find-good-alpha-new} looks for $\alpha = 2^{-i}$ in the range $\{0,\ldots,N'\}$, where $N' \ge \log_2 (2N)$. By Lemma \ref{lemma:range-of-gamma-x}, $2^{-N'} \le \gamma_x \le 1$. By Lemma \ref{lemma:good-alphas-new}, $\alpha_x \in (2.3\gamma_x, 41\gamma_x) \subseteq (1/N, 41)$ but also $\alpha_x \le 1$, hence both $\alpha_x$ and $\frac{1}{2}\alpha_x$ lie between the endpoints of our search range ($2^{-N'}$ and $1$).
    
    Since the guarantees of $\procnameZuncertainZcomparator$ are weaker than the constraints of the uncertain comparator that can be used in \procnameZstrictZbinaryZsearch, we use an interleaving technique: instead of searching the entire
    range, we partition it to six parts, so that in every part the indexes are arranged in skips of width six. Thus, every two consecutive indexes in each part are far enough apart to allow us to distinguish whether we are below or above the ``good'' indexes. Also, as together these parts cover the entire range, at least one of the parts contains an index which our comparator explicitly marks as ``good''.

    We apply the uncertain binary search to each part separately, adding an additional ``goodness check'' to the index resulting from this search. We then greedily select the first index that was both selected by the search and verified by the additional check.

    Observe that, for satisfying the requirements  of $\procnameZstrictZbinaryZsearchHREF$, we amplify the $2/3$-success probability of \procnameZuncertainZcomparator (for inputs where it is guaranteed) to $99/100$ using median-of-$M_1$ (Observation \ref{obs:median-amplification}\ref{median-amplification:47-to-99/100}).
    
    By Lemma \ref{lemma:uncertain-comparator-new}, the comparator satisfies the requirements of Definition \ref{def:uncertain-comparator} (uncertain comparator) when restricted to the union of the ranges $\alpha \le 1\gamma_x$, $\alpha \ge 41\gamma_x$ and $\frac{1}{2}\alpha_x \le \alpha \le \alpha_x$, with respect to an unknown $\alpha_x$ whose existence is guaranteed by Lemma \ref{lemma:good-alphas-new}. There exists some integer $0 \le r_\mathrm{hit} \le 5$ for which the range $(\frac{1}{2}\alpha_x, \alpha_x]$ intersects the set $2^{-(6\mathbb N + r_\mathrm{hit})}$. Let $\alpha_\mathrm{hit}$ be the only element in this intersection and let $i_\mathrm{hit} = -\log_2 \alpha_\mathrm{hit}$. Note that the comparator is both valid (with respect to Definition \ref{def:uncertain-comparator}) in the $r_\mathrm{hit}$th range and appeasable (since the majority answer in $i_\mathrm{hit}$ is ``good'').

    We define the following events:
    \begin{itemize}
        \item $G$: at the $r_\mathrm{hit}$th iteration, the algorithm successfully executes step \ref{fig:alg:find-good-alpha-new:step:regular-return} when $i_{r_\mathrm{hit}} = i_\mathrm{hit}$.
        \item $B_r$ ($0 \le r \le 5$): at the $r$th iteration, the algorithm successfully, but wrongly, executes step \ref{fig:alg:find-good-alpha-new:step:regular-return} when $2^{-i_r} \notin (1\gamma_x, 41\gamma_x)$.
    \end{itemize}
    Clearly, if $G$ happens and none of the $B_r$s do, then the value of $\alpha$ at the return statement is in the range $(1\gamma_x, 41\gamma_x)$.

    The good event: consider the iteration in which $r = r_\mathrm{hit}$. Due to the interleave gaps, all choices of $\alpha = 2^{-(6(i'-1)+r)}$ for $i' \in I_r$ are outside the range $(\gamma_x, 41\gamma_x)$ except for a single choice for which $6(i'-1) + r = i_\mathrm{hit}$. Therefore, in the $r_\mathrm{hit}$th iteration, \procnameZuncertainZcomparator has the required behavior guarantee for all elements, and in every individual inner-iteration we obtain $i_{r_\mathrm{hit},j} = i_\mathrm{hit}$ with probability at least $2/3$. The probability that $i_{r_\mathrm{hit}} = i_\mathrm{hit}$ is at least $5/6$ since we use the median-of-$M_2$ amplification (Observation \ref{obs:median-amplification}\ref{median-amplification:9-to-5/6}). With probability at least $99/100$, the additional call to the oracle returns ``good'', and then we return $\alpha_\mathrm{hit} \in (1\gamma_x, 41\gamma_x)$. To conclude, $\Pr[G] \ge \frac{5}{6} - \frac{1}{100}$.

    Bad events: consider some $0 \le r \le 5$ for which $2^{-i_r} \notin (1\gamma_x, 41\gamma_x)$. In this case, the oracle returns either ``low'' or ``high'' with probability at least $99/100$, hence $\Pr[B_r] \le \frac{1}{100}$.

    By the union bound, \[\Pr\left[\alpha \in (\gamma_x, 41\gamma_x)\right]
    \ge \Pr\left[G \wedge \bigwedge_{r=0}^5 \neg B_r\right]
    \ge \Pr[G] - \sum_{r=0}^5 \Pr[B_i]
    \ge \left(\frac{5}{6} - \frac{1}{100}\right) - 6 \cdot \frac{1}{100}
    > \frac{2}{3} \qedhere \]
\end{proof}

\subsection{The uncertain-comparator binary search}
\label{subsec:binary-search::of-sec:find-good-alpha}

In this subsection we prove the correctness of the uncertain-comparator binary search costing $O(\log n)$ uncertain-comparator calls (Lemma \ref{lemma:few-lies-binary-search-strict}). Note that a standard amplification of the uncertain comparator allows binary search at the cost of an additional $O(\log\log n)$ factor.

Recall Lemma \ref{lemma:few-lies-binary-search-strict}:
\lemmaZfewZliesZbinaryZsearchZstrict*

The binary search algorithm executes a Markov chain over a range tree in a way that can be seen as a random-walk over a \emph{line}. To prove the correctness of our binary search variant, we recap common definitions.

\begin{definition}[Dyadic range tree]
    A \emph{dyadic range tree} is a tree whose root holds the dyadic interval $\{1,\ldots,2^k\}$, in which every non-leaf node has two children, each holding half of the node's range.
\end{definition}

\begin{observation}
    In a dyadic range tree whose root range is $\{1,\ldots,2^k\}$, all nodes of depth $0 \le k' \le k$ (where the root's depth is $0$) hold dyadic ranges of length exactly $2^{k-k'}$. In particular, all leaves (which hold singletons) have the same depth, which is $k$.
\end{observation}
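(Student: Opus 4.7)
The plan is a straightforward induction on the depth $k'$, exploiting the recursive definition of a dyadic range tree. The base case $k' = 0$ is immediate since the root, by hypothesis, holds the range $\{1,\ldots,2^k\}$, whose length is $2^k = 2^{k-0}$. For the inductive step, assume every node at depth $k'$ holds a dyadic range of length exactly $2^{k-k'}$. By the definition of a dyadic range tree, any non-leaf node $v$ at depth $k'$ has two children that split $v$'s range into two halves, each of which is again a dyadic interval of length $2^{k-k'}/2 = 2^{k-(k'+1)}$. These children live at depth $k'+1$, so the inductive claim propagates.

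To finish I would handle the ``in particular'' clause. Every leaf holds a singleton, i.e., a dyadic range of length $1 = 2^0$. Combined with what was just proved, a leaf at depth $k'$ satisfies $2^{k-k'} = 1$, forcing $k' = k$. Conversely, the induction above reaches depth $k$ and there yields ranges of length $1$, which must be singletons and hence leaves (a node with a length-$1$ range cannot be split further while remaining a valid dyadic range tree node with two children). Thus all leaves sit at depth exactly $k$.

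The only subtle point is making sure the induction does not try to descend past a leaf: this is handled by the observation that a node is a leaf precisely when its range has length $1$, which by the inductive formula only happens at depth $k$, so the inductive step ``each node has two children'' is invoked only for nodes at depth strictly less than $k$. I expect no real obstacle; the argument is a one-line induction plus a unit-length sanity check.
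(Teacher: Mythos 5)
Your proposal is correct and takes essentially the same approach as the paper: an induction on depth, with the inductive step observing that each child holds half of the parent's dyadic range. The paper writes out the explicit endpoint formulas for the two children while you argue at the level of lengths, and you flesh out the ``in particular'' clause a bit more, but the argument is the same.
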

\begin{proof}
    Trivial for $k'=0$. By induction for $1 \le k' \le k$: an internal node in depth $k'-1$ holds some dyadic range $\{2^{k-k'+1} t + 1, \ldots, 2^{k-k'+1} t + 2^{k-k'+1}\}$. Its left child holds the dyadic range $\{2^{k-k'} (2t) + 1, \ldots, 2^{k-k'} (2t) + 2^{k-k'}\}$ and its right child holds the dyadic range $\{2^{k-k'} (2t+1) + 1, \ldots, 2^{k-k'} (2t+1) + 2^{k-k'}\}$.
\end{proof}

A deterministic binary search can be represented as a walk over a dyadic range tree, where we start with the widest considerable range and in every step we proceed to a narrower range until we reach a leaf, whose singleton range represents the result of the search.

In our setting the comparator is probabilistic, hence the search is a random walk. If we fully trust our comparator and only go forward, as in the deterministic version, then the worst-case guarantee for the success probability is $(99/100)^{\log_2 n}$ which is too low. Alternatively, we can amplify the confidence of the uncertain comparator by considering the majority vote of $O(\log \log n)$ independent calls. This reduces the error probability to $o(\log n)$ in each step, and hence by the union bound, the run of the algorithm is $o(1)$-close to the deterministic version. Though correct, this approach brings an $O(\log \log n)$-penalty which we want to avoid.

Instead of these forward-only tree walks, we define a local logic that uses three uncertain comparisons to choose the best edge to use in each step. This edge is possibly the parent edge, which allows us to correct errors that may occur in this setting, as opposed to a deterministic binary search.

To formulate the random walk of Algorithm \ref{fig:alg:strict-binary-search} as a search, we define a set of \emph{good leaves}, corresponding to the the values for which the comparator outputs ``good'' with high probability. A random walk of a predefined length is considered \emph{successful} if we reach one of the good leaves at the very last step. We insist that we only consider the last step: it does not suffice to pass through a good leaf during the walk.

\begin{algo}[ht!]
    \label{fig:alg:strict-binary-search}
    \procname {$\procnameZstrictZbinaryZsearch(n;\mathcal A)$}
    \alginput{An uncertain comparator $\mathcal A : \{1,\ldots,n\} \to \{\text{``low''}, \text{``good''}, \text{``high''} \}$}
    \algpromise{$\mathcal A$ is appeasable}
    \algoutput{$i$ for which $\mathrm{ans}(i) = \text{``good''}$}
    \algwlog{$n$ is a power of $2$}
    \algwlogphantom[noperiod]{}{(The comparator deterministically returns ``high'' $n+1 \le i \le 2^{\ceil{\log_2 n}}$)}
    \begin{code}
        \algitem Initialize $\mathit{stk} \gets \emptyset$. \algcomment (Empty backtrace stack)
        \algitem Initialize $L_1=1$, $R_1=n$. \algcomment (Current node is the root)
        \algpushcomment{(An integer by assumption)}
        \begin{For}{$i$ from $2$ to $20\log_2 n + 1$}
            \algpushcomment{(Currently in a leaf)}
            \begin{If}{$L_{i-1} = R_{i-1}$}
                \algitem Let $\mathit{ans} \gets \mathcal A(L_{i-1})$.
                \begin{If}{$\mathit{ans} = \text{``good''}$}
                    \item Let $(L_i, R_i) \gets (L_{i-1}, R_{i-1})$ \algcomment (Stay at leaf)
                \end{If}
                \begin{Else}
                    \algitem Let $(L_i, R_i) \gets \mathrm{pop}(stk)$ \algcomment (Move to parent)
                \end{Else}
            \end{If}
            \algpushcomment{(Currently in an inner node)}
            \begin{Else}
                \algitem Let $M_{i-1} \gets \frac{1}{2}\left(L_{i-1} + R_{i-1} - 1\right)$. \algcomment (Always an integer)
                \algitem Let $\mathit{ans}_L \gets \mathcal A(L_{i-1})$.
                \algitem Let $\mathit{ans}_M \gets \mathcal A(M_{i-1})$.
                \algitem Let $\mathit{ans}_R \gets \mathcal A(R_{i-1})$.
                \algpushcomment{(Consistent answers)}
                \begin{If}{$\mathit{ans}_L \le \mathit{ans}_M \le \mathit{ans}_R$ and $\mathit{ans}_L \le \text{``good''} \le \mathit{ans}_R$}
                    \algitem $\mathrm{push}(stk, (L_{i-1}, R_{i-1}))$ \algcomment (Record current node)
                    \algpushcomment{(Middle is too high)}
                    \begin{If*}{$\mathit{ans}_M = \text{``high''}$}
                        \algitem Let $(L_i, R_i) \gets (M_{i-1}+1, R_{i-1})$. \algcomment (Move to right child)
                    \end{If*}
                    \begin{Else*}
                        \algitem Let $(L_i, R_i) \gets (L_{i-1}, M_{i-1})$. \algcomment (Move to left child)
                    \end{Else*}
                \end{If}
                \algpushcomment{(Inconsistent answers)}
                \begin{Else}
                    \begin{If*}{$(L_{i-1},R_{i-1}) = (1,n)$}
                        \algitem Let $(L_i, R_i) \gets (L_{i-1}, R_{i-1})$. \algcomment (Stay at root)
                    \end{If*}
                    \begin{Else*}
                        \algitem Let $(L_i, R_i) \gets \mathrm{pop}(stk)$. \algcomment (Move to parent)
                    \end{Else*}
                \end{Else}
            \end{Else}
        \end{For}
        \algitem Return $L_{20\log_2 n + 1}$.
    \end{code}
\end{algo}

To prove our formulation, we consider the edge-distance of every node from the set of good leaves (where a distance from a set is the minimum distance from a leaf in this set). Even though the sequence of distances is not memoryless, its guarantees suffice to apply the following lemma.

\begin{lemma}[Linear random walk] \label{lemma:linear-random-walk}
    Consider the following random walk with parameter $k$: we start with $X_1 = k$. In every step, we choose $B_i = 1$ with probability at least $1 - \frac{3}{100}$, and otherwise $B_i=0$. We allow $\Pr[B_i = 1]$ to depend on the history (the choices of $B_1,\ldots,B_{i-1}$), but the lower bound of $1 - \frac{3}{100}$ holds for every condition on individual histories. After choosing $B_i$, if $B_i = 1$ then $X_{i+1} = \max\{0, X_i - 1\}$, and if $B_i = 0$, then $0 \le X_{i+1} \le X_i + 1$ (but it must be an integer). In this setting, if $n \ge 20k+1$, then $\Pr[X_n = 0] \ge \frac{2}{3}$.
\end{lemma}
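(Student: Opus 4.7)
The plan is a two-stage reduction followed by the reflection principle for random walks. First I will replace the adversarial lazy up-step by its worst case: define $Z_i$ by $Z_1=k$, $Z_{i+1}=\max\{0,Z_i-1\}$ if $B_i=1$, and $Z_{i+1}=Z_i+1$ if $B_i=0$. A one-step induction under a common $B_i$-sequence yields $X_i\leq Z_i$ for every $i$, so it suffices to lower-bound $\Pr[Z_n=0]$. Since $Z_n$ is coordinatewise monotone non-increasing in $(B_i)$ (more $1$'s can only lower the trajectory, with all three cases $B_i=B'_i$, $B_i>B'_i$, $B_i<B'_i$ checked inductively), I will then couple the adaptive $B_i$'s with iid $\tilde B_i\sim\mathrm{Bern}(97/100)$ through common uniforms so that $\tilde B_i\leq B_i$ pointwise; the iid walk therefore stochastically dominates the adaptive one, and it suffices to prove the bound under iid $B_i$ with $p=97/100$ and $q=3/100$.

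For the iid case I will invoke the reflection identity for reflected random walks. Let $Y_1=k$ with $Y_{i+1}=Y_i-1$ when $B_i=1$ and $Y_{i+1}=Y_i+1$ when $B_i=0$, and set $m_n=\min\{0,Y_1,\dots,Y_n\}$. A short induction, with care at the $Z_i=0$ boundary, split by whether $Y$ achieves a strictly new minimum at step $i{+}1$, gives $Z_n=Y_n-m_n$. Consequently the event $\{Z_n\geq 1\}$ decomposes into ``$Y$ never hits $0$ in $[1,n]$'' and ``$Y$ hits $0$ but $Y_n$ is not the overall minimum''.

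To bound the first event, Wald's identity gives $\E[\tau_0\mid Y_1=k]=k/(p-q)=k/0.94$, and Markov's inequality yields $\Pr[\tau_0>n-1]\leq k/(0.94(n-1))\leq 1/18.8<0.054$ for $n\geq 20k+1$. For the second event I will reverse time: the sequence $\tilde S_i=Y_{n-i}-Y_n$ starts at $0$ and has iid $\pm 1$ increments with $\Pr[+1]=p$ and $\Pr[-1]=q$, i.e., is a positive-drift simple random walk. The event $\{Y_n>\min_j Y_j\}$ is precisely $\{\tilde S \text{ dips strictly below } 0 \text{ within } [0,n-1]\}$, bounded above by the classical gambler-ruin probability of ever reaching $-1$ from $0$ under positive drift, namely $q/p=3/97<0.032$. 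Summing, $\Pr[Z_n\geq 1]<0.054+0.032<1/3$, so $\Pr[X_n=0]\geq 2/3$ as required. The main obstacle will be the careful verification of the reflection identity $Z_n=Y_n-m_n$ at the boundary, together with correctness of the monotone coupling for the adaptive Bernoullis; once these are in place the two tail bounds themselves are completely standard.
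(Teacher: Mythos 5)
Your proposal is correct, but it takes a genuinely different route from the paper. The paper works directly with the adaptive walk $X$: it introduces the events $G_i$ ("$X_i=0$") and $G_{i,n}$ ("at least half of $B_i,\dots,B_{n-1}$ are ones"), shows by a telescoping argument that $\bigl(\bigvee_i G_i\bigr)\wedge\bigl(\bigwedge_i G_{i,n}\bigr)\Rightarrow X_n=0$, and then estimates the complement via Chernoff bounds on the number of zeros among the $B_i$ (treated as dominated by $\mathrm{Bin}(\cdot,3/100)$), plus a crude union bound for the last eight indices; the final slack is roughly $0.001+0.053+0.24<1/3$. You instead first pass to the worst-case lazy walk $Z$ with deterministic $+1$ up-steps and then couple the adaptive Bernoullis with iid $\mathrm{Bern}(97/100)$ through common uniforms to reduce to the iid case; from there you invoke three classical facts: the Lindley/reflection identity $Z_n=Y_n-\min\{0,Y_1,\dots,Y_n\}$ (which I verified holds at the boundary by the case split you indicate), Wald's identity plus Markov for the ``never-hit-0'' event, and the time-reversal/gambler's-ruin bound $q/p$ for the ``$Y_n$ is not the running minimum'' event. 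Both routes yield the theorem; yours trades the paper's elementary Chernoff calculus for classical random-walk machinery, and incidentally gets a tighter constant (about $0.086$ versus about $0.29$). Two small points worth making explicit when you write it up: in the Wald step what you actually bound is $\Pr[\tau_0-1\ge n-1]\le\E[\tau_0-1]/(n-1)=(k/0.94)/(n-1)$, with $\tau_0-1$ the number of steps before hitting $0$, so the $-1$ shifts are consistent; and the monotone coupling requires observing that $Z_n$ is coordinatewise non-increasing in the $(B_i)$, which you noted and which follows by the same one-step induction that gives $X_i\le Z_i$.
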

\begin{proof}
    For $1 \le i \le n$, let $G_i$ be the event ``$X_i = 0$''. Also, for $1 \le i < j \le n$, let $G_{i,j}$ be the event ``$\sum_{t=i}^{j-1} B_t \ge \frac{1}{2}(j - i)$''.

    A key observation is that $\bigvee_{i=1}^{n-1} G_i \wedge \bigwedge_{i=1}^{n-1} G_{i,n}$ implies $G_n$. If we assume the contrary, then there exists $1 \le i \le n-1$ for which $G_i \wedge G_{i,n}$ holds. Consider the maximal such $i$, and for every $i \le j \le n$ let $Y_j = X_i + \sum_{t=i}^{j-1} (-1)^{B_t}$. From the assertions on $X_1,\ldots,X_n$ and the maximality of $i$, $Y_j \ge X_j$ for every $i+1 \le j \le n$, a contradiction since $0 < X_n \le Y_n = X_i + \sum_{t=i}^{n-1} (-1)^{B_t} = X_i + (n-i) - 2 \abs{\{ i \le t \le n-1 : B_j=1 \}} \le X_i + (n-i) - 2 \cdot \frac{1}{2}(n-i) = X_i = 0$.

    It remains to show that $\Pr\left[\bigvee_{i=1}^{n-1} G_i \wedge \bigwedge_{i=1}^{n-1} G_{i,n}\right] \ge \frac{2}{3}$. Consider the negation of each part:
    \begin{eqnarray*}
        \Pr\left[\neg \bigvee_{i=1}^{n-1} G_i\right]
        = \Pr\left[\bigwedge_{i=1}^{n-1} \neg G_i\right]
        \le \Pr\left[\neg G_{n-1}\right]
        &=& \Pr\left[\Bin\left(n-1, \frac{3}{100}\right) \ge \frac{1}{2}(n-1 - k)\right] \\
        &\le& e^{-2\left(\left(\frac{1}{2} - \frac{3}{100}\right)(n-1) - \frac{1}{2}k\right)^2 / (n-1)} \\
        \text{[$n \ge 20k+1$]} &\le& e^{-2\left(\left(\frac{1}{2} - \frac{3}{100} - \frac{1}{40}\right)(n-1)\right)^2 / (n-1)} \\
        &=& e^{-0.39605 (n-1)}
        \le e^{-0.39605 \cdot 20}
        < \frac{1}{1000}
    \end{eqnarray*}

    For $1 \le i \le n-9$, we can use Chernoff bound to obtain that:
    \[
        \Pr[\neg G_{i,n}]
        \le \Pr\left[\Bin\left(n - i, \frac{3}{100}\right) \ge \frac{1}{2}(n - i) \right]
        \le e^{-2\left(\frac{1}{2} - \frac{3}{100}\right)^2 (n - i)}
        = e^{-0.4418(n-i)}
    \]

    By the union bound,
    \[  \Pr\left[\bigvee_{i=1}^{n-9} \neg G_{i,n} \right]
        \le \sum_{i=1}^{n-9} \Pr\left[\neg G_{i,n} \right]
        \le \sum_{i=1}^{n-9} e^{-0.4418(n-i)}
        \le \sum_{i=9}^{\infty} e^{-0.4418i}
        = \frac{e^{-0.4418 \cdot 9}}{1 - e^{-0.4418}}
        < \frac{1}{19}
    \]
    
    For $n-8 \le i \le n-1$ we use a collective bound:
    \[  \Pr\left[\bigvee_{i=n-8}^{n-1} \neg G_{i,n}\right]
        \le \Pr\left[\bigvee_{i=n-8}^{n-1} \left(B_i \ne 1\right) \right]
        \le \sum_{i=n-8}^{n-1} \Pr\left[B_i \ne 1 \right]
        \le 8 \cdot \frac{3}{100} = \frac{6}{25}
    \]

    Combined,
    \[  \Pr\left[\neg \bigwedge_{i=1}^{n-1} G_{i,n}\right]
        = \Pr\left[\bigvee_{i=1}^{n-1} \neg G_{i,n}\right]
        \le \Pr\left[\bigvee_{i=1}^{n-9} \neg G_{i,n}\right] + \Pr\left[\bigvee_{i=n-8}^{n-1} \neg G_{i,n}\right]
        \le \frac{1}{19} + \frac{6}{25}
        < \frac{3}{10}
    \]

    If $n \ge 20k + 1$ then:
    \[ \Pr[\neg G_n] \le \Pr\left[\neg \left(\bigvee_{i=1}^{n-1} G_i \wedge \bigwedge_{i=1}^{n-1} G_{i,n}\right)\right]
    \le \frac{1}{1000} + \frac{3}{10} < \frac{1}{3} \qedhere \]
\end{proof}

At this point we prove Lemma \ref{lemma:few-lies-binary-search-strict} (correctness of Algorithm \ref{fig:alg:strict-binary-search} as a binary search).

\begin{proof}[Proof of Lemma \ref{lemma:few-lies-binary-search-strict}]
    Let $a$ and $b$ be the endpoints of the goal range of $\mathcal A$ (which is promised to be a non-empty segment). Without loss of generality we can assume that $n$ is a power of $2$. Otherwise, we can extend the comparator to answer ``high'' with probability $1$ for every $n+1 \le i \le 2^{\ceil{\log_2 n}}$. We use a dyadic range tree with root range $\{1,\ldots,n\}$. Recall that all leaves have the same depth, $\log_2 n$, and let $L_\mathrm{good}$ be the set of leaves inside the goal range (those for which the comparator answers ``good'' with high probability).
    
    Algorithm \ref{fig:alg:strict-binary-search} defines a random walk on the dyadic range tree as follows: on a leaf $\{i\}$, we call the comparator to test whether $i$ is good or not. If it answers ``good'' then we stay on it, and otherwise we move to its parent. On an internal node $\{L,\ldots,R\}$, let $M = (L+R-1)/2$. We use the comparator to see whether the answers about $L$, $M$, $R$ make sense with respect to the predicates ``the assessments related to $L$, $M$ and $R$ indeed form a monotone sequence, and also imply that the range $[L,R]$ is not disjoint from the goal range''. If the answers make sense, then we move to one of the children based on the answer on $M$ (left child if $M$ is ``low'' or ``good'', right child if it is ``high''). Otherwise we move to its parent, unless we are already at the root, in which case we stay in place.

    Let $D_i$ be the edge-distance between the current node (which is not necessarily an ancestor of a ``good'' leaf) and the closest good leaf. Observe that at any point:
    \begin{itemize}
        \item If we are in the root or in a ``bad'' leaf $\{i\}$ (outside the goal range), then with probability at least $97/100$ we take the edge which moves us closer to the set of good leaves.
        \item If we are in an inner node that has a descendant ``good'' leaf, then with probability $97/100$ we take the edge to a child on the path to one of these leaves.
        \item If we are in an inner node that has only ``bad'' leaves in its subtree, then with probability $97/100$ we take the parent edge, moving closer to the set of good leaves.
        \item If we are in an ``good'' leaf, then we stay there with probability $99/100 > 97/100$.
    \end{itemize}
    These observations describe a memoryless random-walk on the dyadic tree. If we only consider the sequence $D_1,D_2,\ldots$ (which is not necessarily memoryless), then we satisfy the assertions of Lemma \ref{lemma:linear-random-walk}. Hence, with probability at least $2/3$, the $(20\log_2 n + 1)$st step of the algorithm is a good leaf, as desired.
\end{proof}

\section{Estimating $\mu(x)$ using $\alpha$}
\label{sec:est-mu-x-using-alpha}

In this section we prove Lemma \ref{lemma:main-alg-use-alpha}. For this, we define a function $h$ and show that $\E[h(\beta_{x,\alpha})]$ is a good approximation for $\frac{\alpha \mu(x)}{s_x}$. To estimate $\E[h(\beta_{x,\alpha})]$, we have to draw individual values of $\beta_{x,\alpha}$ and estimate them.

Recall that $\beta_{x,\alpha} = \Pr_\mu\left[\neg x \cond V_{x,\alpha} \cup \{x\} \right]$. In particular, it is fully determined by the choice of $A_\alpha$ and $V_x$. Since $\frac{\beta_{x,\alpha}}{1 - \beta_{x,\alpha}} = \frac{\mu(V_{x,\alpha})}{\mu(x)}$, we obtain that $\E\left[\frac{\beta_{x,\alpha}}{1 - \beta_{x,\alpha}}\right] = \frac{\alpha s_x}{\mu(x)}$. Alternatively, we can use $\mu(x) = \alpha \cdot s_x/\E\left[\frac{\beta_{x,\alpha}}{1 - \beta_{x,\alpha}}\right]$, where $\alpha$ is known and $s_x$ is already estimated within a $(1 \pm \eps/3)$-factor. To estimate $\mu(x)$ within a $(1 \pm \eps)$-factor as desired, it suffices to estimate $\E\left[\frac{\beta_{x,\alpha}}{1 - \beta_{x,\alpha}}\right]$ within a $(1 \pm \eps/2)$-factor, since $(1 \pm \eps/3)(1 \pm \eps/2) = 1 \pm \eps$.

Since $\frac{\beta_{x,\alpha}}{1-\beta_{x,\alpha}}$ is only bounded by $\frac{1}{\mu(x)}$, which is too large to effectively approximate, we truncate it at $T = 8 \ln \eps^{-1} + 100$ using the function $h(\beta) = \min\left\{T, \frac{\beta}{1-\beta}\right\}$, and use a separate argument to bound the difference that this truncation introduces to the expectation.

At this point we introduce Algorithm \ref{fig:alg:est-h-beta} and prove its correctness, thereby proving Lemma \ref{lemma:main-alg-use-alpha}.

\begin{algo}[ht!]
    \label{fig:alg:est-h-beta}
    \procname {$\procnameZestZexpectedZhZbeta(\mu,c,\eps;x,\alpha)$}
    \algoutput{$\hat{b} \in (1 \pm \eps/2) \alpha s_x / \mu(x)$}
    \begin{code}
        \algitem Let $T \gets 8 \ln \eps^{-1} + 100$.
        \algitem Let $M_1 \gets \ceil{9600 / \eps^2}$.
        \algitem Let $M_2 \gets \ceil{30 \ln M_1}$. \algcomment For median amplification (Observation \ref{obs:median-amplification}\ref{median-amplification:log-to-24c}).
        \algitem Let $\delta \gets \frac{\eps}{168 \ln \eps^{-1} + 2163}$.
        \algitem Let $q \gets \floor{25 M_2 \cdot \frac{\ln(6/\delta)}{\delta^3}}$.
        \begin{For}{$i$ from $1$ to $M_1$ \label{fig:alg:est-h-beta:step:outer-for}}
            \algitem Draw $A^{(i)}_\alpha$, according to its definition.
            \algitem $V^{(i)}_x \gets \procnameZinitializeZnewZVxHREF(c,\eps; x, q)$.
            \begin{For}{$j$ from $1$ to $M_2$}
                \algitem Let $\hat{\beta}_{i,j} \gets \procnameZestZsingleZbetaHREF(\mu,c,\eps; x,\alpha,\delta, A^{(i)}_\alpha, V^{(i)}_x)$.
            \end{For}
            \algitem Let $\hat{\beta}_i \gets \median(\hat{\beta}_{i,1},\ldots,\hat{\beta}_{i,M_2})$. \algcomment SP: $1 - \frac{1}{24M_1}$
            \algitem Define (without computing) $\beta_i \gets \beta(A^{(i)}_\alpha, V^{(i)}_x)$. \algcomment (For analysis only)
            \algitem Let $\hat{b}_i \gets \min\left\{\frac{\hat{\beta}_i}{1 - \hat{\beta}_i}, T\right\}$. \algcomment $= h(\hat{\beta}_i)$
        \end{For} 
        \algitem Let $\hat{b} \gets \frac{1}{M}\sum_{i=1}^M \hat{b}_i$.
        \algitem Return $\hat{b}$.
    \end{code}
\end{algo}

Before diving into the algorithmic logic we state a few arithmetic lemmas.

\begin{restatable}{observation}{obsZetaZMoneZq} \label{obs:eta-M1-q}
    $\eta_{c,\eps} M_1 q < \frac{1}{12}$.
\end{restatable}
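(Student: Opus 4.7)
The plan is to use the third bound in the definition of the target error, $\eta_{c,\eps} \le \eps^5 / (10^{20} (\ln \eps^{-1})^5)$, and show that $M_1 q$ is at most a modest constant multiple of $(\ln \eps^{-1})^5/\eps^5$, so that the leading $10^{-20}$ factor absorbs everything else with enormous slack.

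I would proceed by substituting the explicit definitions in turn. From $M_1 = \ceil{9600/\eps^2}$ we get $M_1 = O(1/\eps^2)$, and from $M_2 = \ceil{30 \ln M_1}$ we get $M_2 = O(\ln \eps^{-1})$. For the step size, $\delta = \eps/(168 \ln \eps^{-1} + 2163)$, so $1/\delta = O(\ln \eps^{-1}/\eps)$ and therefore $\ln(6/\delta) = O(\ln \eps^{-1})$ as well. Plugging these into $q \le 25 M_2 \ln(6/\delta)/\delta^3$ yields $q = O((\ln \eps^{-1})^5/\eps^3)$, and consequently $M_1 q = O((\ln \eps^{-1})^5/\eps^5)$.

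Multiplying by the chosen upper bound on $\eta_{c,\eps}$ gives
\[
    \eta_{c,\eps} \cdot M_1 q \;\le\; \frac{\eps^5}{10^{20} (\ln \eps^{-1})^5} \cdot O\!\left(\frac{(\ln \eps^{-1})^5}{\eps^5}\right) \;=\; O(10^{-20}),
\]
which is safely less than $1/12$.

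The only real work is being careful with the constants (in particular the ceilings/floors in $M_1$, $M_2$, $q$, and the constants hidden in the $O(\cdot)$ expressions for $\ln(6/\delta)$ and $1/\delta^3$), since the statement is a clean numerical bound rather than an asymptotic one. This is the main (though minor) obstacle, and it is resolved by noting that all the accumulated constants are easily dominated by the $10^{20}$ safety margin in the definition of $\eta_{c,\eps}$; one can use, for example, the loose estimates $M_1 \le 10^4/\eps^2$, $M_2 \le 10^2 \ln \eps^{-1}$, $1/\delta \le 10^4 \ln \eps^{-1}/\eps$, and $\ln(6/\delta) \le 10^2 \ln \eps^{-1}$ to keep the arithmetic explicit while still leaving many orders of magnitude to spare.
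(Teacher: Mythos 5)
Your high-level plan is exactly the paper's: substitute the definitions of $M_1$, $M_2$, $\delta$, $q$ in turn, bound $M_1 q$ by a constant times $(\ln \eps^{-1})^5/\eps^5$, and then cancel against the third term of $\eta_{c,\eps}$. That is the right route. However, the closing claim that the $10^{20}$ factor leaves ``many orders of magnitude to spare,'' and the specific loose estimates you offer, are wrong — and in a way that breaks the proof as written.

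Tracking the constants through the paper's own substitutions ($M_1 \le 9601/\eps^2$, $M_2 \le 200 \ln \eps^{-1}$, $1/\delta \le 2331 \ln\eps^{-1}/\eps$, $\ln(6/\delta) \le 12 \ln\eps^{-1}$) gives $M_1 q \le 25 M_1 M_2 \frac{\ln(6/\delta)}{\delta^3} \le 25 \cdot 9601 \cdot 200 \cdot 12 \cdot 2331^3 \cdot \frac{(\ln\eps^{-1})^5}{\eps^5} \approx 7.3 \cdot 10^{18} \cdot \frac{(\ln\eps^{-1})^5}{\eps^5}$, so $\eta_{c,\eps} M_1 q \lesssim 0.073 < 1/12 \approx 0.083$. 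The margin is roughly a factor of $1.1$, not orders of magnitude. Now plug in your proposed loose bounds instead: $1/\delta \le 10^4 \ln\eps^{-1}/\eps$ inflates $1/\delta^3$ by a factor of $(10^4/2331)^3 \approx 79$, and $\ln(6/\delta) \le 10^2 \ln\eps^{-1}$ inflates that factor by another $\approx 8$. Even with your slight over-optimism on $M_2$, you end up with $M_1 q \le 25 \cdot 10^{20} (\ln\eps^{-1})^5/\eps^5$, and hence $\eta_{c,\eps} M_1 q \le 25$ — nowhere near $1/12$. (As an aside, your $M_2 \le 10^2 \ln\eps^{-1}$ is not actually justified; one gets about $180 \ln\eps^{-1}$, so the paper's $200$ is needed.) In short, the $10^{20}$ was chosen to make this check pass with only modest slack, and one genuinely needs to carry through sharp bounds on $1/\delta$ and $\ln(6/\delta)$ — roughly $1/\delta \le 2331 \ln\eps^{-1}/\eps$ and $\ln(6/\delta) \le 12\ln\eps^{-1}$ — for the final numerical inequality to hold.
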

\begin{proof}
    We use the following bounds for $\eps < \frac{1}{10}<e^{-2}$:
    \begin{eqnarray*}
        M_1 = & \ceil{\frac{9600}{\eps^2}} & \le \frac{9601}{\eps^2} \\
        M_2 = & \ceil{30 \ln M_1} \le 30 \ln \frac{9601}{\eps^2} + 1 = 30 \ln \frac{9601e^{1/30}}{\eps^2} & \le 200 \ln \eps^{-1} \\
        \delta = & \frac{\eps}{168 \ln \eps^{-1} + 2163} & \le \frac{\eps}{2331 \ln \eps^{-1}} \\
        \ln (6/\delta) \le & \ln 6 + \ln 2331 + \ln \ln \eps^{-1} + \ln \eps^{-1} & \le 12 \ln \eps^{-1}
    \end{eqnarray*}

    Therefore:
    \begin{eqnarray*}
        \eta_{c,\eps} M_1 q
        &\le& \eta_{c,\eps} \cdot M_1 \cdot 25 M_2 \frac{\ln (6/\delta)}{\delta^3} \\
        &=& \frac{1}{12} \cdot \eta_{c,\eps} \cdot 300 M_1 M_2 \frac{\ln (6/\delta)}{\delta^3} \\
        &\le& \frac{1}{12} \cdot \eta_{c,\eps} \cdot 300 \frac{9601}{\eps^2} \cdot 200 \ln \eps^{-1} \cdot \frac{12 \ln \eps^{-1}}{\eps^3 / (2331 \ln \eps^{-1})^3} \\
        &\le& \frac{1}{12} \cdot \paperZtargeterrZestsingle \cdot 9 \cdot 10^{19} \frac{(\ln \eps^{-1})^5}{\eps^5}
        < \frac{1}{12}
    \end{eqnarray*}
\end{proof}

\begin{restatable}{lemma}{lemmaZEZhZbetaZapproxZEZbetaZoverZoneZminusZbetaZnew} \label{lemma:E-h-beta-approx-E-beta-over-1-minus-beta-new}
    If $\gamma_x \le \alpha \le 50\gamma_x$ then $\E[h(\beta_{x,\alpha})] \in \left(1 \pm \frac{1}{10} \eps \right)\alpha s_x / \mu(x)$. In particular, $\E[h(\beta_{x,\alpha})] \ge \frac{9}{10}$ for $\eps < 1$.
\end{restatable}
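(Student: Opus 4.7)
The plan is to observe that $h(\beta) = \min\{T, \beta/(1-\beta)\}$ agrees with $\beta/(1-\beta)$ except on the tail event $\{\beta/(1-\beta) > T\}$, and to show that this tail contributes negligibly. Since $h(\beta) \le \beta/(1-\beta)$ pointwise and Observation~\ref{obs:key-observation} gives $\E[\beta_{x,\alpha}/(1-\beta_{x,\alpha})] = \alpha s_x/\mu(x)$, the upper bound $\E[h(\beta_{x,\alpha})] \le \alpha s_x/\mu(x) \le (1+\eps/10)\alpha s_x/\mu(x)$ is immediate. The work is the matching lower bound, which reduces to proving
\[ \E\!\left[\frac{\beta_{x,\alpha}}{1-\beta_{x,\alpha}}\right] - \E[h(\beta_{x,\alpha})] \;\le\; \E[X \cdot \mathbbm{1}_{X > T}] \;\le\; \frac{\eps}{10}\cdot \frac{\alpha s_x}{\mu(x)}, \]
where I write $X := \beta_{x,\alpha}/(1-\beta_{x,\alpha}) = \mu(V_{x,\alpha})/\mu(x)$.

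The next step is to exploit the decomposition $X = \sum_{y \ne x} (\mu(y)/\mu(x))\cdot \mathbbm{1}[y \in V_{x,\alpha}]$ as a sum of independent nonnegative terms. Independence across $y$ follows since both $V_x$ (on $M_x$) and $A_\alpha$ are constructed by independent per-element coin flips, and $f_x(y) = 0$ on $H_x$ so only $y \in L_x \cup M_x$ contribute. Each nonzero summand is bounded by $1.2$ (by the definitions of $L_x$ and $M_x$), and $\E[X] = \alpha s_x/\mu(x) \in [1, 50]$ by the hypothesis $\gamma_x \le \alpha \le 50\gamma_x$. I would then apply the MGF method at $\lambda = 1/2$: since $e^{\lambda t} - 1 \le c\, t$ for some constant $c$ uniformly over $t \in [0,1.2]$, and using $1 + u \le e^u$, the product $\E[e^{\lambda X}] = \prod_y \E[e^{\lambda Y_y}]$ is bounded by $\exp(c\,\E[X]) \le e^{O(1)}$.

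Markov's inequality then yields $\Pr[X > t] \le e^{O(1) - t/2}$ for every $t \ge T$, and the tail-integration formula $\E[X\cdot \mathbbm{1}_{X > T}] = T\Pr[X > T] + \int_T^\infty \Pr[X > t]\,dt$ produces an upper bound of order $T \cdot e^{-T/2}$. For $T = 8\ln \eps^{-1} + 100$ this evaluates to $O(\eps^4 \ln \eps^{-1})$, which is far below $(\eps/10) \cdot \alpha s_x / \mu(x) \ge \eps/10$ for all $\eps < 1/10$; this gives the multiplicative lower bound. The ``in particular'' clause follows from $\E[h(\beta_{x,\alpha})] \ge (1-\eps/10)\cdot \alpha s_x/\mu(x) \ge (1-\eps/10) \ge 9/10$. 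The main obstacle is pinning down the constants in the MGF estimate cleanly enough that the exponent at the cut $T$ genuinely produces a polynomial power of $\eps$; once the weighted-Chernoff computation is written out, everything else reduces to routine arithmetic.
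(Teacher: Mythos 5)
Your proof is correct and takes essentially the same route as the paper: both reduce the claim to bounding the truncation contribution $\E[X\cdot\mathbbm{1}_{X > T}]$ for $X = \mu(V_{x,\alpha})/\mu(x)$ via an MGF/Chernoff estimate for the weighted sum of independent Bernoulli indicators that makes up $\mu(V_{x,\alpha})$ (the paper's Lemma~\ref{apxlemma:V-x-alpha-MGF} and Lemma~\ref{apxlemma:V-alpha-chernoff-large-delta}, then Lemma~\ref{lemma:tiny-tail-mu-V-x-alpha-new}). The only stylistic difference is that the paper packages the Chernoff step into a reusable lemma and sums over dyadic scales $2^t\hat T\mu(x)$, while you compute the MGF at a fixed $\lambda=\frac{1}{2}$ and integrate the tail directly; the deferred arithmetic does work out, since $e^{t/2}-1\le 0.7t$ on $[0,1.2]$ keeps $\E[e^{X/2}]\le e^{0.7\cdot 50}$, so $T=8\ln\eps^{-1}+100$ yields a tail contribution $O(T\,e^{-15}\eps^4)$, comfortably below $\frac{\eps}{10}\cdot\frac{\alpha s_x}{\mu(x)}\ge\frac{\eps}{10}$.
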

We prove Lemma \ref{lemma:E-h-beta-approx-E-beta-over-1-minus-beta-new} in Appendix \ref{apx:tech-SHEET}.

\begin{lemma} \label{lemma:Var-hbeta}
    If $\gamma_x \le \alpha \le 50\gamma_x$ then $\Var\left[h(\beta_{x,\alpha})\right] \le 100$.
\end{lemma}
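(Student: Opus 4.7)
\medskip

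\noindent\textbf{Proof plan.} The plan is to dominate $h(\beta_{x,\alpha})$ pointwise by the untruncated quantity $Y = \mu(V_{x,\alpha})/\mu(x)$, and then to compute the second moment of $Y$ directly, exploiting independence. Set $\rho = \alpha s_x/\mu(x)$, so that by Observation~\ref{obs:key-observation} we have $\E[Y] = \rho$. Since $h(\beta)=\min\{T,\beta/(1-\beta)\}$ and $\beta_{x,\alpha}/(1-\beta_{x,\alpha}) = \mu(V_{x,\alpha})/\mu(x) = Y$, the inequality $h(\beta_{x,\alpha}) \le Y$ holds pointwise, hence $\E[h(\beta_{x,\alpha})^2] \le \E[Y^2]$.

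Next I would compute $\Var[Y]$ by writing $Y = \sum_{y\neq x} Z_y \cdot \mu(y)/\mu(x)$ where $Z_y = \mathbbm{1}[y \in V_x]\cdot \mathbbm{1}[y \in A_\alpha]$. By Definition~\ref{def:V-x} combined with Definition~\ref{def:A-alpha}, the indicators $\{Z_y\}_{y\neq x}$ are mutually independent with $\E[Z_y] = \alpha p_y$, where $p_y = \Pr[y\in V_x]$. Critically, if $p_y > 0$ then $y \notin H_x$, so $\mu(y) < 1.2\mu(x)$. Hence
\[
\Var[Y] \;=\; \sum_{y\neq x}\!\left(\tfrac{\mu(y)}{\mu(x)}\right)^{\!2}\!\alpha p_y (1-\alpha p_y)
\;\le\; 1.2 \sum_{y\neq x}\tfrac{\mu(y)}{\mu(x)}\cdot \alpha p_y
\;=\; 1.2\,\rho,
\]
and therefore $\E[Y^2] = \rho^2 + \Var[Y] \le \rho^2 + 1.2\rho$.

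To finish, I would invoke Lemma~\ref{lemma:E-h-beta-approx-E-beta-over-1-minus-beta-new} to bound $\E[h(\beta_{x,\alpha})]^2 \ge (1-\eps/10)^2 \rho^2 \ge (1 - \eps/5)\rho^2$. Combining the two bounds,
\[
\Var[h(\beta_{x,\alpha})]
\;=\; \E[h(\beta_{x,\alpha})^2] - \E[h(\beta_{x,\alpha})]^2
\;\le\; (\rho^2 + 1.2\rho) - (1-\eps/5)\rho^2
\;\le\; 1.2\rho + \tfrac{\eps}{5}\rho^2.
\]
The hypothesis $\alpha \le 50\gamma_x$ gives $\rho \le 50$, while $\eps \le 1/10$ from our standing assumption, so the right-hand side is bounded by a universal constant of the desired order. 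The truncation parameter $T$ plays no role in this argument, which is the point: the cleanest bound avoids tracking the tail event $\{Y > T\}$ and relies only on the pointwise domination $h \le Y$ together with the independence across $y$. The main ``obstacle'' is purely book-keeping of the constants; the only conceptual ingredient is recognizing that the upper bound $\mu(y) \le 1.2\mu(x)$ on members of $V_x$ converts the second-moment sum into a first-moment sum, giving the linear-in-$\rho$ variance bound.
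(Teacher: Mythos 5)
Your computation of $\Var[Y]$ for $Y = \mu(V_{x,\alpha})/\mu(x) = \beta_{x,\alpha}/(1-\beta_{x,\alpha})$ is correct and matches the paper exactly: independence of the $Z_y$'s, the $\mu(y) < 1.2\mu(x)$ bound converting the second-moment sum into a first-moment sum, yielding $\Var[Y] \le 1.2\rho \le 60$. The gap is in how you pass from $\Var[Y]$ to $\Var[h(\beta_{x,\alpha})]$. You use $\E[h^2] \le \E[Y^2]$ together with the \emph{lower} bound $\E[h] \ge (1-\eps/10)\rho$ from Lemma~\ref{lemma:E-h-beta-approx-E-beta-over-1-minus-beta-new}, getting
\[
\Var[h] \;\le\; 1.2\rho + \bigl(1-(1-\eps/10)^2\bigr)\rho^2.
\]
At $\rho = 50$ and $\eps$ near its allowed maximum $1/10$ this is about $60 + 49.75 \approx 110$, which does not prove the stated bound of $100$ (and $100$ is not slack: it is used numerically in Lemma~\ref{lemma:chebyshev-mean-Xi}, where a $110$ in place of $100$ would break the constant chain that feeds into Lemma~\ref{lemma:main-alg-use-alpha}).

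The missing idea is that truncation can only \emph{decrease} variance, so you should exploit it rather than route around it. The function $t \mapsto \min\{t,T\}$ is $1$-Lipschitz, hence (writing $Y'$ for an i.i.d.\ copy of $Y$)
\[
\Var[h(\beta_{x,\alpha})] = \Var[\min\{Y,T\}] = \tfrac{1}{2}\E\bigl[(\min\{Y,T\}-\min\{Y',T\})^2\bigr] \le \tfrac{1}{2}\E\bigl[(Y-Y')^2\bigr] = \Var[Y] \le 1.2\rho \le 60.
\]
This is exactly the paper's one-line observation (``$h$ is $1$-Lipschitz with respect to $\beta_{x,\alpha}/(1-\beta_{x,\alpha})$''), and it gives $60 \le 100$ with no dependence on $\eps$ or on Lemma~\ref{lemma:E-h-beta-approx-E-beta-over-1-minus-beta-new}. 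Your instinct that ``the truncation parameter $T$ plays no role'' is backwards: the cheapest argument is precisely that truncation is a contraction, whereas the domination-by-$Y$ route discards that help and has to recover it through a lower bound on $\E[h]$, which is what costs you the extra $\approx (\eps/5)\rho^2 \approx 50$.
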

\begin{proof}
    For every $y \in \Omega$, let $\mathbf 1_{y \in V_{x,\alpha}}$ be an indicator for the event ``$y \in V_{x,\alpha}$''. Note that:
    \begin{eqnarray*}
        \Var\left[\mu(V_{x,\alpha})\right]
        &=& \sum_{y \in L_x \cup M_x} (\mu(y))^2 \Var\left[\mathbf 1_{y \in V_{x,\alpha}} \right] \\
        &=& \sum_{y \in L_x \cup M_x} (\mu(y))^2 \Pr[y \in V_{x,\alpha}](1 - \Pr[y \in V_{x,\alpha}]) \\
        &\le& \left(\max_{y \in L_x \cup M_x} \mu(y)\right) \cdot \sum_{y \in L_x \cup M_x} \mu(y) \Pr[y \in V_{x,\alpha}]
        \le 1.2\mu(x) \cdot \E[\mu(V_{x,\alpha})]
    \end{eqnarray*}

    Since $\frac{\beta_{x,\alpha}}{1 - \beta_{x,a}} = \frac{\mu(V_{x,\alpha})}{\mu(x)}$, we can now bound its variance.
    \[\Var\left[\frac{\beta_{x,\alpha}}{1 - \beta_{x,\alpha}}\right]
    = \Var\left[\frac{\mu(V_{x,\alpha})}{\mu(x)}\right]
    \le \frac{1.2\mu(x)\E[\mu(V_{x,\alpha})]}{(\mu(x))^2}
    = \frac{1.2\E[\mu(V_{x,\alpha})]}{\mu(x)}
    = \frac{1.2\alpha s_x}{\mu(x)}
    = 1.2\alpha / \gamma_x \le 100 \]

    Observe that $\Var[h(\beta_{x,\alpha})] \le \Var[\frac{\beta_{x,\alpha}}{1 - \beta_{x,\alpha}}]$, since $h$ is $1$-Lipschitz with respect to $\frac{\beta_{x,\alpha}}{1 - \beta_{x,\alpha}}$. Hence, $\Var[h(\beta_{x,\alpha})] \le 100$.
\end{proof}

\begin{restatable}{lemma}{lemmaZhZerrZscaling}\label{lemma:h-err-scaling}
    For $0 < \delta \le \frac{\eps}{21(T+3)}$ and $\hat{\beta} = \beta \pm \delta$, $h(\hat{\beta}) = h(\beta) \pm \max\{ 2\delta, \frac{1}{20}\eps h(\beta) \}$.
\end{restatable}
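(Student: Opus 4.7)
The plan is to first establish an identity for the untruncated version $g(x) = x/(1-x)$, then track how truncation at $T$ enters. Writing $s = g(\beta)$ and $\hat s = g(\hat\beta)$, direct manipulation (using $1/(1-\beta) = s+1$) gives
\[ \hat s - s = (\hat\beta - \beta)(s+1)(\hat s+1). \]
Treating this as an implicit equation for $\hat s$ and solving yields
\[ |\hat s - s| \le \frac{\delta(s+1)^2}{1 - \delta(s+1)}. \]
Under the hypothesis $\delta \le \eps/(21(T+3))$ and the fact that $T \ge 100$ by its definition, for any $s \le T$ we have $\delta(s+1) \le 1/210$, so the denominator exceeds $209/210$ and effectively $|\hat s - s| \le 1.005\,\delta(s+1)^2$.

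I would then split into two cases based on whether the truncation in $h(\beta) = \min\{T,g(\beta)\}$ is active. \textbf{Untruncated case} ($s,\hat s \le T$): $|h(\hat\beta) - h(\beta)| = |\hat s - s|$. A numerical threshold near $s \approx 0.41$ separates two sub-regimes: for $s \le 0.41$, $1.005(s+1)^2 < 2$ directly yields the $2\delta$ bound; for $s \in (0.41,T]$, the target $\eps s/20$ reduces to the algebraic inequality $(s+1)^2/s \le 21(T+3)/20.1$, which follows from $(s+1)^2/s \le T+3$ on $[1,T]$ together with direct verification on $[0.41,1]$ (the worst case there is about $4.85$, far below $T+3$).

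\textbf{Truncated case} (at least one of $s,\hat s$ exceeds $T$): if both exceed $T$, the difference is zero. Otherwise, let $\beta_{\min} = \min(\beta,\hat\beta)$ and set $\zeta = T/(T+1) - \beta_{\min}$. Since the other variable exceeds $T/(T+1)$ and they lie within $\delta$ of each other, $\zeta \in [0,\delta]$. A direct computation (expanding $g$ at $\beta_{\min}$) yields
\[ T - g(\beta_{\min}) = \frac{\zeta(T+1)^2}{1 + \zeta(T+1)} \le \delta(T+1)^2. \]
The key numerical inequality $20(T+1)^2 \le 21T(T+3)$, verified by expansion to $T^2 + 23T \ge 20$ (true for $T \ge 1$), combined with the hypothesis gives $\delta(T+1)^2 \le \eps T/(20+\eps)$. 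When $s > T$, $h(\beta) = T$, and the difference is either zero or $T - \hat s \le \delta(T+1)^2 \le \eps T/20 = \eps h(\beta)/20$. When $s \le T < \hat s$, $|h(\hat\beta) - h(\beta)| = T - s$, and the elementary rearrangement $(T-s)(20+\eps) \le \eps T \iff T - s \le \eps s/20$ converts the bound $T - s \le \eps T/(20+\eps)$ into the desired $T - s \le \eps s/20 = \eps h(\beta)/20$.

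The main obstacle is precisely this last sub-case $s \le T < \hat s$: the intermediate bound $\eps T/(20+\eps)$ can be much larger than the naive target $\eps h(\beta)/20 = \eps s/20$ when $s$ is small, so one cannot simply majorize $T$ by $s$. The resolution is that the $T-s$ gap on the left-hand side is itself small enough for the two inequalities to coincide, via the arithmetic identity above. All constants in the hypothesis $\delta \le \eps/(21(T+3))$ are calibrated precisely so that $20(T+1)^2 \le 21T(T+3)$ holds, which is the single numerical identity tying all the sub-cases together.
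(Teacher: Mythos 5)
Your proof is correct but takes a genuinely different route from the paper's. The paper avoids all the casework around truncation by observing that $\ln h$ is $(T+3)$-Lipschitz on $[\frac{1}{T+1},1]$ (where $h$ is constant its derivative is $0$, so the truncated region is absorbed automatically), yielding the multiplicative bound $h(\hat\beta)/h(\beta)=e^{\pm\eps/21}=1\pm\eps/20$ there, and separately uses $h'\le 2$ for $\beta<2/(T+1)$ to get the additive bound; the overlap between the two regions, together with $\delta<1/(T+1)$, means every pair $(\beta,\hat\beta)$ falls under one of the two bounds. You instead work with the closed form $\hat s-s=\delta(s+1)^2/(1-\delta(s+1))$ for $s=\beta/(1-\beta)$ and then handle the truncation boundary explicitly; the most delicate step (which the paper's Lipschitz-on-the-log trick sidesteps entirely) is your sub-case $s\le T<\hat s$, where you correctly observe that $T-s\le\eps T/(20+\eps)$ is algebraically equivalent to $T-s\le\eps s/20$, so the apparent mismatch between $T$ and $s$ in the target resolves itself. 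One small imprecision: you derive $\delta(T+1)^2\le\eps T/(20+\eps)$ from the inequality $20(T+1)^2\le 21T(T+3)$, but that inequality only gives $\delta(T+1)^2\le\eps T/20$; what you actually need is the slightly stronger $(20+\eps)(T+1)^2\le 21T(T+3)$, which reduces to $(1-\eps)T^2+(23-2\eps)T\ge 20+\eps$ and still holds with enormous slack for $T\ge 100$ and $\eps<1/10$, but the statement as written is not a direct implication. Your approach is more elementary (no logarithms) at the cost of more sub-cases; the paper's is shorter because the Lipschitz constant of $\ln h$ unifies the truncated and untruncated regimes into a single estimate.
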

We prove Lemma \ref{lemma:h-err-scaling} in Appendix \ref{apx:tech-SHEET}.

\begin{lemma}[Generic bound] \label{lemma:affine-error-propagation}
    Let $r, r_1, r_2 > 0$. Let $X_1,\ldots,X_k$ be independent non-negative variables drawn from the same distribution, $\bar{X} = \frac{1}{k}\sum_{i=1}^k X_i$ be their average value, and $Y_1,\ldots,Y_k$ be another random sequence for which $\abs{Y_i} \le \max\{ r_1 \E[\bar{X}], r_2 X_i \}$ for every $1 \le i \le k$. If $r > r_1 + r_2$ then $\Pr\left[\frac{1}{k}\sum_{i=1}^k (X_i + Y_i) \ne (1 \pm r)\E[X] \right] \le \Pr\left[\bar{X} \ne (1 \pm r')\E[\bar{X}] \right]$ for $r' = \frac{r - r_1 - r_2}{1 + r_2}$.
\end{lemma}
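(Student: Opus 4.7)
The plan is a direct deterministic calculation: show that whenever the (random) event $\bar X \in (1\pm r')\E[\bar X]$ occurs, the (deterministic) conclusion $\frac{1}{k}\sum_{i=1}^k (X_i+Y_i)\in (1\pm r)\E[X]$ follows. Once this implication is established, contrapositive gives the claimed bound on probabilities.

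First I would write $\frac{1}{k}\sum_{i=1}^k(X_i+Y_i) = \bar X + \bar Y$ where $\bar Y = \frac{1}{k}\sum_i Y_i$, and observe that $\E[\bar X]=\E[X]$ since the $X_i$ are i.i.d. Then I would bound $\bar Y$ in magnitude using the hypothesis $|Y_i|\le\max\{r_1\E[\bar X],\,r_2 X_i\}\le r_1\E[\bar X]+r_2 X_i$ (since $X_i\ge 0$), averaging to obtain
\[ |\bar Y|\;\le\; r_1\E[X] + r_2\bar X. \]

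Conditioning on $\bar X = (1\pm r')\E[X]$, the upper end gives
\[ \bar X+\bar Y\;\le\;(1+r_2)\bar X+r_1\E[X]\;\le\;\bigl((1+r_2)(1+r')+r_1\bigr)\E[X], \]
and the choice $r' = (r-r_1-r_2)/(1+r_2)$ makes $(1+r_2)(1+r')+r_1 = 1+r$ exactly. Symmetrically, the lower end gives $\bar X+\bar Y\ge ((1-r_2)(1-r')-r_1)\E[X]$, and a quick check shows $(1-r_2)(1-r')-r_1\ge 1-r$ for the same $r'$ (this is where I would note that $1-r_2\le 1+r_2$, so the denominator $1+r_2$ in $r'$ is the tighter of the two, making the lower bound automatic). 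Thus $\bar X+\bar Y\in(1\pm r)\E[X]$ on this event.

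I do not anticipate a genuine obstacle: the statement is an arithmetic propagation-of-error lemma, and the only thing to double-check is that the asymmetry between $(1+r_2)$ and $(1-r_2)$ in the two directions does not force a smaller $r'$ than the one stated. The hypothesis $r>r_1+r_2$ guarantees $r'>0$, and non-negativity of the $X_i$ is used exactly once (to replace $\max\{r_1\E[\bar X],r_2 X_i\}$ by the sum), so the bookkeeping is mild.
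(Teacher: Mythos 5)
Your proposal is correct and matches the paper's argument in all essentials: both decompose $\frac1k\sum(X_i+Y_i)=\bar X+\bar Y$, bound $|\bar Y|\le r_1\E[\bar X]+r_2\bar X$ using non-negativity of the $X_i$, and propagate the interval $\bar X\in(1\pm r')\E[\bar X]$ through this bound to arrive at $(1\pm r)\E[\bar X]$. The paper carries out the propagation via a single symmetric chain of $\pm$-interval inclusions rather than checking the two endpoints separately, but that is a presentational difference only; your explicit check that the upper endpoint is the binding one (and that $1-r_2\le 1+r_2$ makes the lower endpoint automatic) is the same computation unpacked.
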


\begin{proof}
    In the following we show that if $\bar{X} \in (1 \pm r')\E[\bar{X}]$ then $\frac{1}{k} \sum (X_i + Y_i) \in (1 \pm r) \E[\bar{X}]$. This implies that the event ``$\bar{X} \notin (1 \pm r')\E[\bar{X}]$'' contains the event ``$\frac{1}{k} \sum (X_i + Y_i) \notin (1 \pm r) \E[\bar{X}]$'', which implies that $\Pr\left[ \frac{1}{k} \sum (X_i + Y_i) \notin (1 \pm r) \E[\bar{X}] \right] \le \Pr\left[ \bar{X} \notin (1 \pm r') \E[\bar{X}] \right]$.
    
    Let $\bar{Y} = \frac{1}{k} \sum_{i=1}^k Y_i$. By the triangle inequality and the assumptions of the lemma,
    \begin{eqnarray*}
        |\bar{Y}|
        \le \frac{1}{k} \sum_{i=1}^k |Y_i|
        \le \frac{1}{k} \sum_{i=1}^k \max\{r_1 \E[\bar{X}], r_2 X_i\}
        \le \frac{1}{k} \sum_{i=1}^k (r_1 \E[\bar{X}] + r_2 X_i)
        = r_1 \E[\bar{X}] + r_2 \bar{X}
    \end{eqnarray*}

    Assume that $\bar{X} \in (1 \pm r') \E[\bar{X}]$. Combined with the previous bound we can obtain that:
    \begin{eqnarray*}
        \frac{1}{k}\sum_{i=1}^k (X_i + Y_i)
        = \bar{X} + \bar{Y}
        &\in& (1 \pm r')\E[\bar{X}] \pm r_1 \E[\bar{X}] \pm r_2 \bar{X} \\
        &\subseteq& (1 \pm r')\E[\bar{X}] \pm r_1 \E[\bar{X}] \pm (1 \pm r') r_2 \E[\bar{X}] \\
        &=& (1 \pm r' \pm r_1 \pm (1 \pm r') r_2)\E[\bar{X}] \\
        &=& (1 \pm (r' + r_1 + (1 + r') r_2))\E[\bar{X}] \\
        &=& (1 \pm ((1 + r_2) r' + r_1 + r_2))\E[\bar{X}] \\
        &=& (1 \pm r)\E[\bar{X}]
    \end{eqnarray*}
\end{proof}

\begin{lemma}\label{lemma:chebyshev-mean-Xi}
    Let $X_1, \ldots, X_{M_1}$ be a sequence of $M_1$ independent variables whose distribution is the same as $h(\beta_{x,\alpha})$. In this setting, $\Pr\left[\frac{1}{M_1}\sum_{i=1}^{M_1} X_i \ne \left(1 \pm \frac{1}{4}\eps\right)\E[h(\beta_{x,\alpha})] \right] \le \frac{50}{243}$.
\end{lemma}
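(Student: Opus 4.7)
The plan is a direct application of Chebyshev's inequality to the sample mean $\bar{X} = \frac{1}{M_1}\sum_{i=1}^{M_1} X_i$, using the two prior lemmas as black boxes to control the variance and the expectation.

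First I would invoke Lemma \ref{lemma:Var-hbeta} to get $\Var[h(\beta_{x,\alpha})] \le 100$, which by independence yields $\Var[\bar{X}] \le 100/M_1$. Second, I would invoke Lemma \ref{lemma:E-h-beta-approx-E-beta-over-1-minus-beta-new} (in the form of its explicit corollary $\E[h(\beta_{x,\alpha})] \ge 9/10$) to convert a multiplicative deviation into an additive one: the event $\bar{X} \notin (1 \pm \tfrac{1}{4}\eps)\E[h(\beta_{x,\alpha})]$ is contained in the event $|\bar{X} - \E[\bar{X}]| > \tfrac{1}{4}\eps \cdot \tfrac{9}{10} = \tfrac{9\eps}{40}$.

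Then I would plug these into Chebyshev:
\[
\Pr\!\left[|\bar{X} - \E[\bar{X}]| > \tfrac{9\eps}{40}\right]
\;\le\; \frac{\Var[\bar{X}]}{(9\eps/40)^2}
\;\le\; \frac{100/M_1}{81\eps^2/1600}
\;=\; \frac{160000}{81 \, M_1\, \eps^2}.
\]
Finally, substituting the algorithm's choice $M_1 \ge 9600/\eps^2$ cancels the $\eps^{-2}$ and gives
\[
\frac{160000}{81 \cdot 9600}
= \frac{160000}{777600}
= \frac{50}{243},
\]
as required.

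There is essentially no obstacle here; the only thing to verify is that the numerical constant $M_1 = \lceil 9600/\eps^2\rceil$ was chosen precisely so that this Chebyshev bound lands at $50/243$, which it does. The substantive work was already done in Lemmas \ref{lemma:Var-hbeta} and \ref{lemma:E-h-beta-approx-E-beta-over-1-minus-beta-new}.
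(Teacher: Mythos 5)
Your proof is correct and follows essentially the same route as the paper: a single application of Chebyshev's inequality, using Lemma \ref{lemma:Var-hbeta} for the variance bound and Lemma \ref{lemma:E-h-beta-approx-E-beta-over-1-minus-beta-new} for the lower bound $\E[h(\beta_{x,\alpha})] \ge 9/10$. The only cosmetic difference is that you substitute the $9/10$ bound into the deviation threshold before applying Chebyshev, whereas the paper keeps $\E[h(\beta_{x,\alpha})]$ symbolic in the denominator and bounds it last; the arithmetic is identical.
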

\begin{proof}
    By Chebyshev's bound,
    \begin{eqnarray*}
        \Pr\left[\frac{1}{M_1}\sum_{i=1}^{M_1} X_i \ne \left(1 \pm \frac{1}{4}\eps\right) \E[h(\beta_{x,\alpha})]\right]
        &\le& \frac{\frac{1}{M_1} \Var[h(\beta_{x,\alpha})]}{\left(\frac{1}{4}\eps \E[h(\beta_{x,\alpha})]\right)^2} \\
        \text{[Lemma \ref{lemma:Var-hbeta}]} &\le& \frac{100/M_1}{\frac{1}{16} \eps^2 \left(\E[h(\beta_{x,\alpha})]\right)^2} \\
        \text{[Lemma \ref{lemma:E-h-beta-approx-E-beta-over-1-minus-beta-new}]} &\le& \frac{1600}{M_1 \eps^2 0.9^2}
        \le \frac{1600}{(9600 / \eps^2) \eps^2 \cdot 0.81}
        = \frac{1600}{9600 \cdot 0.81}
        = \frac{50}{243}
    \end{eqnarray*}
\end{proof}

\begin{lemma} \label{lemma:coupling-argument}
    The sequence $(V^{(1)}_x, \ldots, V^{(M_1)}_x)$ drawn by the algorithm is $\frac{1}{12}$-close to a sequence of $M_1$ independent drawings of $V_x$.
\end{lemma}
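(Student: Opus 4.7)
The plan is a straightforward hybrid argument combining Lemma \ref{lemma:Vx-oracle-new} (the per-object simulation bound) with Observation \ref{obs:eta-M1-q} (the calibrated choice of $q$).

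First I would verify that each $V^{(i)}_x$ object receives at most $q$ queries over the course of its lifetime. Inside iteration $i$ of the outer loop (Step \ref{fig:alg:est-h-beta:step:outer-for}), the object is passed to $M_2$ invocations of $\procnameZestZsingleZbetaHREF$. By Lemma \ref{lemma:est-beta-single}, each such invocation makes at most $25\ln(6/\delta)/\delta^3$ calls to $\procnameZVxZqueryHREF$, so the object sees at most $25 M_2 \ln(6/\delta)/\delta^3$ queries, which is at most the value of $q$ defined in the algorithm. This is exactly the precondition for Lemma \ref{lemma:Vx-oracle-new}.

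Next I would define two distributions over transcripts. Let $\mathcal{D}_{\text{real}}$ be the joint distribution of all samples, oracle answers, and internal states produced by Algorithm \ref{fig:alg:est-h-beta} as written, and let $\mathcal{D}_{\text{ideal}}$ be the analogous distribution in which, for each $i$, the object $V^{(i)}_x$ is replaced by an explicitly drawn $V \sim V_{x,c,\eps}$ (all drawn independently) whose membership is queried in place of $\procnameZVxZqueryHREF$. Since the $M_1$ iterations interact with distinct, freshly initialized objects (and with freshly drawn $A^{(i)}_\alpha$ and fresh conditional samples from $\mu$), one can interpolate via hybrids $\mathcal{D}_0=\mathcal{D}_{\text{real}},\mathcal{D}_1,\ldots,\mathcal{D}_{M_1}=\mathcal{D}_{\text{ideal}}$, where $\mathcal{D}_k$ uses real drawings of $V_x$ for $i\leq k$ and the object-based simulation for $i>k$. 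Between consecutive hybrids only the behavior of a single $V^{(i)}_x$-object (together with its at most $q$ queries) changes, and all other randomness is independent of this change. Lemma \ref{lemma:Vx-oracle-new} bounds the total variation distance between the two behaviors by $\eta_{c,\eps} q$, even when the queries are chosen adaptively based on earlier answers. The data-processing inequality then yields $\dtv(\mathcal{D}_{k-1},\mathcal{D}_k)\leq \eta_{c,\eps} q$.

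Summing over the $M_1$ hybrid steps via the triangle inequality gives $\dtv(\mathcal{D}_{\text{real}},\mathcal{D}_{\text{ideal}})\leq M_1 \eta_{c,\eps} q < \frac{1}{12}$, where the final inequality is Observation \ref{obs:eta-M1-q}. Since the projection to the $V^{(i)}_x$-components is a deterministic function of the transcript, the same bound holds for the marginal distribution over $(V^{(1)}_x,\ldots,V^{(M_1)}_x)$, which is exactly the claim. The only subtle point to handle carefully is that the queries fed into any given $V^{(i)}_x$ depend on earlier queries and their answers; this is precisely why Lemma \ref{lemma:Vx-oracle-new} is stated for adaptive query sequences, so no additional work is needed.
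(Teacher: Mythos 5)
Your proposal is correct and follows essentially the same route as the paper: apply the per-object bound of $\eta_{c,\eps}q$ from Lemma \ref{lemma:Vx-oracle-new}, combine across the $M_1$ objects, and finish with Observation \ref{obs:eta-M1-q}. The paper phrases the combination step as a union bound over the (independent) $V^{(i)}_x$ objects rather than an explicit hybrid chain, but this is a cosmetic difference; your explicit verification that the query budget stays within $q$ (which the paper leaves implicit) is a welcome addition.
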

\begin{proof}
    Clearly, the $V^{(i)}_x$s are fully independent. For every $1 \le i \le M_1$, the distribution of $V^{(i)}_x$ is $\eta_{c,\eps}q$-close to the correct distribution of $V_x$. Hence, by a union bound, the distance between the two sequences is bounded by $\eta_{c,\eps}qM_1$. By Observation \ref{obs:eta-M1-q}, the latter expression is bounded by $\frac{1}{12}$.
\end{proof}

Recall Lemma \ref{lemma:main-alg-use-alpha} about the correctness of Algorithm \ref{fig:alg:est-h-beta}.
\lemmaZmainZalgZuseZalpha*

\begin{proof}[Proof of Lemma \ref{lemma:main-alg-use-alpha}]
    During Step \ref{fig:alg:est-h-beta:step:outer-for}, for every $i$, in the $i$th iteration we draw $A^{(i)}_\alpha$ and $V^{(i)}_x$. These correspond to some $\beta_i = \beta_{x,\alpha}(A^{(i)}_\alpha, V^{(i)}_x)$, which are not accessible to the algorithm, but are estimated by $\hat\beta_i$. By Lemma \ref{lemma:coupling-argument}, the sequence $(V^{(1)}_x, \ldots, V^{(M_1)}_x)$ is $\frac{1}{12}$-close to a sequence $(U^{(1)}_x, \ldots, U^{(M_1)}_x)$ of $M_1$ independent samples of $V_x$ drawn according to its correct distribution.
    
    For the analysis, we consider an optimal coupling of $(V^{(1)}_x, \ldots, V^{(M_1)}_x)$ with the above hypothetical sequence $(U^{(1)}_x, \ldots, U^{(M_1)}_x)$. The good event $G_\mathrm{eqv}$, whose probability is at least $\frac{11}{12}$, is defined as the event that $U^{(i)}=V^{(i)}$ for all $1\leq i\leq M_1$. This leads to a coupling of Algorithm \ref{fig:alg:est-h-beta} with a logically-equivalent algorithm that uses the $U^{(i)}_x$ sets instead of the $V^{(i)}_x$ sets, where the behaviors of Algorithm \ref{fig:alg:est-h-beta} and the hypothetical algorithm are identical when conditioned on $G_\mathrm{eqv}$.

    Recall that $\hat{\beta}_i$ is the median of $\ceil{30 \ln M_1}$ estimations of $\beta_i \pm \delta$, each of which is successful with probability at least $2/3$ (under the assumption that $G_{\mathrm{eqv}}$ happened). Since $M_1 > 150$, $\hat{\beta}_i = \beta_i \pm \delta$ with probability at least $1 - \frac{1}{24M_1}$ (Observation \ref{obs:median-amplification}\ref{median-amplification:log-to-24c}). Let $G_\mathrm{est}$ be the good event $\bigwedge_{i=1}^{M_1} (\hat{\beta}_i = \beta_i \pm \delta)$. Then by the union bound $\Pr\left[G_\mathrm{est} | G_\mathrm{eqv}\right] \ge \frac{23}{24}$.

    Assume that $G_\mathrm{eqv} \cap G_\mathrm{est}$ happens. Let $h(\hat\beta_i) = X_i + Y_i$, where $X_i = h(\beta_i)$ and $Y_i$ is the additive error, which according to Lemma \ref{lemma:h-err-scaling}, is bounded by $2\delta + \frac{1}{20}\eps h(\beta_i) = 2\delta + \frac{1}{20}\eps X_i$. Combined,
    \begin{eqnarray*}
        \Pr\left[\hat b \ne \left(1 \pm \frac{1}{3}\eps \right) \! \E\left[h(\beta_{x,\alpha})\right]\right]
        &\le& \Pr\left[\neg G_\mathrm{eqv}\right] + \Pr\left[\neg G_\mathrm{est} \cond G_\mathrm{eqv} \right] + \\
        & & +
        \Pr\left[\hat{b} \ne \left(1 \pm \frac{1}{3}\eps\right) \! \E\left[h(\beta_{x,\alpha})\right] \cond G_\mathrm{eqv} \wedge G_\mathrm{est} \right]\\
        (*) &\le& \frac{1}{12} + \frac{1}{24} + \Pr\left[\frac{1}{M_1} \sum_{i=1}^{M_1} (X_i + Y_i) \ne \left(1 \pm \frac{1}{3}\eps\right) \! \E\left[h(\beta_{x,\alpha})\right] \cond G_\mathrm{est} \right] \\
        (**) &\le& \frac{1}{8} + \Pr\left[\frac{1}{M_1} \sum_{i=1}^{M_1} X_i \ne \left(1 \pm \frac{1}{4}\eps\right) \E[h(\beta_{x,\alpha})] \cond G_\mathrm{est} \right] \\
        \text{[Lemma \ref{lemma:chebyshev-mean-Xi}]} &\le& \frac{1}{8} + \frac{50}{243}
        \le \frac{1}{3}
    \end{eqnarray*}
    $(*)$: since $X_i$ and $Y_i$ are independent of $G_\mathrm{eqv}$.\\
    $(**)$: we use Lemma \ref{lemma:affine-error-propagation} with the parameters  $r = \frac{1}{3}\eps$, $r_1 = 2\delta \le \frac{1}{300}\eps$, $r_2 = \frac{1}{20}\eps$. This results with $r'\geq\frac{1}{4}\eps$. An explicit bound:
    $   r'
        = \frac{r - r_1 - r_2}{1 + r_1}
        \ge \frac{\eps/3 - \eps/300 - \eps/20}{1 + \eps/300}
        = \frac{(7/25) \eps}{1 + \eps/300}
        \ge \frac{(7/25)\eps}{1 + 1/300}
        > \frac{1}{4}\eps
    $.

    Overall, with probability at least $2/3$, \[\hat{b} = (1 \pm \eps/3) \E[h(\beta_{x,\alpha})] \underset{\text{Lemma \ref{lemma:E-h-beta-approx-E-beta-over-1-minus-beta-new}}}= (1 \pm \eps/3)(1 \pm \eps/10)\alpha s_x/\mu(x) = (1 \pm \eps/2)\alpha s_x/\mu(x) \qedhere\]
\end{proof}

\section{Applications}
\label{sec:applications}

In this section we efficiently solve three tasks in the fully conditional model. In each of the tasks, we first construct an algorithm (or adapt an existing one) for an interim model in which one can obtain samples from the distribution along with informational queries about the distribution function itself, where the latter are received with some restrictions on availability and accuracy. Then we plug in our core estimator to provide these queries using conditional samples to complete each task. The lemmas providing this mechanism are Lemmas \ref{lemma:generic-application-single-distribution}, \ref{lemma:generic-application-k-distributions} and \ref{lemma:generic-application-k-distributions-testing}. Appendix \ref{apx:one-more-lemma} holds another another such lemma which might be useful in the future.

\subsection{Additional notations}

The following definitions relate to the restrictions that are imposed on our interim querying model.

\begin{definition}[$\eps$-approximation function] \label{def:eps-approximation-function}
    Let $\mu$ be a distribution over $\Omega$. A function $f : \Omega \to [0,1]$ is an \emph{$\eps$-approximation function} with respect to $\mu$ if $f(x) \in (1 \pm \eps)\mu(x)$ for every $x \in \Omega$.
\end{definition}

\begin{definition}[CDF $c$-truncation function] \label{def:c-truncation-function}
    Let $\mu$ be a distribution over $\Omega$. A function $f : \Omega \to [0,1]$ is a \emph{CDF $c$-truncation function} with respect to $\mu$ if:
    \begin{itemize}
        \item For every $x \in \Omega$ for which $\mathrm{CDF}_\mu(x) \ge c$, $f(x) = \mu(x)$.
        \item For every $x \in \Omega$ for which $\mathrm{CDF}_\mu(x) < c$, $f(x) \in \{0, \mu(x)\}$.
    \end{itemize}
\end{definition}

\begin{definition}[$(c,\eps)$-approximation function] \label{def:c-eps-approximation-function}
    Let $\mu$ be a distribution over $\Omega$. A function $f : \Omega \to [0,1]$ is a \emph{$(c,\eps)$-approximation function} with respect to $\mu$ if:
    \begin{itemize}
        \item $f(x) \in (1 \pm \eps)\mu(x)$ for every $x \in \Omega$ for which $\CDF_\mu(x) \ge c$.
        \item $f(x) \in (1 \pm \eps)\mu(x) \cup \{0\}$ for every $x \in \Omega$ for which $\CDF_\mu(x) < c$.
    \end{itemize}
\end{definition}

\begin{observation} \label{obs:c-eps-approximation-as-eps-approx-of-c-truncate}
    Let $\mu$ be a distribution over $\Omega$. A $(c,\eps)$-approximation function $h$ can be seen as a $(1 \pm \eps)$-multiplicative approximation of a $c$-truncated function $f$ (that is, $h(x) \in (1 \pm \eps)f(x)$ for every $x \in \Omega$).
\end{observation}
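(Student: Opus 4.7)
The plan is to construct an explicit witness $f$ for the CDF $c$-truncation property so that $h(x) \in (1 \pm \eps) f(x)$ holds pointwise. Given the $(c,\eps)$-approximation function $h$, I would define $f(x) = \mu(x)$ whenever $\CDF_\mu(x) \ge c$, and in the low-CDF regime $\CDF_\mu(x) < c$ I would set $f(x) = 0$ if $h(x) = 0$ and $f(x) = \mu(x)$ otherwise. The construction is motivated by Definition \ref{def:c-eps-approximation-function}, which permits $h$ to either multiplicatively approximate $\mu(x)$ or take the value $0$, but only in the low-CDF region; the ``truncation choice'' in $f$ should simply mirror whichever option $h$ took.

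The verification splits into two steps. First, by construction $f$ satisfies Definition \ref{def:c-truncation-function}: $f$ agrees with $\mu$ in the high-CDF region, and in the low-CDF region $f(x) \in \{0, \mu(x)\}$. Second, for the multiplicative approximation, in the high-CDF region the containment $h(x) \in (1 \pm \eps)\mu(x) = (1 \pm \eps)f(x)$ is immediate from the definition of $h$. In the low-CDF region, if $h(x) \neq 0$ then Definition \ref{def:c-eps-approximation-function} forces $h(x) \in (1 \pm \eps)\mu(x)$, and our choice $f(x) = \mu(x)$ yields $h(x) \in (1 \pm \eps)f(x)$; if $h(x) = 0$, then $f(x) = 0$ by construction and we need $0 \in (1 \pm \eps) \cdot 0 = \{0\}$, which holds trivially.

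Since the proof is essentially an unwinding of the three definitions, I do not expect any real obstacle. The only subtle point is routing the edge case $h(x) = 0$ (in the low-CDF region) to $f(x) = 0$ rather than to $\mu(x)$, since otherwise the multiplicative approximation would fail when $\mu(x) > 0$. With that routing in place, the observation follows directly.
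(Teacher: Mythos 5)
Your proof is correct, and it matches the intended reasoning; the paper states this observation without proof precisely because it reduces to an unwinding of Definitions \ref{def:c-eps-approximation-function} and \ref{def:c-truncation-function}, with the only (minor) decision being the routing of the low-CDF case $h(x)=0$ to $f(x)=0$, which you handled correctly.
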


The following oracle definition is a restricted variation of the ``explicit sampler'' model of \cite{CFGM16}.

\begin{definition}[$r$-error $(c,\eps)$-explicit sampling oracle]
\label{def:r-lying-c-eps-explicit-sampling-oracle}
    Let $\mu$ be an input distribution over a set $\Omega$. The \emph{$r$-error $(c,\eps)$-explicit sampling oracle} for $\mu$ has no additional input, and outputs a pair $(x,p)$, where $x \in \Omega$ distributes like $\mu$ and with probability at least $1-r$:
    \begin{itemize}
        \item If $\CDF_\mu(x) \ge c$, then $p$ is in the range $(1 \pm \eps) \mu(x)$.
        \item If $\CDF_\mu(x) < c$, then $p$ is in the range $(1 \pm \eps) \mu(x) \cup \{0\}$.
    \end{itemize}

    The probability of error in the estimation of $\mu(x)$, as well as the distribution over the estimated value, is independent of these probabilities for other elements. The oracle guarantees \emph{consistency}, which means that if some element $y$ is drawn more than once, then all pairs of the form $(y,\cdot)$ have the same second entry.
\end{definition}

\begin{observation} \label{obs:r-lying-c-eps-explicit-decomposition}
    An $r$-error $(c,\eps)$-explicit sampling oracle can be seen as the following ensemble:
    \begin{itemize}
        \item A $(c,\eps)$-approximation function $g_\mathrm{truth} : \Omega \to [0,1]$.
        \item An arbitrary error function $g_\mathrm{err} : \Omega \to [0, 1]$.
        \item A random correctness vector $u \in \{0,1\}^\Omega$ whose entries are drawn independently, and for every $x \in \Omega$, $\Pr[u_x = 1] \ge 1 - r$.
        \item The estimation outcome of the drawn $x \sim \mu$ is $h(x) = g_\mathrm{truth}(x)$ if $u_x = 1$ and $h(x) = g_\mathrm{err}(x)$ if $u_x = 0$.
    \end{itemize}
    The function $g_\mathrm{truth}$ and $g_\mathrm{err}$ can be drawn from an arbitrary distribution over such functions.
\end{observation}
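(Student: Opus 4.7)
The plan is to extract the ensemble directly from the random function that the oracle implicitly commits to by virtue of its consistency requirement. By the consistency clause in Definition \ref{def:r-lying-c-eps-explicit-sampling-oracle}, the oracle's entire (possibly infinite) future behaviour can be described by first drawing, at ``initialization time'', a random function $h : \Omega \to [0,1]$ from some joint distribution determined by $\mu$, and then returning $(x, h(x))$ whenever $x \sim \mu$ is sampled. The per-element assertions of the definition about the estimation outcome translate coordinatewise into assertions about the marginals of $h$, independently of when and how often each $y \in \Omega$ is actually drawn.

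With $h$ in hand, the next step is to define the correctness vector by $u_y := \mathbf{1}[h(y) \text{ satisfies the two-case bound of Definition \ref{def:c-eps-approximation-function} at } y]$, for every $y \in \Omega$. The guarantees on the oracle, namely the per-element correctness with probability at least $1 - r$ and the explicitly stated independence of errors across elements, translate directly into $\Pr[u_y = 1] \ge 1 - r$ together with joint independence of the coordinates of $u$, exactly matching the third bullet of the ensemble.

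For the two functions in the ensemble I would set $g_\mathrm{truth}(y) := h(y)$ whenever $u_y = 1$, and fill in $g_\mathrm{truth}(y) := \mu(y)$ whenever $u_y = 0$; since $\mu(y) \in (1 \pm \eps)\mu(y)$ trivially, every realization of $g_\mathrm{truth}$ is a valid $(c,\eps)$-approximation function as required by the first bullet. Symmetrically, set $g_\mathrm{err}(y) := h(y)$ whenever $u_y = 0$ and let it take some arbitrary value in $[0,1]$ (say, $0$) otherwise; the second bullet only asks for an arbitrary function, so no further property is needed. Substituting these definitions back gives $h(x) = u_x \cdot g_\mathrm{truth}(x) + (1 - u_x) \cdot g_\mathrm{err}(x)$ for every $x$, so the oracle's output $(x, h(x))$ coincides exactly with the ensemble's prescribed outcome $h(x) = g_\mathrm{truth}(x)$ if $u_x = 1$ and $h(x) = g_\mathrm{err}(x)$ if $u_x = 0$.

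The only conceptual obstacle lies in the very first step, namely in justifying that consistency alone is enough to package the oracle into a single random function $h$ on all of $\Omega$ (rather than a sequence of call-by-call answers). This is essentially Kolmogorov's extension applied to the finite index set $\Omega$: consistency forces the joint distribution of answers on any finite set of queries to be a valid marginal, so a global distribution over functions $h \in [0,1]^\Omega$ reproducing those marginals exists; after that, the rest of the proof is bookkeeping.
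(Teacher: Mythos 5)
Your argument is correct and matches the reasoning the paper leaves implicit — the paper states this as an Observation without proof, and yours is the natural justification: consistency lets you pre-commit to a single random function $h:\Omega\to[0,1]$, the per-element correctness events (independent by the definition's stipulation) define $u$, and $g_\mathrm{truth}$, $g_\mathrm{err}$ are extracted from $h$ by filling in the unobserved coordinates with trivially valid values. Two minor remarks: (i) invoking Kolmogorov's extension is heavier machinery than needed here — $\Omega$ is finite, so the joint distribution of $(h(y))_{y\in\Omega}$ lives on a finite-dimensional product and exists outright from consistency; and (ii) your $g_\mathrm{truth}$ and $g_\mathrm{err}$ are random and correlated with $u$ (since each is defined by case-splitting on $u$), which is slightly at odds with the phrasing ``a $(c,\eps)$-approximation function'' and ``an arbitrary error function,'' but the observation as stated never claims independence from $u$, and the downstream uses (e.g.\ the proof of Lemma~\ref{lemma:learning-histogram-buckets-correct}) only need that $g_\mathrm{truth}$ is a valid $(c,\eps)$-approximation for every realization, that $\Pr[g_\mathrm{truth}(x)\neq h(x)]\leq r$, and that $u$ has independent coordinates — all of which your construction delivers.
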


The following oracle definition is a restricted variation of the ``sample and query'' model, first defined in \cite{RS09} as the evaluation oracle.

\begin{definition}[$(c,\eps)$-peek oracle] \label{def:c-eps-peek-oracle}
    Let $\mu$ be an input distribution over a finite set $\Omega$. The \emph{$(c,\eps)$-peek oracle} for $\mu$ gets an element $x \in \Omega$ and returns:
    \begin{itemize}
        \item An arbitrary real number in the range $(1 \pm \eps)\mu(x)$, if $\CDF_\mu(x) \ge c$.
        \item An arbitrary real number in the range $\{0\} \cup (1 \pm \eps)\mu(x)$, if $\CDF_\mu(x) < c$.
    \end{itemize}
    This definition is stricter than the definition commonly used in other works, in that the set of $x\in\Omega$ for which ``$0$'' is an allowable answer is fully determined by $\mu$ itself, rather than depending on artifacts (and at times probabilistic events) of the algorithm that simulates it.
    
    The oracle guarantees \emph{consistency}, which means that  if the algorithm makes more than one query to an element $x$ then it receives the same answer to all of them.
\end{definition}

\begin{observation} \label{obs:c-eps-peek-as-querying-c-eps-approximation}
    The $(c,\eps)$-peek oracle for a distribution $\mu$ can be seen as the querying of an arbitrarily (and possibly probabilistically) predefined $(c,\eps)$-approximation function (Definition \ref{def:c-eps-approximation-function}).
\end{observation}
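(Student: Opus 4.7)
The plan is to prove the two directions of the claimed equivalence. For the easy direction, given any $(c,\eps)$-approximation function $f$ as in Definition \ref{def:c-eps-approximation-function}, I would define an oracle that, on query $x\in\Omega$, simply returns $f(x)$. The two bullets of Definition \ref{def:c-eps-peek-oracle} are then immediate from the corresponding bullets of Definition \ref{def:c-eps-approximation-function}, and consistency is automatic since $f$ is a fixed function of $x$.

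For the non-trivial direction, I need to show that any implementation satisfying Definition \ref{def:c-eps-peek-oracle} can be replaced by one that answers according to a predefined function without changing its output distribution. First I would fix the oracle's internal randomness (if any), so that its answer on query $x$ becomes a deterministic function of the query history. Then I would define $f(x)$ to be the value the oracle returns when $x$ is the first queried element. The consistency clause of Definition \ref{def:c-eps-peek-oracle} forces every later query of $x$ to return $f(x)$ as well, independently of which other elements have been queried in between; this is because the oracle must commit to a single value for $x$ the first time it is seen, and by consistency it must return the same value thereafter. Averaging over the oracle's randomness, each $f(x)$ becomes a well-defined random variable over $\Omega$, and any fixing of this randomness yields a deterministic $f$ that reproduces the oracle's behavior on every query sequence.

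Finally, I would verify that this $f$ satisfies Definition \ref{def:c-eps-approximation-function}. For every $x$ with $\CDF_\mu(x)\geq c$, the first bullet of Definition \ref{def:c-eps-peek-oracle} yields $f(x)\in(1\pm\eps)\mu(x)$; for every $x$ with $\CDF_\mu(x)<c$, the second bullet yields $f(x)\in\{0\}\cup(1\pm\eps)\mu(x)$, matching exactly the two cases of a $(c,\eps)$-approximation function. For elements $x\in\Omega$ that happen never to be queried by any algorithm under consideration, I would simply extend $f$ by any legal value from the allowable set, which is always non-empty.

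The only mildly delicate point is the commitment argument in the second paragraph: one must recognize that the oracle receives no external information on which to adaptively base its answers, so the consistency requirement alone is enough to force existence of a predetermined function that reproduces its behavior. Once this is observed, the remainder is a direct term-by-term comparison of Definitions \ref{def:c-eps-peek-oracle} and \ref{def:c-eps-approximation-function}.
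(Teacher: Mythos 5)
The paper states this as an unproved observation; it is meant to be a direct unfolding of the definitions, and the burden is carried by the phrasing of Definition~\ref{def:c-eps-peek-oracle}, whose oracle ``gets an element $x\in\Omega$ and returns'' a value in a range depending only on $x$ and $\mu$. In other words the oracle's interface sees nothing but $x$, so after fixing its internal randomness its output is a function of $x$ alone; the consistency clause then forces it to be single-valued, and the range constraints make that function a $(c,\eps)$-approximation function. That is the whole content of the observation.

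Your write-up reaches the right conclusion, but the middle step as you phrase it is circular. You write that ``the consistency clause of Definition~\ref{def:c-eps-peek-oracle} forces every later query of $x$ to return $f(x)$ as well, independently of which other elements have been queried in between.'' Consistency only constrains repeated queries of the same element \emph{within} a single run; by itself it does not rule out an oracle with internal state whose answer to the first-ever query of $x$ depends on which other elements were queried before $x$. Your proposed $f(x)$ is defined from the empty-history query of $x$, and consistency alone does not tie an answer at a different history point back to that. You do acknowledge the ``delicate point'' in your closing paragraph, and the observation there --- that the oracle receives no side information on which to adapt --- is in fact the correct justification, but it should be attributed to the oracle's stateless interface (its only argument is $x$), not derived from the consistency requirement. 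With that attribution corrected, the rest of your argument (extending $f$ to unqueried elements by any value in the nonempty allowable range, and matching the two cases of Definition~\ref{def:c-eps-approximation-function} bullet by bullet) is fine and is exactly what makes the observation immediate.
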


\begin{restatable}[Amplification of testing]{observation}{obsZmajorityZamplificationZadZhoc} \label{obs:majority-amplification-ad-hoc}
    Assume that we have a decision test whose answer is correct with probability at least $5/8$. Then the majority answer of $3$ indpendent trials is correct with probability at least $2/3$ and the majority answer of $45$ independent trials is correct with probability at least $3/4$.
\end{restatable}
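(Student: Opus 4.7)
The plan is to handle the two amplification claims separately: the first by direct enumeration of the binomial, the second via the additive Chernoff bound already stated in the preliminaries. In both cases, the probability that the majority of $n$ independent trials is correct is a sum of upper-tail binomial terms, which is non-decreasing in the per-trial success probability $p$ (by stochastic domination). Hence it suffices to prove each bound at the worst case $p = 5/8$.

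For the $3$-trial majority, I would compute directly. The probability that at least two out of three independent trials succeed equals
\[
    3p^2(1-p) + p^3 = 3p^2 - 2p^3.
\]
Substituting $p = 5/8$ gives $3 \cdot (5/8)^2 \cdot (3/8) + (5/8)^3 = 225/512 + 125/512 = 350/512$, which is strictly greater than $2/3 = 341\tfrac{1}{3}/512$, establishing the first claim.

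For the $45$-trial majority, let $X \sim \mathrm{Bin}(45, 5/8)$, so that $\E[X] = 28.125$ in the worst case, and note that the majority is wrong precisely when $X \le 22$. Applying the additive Chernoff bound with $t = \E[X] - 22 = 6.125$ yields
\[
    \Pr[X \le 22] \;\le\; \exp\!\bigl(-2 \cdot 6.125^2 / 45\bigr) \;=\; \exp(-1.667\ldots) \;<\; 1/4,
\]
which is the desired bound. The only ``obstacle'' is numerical rather than conceptual: one must check that $n = 45$ is already large enough to push the tail past $1/4$ (a looser $n$ would force larger constants downstream), and that the monotonicity-in-$p$ reduction is applied correctly so the worst-case $p = 5/8$ is used throughout.
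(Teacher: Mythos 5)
Your proof is correct and follows essentially the same route as the paper: direct enumeration of the binomial for three trials, and the additive Chernoff bound for $45$ trials. The only cosmetic difference is that you bound $\Pr[X\le 22]$ with $t=6.125$ while the paper bounds $\Pr[X\le k/2]$ with $t=5.625$ (giving $e^{-45/32}<1/4$), but both thresholds suffice; your explicit mention of the monotonicity-in-$p$ reduction is also correct and is left implicit in the paper.
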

We provide a proof for Observation \ref{obs:majority-amplification-ad-hoc} in Appendix \ref{apx:tldr}.

\subsection{Learning of histograms}
\label{subsec:learning-histograms::of-sec:applications}

We first prove a generic lemma about a reduction from the $(c,\eps)$-explicit sampling model to the fully conditional model.

\begin{lemma} \label{lemma:generic-application-single-distribution}
    Consider an algorithm $\mathcal A$ whose input is a distribution $\mu$ over $\Omega$ of size $N$ and its output is an element of a set $R$, and whose access to $\mu$ consists of making at most $q$ calls to the $r$-error $(c,\eps)$-explicit sampling oracle.

    Assume that for every input $\mu$ there exists a set $R_\mu \subseteq R$ for which $\Pr\left[\mathcal A(\mu) \in R_\mu\right] > \frac{2}{3}$, where the probability is over a draw of the outcome sequence resulting from the algorithm's calls to a valid $r$-error $(c,\eps)$-explicit sampling oracle (along with the algorithm's internal randomness).

    In this setting, there exists an algorithm $\mathcal A'$ in the fully conditional model whose sample complexity is $O(q \cdot (\log \log N + 1 / \eps^4) \cdot \poly(\log r^{-1}, \log c^{-1}, \log \eps^{-1}))$, such that for every $\mu$, $\Pr\left[\mathcal A'(\mu) \in R_\mu\right] > \frac{2}{3}$.
\end{lemma}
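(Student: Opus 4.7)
The plan is to simulate the $r$-error $(c,\eps)$-explicit sampling oracle using conditional samples, and then run $\mathcal A$ on top of this simulation. To implement a single oracle call, I first draw $x$ by conditionally sampling on the full domain $\Omega$ (which is an unconditional draw from $\mu$); if $x$ was already returned by an earlier call, I reuse the cached estimate for consistency, and otherwise I invoke $\procnameZmainZsingleHREF(\mu,c,\eps;x)$ amplified by median-of-$O(\log r^{-1})$ independent trials so that the returned value is correct with probability at least $1-r$. A ``too low'' outcome is mapped to $0$, and the pair $(x,\hat p)$ is stored in a memoization table so that any future occurrence of the same $x$ returns the same $\hat p$.

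To establish correctness, I verify that the simulated access is a valid instance of Definition \ref{def:r-lying-c-eps-explicit-sampling-oracle}. The first component of each returned pair is distributed as $\mu$. By Theorem \ref{th:estimation-task} together with the median amplification, the second component lies in $(1\pm\eps)\mu(x)$ whenever $\CDF_\mu(x)\ge c$, and lies in $\{0\}\cup(1\pm\eps)\mu(x)$ otherwise, each with probability at least $1-r$. The errors across distinct elements are independent because each new element triggers a fresh, independent amplified invocation of $\procnameZmainZsingleHREF$, and memoization is precisely the consistency requirement of the oracle. Consequently, the joint distribution of answers seen by $\mathcal A$ during the simulation coincides with that produced by some valid $r$-error $(c,\eps)$-explicit sampling oracle, and so $\Pr[\mathcal A'(\mu)\in R_\mu]=\Pr[\mathcal A(\mu)\in R_\mu]>2/3$.

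For the complexity analysis, I sum the expected cost over the $q$ oracle calls. Each call contributes one unconditional sample to draw $x\sim\mu$ plus, for elements not already in the cache, $O(\log r^{-1})$ executions of $\procnameZmainZsingleHREF$. Because $x$ is distributed exactly as $\mu$, Corollary \ref{cor:expected-complexity-of::th:estimation-task} bounds the expected cost of a single execution of $\procnameZmainZsingleHREF$ by $O(\log\log N)+O(1/\eps^4)\cdot\poly(\log c^{-1},\log\eps^{-1})$. Multiplying by $q$ and by the $O(\log r^{-1})$ amplification factor, and absorbing $\log r^{-1}$ into the polylogarithmic factor, yields the claimed expected sample complexity.

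The main subtlety is to confirm that introducing memoization does not spoil the independence of the error events of distinct elements: since each cached answer is produced by a single, independent amplified invocation using fresh internal randomness, the per-element error events remain mutually independent, exactly as the oracle definition requires. Repeated queries for the same element are then served from the cache, which is the intended consistency behavior rather than a source of spurious dependence.
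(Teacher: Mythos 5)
Your proposal is correct and follows essentially the same approach as the paper's proof: draw $x\sim\mu$ unconditionally, memoize to enforce consistency, amplify \procnameZmainZsingleHREF by a median of $O(\log r^{-1})$ independent runs to drive the per-element error below $r$, argue independence of error events across distinct elements, and bound the expected cost via Corollary~\ref{cor:expected-complexity-of::th:estimation-task} since the queried element is drawn from $\mu$. Your write-up is somewhat more explicit than the paper's (it spells out the mapping of ``too low'' to $0$ and verifies each clause of Definition~\ref{def:r-lying-c-eps-explicit-sampling-oracle}), but the underlying argument and decomposition are the same.
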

\begin{proof}
    Without loss of generality, we assume that the algorithm draws exactly $q$ samples $x_1,\ldots,x_q \sim \mu$ and receives $p_1,\ldots,p_q$ such that in expectation the fraction of errors is at most $r$.
    
    We run $\mathcal A$ and simulate the outcome estimation of the explicit sampling oracle while keeping ``history records''. In the $i$th call to the explicit sampling oracle we:
    \begin{itemize}
        \item Draw $x_i \sim \mu$, independent of past calls.
        \item Check whether $x_i = x_j$ for some $j \le i-1$. If such a $j$ exists, then we re-use the estimation for $\mu(x_j)$.
        \item If $x_i \notin \{x_1,\ldots,x_{i-1}\}$, then we use the median of $\ceil{30 \ln r^{-1}}$ independent calls to \procnameZmainZsingleHREF (Theorem \ref{th:estimation-task}) with parameters $(c, \eps)$.
    \end{itemize}

    Clearly, the sequence $x_1,\ldots,x_q$ is independently and identically distributed like $\mu$. By Observation \ref{obs:median-amplification}\ref{median-amplification:log-to-2c}, the probability to wrongly estimate an individual $\mu(x_i)$ (that is eligible for estimation) is bounded by $\frac{1}{2}r$, independently for every distinct $x_i$. Hence, we fully simulate the $r$-error oracle without any additional error.
    
    By Corollary \ref{cor:expected-complexity-of::th:estimation-task}, the expected complexity of a single estimation of $x_i \sim \mu$ is $O(\log \log N) + \poly(\log \eps^{-1}) \cdot O\left(\frac{\log^2 c^{-1}}{\eps^2} +  \frac{\log c^{-1}}{\eps^4} \right)$. We repeat this $\ceil{30 \ln r^{-1}}$ times for amplification for every $1 \le i \le q$. Overall, the expected sample complexity is bounded by \[O(q \cdot (\log \log N + 1/\eps^4) \cdot \poly(\log r^{-1}, \log c^{-1}, \log \eps^{-1})) \qedhere\]
\end{proof}

Most of this subsection is dedicated to an algorithm for $\eps$-learning the histogram of $\mu$ at the cost of $O(\log(N/\eps) / \eps^3)$ explicit samples. It works using the bucketing technique of \cite{batuFFKRW2001}: instead of considering the distribution itself, we consider a ``lower resolution picture'' that results from categorizing the possible values of $\mu(x)$ by powers of $(1-O(\eps))$ that are close to them.

We start by stating a folklore lemma for learning a distribution over a small domain.

\begin{lemma}[Folklore]\label{lemma:learn-simple-distribution}
    Let $\mu$ be a distribution over $\{1,\ldots,n\}$ for $n \ge 16$. Assume that we construct a distribution $\mu'$ over $\{1,\ldots,n\}$ as follows: we draw $q$ independent samples from $\mu$, for every $1 \le i \le n$ let $X_i$ be the random variable counting the number of occurrences of $i$ in our samples, and let $\mu'(i) = X_i / q$. If $q \ge n/\eps^2$, then with probability at least $8/9$, $\dtv(\mu,\mu') \le \eps$ and $\mu'(i) \le 2 \max\{\eps^2, \mu(i)\}$ for every $1 \le i \le n$.
\end{lemma}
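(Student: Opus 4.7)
The plan is to establish the two conclusions separately --- the total variation guarantee first, then the per-element cap --- each via a direct concentration argument, and then union-bound the two failure events.

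For the total variation bound, I would write $\dtv(\mu,\mu') = \frac{1}{2q}\sum_i |X_i - q\mu(i)|$ with $X_i \sim \Bin(q,\mu(i))$, bound $\E|X_i - q\mu(i)| \le \sqrt{\Var(X_i)} \le \sqrt{q\mu(i)}$ by Jensen's inequality, and combine over $i$ via Cauchy--Schwarz to obtain $\E[\dtv(\mu,\mu')] \le \frac{1}{2\sqrt{q}}\sum_i \sqrt{\mu(i)} \le \frac{1}{2}\sqrt{n/q} \le \eps/2$ under the hypothesis $q \ge n/\eps^2$. For concentration around this expectation, the key observation is that swapping a single one of the $q$ samples affects exactly two of the counts $X_i$ by $\pm 1$, so it changes $\dtv(\mu,\mu')$ by at most $1/q$. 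McDiarmid's bounded-difference inequality then yields $\Pr[\dtv(\mu,\mu') - \E[\dtv(\mu,\mu')] \ge \eps/2] \le e^{-q\eps^2/2} \le e^{-n/2}$.

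For the per-element bound, I would handle each coordinate by one multiplicative Chernoff application. When $\mu(i) \ge \eps^2$, the event $\mu'(i) > 2\mu(i)$ is exactly $X_i > 2\E[X_i]$ and has probability at most $e^{-\E[X_i]/3} \le e^{-q\eps^2/3}$. When $\mu(i) < \eps^2$, I would introduce a stochastically-dominating variable $X_i' \sim \Bin(q,\eps^2)$ coupled so that $X_i \le X_i'$ almost surely, and apply the same Chernoff bound to $X_i'$; in both cases the per-element failure probability is at most $e^{-q\eps^2/3} \le e^{-n/3}$. A union bound over the $n$ coordinates caps the overall failure of the pointwise property by $n e^{-n/3}$.

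The main obstacle is purely numerical: to check that the combined failure probability $e^{-n/2} + n e^{-n/3}$ is below $1/9$ throughout the stipulated range $n \ge 16$. Since $n \mapsto n e^{-n/3}$ is decreasing for $n \ge 3$, it suffices to verify the inequality at $n=16$, where a direct computation gives $e^{-8} + 16\, e^{-16/3} < 0.08$, comfortably below $1/9$, completing the proof.
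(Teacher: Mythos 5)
Your proof is correct, but it reaches the total-variation bound by a genuinely different route than the paper. The paper bounds $\dtv(\mu,\mu')$ via its supremum characterization $\sup_E \abs{\mu(E)-\mu'(E)}$, applies an additive Chernoff bound to each of the $2^n$ events $E$ to get a per-event failure probability below $2^{-n-5}$, and then union-bounds over all $2^n$ events; this costs only Chernoff but spends the whole $n$-dependence on a crude ``count the subsets'' argument. You instead work with the $\ell_1$ formula $\frac{1}{2q}\sum_i \abs{X_i - q\mu(i)}$, control the mean via Jensen and Cauchy--Schwarz (obtaining the essentially tight $\frac{1}{2}\sqrt{n/q}$), and then get concentration around that mean from McDiarmid with $1/q$ bounded differences. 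Your route buys a cleaner separation between ``the expected error is small'' and ``the error concentrates,'' and gives a much smaller failure probability at the same $n$; the cost is a somewhat heavier toolbox (bounded-differences inequality rather than a single Chernoff bound). For the per-element cap you and the paper are doing the same thing: a multiplicative Chernoff bound at the parameter $p_i=\max\{\eps^2,\mu(i)\}$ (the paper's monotonicity step is exactly your stochastic-domination coupling, just left implicit), followed by a union bound over $n$ coordinates and a numerical check that the total is below $1/9$ at $n=16$.
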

\begin{proof}
    Let $E \subseteq \{1,\ldots,n\}$ be an arbitrary event. By the construction, $\mu'(E) = \frac{1}{q} \sum_{i \in E} X_i$. By Chernoff bound,
    \[  \Pr\left[\abs{\mu'(E) - \mu(E)} > \eps\right]
        = \Pr\left[\Bin(q,\mu(E)) \ne \mu(E)q \pm \eps q\right]
        \le 2e^{-2 \eps^2 q}
        \le 2e^{-2 n}
        < 2^{-n-5} \]
    
    By the union bound, the probability to deviate even once is bounded by $2^n \cdot 2^{-n-5} \le \frac{1}{32}$. Hence, with probability at least $31/32$, $\dtv(\mu,\mu') = \sup_E \abs{\mu(E) - \mu'(E)} \le \eps$.

    Additionally, consider $1 \le i \le n$ and let $p_i = \max\{\eps^2, \mu(i)\}$. By Chernoff bound,
    \[  \Pr\left[\mu'(i) > 2p_i\right]
        \le \Pr\left[\Bin(q,p_i) > 2p_i q\right]
        \le e^{-\frac{1}{3} p_i q}
        \le e^{-\frac{1}{3} \eps^2 (n/\eps^2)}
        = e^{-n/3} \]
    By the union bound, the probability that $\mu'(i) \le 2\max\{ \eps^2, \mu(i) \}$ for any $1 \le i \le n$ is bounded by $n \cdot e^{-n/3} \le \frac{1}{12.9}$.

    By the union bound, the probability of any ``bad event'' occurring is at most $\frac{1}{32} + \frac{1}{12.9} \le \frac{1}{9}$.
\end{proof}

Before we present the learning algorithm, we formally define the histogram buckets we wish to learn.

\begin{definition}[Bucket function] \label{def:bucket-function}
    Let $\eps > 0$, $t \ge 2 + \ln(N / \eps^2) / \eps$, and $\mu$ be a distribution over $\Omega$ of size $N$. A function $f : \Omega \to \{1,\ldots,t;\infty\}$ is a \emph{bucket function} if for every $x \in \Omega$:
    \begin{itemize}
        \item If $\mu(x) > e^{-\eps (t - 2)}$ and $\CDF_\mu(x) \ge \eps$, then $\mu(x) \in e^{-\eps (f(x) \pm 2)}$.
        \item Otherwise, $f(x) = \infty$ or $\mu(x) \in e^{-\eps (f(x) \pm 2)}$.
    \end{itemize}
\end{definition}

\begin{lemma} \label{lemma:eps-eps-approx-to-2eps-bucket-function}
    Let $h$ be an $(\eps,\eps)$-approximation function of $\mu$. The function $f(x) \!=\! \ceil{-\ln h(x) / (2\eps)}$, where values larger than $t$ are mapped to $\infty$, is a $2\eps$-bucket function of $\mu$.
\end{lemma}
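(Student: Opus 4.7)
The plan is to directly verify both clauses of Definition \ref{def:bucket-function} (applied with parameter $2\eps$) by a short computation. The key analytic input is that for sufficiently small $\eps$ we have $|\ln(1\pm\eps)| \le 2\eps$, so the multiplicative $(1\pm\eps)$-error on $h(x)$ translates into an additive error of at most $1$ in the raw quantity $-\ln h(x)/(2\eps)$, and the ceiling operation then inflates this to an additive error of at most $2$ in $f(x)$.

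First I would handle the case $\CDF_\mu(x) \ge 2\eps$. Since $2\eps \ge \eps$, the definition of an $(\eps,\eps)$-approximation function (Definition \ref{def:c-eps-approximation-function}) forces $h(x) \in (1 \pm \eps)\mu(x)$, and in particular $h(x) > 0$, so $f(x)$ is finite before any truncation. Writing $h(x) = (1+\delta)\mu(x)$ with $|\delta| \le \eps$, taking logarithms and dividing by $2\eps$ yields
\[
-\ln h(x)/(2\eps) \;=\; -\ln \mu(x)/(2\eps) \;-\; \ln(1+\delta)/(2\eps),
\]
where the absolute value of the last term is at most $1$. Applying the ceiling and rearranging gives the target inclusion $\mu(x) \in e^{-2\eps(f(x) \pm 2)}$. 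If additionally $\mu(x) > e^{-2\eps(t-2)}$, substituting this bound back shows $f(x) \le t$, so $f(x)$ is not truncated to $\infty$; this is precisely the first clause of the bucket function definition.

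For the remaining case $\CDF_\mu(x) < 2\eps$, either $h(x) = 0$, in which case $f(x) = \infty$ and we land in the ``otherwise'' clause, or $h(x) \in (1\pm\eps)\mu(x)$ and the same computation as before applies (possibly followed by truncation of $f(x)$ to $\infty$ when $\mu(x)$ is very small). Either way the ``otherwise'' clause of Definition \ref{def:bucket-function} is satisfied.

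The only real work is careful bookkeeping of the $\pm 2$ margin produced by the combination of the $(1\pm\eps)$ multiplicative error and the ceiling; there is no conceptual obstacle. This lemma is essentially a routine arithmetic bridge between the output of the core estimator (via its $(c,\eps)$-approximation interface) and the histogram-learning algorithm built on top of it.
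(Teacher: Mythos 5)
Your proof is correct and takes essentially the same route as the paper's: both split the allowed $\pm 2$ bucket shift between the $\pm 1$ rounding from the ceiling and the multiplicative $(1\pm\eps)$ error on $h(x)$, and both implicitly assume $\eps$ small enough (e.g.\ $\eps\le 1/2$) so that $|\ln(1\pm\eps)|\le 2\eps$, which holds here since the lemma is invoked with $\hat\eps<1/27$. The only cosmetic difference is ordering: you convert the multiplicative error to an additive shift on the scale $-\ln(\cdot)/(2\eps)$ before rounding, whereas the paper applies the ceiling bounds $\frac{\ln h(x)}{2\eps}-1\le -f(x)\le\frac{\ln h(x)}{2\eps}$ first and then pads with the $e^{-4\eps}h(x)\le\mu(x)\le e^{2\eps}h(x)$ margins; the resulting inclusions are the same.
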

\begin{proof}
    For every $x \in \Omega$ for which $\mu(x) > e^{-2\eps (t - 2)}$ and $\CDF(x) \ge \eps$ (noting that such a value is never mapped to the $\infty$-bucket):
    \begin{eqnarray*}
        \frac{\ln h(x)}{2\eps} - 1 \le &-f(x)& \le \frac{\ln h(x)}{2\eps} \\
        e^{-2\eps (f(x) + 2)} \le e^{\ln h(x) - 4\eps} = e^{-4\eps} h(x) \le &\mu(x)& \le e^{2\eps} h(x) = e^{\ln h(x) + 2\eps} \le e^{-2\eps (f(x) - 2)}
    \end{eqnarray*}
\end{proof}

\begin{definition}[Bucket distribution] \label{def:bucket-distribution}
    Let $\eps > 0$, $t \ge 2 + \ln (N / \eps^2) / \eps$, $\mu$ be a distribution over $\Omega$ and $f : \Omega \to \{1,\ldots,t;\infty\}$ be an $\eps$-bucket function. The \emph{bucket distribution} of $\mu$ with respect to $f$ is the distribution $\mu_f$ over $\{1,\ldots,t;\infty\}$ for which $\mu_f(i) = \Pr_{x \sim \mu}[f(x) = i]$.
\end{definition}

\begin{definition}[Bucket-transform $T_{\eps,\eps'}$]
    Let $\eps' \ge 2\eps$. The \emph{bucket transform from $\eps$ to $\eps'$}, $T_{\eps,\eps'} : \mathbb N \cup \{\infty\} \to \mathbb N \cup \{\infty\}$, maps $\infty$ to itself and every $i \in \mathbb N$ to $\ceil{\frac{\eps}{\eps'}i}$.
\end{definition}

\begin{lemma} \label{lemma:bucket-function-parameter-change}
    Let $f$ be an $\eps$-bucket function of a distribution $\mu$ with respect to some $t \ge 2 + \ln (N / \eps^2) / \eps$. For $\eps' \ge 2 \eps$ and $t' = 2 + \floor{\frac{\eps}{\eps'}(t - 2)}$, the function $g : \Omega \to \{1,\ldots,t';\infty\}$ defined as $g(x) = T_{\eps,\eps'}(f(x))$ is an $\eps'$-bucket function of $\mu$ with respect to $t'$.
\end{lemma}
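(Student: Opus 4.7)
The plan is to perform a case analysis following the two clauses of Definition~\ref{def:bucket-function}, after first establishing a single arithmetic translation between $\eps$-buckets and $\eps'$-buckets. Before the case analysis I would quickly verify that $g$ lands in $\{1,\ldots,t';\infty\}$: the infinity case is immediate from $T_{\eps,\eps'}$'s definition, and for $f(x)\le t$ the bound $g(x) = \ceil{(\eps/\eps')f(x)} \le t'$ follows because $\eps' \ge 2\eps$ makes $2\eps/\eps' \le 1$, so $\ceil{(\eps/\eps')(t-2) + 2\eps/\eps'} \le \floor{(\eps/\eps')(t-2)} + 2 = t'$. I would also note in one line that $t' \ge 2 + \ln(N/(\eps')^2)/\eps'$, using $\eps'\ge \eps$ to weaken the logarithm and $\eps' \ge 2\eps$ to absorb the loss from the floor.

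The arithmetic core is the chain $\eps f(x) \le \eps' g(x) \le \eps f(x) + \eps'$, obtained by applying $y \le \ceil{y} < y+1$ to $y = (\eps/\eps') f(x)$ and multiplying by $\eps'$. From this, any hypothesis of the form $\mu(x)\in e^{-\eps(f(x)\pm 2)}$ translates into $\mu(x) \in e^{-\eps'(g(x)\pm 2)}$: the upper bound $\mu(x)\le e^{-\eps f(x)+2\eps}$ becomes $e^{-\eps'g(x)+\eps'+2\eps} \le e^{-\eps'(g(x)-2)}$ since $\eps'+2\eps\le 2\eps'$, and the lower bound $\mu(x)\ge e^{-\eps f(x)-2\eps} \ge e^{-\eps'(g(x)+2)}$ is immediate.

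The case analysis is then mechanical. For the first clause of the $\eps'$-bucket definition, the hypotheses $\mu(x) > e^{-\eps'(t'-2)}$ and $\CDF_\mu(x)\ge \eps'$ imply the corresponding $\eps$-hypotheses (using $\eps'(t'-2) = \eps'\floor{(\eps/\eps')(t-2)} \le \eps(t-2)$ and $\eps'\ge \eps$), so the $\eps$-bucket property of $f$ yields $\mu(x)\in e^{-\eps(f(x)\pm 2)}$, and the translation above finishes the job. For the ``otherwise'' clause, either $f(x)=\infty$ (so $g(x)=\infty$ and the clause is met) or the $\eps$-bucket property of $f$ still gives $\mu(x)\in e^{-\eps(f(x)\pm 2)}$, which translates as before. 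The only subtle point is verifying that the single extra $+\eps'$ introduced by the ceiling in $T_{\eps,\eps'}$ is absorbed by the $\pm 2$ allowance in the exponent, which is precisely where the hypothesis $\eps' \ge 2\eps$ is used.
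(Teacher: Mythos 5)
Your proof is correct and follows essentially the same approach as the paper's: both verify the validity condition $t' \ge 2 + \ln(N/(\eps')^2)/\eps'$ and the containment $\eps'(t'-2) \le \eps(t-2)$, then translate the $\pm 2$ bucket guarantee from $f$ to $g$ via the ceiling inequality, with the hypothesis $\eps' \ge 2\eps$ precisely absorbing the extra $+\eps'$ error. The only cosmetic difference is that the paper phrases the arithmetic as $g(x) \in (\eps')^{-1}\ln\frac{1}{\mu(x)} \pm (2\eps/\eps' + 1) \subseteq (\eps')^{-1}\ln\frac{1}{\mu(x)} \pm 2$, while you write out the multiplicative chain $\eps f(x) \le \eps' g(x) \le \eps f(x) + \eps'$; these are the same computation.
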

\begin{proof}
    For validity, observe that $\ln \left((\eps'/\eps)^2\right) / \eps' \ge 1$ and hence we can obtain:
    \[
        t' = 2 + \floor{\frac{\eps}{\eps'}(t - 2)}
        \ge 1 + \frac{\eps}{\eps'} \cdot \frac{\ln (N/\eps^2)}{\eps}
        = 1 + \frac{\ln (N/(\eps')^2) + \ln \left((\eps'/\eps)^2\right) }{\eps'}
        \ge 2 + \frac{\ln (N/(\eps')^2)}{\eps'}
    \]

    Also, observe that $\eps'(t' - 2) \le \eps' \cdot \frac{\eps}{\eps'} (t - 2) = \eps (t - 2)$ and that $t' \ge \ceil{\frac{\eps}{\eps'}t}$.
    
    Consider $x$ for which $f(x) \ne \infty$. By definition of $f$, $f(x) \in \eps^{-1} \ln \frac{1}{\mu(x)} \pm 2$. By definition of $g$, $g(x) = \frac{\eps}{\eps'} \cdot \left(\eps^{-1} \ln \frac{1}{\mu(x)} \pm 2\right) \pm 1 = (\eps')^{-1} \ln \frac{1}{\mu(x)} \pm \left(2\frac{\eps}{\eps'} + 1\right) \subseteq (\eps')^{-1} \ln \frac{1}{\mu(x)} \pm 2$.

    Consider $x$ for which $f(x) = \infty$ (and hence $g(x) = \infty$ as well). If $\CDF_\mu(x) \le \eps$ then $\CDF_\mu(x) \le \eps'$ as well. Otherwise, $\mu(x) \le e^{-\eps(t-2)}$. In this case, by the constraint of $t'$, $\mu(x) \le e^{-\eps'(t'-2)}$ as well.
\end{proof}

\begin{lemma} \label{lemma:bucket-histogram-close}
    Let $\eps > 0$, $N \ge 1$, $t \ge 2 + \ln (N/\eps^2) / \eps$. Let $\mu$, $\tau$ be two distributions over $\Omega$ of size $N$ and $f_\mu, f_\tau : \Omega \to \{1,\ldots,t;\infty\}$ be $\eps$-bucket functions for $\mu$ and $\tau$ respectively. Let $\mu_{f_\mu}$ and $\tau_{f_\tau}$ the bucket distributions corresponding to $(\mu,f_\mu)$ and $(\tau,f_\tau)$ respectively. In this setting, $\dhist(\mu ; \tau) \le \eps'$ for $\eps' = \dtv(\mu_{f_\mu}, \tau_{f_\tau}) + 5\eps + 16\eps^2$.
\end{lemma}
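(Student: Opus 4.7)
The plan is to construct a permutation $\pi$ that matches elements bucket-by-bucket. For each bucket index $i \in \{1,\ldots,t\}$, let $S_\mu^i = \{x : f_\mu(x) = i\}$ and $S_\tau^i = \{y : f_\tau(y) = i\}$. I would sort $S_\mu^i$ in decreasing order of $\mu$-mass and match the heaviest $\min(|S_\mu^i|,|S_\tau^i|)$ elements of $S_\mu^i$ with (arbitrary) elements of $S_\tau^i$. The ``heaviest-first'' rule is essential: it ensures that only light elements are left unmatched inside each bucket. Let $L_\mu^i \subseteq S_\mu^i$ and $L_\tau^i \subseteq S_\tau^i$ denote the unmatched residues; since $\sum_i |L_\mu^i| = \sum_i |L_\tau^i|$, one can extend $\pi$ to a full bijection on $\Omega$ by pairing residues arbitrarily.

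For any $x \in S_\mu^i \setminus L_\mu^i$ with $i$ finite, we have $f_\tau(\pi(x)) = i = f_\mu(x)$, so by the bucket-function property both $\mu(x)$ and $\tau(\pi(x))$ lie in $e^{-\eps(i\pm 2)}$, giving $\mu(x) \in e^{\pm 4\eps}\tau(\pi(x))$. The elementary bound $e^x - 1 \le x + x^2$ for $x \le 1$ yields $e^{4\eps} - 1 \le 4\eps + 16\eps^2 \le \eps'$, so $\mu(x) \in (1 \pm \eps')\tau(\pi(x))$. Hence the only $x$'s that can violate the $(1\pm\eps')$ condition are the ``bad'' ones: elements of some $L_\mu^i$ for finite $i$, plus all of $S_\mu^\infty$.

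The main calculation is bounding the total $\mu$-mass of bad elements. For a finite $i$ with $|S_\mu^i| > |S_\tau^i|$, each matched (heaviest) element has mass at least $e^{-\eps(i+2)}$, and $|S_\tau^i| \ge \tau_{f_\tau}(i)\, e^{\eps(i-2)}$, so the matched portion has $\mu$-mass at least $\tau_{f_\tau}(i)\, e^{-4\eps}$. Therefore $\mu(L_\mu^i) \le \max\{0,\mu_{f_\mu}(i) - \tau_{f_\tau}(i)\,e^{-4\eps}\}$, and summing over $i$ (using $\sum_i \max\{0,\mu_{f_\mu}(i)-\tau_{f_\tau}(i)\} = \dtv(\mu_{f_\mu},\tau_{f_\tau})$ and $1 - e^{-4\eps} \le 4\eps$) gives $\sum_i \mu(L_\mu^i) \le \dtv(\mu_{f_\mu},\tau_{f_\tau}) + 4\eps$. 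For the $\infty$-bucket, the definition forces $\mu(x) \le e^{-\eps(t-2)}$ or $\CDF_\mu(x) < \eps$; the former contributes at most $N \cdot e^{-\eps(t-2)} \le \eps^2$ using the assumption $t \ge 2 + \ln(N/\eps^2)/\eps$, and the latter contributes at most $\eps$ by definition of CDF. Combining, the total bad mass is at most $\dtv(\mu_{f_\mu},\tau_{f_\tau}) + 5\eps + \eps^2 \le \eps'$, which is precisely what is needed for $\dhist(\mu;\tau) \le \eps'$.

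The main obstacle is simply the careful bookkeeping that reconciles two distinct error sources against the \emph{same} parameter $\eps'$: the multiplicative slack inherent to bucket width (the $e^{4\eps}-1$ factor, absorbed into the $16\eps^2$ summand) must match the additive slack coming from mismatched bucket populations (the $4\eps$ from $1 - e^{-4\eps}$ and the $\eps+\eps^2$ from the $\infty$-bucket, absorbed into the $5\eps$ summand). Once one chooses the heaviest-first matching, everything follows from elementary estimates; the only ``trick'' is bounding $\mu(L_\mu^i)$ by comparing to the $\tau$-mass of the bucket rather than the $\mu$-mass, which is what allows the total to telescope to $\dtv(\mu_{f_\mu},\tau_{f_\tau})$.
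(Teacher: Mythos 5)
Your proof is correct and follows essentially the same approach as the paper: both construct a bucket-by-bucket matching $\pi$, classify the unmatched elements and the $\infty$-bucket as the ``bad'' set, and bound its $\mu$-mass by $\dtv(\mu_{f_\mu},\tau_{f_\tau})$ plus bucket-width slack (you lower-bound the matched mass, the paper upper-bounds the unmatched mass by cardinality times maximal per-element mass --- dual estimates that give the same result). One small correction: the ``heaviest-first'' matching rule you call essential is actually superfluous, since \emph{every} element of $S^i_\mu$ with finite $i$ already satisfies $\mu(x)\ge e^{-\eps(i+2)}$ by the bucket-function property, so your lower bound on the matched mass holds for an arbitrary choice of which $|S^i_\tau|$ elements get matched, which is exactly what the paper does.
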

\begin{proof}

    For every $i \in \{1,\ldots,t;\infty\}$, let $B^\mu_i = \{ x \in \Omega : f_\mu(x) = i \}$ and $B^\tau_i = \{ x \in \Omega : f_\tau(x) = i \}$. Also, let $(L^\mu_i, R^\mu_i)$ (resp. $(L^\tau_i, R^\tau_i)$) be a partition of $B^\mu_i$ (resp. $B^\tau_i$) for which $\abs{L^\mu_i} = \min\left\{\abs{B^\mu_i}, \abs{B^\tau_i}\right\}$ (resp. $\abs{L_i^\tau}=\min\left\{\abs{B^\mu_i}, \abs{B^\tau_i}\right\}$). Let $L^\mu = L^\mu_\infty \cup \bigcup_{i=1}^t L^\mu_i$, and analogously define $R^\mu$, $L^\tau$, $R^\tau$ as the corresponding unions.

    Let $\pi$ be a permutation over $\Omega$ such that for every $i \in \{1,\ldots,t;\infty\}$, $\pi$ maps $L^\mu_i$ onto $L^\tau_i$, and also maps $R^\mu$ onto $R^\tau$. Such a permutation exists since $\abs{L_i^\mu} = \abs{L_i^\tau}$ for every $i \in \{1,\ldots,t;\infty\}$ and $\abs{R^\mu} = N -\abs{L^\mu} = N - \abs{L^\tau} = \abs{R^\tau}$.

    Clearly, elements in $L^\mu \setminus L^\mu_\infty$ are mapped to elements in the same bucket, hence $\frac{\mu(x)}{\tau(\pi(x))} \in e^{\pm 4\eps} = 1 \pm (4\eps + 16\eps^2)$ for such elements. The mass of the other elements is bounded by:
    \[
        \mu(B^\mu_\infty) + \sum_{i=1}^t \mu(R^\mu_i)
        \le \mu(B^\mu_\infty) + \sum_{i=1}^t e^{-\eps (i-2)} \max\left\{0, \abs{B^\mu_i} - \abs{B^\tau_i}\right\}
    \]
    
    For the first part: every element with $\mu(x) \le e^{-\eps(t-2)}$ has $\CDF_\mu(x) \le N\cdot e^{-\eps(\ln (N/\eps^2) / \eps)} = \eps^2 < \eps$, hence for $B^\mu_\infty$ it suffices to only consider elements with $\CDF_\mu(x) \le \eps$. Their mass is bounded by $\eps$ due to the definition of $\CDF_\mu$. For the second part:
    \begin{eqnarray*}
        \sum_{i=1}^t e^{-\eps (i-2)} \max\left\{0, \abs{B^\mu_i} - \abs{B^\tau_i}\right\}
        &\le& \sum_{i=1}^t e^{-\eps (i-2)} \max\left\{0, e^{\eps (i+2)} \mu(B^\mu_i) - e^{\eps (i-2)} \tau(B^\tau_i)\right\} \\
        &=& \sum_{i=1}^t \max\left\{0, e^{4\eps} \mu(B^\mu_i) - \tau(B^\tau_i)\right\} \\
        &=& \sum_{i=1}^t \max\left\{0, \mu(B^\mu_i) - \tau(B^\tau_i) + \left(e^{4\eps} - 1\right) \mu(B^\mu_i)\right\} \\
        &\le& \sum_{i=1}^t \max\left\{0, \mu(B^\mu_i) - \tau(B^\tau_i)\right\} + \left(e^{4\eps} - 1\right) \sum_{i=1}^t \tau(B^\tau_i) \\
        &\le& \dtv(\mu_{f_\mu}, \tau_{f_\tau}) + (4\eps + 16\eps^2)
    \end{eqnarray*}

    Overall, $\Pr_\mu\left[\mu(x) \notin (1 \pm (4\eps + 16\eps^2))\tau(x)\right] \le \dtv(\mu_{f_\mu}, \tau_{f_\tau}) + 5\eps + 16\eps^2$.
\end{proof}

\begin{lemma} \label{lemma:shitty-integer-optimization}
    Let $N \ge 1$, $\eps > 0$ and $t \ge 2+ \ln(N/4\eps^2) / 2\eps$. Let $\mu$ be a distribution over $\Omega$ and let $f_\mu$ be a $2\eps$-bucket function with respect to $\mu$ and $t$. Let $\nu$ be a distribution over $\{1,\ldots,t;\infty\}$ for which $\dtv(\mu_{f_\mu}, \nu) \le 6 \eps$. In this setting, the following constraint problem is algorithmically solvable given full access to $t$, $\eps$, $N$ and $\nu$:
    \begin{itemize}
        \item $\sum_{i=1}^t N_i \le N$.
        \item $\sum_{i=1}^t N_i p_i \le 1$.
        \item $\sum_{i=1}^t \abs{N_i p_i - \nu(i)} \le 12 \eps$.
        \item For every $1 \le i \le t$: $N_i \ge 0$ is an integer.
        \item For every $1 \le i \le t$: $e^{-2\eps(i+2)} \le p_i \le e^{-2 \eps(i-2)}$.
    \end{itemize}
\end{lemma}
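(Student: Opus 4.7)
The plan is to prove Lemma \ref{lemma:shitty-integer-optimization} in two stages: first establish that the constraint system is satisfiable by exhibiting an explicit witness derived from the (unknown) distribution $\mu$ itself, and then observe that this existence, combined with the finiteness of the integer search space and the linearity of the remaining constraints, yields algorithmic solvability via enumeration plus linear-program feasibility checks.

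For the existence witness I would take the canonical assignment induced by $f_\mu$. For each $i \in \{1,\ldots,t\}$, let $B_i^\mu = \{x \in \Omega : f_\mu(x) = i\}$, set $N_i^\star = |B_i^\mu|$, and set $p_i^\star = \mu(B_i^\mu)/N_i^\star$ when $N_i^\star > 0$ (choosing $p_i^\star$ arbitrarily in its allowed interval otherwise). I would then verify the five constraints: the count bound $\sum N_i^\star = |\Omega \setminus B_\infty^\mu| \le N$ is immediate; each $p_i^\star$ is an average over $\mu(x)$ for $x \in B_i^\mu$, and Definition \ref{def:bucket-function} applied with parameter $2\eps$ forces every such $\mu(x)$ to lie in $[e^{-2\eps(i+2)}, e^{-2\eps(i-2)}]$, so the average does too; $\sum_{i=1}^t N_i^\star p_i^\star = \sum_{i=1}^t \mu(B_i^\mu) = 1 - \mu(B_\infty^\mu) \le 1$; and finally $\sum_{i=1}^t |N_i^\star p_i^\star - \nu(i)| = \sum_{i=1}^t |\mu_{f_\mu}(i) - \nu(i)| \le 2\,\dtv(\mu_{f_\mu}, \nu) \le 12\eps$.

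For algorithmic solvability, I note that the algorithm has access only to $(t,\eps,N,\nu)$ and cannot use the witness directly; however, the integer variables $(N_1,\ldots,N_t)$ range over the finite set $\{0,\ldots,N\}^t$, which can be enumerated. For each fixed integer tuple, the remaining constraints on $(p_1,\ldots,p_t)$ form a linear feasibility problem once the $\ell_1$ discrepancy $\sum|N_i p_i - \nu(i)| \le 12\eps$ is linearized via slack variables $s_i \ge \pm(N_i p_i - \nu(i))$ with $\sum s_i \le 12\eps$. A standard LP feasibility check per tuple then suffices, and since the witness above proves at least one tuple admits a valid $p$-assignment, the enumeration is guaranteed to terminate with some $(N_i,p_i)$ satisfying all constraints.

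The step I expect to be most delicate is the verification of the range $p_i^\star \in [e^{-2\eps(i+2)}, e^{-2\eps(i-2)}]$, because Definition \ref{def:bucket-function} only directly forces $\mu(x) \in e^{-2\eps(f(x)\pm 2)}$ for $x$ with $\mu(x) > e^{-2\eps(t-2)}$ and $\CDF_\mu(x) \ge 2\eps$; the ``otherwise'' clause permits finite-bucket assignments as well, and I must argue that whenever $f_\mu(x) \ne \infty$ the containment still holds, so that averaging preserves the interval. The other parts are routine: the count and mass constraints follow immediately, the TV-to-$\ell_1$ conversion is standard, and the LP reduction via slack variables is textbook.
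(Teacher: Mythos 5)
Your proposal matches the paper's proof: the same canonical witness $N_i = |B_i^\mu|$, $p_i = \mu(B_i^\mu)/N_i$ is used to establish feasibility, and the same enumerate-integer-tuples-then-LP strategy gives computability. The one step you flag as delicate---that $\mu(x)\in[e^{-2\eps(i+2)},e^{-2\eps(i-2)}]$ whenever $f_\mu(x)=i\ne\infty$---is actually immediate: the ``otherwise'' clause of Definition~\ref{def:bucket-function} is a disjunction ``$f(x)=\infty$ \textbf{or} $\mu(x)\in e^{-\eps(f(x)\pm 2)}$'', so whenever $f_\mu(x)$ is finite the containment is forced by the definition in both branches, and no further argument is needed before averaging. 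The only inessential difference from the paper is the bound on the integer search space: you use the trivial $\{0,\ldots,N\}^t$, while the paper derives the tighter $N_i \le (12\eps+\nu(i))e^{2\eps(i+2)}$ from the constraints themselves; both suffice for the lemma as stated, which asks only for algorithmic (not efficient) solvability.
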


\begin{proof}
    Solvability: recall Definition \ref{def:bucket-distribution}, and observe that the following is a feasible solution: for every $1 \le i \le t$, $N_i = \abs{\{x : f_\mu(x) = i\}}$ and $p_i = \frac{\mu_{f_\mu}(i)}{N_i}$.
    
    Computability: by simple arithmetic $N_i \le (12 \eps + \nu(i))e^{2\eps (i+2)}$ for every $1 \le i \le t$, hence the search range for $(N_1,\ldots,N_t)$ is finite and can be exhausted algorithmically. The algorithm considers every assignment of $(N_1,\ldots,N_t)$ within their bounds, solves the corresponding LP problem (a sum of $t$ absolute values can be converted to to $2^t$ linear constraints) and tests the feasibility of the result assignment.
\end{proof}

Note that the time complexity relating to the above lemma is large. If we allow the algorithm to find a solution relaxing the third condition to $\sum_{i=1}^t \abs{N_i p_i - \nu(i)} \le 24\eps$ for $\nu$ satisfying the assertions of the lemma, then we can greatly reduce the time complexity, and this is still sufficient for the histogram learning task. We do not prove this here.

Algorithm \ref{fig:alg:peekaboo-learn-histogram-buckets} solves the histogram learning task by drawing $O(t/\hat{\eps}^2)$ sample elements, and estimating their mass to associate them with their buckets, possibly with a $\pm 2$-additive shift.

\begin{algo}
    \label{fig:alg:peekaboo-learn-histogram-buckets}
    \procname{$\procnameZlearnZhistogramZbuckets(\hat{\eps}; \mu)$}
    \algoracle{The $\hat{\eps}$-error $(\hat{\eps}, \hat{\eps})$-explicit sampling oracle, $\hat{\eps} < 1/27$}
    \algoutput{A distribution that is $6\hat{\eps}$-close to some $2\hat{\eps}$-bucket distribution of $\mu$}
    \algsuccpr{$2/3$}
    \begin{code}
        \algitem Let $t \gets \ceil{\ln (N/\hat{\eps}^2) / 2 \hat{\eps}} + 2$.
        \algitem Let $q \gets \ceil{(t+1)/\hat{\eps}^2}$.
        \algitem Set $X_1,\ldots,X_t;X_\infty \gets 0$.
        \begin{For}{$q$ times}
            \algitem Explicitly draw $y \sim \mu$ and obtain $\hat{p}_y$.
            \algitem Set $f(y) \gets \ceil{-\frac{\ln \hat{p}_y}{ 2 \hat{\eps}}}$.
            \begin{If}{$f(y) = 0$}
                \algitem Set $f(y) \gets 1$.
            \end{If}
            \begin{If}{$f(y) > t$}
                \algitem Set $f(y) \gets \infty$.
            \end{If}
            \algitem Set $X_{f(y)} \gets X_{f(y)} + 1$.
        \end{For}
        \algitem Let $\tau_B$ the distribution defined as $\tau_B(i) = X_i / q$ for every $i \in \{1,\ldots,t;\infty\}$.
        \algitem Return $\tau_B$.
    \end{code}
\end{algo}

\begin{lemma}[\procnameZlearnZhistogramZbuckets] \label{lemma:learning-histogram-buckets-correct}
    Algorithm \ref{fig:alg:peekaboo-learn-histogram-buckets} makes $O(\log (N/\hat{\eps}) / \hat{\eps}^3)$ calls to the $\hat{\eps}$-error $(\hat{\eps},\hat{\eps})$-explicit sampling oracle, and with probability at least $2/3$ returns a distribution $\tau_B$ that is $6 \hat{\eps}$-close to a bucket distribution of the form $\mu_{f_\mu}$ for some $2\hat{\eps}$-bucket function $f_\mu$ of $\mu$.
\end{lemma}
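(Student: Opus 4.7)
The plan is to establish sample complexity and correctness separately. The sample complexity is immediate: the algorithm makes exactly $q = \lceil (t+1)/\hat\eps^2 \rceil$ oracle calls, and with $t = \lceil \ln(N/\hat\eps^2)/(2\hat\eps) \rceil + 2 = O(\log(N/\hat\eps)/\hat\eps)$, this is $O(\log(N/\hat\eps)/\hat\eps^3)$ as required.

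For correctness, I would first apply Observation~\ref{obs:r-lying-c-eps-explicit-decomposition} to decompose the oracle into a fixed $(\hat\eps,\hat\eps)$-approximation function $g_{\mathrm{truth}}$, an arbitrary error function $g_{\mathrm{err}}$, and a vector of independent correctness indicators $u \in \{0,1\}^\Omega$ with $\Pr[u_x=1] \ge 1 - \hat\eps$. Applying Lemma~\ref{lemma:eps-eps-approx-to-2eps-bucket-function} to $g_{\mathrm{truth}}$ produces a canonical $2\hat\eps$-bucket function $f_\mu(x) = \lceil -\ln g_{\mathrm{truth}}(x)/(2\hat\eps) \rceil$ (with the same clamping $0 \mapsto 1$ and $> t \mapsto \infty$ that Algorithm~\ref{fig:alg:peekaboo-learn-histogram-buckets} applies). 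Crucially, the algorithm's computed $f(y)$ coincides with $f_\mu(y)$ exactly when $\hat{p}_y = g_{\mathrm{truth}}(y)$, which happens precisely when $u_y = 1$.

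I would then introduce the auxiliary ``ideal'' empirical distribution $\tau_B^\ast$ over $\{1,\ldots,t;\infty\}$ obtained by replacing $f(y_i)$ with $f_\mu(y_i)$ on each sample. Since $y_1, \ldots, y_q \sim \mu$ are i.i.d., the values $f_\mu(y_1), \ldots, f_\mu(y_q)$ are i.i.d.\ draws from $\mu_{f_\mu}$ over an alphabet of size $t+1$. With $q \ge (t+1)/\hat\eps^2$, Lemma~\ref{lemma:learn-simple-distribution} yields $\dtv(\tau_B^\ast, \mu_{f_\mu}) \le \hat\eps$ with probability at least $8/9$. Moreover $\dtv(\tau_B, \tau_B^\ast) \le E/q$, where $E = \abs{\{i : u_{y_i} = 0\}}$ counts erroneous samples (the only ones on which $f$ and $f_\mu$ can disagree). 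Since $\E[E] = q \sum_x \mu(x)\Pr[u_x=0] \le \hat\eps q$, Markov's inequality gives $E \le 5\hat\eps q$ with probability at least $4/5$. A union bound then yields $\dtv(\tau_B, \mu_{f_\mu}) \le E/q + \hat\eps \le 6\hat\eps$ with probability at least $1 - 1/9 - 1/5 = 31/45 > 2/3$, completing the argument.

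The main technical care needed is in confirming that the algorithm's explicit boundary clamping really is consistent with the (implicit) boundary convention of Lemma~\ref{lemma:eps-eps-approx-to-2eps-bucket-function}, so that $f(y) = f_\mu(y)$ holds identically (not merely up to a negligible-mass set) on the event $u_y = 1$. This reduces to checking that the algorithm's choice of $t$ satisfies the threshold condition of Definition~\ref{def:bucket-function} at the $2\hat\eps$ level, namely $t \ge 2 + \ln(N/(2\hat\eps)^2)/(2\hat\eps)$, which is a short calculation from the definition of $t$.
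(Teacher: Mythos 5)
Your proof is correct and follows essentially the same three-part skeleton as the paper's (decompose the oracle via Observation~\ref{obs:r-lying-c-eps-explicit-decomposition}, invoke Lemma~\ref{lemma:learn-simple-distribution} once, bound an ``error'' quantity by Markov, union-bound). The one genuine structural difference is the choice of intermediate object in the triangle inequality. The paper introduces a second bucket function $f'$ derived from the error-corrupted oracle output $h$ (a function on all of $\Omega$), shows $\tau_B$ is close to $\mu_{f'}$ by Lemma~\ref{lemma:learn-simple-distribution}, and separately bounds $\dtv(\mu_{f'},\mu_f)\le\sum_{x:f(x)\ne f'(x)}\mu(x)$ with Markov over $u$. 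You instead introduce the idealized \emph{empirical} distribution $\tau_B^\ast$ obtained by rebucketing the drawn samples with $g_{\mathrm{truth}}$, apply Lemma~\ref{lemma:learn-simple-distribution} to $\tau_B^\ast$ versus $\mu_{f_\mu}$, and bound $\dtv(\tau_B,\tau_B^\ast)\le E/q$ with Markov over the number $E$ of erroneous samples. The two decompositions are dual (one compares at the distribution level, the other at the sample level), and give the same $\hat\eps+5\hat\eps$ and $31/45>2/3$ arithmetic. Your version is arguably slightly more elementary in that it only reasons about the $q$ sampled points and never needs to treat the corrupted $h$ as a globally defined function; the paper's version avoids introducing a second empirical object. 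You also explicitly flag the boundary-clamping consistency and the check $t\ge 2+\ln(N/(2\hat\eps)^2)/(2\hat\eps)$, which the paper passes over silently but which is indeed a trivial verification from $N/\hat\eps^2\ge N/(2\hat\eps)^2$.
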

\begin{proof}
    By Observation \ref{obs:r-lying-c-eps-explicit-decomposition}, we can see the $\hat{\eps}$-error $(\hat{\eps},\hat{\eps})$-explicit sampling oracle as a propagation of the following:
    \begin{itemize}
        \item An $(\hat{\eps},\hat{\eps})$-approximation function $g_\mathrm{truth}$ of $\mu$.
        \item An arbitrary error function $g_\mathrm{err} : \Omega \to [0,1]$.
        \item A correctness vector $u \in \{0,1\}^\Omega$ whose entries are drawn independently, where $\Pr[u_x = 1] \ge 1 - r$ for every $x \in \Omega$.
        \item The estimation outcome of the oracle for $x$ is $h(x) = g_\mathrm{truth}(x)$ if $u_x = 1$ and $h(x) = g_\mathrm{err}(x)$ if $u_x = 0$.
    \end{itemize}
    This way, the analysis can use $h(y)$ instead of $\hat{p}_y$ for the samples, noting that $h(y)$ is also defined for non-sampled $y\in\Omega$.

    Let $f$ and $f'$ be the functions that map every $x \in \Omega$ to its $2\hat{\eps}$-bucket according to $g_\mathrm{truth}(x)$ and $h(x)$ respectively. More precisely, the bucket associated with the mass $p$ (which can be obtained from $g_\mathrm{truth}$ or from $h$) is $\ceil{-\ln p / (2\hat{\eps})}$ (or $\infty$ if larger than $t$). By Lemma \ref{lemma:eps-eps-approx-to-2eps-bucket-function}, $f$ is indeed a bucket function of $\mu$.

    Since $\mu_f$ and $\mu_{f'}$ are mappings of $\mu$ with respect to $f$ and $f'$, $\dtv(\mu_f, \mu_{f'}) \le \sum_{x \in \Omega : f(x) \ne f'(x)} \mu(x)$. For every $x$, the probability that $g_\mathrm{truth}(x) \ne h(x)$ is bounded by $\hat{\eps}$, and hence $\E_u[\dtv(\mu_f, \mu_{f'})] \le \hat{\eps}$. By Markov's inequality, with probability at least $4/5$ over the choice of the correctness vector $u$, $\dtv(\mu_f, \mu_{f'}) \le 5\hat{\eps}$.

    By Lemma \ref{lemma:learn-simple-distribution}, with probability at least $8/9$, the distribution $\tau_B$ constructed by the algorithm is $\hat{\eps}$-close to $\mu_{f'}$. By the triangle inequality and the union bound, with probability at least $1 - 1/9 - 1/5 > 2/3$, $\dtv(\tau_B, \mu_f) \le \dtv(\tau_B, \mu_f) + \dtv(\mu_{f'}, \mu_f) \le \hat{\eps} + 5\hat{\eps} = 6\hat{\eps}$.
    
    The sample complexity is trivial.
\end{proof}

The histogram learning algorithm works by converting the (approximated) distribution over buckets back to a distribution over $\Omega$.

\begin{algo}
    \label{fig:alg:peekaboo-learn-histogram}
    \procname{$\textsf{Learn-histogram}(\eps; \mu)$}
    \algoracle{The $\hat{\eps}$-error $(\hat{\eps}, \hat{\eps})$-explicit sampling oracle for $\hat{\eps} = \frac{1}{300}\eps$}
    \begin{code}
        \algitem Let $\tau_B \gets \procnameZlearnZhistogramZbucketsHREF(\hat{\eps}, \mu)$.
        \algitem Solve the following constraints problem: \algcomment \label{fig:alg:peekaboo-learn-histogram::step:solve-it} (Solvable by Lemma \ref{lemma:shitty-integer-optimization})
        \begin{Codeblock*}
            \algitem $\sum_{i=1}^t N_i \le N$.
            \algitem $\sum_{i=1}^t N_i p_i \le 1$.
            \algitem $\sum_{i=1}^t \abs{N_i p_i - \tau_B(i)} \le 12 \hat{\eps}$.
            \algitem For every $1 \le i \le t$: $N_i \ge 0$ is an integer.
            \algitem For every $1 \le i \le p$: $e^{-2\hat{\eps}(i + 2)} \le p_i \le e^{-2 \hat{\eps}(i - 2)}$.
        \end{Codeblock*}
        \algitem Let $s \gets \sum_{i=1}^t N_i p_i$.
        \algitem Construct a function $f' : \Omega \to \{1,\ldots,t;\infty\}$ such that for every $1 \le i \le t$ there are exactly $N_i$ elements for which $f(x) = i$. The other elements are mapped to $\infty$.
        \algitem Construct a distribution $\tau'$ over $\Omega$ where for every $1 \le i \le t$ there are exactly $N_i$ elements whose probability mass is exactly $p_i/s$. The other $N - \sum_{i=1}^t N_i$ elements have zero mass.
        \algitem Return $\tau'$.
    \end{code}
\end{algo}

\begin{lemma} \label{lemma:peekaboo-histogram}
    Let $\mu$ be a distribution over $\Omega = \{1,\ldots,N\}$. Algorithm \ref{fig:alg:peekaboo-learn-histogram} solves the $\eps$-histogram learning task at the cost of $O(\log (N/\eps) / \eps^3)$ calls to the $\frac{1}{300}\eps$-explicit sampling oracle.
\end{lemma}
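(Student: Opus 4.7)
The plan has two clean components: bounding the number of oracle calls, and showing $\dhist(\mu;\tau')\leq\eps$.

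Sample complexity is immediate. The only calls to the $\hat{\eps}$-error $(\hat{\eps},\hat{\eps})$-explicit sampling oracle happen inside \procnameZlearnZhistogramZbucketsHREF, which by Lemma \ref{lemma:learning-histogram-buckets-correct} uses $O(\log(N/\hat{\eps})/\hat{\eps}^3) = O(\log(N/\eps)/\eps^3)$ calls; the rest of the algorithm (the constraint problem and the synthesis of $\tau'$) is internal computation.

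For correctness I would condition on the event (probability at least $2/3$) from Lemma \ref{lemma:learning-histogram-buckets-correct} that $\tau_B$ is $6\hat{\eps}$-close to $\mu_{f_\mu}$ for some $2\hat{\eps}$-bucket function $f_\mu$ of $\mu$. Then Lemma \ref{lemma:shitty-integer-optimization} (applied with $t$, $\hat{\eps}$, $N$, $\nu=\tau_B$) certifies that the LP of Step \ref{fig:alg:peekaboo-learn-histogram::step:solve-it} is feasible and algorithmically solvable, yielding $(N_1,\dots,N_t,p_1,\dots,p_t)$ with $s=\sum N_ip_i\le 1$ and $\sum_i |N_ip_i-\tau_B(i)|\le 12\hat{\eps}$. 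Assign $f_{\tau'}(x)=i$ to the $N_i$ elements placed into bucket $i$ (and $\infty$ elsewhere), so $\tau'_{f_{\tau'}}(i)=N_ip_i/s$. From the choice of $t$ we have $\mu_{f_\mu}(\infty)\le\hat{\eps}+\hat{\eps}^2$ (CDF tail plus the $N e^{-2\hat{\eps}(t-2)}\le\hat{\eps}^2$ mass bound), which yields $\tau_B(\infty)\le O(\hat{\eps})$, $s\ge 1-O(\hat{\eps})$, and therefore $\dtv(\tau_B,\tau'_{f_{\tau'}})=O(\hat{\eps})$. Triangle inequality then gives $\dtv(\mu_{f_\mu},\tau'_{f_{\tau'}})\leq O(\hat{\eps})$.

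The main obstacle is that $f_{\tau'}$ defined above need not be a strict $2\hat{\eps}$-bucket function of $\tau'$, because $\tau'(x)=p_i/s$ is inflated by a factor $1/s$ that could be as large as $1+O(\hat{\eps})$ and thereby push the mass just outside $e^{-2\hat{\eps}(i\pm 2)}$, preventing a direct invocation of Lemma \ref{lemma:bucket-histogram-close}. I plan to handle this in one of two equivalent ways. The clean route is to pass to a coarser resolution $\eps'=c\hat{\eps}$ (for a sufficiently large constant $c$, say $c=30$): use Lemma \ref{lemma:bucket-function-parameter-change} to convert $f_\mu$ into a valid $\eps'$-bucket function, and define $f_{\tau'}$ to land each element in a bucket consistent with $p_i/s$ under the coarser scale, while verifying that the bucket-distribution TV distance only degrades by a constant factor. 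The direct route is to replicate the proof of Lemma \ref{lemma:bucket-histogram-close} in place: build the permutation $\pi$ bucket-by-bucket, note that for $x$ matched into bucket $i$ the ratio $\mu(x)/\tau'(\pi(x))=\mu(x)\cdot s/p_i$ lies in $[e^{-O(\hat{\eps})},e^{O(\hat{\eps})}]\subseteq(1\pm\eps)$ once $\hat{\eps}=\eps/300$ is plugged in, and bound the mass of unmatched $x$ by $\mu(B^\mu_\infty)+\sum_i\max\{0,\mu(B^\mu_i)-N_ip_i\}$ which is $\dtv(\mu_{f_\mu},\tau'_{f_{\tau'}})+\mu(B^\mu_\infty)+O(\hat{\eps})$. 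Either way the final tally is $\dhist(\mu;\tau')\le O(\hat{\eps})\le\eps$, completing the proof.
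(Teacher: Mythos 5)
Your proposal is correct, and your \emph{clean route} is exactly what the paper does. After conditioning on the good event of Lemma~\ref{lemma:learning-histogram-buckets-correct} and bounding $s=\sum_i N_ip_i\in[1-O(\hat{\eps}),1]$ via the $\infty$-bucket mass, the paper also passes to a coarser resolution --- specifically $\eps'=32\hat{\eps}$ (your ``$c=30$'') --- using Lemma~\ref{lemma:bucket-function-parameter-change} to turn $f_\mu$ into a $32\hat{\eps}$-bucket function $f'_\mu$, defining the coarse bucket map $\nu_B=T_{2\hat{\eps},32\hat{\eps}}\circ\tau_B$, and placing each $p_i/s$-mass element into the $T_{2\hat{\eps},32\hat{\eps}}(i)$th bucket so that the resulting $f'$ is a legitimate $32\hat{\eps}$-bucket function of $\tau'$ despite the $1/s$ inflation. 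It then chains $\dtv(\mu_{f'_\mu},\tau'_{f'})\le\dtv(\mu_{f'_\mu},\nu_B)+\dtv(\nu_B,\tau'_{f'})$ and closes with Lemma~\ref{lemma:bucket-histogram-close} at scale $32\hat{\eps}$, getting $\dhist(\mu;\tau')\le 300\hat{\eps}=\eps$. So you correctly identified the sole nontrivial obstruction (the $1/s$ factor pushes $p_i/s$ just outside the fine bucket window) and the intended fix; your ``direct route'' would also work but is not the paper's choice. Two small quantitative remarks if you flesh this out: the constant $c$ must be taken large enough that $p_i/s$ is \emph{guaranteed} to sit inside the coarse window $e^{-\eps'(j\pm2)}$ (the paper's $32$ is comfortable given $s^{-1}=1\pm20\hat{\eps}$), and the final budget is tight enough that the paper needs $\hat{\eps}=\eps/300$ to absorb the $5\eps'+16(\eps')^2$ term from Lemma~\ref{lemma:bucket-histogram-close} applied at scale $\eps'=32\hat{\eps}$.
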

\begin{proof}
    By Lemma \ref{lemma:learning-histogram-buckets-correct}, with probability $2/3$, the call to \procnameZlearnZhistogramZbuckets returns a distribution $\tau_B$ that is $6\hat{\eps}$-close to some $2\hat{\eps}$-bucket distribution of $\mu$. Let $f_\mu$ be a $2 \hat{\eps}$-bucket function of $\mu$ for which $\tau_B$ is $6\hat{\eps}$-close to $\mu_{f_\mu}$. Lemma \ref{lemma:shitty-integer-optimization} implies that the constraint problem defined by the algorithm in Step \ref{fig:alg:peekaboo-learn-histogram::step:solve-it} is solvable.
    
    Observe that $\tau_B(\infty) \le \mu_{f_\mu}(\infty) + 6\hat{\eps} \le \frac{4\hat{\eps}^2}{N} \cdot N + 6\hat{\eps} = 6\hat{\eps} + 4\hat{\eps}^2$.
    
    Observe that $\tau'(\infty) = 0$, and note that $s = \sum_{i=1}^t N_i p_i \ge \sum_{i=1}^t \tau_B(i) - 12\hat{\eps} = (1 - \tau_B(\infty)) - 12\hat{\eps} \ge 1 - (18\hat{\eps} + 4\hat{\eps}^2)$. Also, $s \le 1$ by the constraints of the construction.
    
    Since $s = 1 \pm (18 \hat{\eps} + 4\hat{\eps}^2)$, $s^{-1} = 1 \pm 20\hat{\eps}$ and hence:
    \begin{eqnarray*}
        \sum_{i=1}^t \abs{N_i p_i / s - \tau_B(i)}
        &=& \sum_{i=1}^t \abs{(1 \pm 20\hat{\eps}) N_i p_i - \tau_B(i)} \\
        &\le& 20\hat{\eps} \sum_{i=1}^t N_i p_i + \sum_{i=1}^t \abs{N_i p_i - \tau_B(i)} \\
        &\le& 20\hat{\eps} \cdot 1 + 12\hat{\eps}
        = 32\hat{\eps}
    \end{eqnarray*}

    Considering Definition \ref{def:bucket-distribution} and Lemma \ref{lemma:bucket-function-parameter-change} with respect to $\hat{\eps}$ and $\eps' = 16 \cdot 2 \hat{\eps}$, let:
        \begin{itemize}
        \item $t' = 2 + \floor{(t - 2) / 16}$.
        \item $\nu_B$ be the map of $\tau_B$ according to $T_{2\hat{\eps}, 32\hat{\eps}}$, that is, $\nu_B(j) = \sum_{i : T_{2 \hat{\eps}, 32 \hat{\eps}}(i) = j} \tau_B(i)$.
        \item $f'$ be the $32\hat{\eps}$-bucket function of $\tau'$ with respect to $t'$, defined such that every element with mass $p_i/s$ is mapped to the $T_{2\hat{\eps},32\hat{\eps}}(i)$th bucket.
        \item $f'_\mu$ be the $32\hat{\eps}$-bucket function of $\mu$ with respect to $t'$ that is constructed from $f_\mu$ by Lemma \ref{lemma:bucket-function-parameter-change}.
    \end{itemize}

    Observe that $\nu_B$ is $6\hat{\eps}$-close to $\mu_{f'_\mu}$ since $\tau_B$ is $6\hat{\eps}$-close to $\mu_{f_\mu}$, and that $\mu_{f'_\mu}$ is a $32\hat{\eps}$-bucket distribution of $\mu$.

    By the construction, $\tau'_{f'}(\infty) = 0$ and $\tau'_{f'}(i) = N_i p_i / s$ for $1 \le i \le t$. We can use the bound for $\sum_{i=1}^t \abs{N_i p_i / s - \tau_B(i)}$ to obtain:
    
    \begin{eqnarray*}
        \dtv(\tau'_{f'},\nu_B)
        &\le& \dtv(\tau'_f,\tau_B) \\
        &=& \frac{1}{2} \tau_B(\infty) + \frac{1}{2} \sum_{i=1}^t \abs{N_i p_i/s - \tau_B(i)}
        \le \frac{1}{2} \cdot (6\hat{\eps} + 4\hat{\eps}^2) + \frac{1}{2} \cdot 32\hat{\eps} = 19\hat{\eps} + 2\hat{\eps}^2
    \end{eqnarray*}

    By the triangle inequality, $\dtv(\mu_{f'_\mu}, \tau'_{f'}) \le \dtv(\mu_{f'_\mu}, \nu_B) + \dtv(\nu_B, \tau'_{f'}) \le 6\hat{\eps} + (19\hat{\eps} + 2\hat{\eps}^2) \le 25\hat{\eps} + 4\hat{\eps}^2$.
    
    By Lemma \ref{lemma:bucket-histogram-close}, $\dhist(\mu ; \tau') \le \eps'$ for
    \[  \eps'
        = \dtv(\mu_{f'_\mu}, \tau'_{f'}) + (5(32\hat{\eps}) + 16(32\hat{\eps})^2)
        \le (25\hat{\eps} + 4\hat{\eps}^2) + (160\hat{\eps} + 16384\hat{\eps}^2)
        = 185\hat{\eps} + 16388\hat{\eps}^2
        \le 300\hat{\eps} = \eps \qedhere \]
\end{proof}

At this point we recall Theorem \ref{th:ubnd-histogram} and prove it.
\begin{theorem}[Learning histograms] \label{th:ubnd-histogram}
    We can use $O\left(\frac{\log N \log \log N}{\eps^7} \cdot \poly(\log \eps^{-1})\right)$ conditional samples to solve the $\eps$-histogram learning.
\end{theorem}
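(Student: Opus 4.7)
The plan is a direct composition: invoke the generic simulation of Lemma~\ref{lemma:generic-application-single-distribution} on the explicit-sampling histogram learner of Algorithm~\ref{fig:alg:peekaboo-learn-histogram}. All the heavy lifting has been done in the preceding lemmas; what remains is to verify that the premises match and to multiply out the resulting complexity.

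First, I would recall from Lemma~\ref{lemma:peekaboo-histogram} that Algorithm~\ref{fig:alg:peekaboo-learn-histogram} solves the $\eps$-histogram learning task with success probability $2/3$, using $q = O(\log(N/\eps)/\eps^3)$ calls to a $\hat{\eps}$-error $(\hat{\eps},\hat{\eps})$-explicit sampling oracle with $\hat{\eps} = \eps/300$. Defining the target set $R_\mu = \{\tau' : \dhist(\mu;\tau') \le \eps\}$ puts this exactly in the template of Lemma~\ref{lemma:generic-application-single-distribution}: the output of the explicit-oracle algorithm lies in $R_\mu$ with probability at least $2/3$ under any valid instantiation of the oracle.

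Next, I would plug these parameters into Lemma~\ref{lemma:generic-application-single-distribution}, producing a fully conditional algorithm with the same output guarantee at an expected sample cost of
\[
    O\!\left(q \cdot \left(\log\log N + 1/\hat{\eps}^{4}\right) \cdot \poly\!\left(\log r^{-1}, \log c^{-1}, \log \hat{\eps}^{-1}\right)\right).
\]
Here the error rate $r$, the saturation threshold $c$, and the oracle's accuracy parameter all equal $\hat{\eps} = \Theta(\eps)$, so the three polylogarithmic factors collapse into a single $\poly(\log \eps^{-1})$. No nontrivial obstacle is anticipated: the reduction in Lemma~\ref{lemma:generic-application-single-distribution} relies on memoizing a single invocation of \procnameZmainZsingleHREF per distinct sampled element (amplified to error $\hat{\eps}$ by $O(\log \hat{\eps}^{-1})$-median), which is exactly the form our histogram learner is already compatible with.

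Finally, I would simplify the arithmetic. Substituting $q = O(\log(N/\eps)/\eps^3)$ and $\hat{\eps} = \Theta(\eps)$ gives
\[
    O\!\left(\frac{\log(N/\eps)}{\eps^{3}} \cdot \left(\log\log N + \frac{1}{\eps^{4}}\right) \cdot \poly(\log \eps^{-1})\right).
\]
Absorbing $\log(N/\eps) = O(\log N \cdot \log \eps^{-1})$ into the polylogarithmic factor, and using $\eps < 1$ together with $\log\log N \ge 1$ to merge the two summands into the dominant one, yields the claimed $O\!\left(\log N \log\log N / \eps^{7}\right) \cdot \poly(\log \eps^{-1})$ bound. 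The proof is purely compositional once Lemmas~\ref{lemma:generic-application-single-distribution} and~\ref{lemma:peekaboo-histogram} are in hand.
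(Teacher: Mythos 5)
Your proposal is correct and takes essentially the same approach as the paper, which simply states that the theorem follows by applying Lemma~\ref{lemma:generic-application-single-distribution} to the $O(\log(N/\eps)/\eps^3)$-sample learner of Lemma~\ref{lemma:peekaboo-histogram}. You have merely spelled out the bookkeeping (identifying $R_\mu$, matching the parameters $r = c = \hat{\eps} = \Theta(\eps)$, and simplifying the resulting product) that the paper leaves implicit.
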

\begin{proof}
    This is an application of Lemma \ref{lemma:generic-application-single-distribution} over the $O(\log (N/\eps) / \eps^3)$-sample algorithm for learning histograms stated in Lemma \ref{lemma:peekaboo-histogram}.
\end{proof}

Since label-invariant properties are determined by histograms, we can obtain a universal tester for label-invariant properties.

\begin{corollary} \label{cor:universal-tester-label-invariant}
    There exists a universal tester for $\eps$-testing every label-invariant property $\mathcal P$ using $O(\log N / \eps^7 \cdot \poly(\log \eps^{-1}))$ conditional samples.
\end{corollary}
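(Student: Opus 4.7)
The plan is to invoke the histogram learning algorithm of Theorem~\ref{th:ubnd-histogram} with a suitably small accuracy parameter, and then decide based on where the learned distribution sits with respect to $\mathcal P$. Concretely, I would run the learner with accuracy $\delta = \eps/4$ to obtain a distribution $\tau$ (known explicitly to the algorithm) that satisfies $\dhist(\mu;\tau) \le \delta$ with probability at least $2/3$. I would then have the algorithm compute (or approximate by any internal, sample-free means) the quantity $\dtv(\tau,\mathcal P) = \min_{\sigma\in\mathcal P}\dtv(\tau,\sigma)$ and accept iff this quantity is at most $\eps/2$.

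The correctness analysis relies on the label-invariance of $\mathcal P$ together with Lemma~\ref{lemma:dtv-permutation-le-twice-dhist}. From that lemma there exists a permutation $\pi$ of $\Omega$ with $\dtv(\mu,\pi\tau) \le 2\delta = \eps/2$. Because $\mathcal P$ is closed under relabeling, $\dtv(\pi\tau,\mathcal P) = \dtv(\tau,\mathcal P)$, and the triangle inequality yields
\[
\bigl|\dtv(\mu,\mathcal P) - \dtv(\tau,\mathcal P)\bigr| \;\le\; \dtv(\mu,\pi\tau) \;\le\; \eps/2.
\]
For completeness, if $\mu\in\mathcal P$ then $\dtv(\mu,\mathcal P)=0$, so $\dtv(\tau,\mathcal P)\le\eps/2$ and the algorithm accepts. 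For soundness, if $\dtv(\mu,\mathcal P)>\eps$, then $\dtv(\tau,\mathcal P) > \eps/2$ and the algorithm rejects. In both cases the conclusion holds whenever the learner succeeded, which happens with probability at least $2/3$.

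The sample complexity is entirely charged to the one invocation of the histogram learner, since the acceptance decision is a purely computational step that involves no further conditional samples. Plugging $\delta = \eps/4$ into Theorem~\ref{th:ubnd-histogram} gives the claimed bound $O(\log N / \eps^7)\cdot\poly(\log\eps^{-1})$ (absorbing constant-factor adjustments and the $\log\log N$ factor into the polylogarithmic term, or stating it explicitly as $\tilde{O}(1/\eps^7)\cdot\log N\log\log N$ as in the informal version).

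There is no real obstacle here beyond correctly bridging $\dhist$ and $\dtv$ via Lemma~\ref{lemma:dtv-permutation-le-twice-dhist} and exploiting label-invariance to replace the permuted copy $\pi\tau$ with $\tau$ inside the $\dtv$-to-$\mathcal P$ distance; the only thing one must be careful about is choosing $\delta$ small enough that the $2\delta$ slack absorbed by the triangle inequality leaves a gap between the completeness and soundness thresholds, which $\delta = \eps/4$ accomplishes with margin.
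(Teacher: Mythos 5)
Your proof is correct, and it uses the same overall strategy as the paper: run the histogram learner with accuracy $\eps/4$, then decide based on whether the learned $\tau$ is close to $\mathcal P$. The only genuine variation is in the acceptance criterion and the way the $\dhist$-to-$\dtv$ translation is applied. The paper accepts iff there exists $\mu'\in\mathcal P$ with $\dhist(\mu';\tau)\le\eps/4$, and then needs \emph{two} applications of Lemma~\ref{lemma:dtv-permutation-le-twice-dhist} (one for the pair $(\mu,\tau)$ and one for $(\mu',\tau)$) followed by composing the two resulting permutations. You accept iff $\dtv(\tau,\mathcal P)\le\eps/2$, which requires only \emph{one} application of the lemma, plus the standard $1$-Lipschitz property of distance-to-a-set and the observation that label-invariance of $\mathcal P$ makes $\dtv(\pi\tau,\mathcal P)=\dtv(\tau,\mathcal P)$. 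Your version is slightly cleaner and avoids the permutation-composition step, at the cost of restating the acceptance criterion in $\dtv$ terms rather than the $\dhist$ terms the learner natively produces; the two are interchangeable and yield the same guarantees and complexity.
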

\begin{proof}
    We learn a distribution $\tau$ for which $\dhist(\mu ; \tau) \le \frac{1}{4}\eps$ and accept if there exists any distribution $\mu' \in \mathcal P$ for which $\dhist(\mu' ; \tau) \le \frac{1}{4}\eps$. By two applications of Lemma \ref{lemma:dtv-permutation-le-twice-dhist}, if we have accepted due to some $\tau$ then there are two permutations $\pi$ and $\pi'$ such that $d_{\mathrm{TV}}(\mu,\pi\tau)\leq\frac{1}{2}\eps$ and $d_{\mathrm{TV}}(\mu',\pi'\tau)\leq\frac{1}{2}\eps$. By the triangle inequality (and invariance under permutations) we obtain $d_{\mathrm{TV}}(\mu,(\pi \circ (\pi')^{-1})\mu')\leq\eps$ as required.
\end{proof}

\subsection{Total-variation distance estimation}
\label{subsec:dtv-est::of-sec:applications}

We first prove a generic lemma about a reduction from the $(c,\eps)$-peek model to the fully conditional model for inputs consisting of multiple distributions. The analysis here has a penalty of $O(1/\eps)$ in comparison to Lemma \ref{lemma:generic-application-single-distribution}, since we no longer assume that a queried element $x$ is drawn from the distribution it is queried from, which requires the use of Corollary \ref{cor:worst-case-complexity-of::th:estimation-task} (worst case cost) rather than Corollary \ref{cor:expected-complexity-of::th:estimation-task} (expected cost).

\begin{lemma} \label{lemma:generic-application-k-distributions}
    Consider an algorithm $\mathcal A$ whose input is a $k$-tuple $\vec{\mu} = (\mu_1,\ldots,\mu_k)$ of distributions over $\Omega_1,\ldots,\Omega_k$ (respectively), and its output is an element of a discrete set $R$. Assume that $\mathcal A$ draws at most $q$ samples and makes at most $q$ calls to the $(c,\eps)$-peek oracle. Let $N = \max\left\{ \abs{\Omega_1}, \ldots, \abs{\Omega_k}\right\}$.

    Assume that for every input $\vec{\mu}$ there exists a set $R_{\vec{\mu}} \subseteq R$ for which $\Pr\left[\mathcal A(\vec{\mu}) \in R_{\vec{\mu}}\right] > \frac{2}{3}$ for every possible valid outcome sequence of the $(c,\eps)$-oracle (i.e., one that comes from an $(\eps,c)$-approximation function corresponding to the oracle).

    In this setting, there exists an algorithm $\mathcal A'$ in the fully conditional model whose sample complexity is $O(q \cdot (\log \log N + 1 / \eps^2 c + 1 / \eps^5) \cdot \poly(\log q, \log c^{-1}, \log \eps^{-1}))$, such that for every input $\vec{\mu}$, $\Pr\left[\mathcal A'(\vec{\mu}) \in R_{\vec{\mu}}\right] > \frac{5}{8}$.
\end{lemma}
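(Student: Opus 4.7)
The plan is to adapt the proof of Lemma \ref{lemma:generic-application-single-distribution} by simulating the $(c,\eps)$-peek oracle through calls to \procnameZmainZsingleHREF, while handling two structural differences. First, since peek queries may be addressed to elements that were not drawn from the distribution, we cannot exploit the average-case bound of Corollary \ref{cor:expected-complexity-of::th:estimation-task}, and must instead pay the worst-case bound of Corollary \ref{cor:worst-case-complexity-of::th:estimation-task} per query. Second, the peek oracle is required to be consistent across repeated queries on the same element, so the simulator must memoize its previous answers.

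Concretely, $\mathcal A'$ will execute $\mathcal A$ step by step. Sample requests from $\mu_i$ are served by a conditional sample on $\Omega_i$. For a peek query on $(i,x)$, $\mathcal A'$ checks a per-distribution lookup table; if $x$ has already been answered, it returns the stored value. Otherwise, it invokes the median of $M = \ceil{30 \ln (24 q)}$ independent runs of $\procnameZmainZsingleHREF(\mu_i, c, \eps; x)$, interprets a ``too low'' result as the value $0$, stores the answer, and returns it. By Observation \ref{obs:median-amplification}\ref{median-amplification:log-to-24c}, each such amplified query produces an output in the correct saturation-aware range --- that is, in $(1 \pm \eps)\mu_i(x)$ if $\CDF_{\mu_i}(x) \ge c$, and in $\{0\} \cup (1 \pm \eps)\mu_i(x)$ otherwise --- except with probability at most $\frac{1}{24 q}$.

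For correctness, let $G$ be the event that all at most $q$ distinct peek invocations succeed in the above sense. A union bound gives $\Pr[\neg G] \le \frac{1}{24}$. Conditioned on $G$, the answers returned by the simulator are consistent (by memoization) and lie in the admissible ranges, so the induced sequence is a valid outcome sequence of some legitimate $(c,\eps)$-peek oracle for $\vec\mu$. The hypothesis of the lemma then guarantees $\Pr[\mathcal A(\vec\mu) \in R_{\vec\mu} \mid G] \ge \frac{2}{3}$, and therefore
\[ \Pr[\mathcal A'(\vec\mu) \in R_{\vec\mu}] \;\ge\; \frac{2}{3} - \frac{1}{24} \;=\; \frac{5}{8}. \]
The slight shortfall from strict inequality can be absorbed by increasing $M$ by a constant.

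For the sample complexity, the at most $q$ direct samples from the $\mu_i$'s contribute $O(q)$ samples. Each of the at most $q$ distinct peek queries triggers $M = O(\log q)$ invocations of \procnameZmainZsingleHREF; by Corollary \ref{cor:worst-case-complexity-of::th:estimation-task}, each such invocation costs $O(\log\log N) + O\!\left(\log\tfrac{1}{\eps c} \cdot \!\left(\tfrac{1}{\eps^2 c} + \tfrac{\log^6 \eps^{-1}}{\eps^5}\right)\!\right)$ samples. Multiplying yields the stated bound $O(q \cdot (\log\log N + 1/\eps^2 c + 1/\eps^5) \cdot \poly(\log q, \log c^{-1}, \log \eps^{-1}))$. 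The only real subtlety here, and the reason for the extra $1/\eps$-type loss compared with Lemma \ref{lemma:generic-application-single-distribution}, is that we have no distributional control on the queried elements, which forces us to use the worst-case rather than expected guarantee of the core estimator.
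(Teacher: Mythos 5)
Your proof is correct and matches the paper's own argument essentially verbatim: you simulate each peek query by a median-amplified batch of calls to \procnameZmainZsingleHREF, memoize answers to enforce consistency, charge the worst-case cost of Corollary \ref{cor:worst-case-complexity-of::th:estimation-task} per query, and conclude with a union bound giving success probability above $5/8$. The only cosmetic deviation is your choice of $M=\ceil{30\ln(24q)}$ with part \ref{median-amplification:log-to-24c} of the median-amplification observation where the paper uses $\ceil{30\ln(12q)}$ with part \ref{median-amplification:log-to-2c}; both yield per-query error at most $1/(24q)$, and your final inequality is in fact already strict, so the remark about ``absorbing the shortfall'' is unnecessary but harmless.
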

\begin{proof}
    We run $\mathcal A$ and simulate the outcome of the $\eps$-peek oracle. In each call to the $(c,\eps)$-peek oracle with $x \in \Omega_i$ and $\mu_i$, we call \procnameZmainZsingleHREF with parameters $(\mu_i,c,\eps)$ on $x_i$ (Theorem \ref{th:estimation-task}). We amplify the success probability to $1 - \frac{1}{24q}$ using the median of $\ceil{30 \ln (12q)}$ such calls (Observation \ref{obs:median-amplification}\ref{median-amplification:log-to-2c}). Each time we estimate the probability mass of an element, we record it in a ``history''. If the same element is queried again later, we use the history record rather than calling the \procnameZmainZsingleHREF procedure again. This guarantees the consistency of the oracle (required by Definition \ref{def:c-eps-peek-oracle}).

    The probability to have a wrong estimation is bounded by $q \cdot \frac{1}{24q} = \frac{1}{24}$. Hence, the probability to correctly simulate the $(c, \eps)$-peek oracle is at least $23/24$. If the simulation is correct, then the output of the simulated $\mathcal A$ belongs to $R_{\vec{\mu}}$ with probability at least $2/3$. Overall, the probability of the simulation to output an element in $R_{\vec{\mu}}$ is at least $2/3 - 1/24 = 5/8$.

    By Corollary \ref{cor:worst-case-complexity-of::th:estimation-task}, the worst-case complexity of a single estimation of $x$ is $O(\poly(\log c^{-1}, \log \eps^{-1})) \cdot O(\log \log N + \frac{1}{\eps^2 c} + \frac{\log^6 \eps^{-1}}{\eps^5})$. We repeat this $O(\log q)$ times for amplification for $q$ requests. Overall, the expected sample complexity is at most $O(q \cdot (\log \log N + 1 / \eps^2 c + 1/\eps^5) \cdot \poly(\log q, \log c^{-1}, \log \eps^{-1}))$.
\end{proof}

\begin{lemma} \label{lemma:generic-application-k-distributions-testing}
    Consider a testing algorithm $\mathcal A$ for some property $\mathcal P$ with success probability $2/3$ whose input is a $k$-tuple $\vec{\mu} = (\mu_1,\ldots,\mu_k)$ of distributions over $\Omega_1,\ldots,\Omega_k$ (respectively). Assume that $\mathcal A$ draws at most $q$ samples and makes at most $q$ calls to the $(c,\eps)$-peek oracle. Let $N = \max\left\{ \abs{\Omega_1}, \ldots, \abs{\Omega_k}\right\}$.
    
    In this setting, there exists a testing algorithm $\mathcal A'$ for $\mathcal P$ with success probability $2/3$ in the fully conditional model whose sample complexity is $O(q \cdot (\log \log N + 1 / \eps^2 c + 1 / \eps^5) \cdot \poly(\log q, \log c^{-1}, \log \eps^{-1}))$.
\end{lemma}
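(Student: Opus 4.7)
The plan is to derive this as a direct consequence of Lemma~\ref{lemma:generic-application-k-distributions} combined with a standard amplification step, tailored to the fact that a test has a binary output space.

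First, I would instantiate Lemma~\ref{lemma:generic-application-k-distributions} with $R = \{\accept, \reject\}$ and, for every input tuple $\vec{\mu}$, take $R_{\vec{\mu}} = \{\accept\}$ if $\vec{\mu} \in \mathcal{P}$, and $R_{\vec{\mu}} = \{\reject\}$ otherwise. Since $\mathcal{A}$ is a valid $\eps$-test for $\mathcal P$ with success probability $2/3$, the hypothesis $\Pr[\mathcal{A}(\vec{\mu}) \in R_{\vec{\mu}}] > 2/3$ holds for every valid peek-oracle outcome sequence (the ``gap'' inputs are simply unrestricted in their required answer and still covered). Lemma~\ref{lemma:generic-application-k-distributions} then produces a fully-conditional algorithm $\mathcal{A}'$ that correctly decides $\mathcal{P}$ with probability at least $5/8$, at the stated sample cost $O(q \cdot (\log \log N + 1/(\eps^2 c) + 1/\eps^5) \cdot \poly(\log q, \log c^{-1}, \log \eps^{-1}))$.

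Next, to lift the success probability from $5/8$ back to $2/3$ as required by a standard $\eps$-test, I would invoke Observation~\ref{obs:majority-amplification-ad-hoc}: running $\mathcal{A}'$ three times independently and taking the majority answer yields a tester whose success probability is at least $2/3$. Since the output space is binary, majority amplification is exactly the tool provided by that observation.

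Finally, the sample cost triples, which is absorbed by the big-$O$ constant, giving the stated complexity. The only place a genuine obstacle could appear is in verifying that independent repetitions of $\mathcal{A}'$ are indeed independent, but this is immediate because each run of $\mathcal{A}'$ uses fresh internal randomness and fresh conditional samples (the ``history'' for simulating the peek oracle in Lemma~\ref{lemma:generic-application-k-distributions} is reset between runs). Thus no obstacle of substance arises, and the proof is essentially a one-paragraph reduction.
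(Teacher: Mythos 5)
Your proposal is correct and follows essentially the same route as the paper: instantiate Lemma~\ref{lemma:generic-application-k-distributions} with $R=\{\accept,\reject\}$ to obtain a conditional-model tester with success probability $5/8$, then amplify by majority-of-$3$ via Observation~\ref{obs:majority-amplification-ad-hoc}. One small wrinkle: your explicit definition sets $R_{\vec\mu}=\{\reject\}$ for every $\vec\mu\notin\mathcal P$, which would wrongly constrain the tester on inputs in the gap (neither in $\mathcal P$ nor $\eps$-far); you should take $R_{\vec\mu}=\{\accept,\reject\}$ there, as your parenthetical already suggests.
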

\begin{proof}
    Let $R = \{\accept, \reject\}$. By Lemma \ref{lemma:generic-application-k-distributions}, there exists a testing algorithm $\mathcal A''$ in the conditional model with the guaranteed sample complexity and success probability at least $5/8$. We define $\mathcal A'$ as a majority-of-$3$ amplification of $\mathcal A''$. The success probability of $\mathcal A'$ is at least $2/3$ by Observation \ref{obs:majority-amplification-ad-hoc}.
\end{proof}

Most of this subsection is dedicated to an algorithm for estimating $\dtv(\mu, \tau)$ within $\pm \eps$-additive error at the cost of $O(1/\eps^2)$ samples and $O(1/\eps^2)$ calls to the $\frac{1}{6}\eps$-peek oracle.

\begin{restatable}{lemma}{lemmaZcZtruncationZadditiveZerrZtwoc} \label{lemma:c-truncation-additive-err-2c}
    For every pair of $c$-truncated functions $f_\mu, f_\tau : \Omega \to [0,1]$ with respect to $\mu$ and $\tau$,
    \[\dtv(\mu, \tau) = \frac{1}{2}\left(\E_{x\sim \mu}\left[\max\left\{0, 1 - \frac{f_\tau(x)}{\mu(x)}\right\}\right] + \E_{x\sim \tau}\left[\max\left\{0, 1 - \frac{f_\mu(x)}{\tau(x)}\right\}\right] \right) \pm 2c \]
\end{restatable}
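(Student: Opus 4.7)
The plan is to expand each expectation into an explicit sum, isolate the discrepancy from $\dtv(\mu,\tau)$ as a sum over the ``truncated'' elements, and then bound that discrepancy via the defining property of $\CDF_\mu$ and $\CDF_\tau$.

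First I would introduce the exceptional sets $L_\mu = \{x\in\Omega : f_\mu(x) = 0 \text{ and } \mu(x) > 0\}$ and $L_\tau = \{x\in\Omega : f_\tau(x) = 0 \text{ and } \tau(x) > 0\}$; by Definition \ref{def:c-truncation-function}, every $x\in L_\mu$ satisfies $\CDF_\mu(x) < c$, and similarly for $L_\tau$. Picking $x^\star = \arg\max_{x\in L_\tau}\tau(x)$ (if $L_\tau$ is nonempty) gives $\tau(L_\tau) \le \tau(\{y : \tau(y)\le \tau(x^\star)\}) = \CDF_\tau(x^\star) < c$, and analogously $\mu(L_\mu) < c$. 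These are the only two facts we need about the truncation.

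Next, unrolling $A_{\mu\tau} := \E_{x\sim\mu}[\max\{0, 1 - f_\tau(x)/\mu(x)\}]$ gives
\[ A_{\mu\tau} = \sum_{x\notin L_\tau} \max\{0,\mu(x)-\tau(x)\} + \sum_{x\in L_\tau}\mu(x), \]
since $f_\tau$ agrees with $\tau$ outside $L_\tau$ and vanishes on $L_\tau$. Using the identity $\mu(x) - \max\{0,\mu(x)-\tau(x)\} = \min\{\mu(x),\tau(x)\}$ on each $x\in L_\tau$ and the well-known identity $\sum_x \max\{0,\mu(x)-\tau(x)\} = \dtv(\mu,\tau)$, I will rewrite this as
\[ A_{\mu\tau} = \dtv(\mu,\tau) + \sum_{x\in L_\tau}\min\{\mu(x),\tau(x)\}. \]
The symmetric calculation yields $A_{\tau\mu} = \dtv(\mu,\tau) + \sum_{x\in L_\mu}\min\{\mu(x),\tau(x)\}$.

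Finally, each correction term is non-negative, while the upper bound $\sum_{x\in L_\tau}\min\{\mu(x),\tau(x)\} \le \tau(L_\tau) < c$ (and analogously $\le \mu(L_\mu) < c$) follows from the CDF argument above. Averaging the two equations gives $\tfrac{1}{2}(A_{\mu\tau} + A_{\tau\mu}) = \dtv(\mu,\tau) + \delta$ with $0 \le \delta \le c$, which is tighter than the claimed $\pm 2c$ tolerance and hence completes the proof. There is no real obstacle here beyond careful bookkeeping of the case split $x\in L_\tau$ vs $x\notin L_\tau$; the only subtle point is that one must bound $\tau(L_\tau)$ rather than $\mu(L_\tau)$ (since $\mu$ could place large mass on $L_\tau$), which is exactly why taking $\min\{\mu(x),\tau(x)\}$ — arising naturally from the algebra — is the right quantity.
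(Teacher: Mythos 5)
Your proof is correct, and it actually proves a stronger statement than the lemma claims: you show that the estimator's bias is one-sided and bounded by $c$, namely
\[
\frac{1}{2}\left(\E_{x\sim\mu}\left[\max\left\{0,1-\tfrac{f_\tau(x)}{\mu(x)}\right\}\right]+\E_{x\sim\tau}\left[\max\left\{0,1-\tfrac{f_\mu(x)}{\tau(x)}\right\}\right]\right)=\dtv(\mu,\tau)+\delta,\quad 0\le\delta<c,
\]
whereas the lemma only asserts a symmetric $\pm 2c$ window. The paper's own proof goes in the opposite direction: it starts from $2\dtv(\mu,\tau)=\sum_x|\mu(x)-\tau(x)|$, partitions $\Omega$ into $L_\mu\cup L_\tau$ together with $H_\mu$, $H_\tau$, $M$, applies inclusion--exclusion to the sum over $L_\mu\cup L_\tau$, and then controls several pieces with one-sided $\pm c$ slacks via triangle-inequality bounds of the form $|\mu(x)-\tau(x)|=\tau(x)\pm\mu(x)$. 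Each such slack is lossy, and four of them accumulate into the final $\pm 4c$ on $2\dtv$, i.e.\ $\pm 2c$ on $\dtv$. Your route avoids this entirely by unrolling each expectation exactly and factoring out the identity $\mu(x)=\max\{0,\mu(x)-\tau(x)\}+\min\{\mu(x),\tau(x)\}$; the discrepancy from $\dtv$ then appears as a single clean term $\sum_{x\in L_\tau}\min\{\mu(x),\tau(x)\}\le\tau(L_\tau)<c$, with the CDF bound $\tau(L_\tau)<c$ obtained by the same ``max-weight element'' argument the definition of $\CDF$ is built for. The only bookkeeping wrinkle---that your $L_\tau$ excludes zero-mass points of $\tau$ while the paper's does not---is immaterial, since those points contribute identically to both sides of your decomposition. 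Net effect: same core ideas (identify the truncation-affected set, bound its mass under the truncated distribution using the CDF promise), but a cleaner execution that yields a constant-factor sharper bound.
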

We defer the proof of Lemma \ref{lemma:c-truncation-additive-err-2c} to Appendix \ref{apx:tldr}.

\begin{algo}
    \procname{$\procnameZestimateZboundedZratio(\mu, \eps, f, g; \hat{f}, \hat{g})$}
    \label{fig:alg:est-bounded-ratio}
    \alginput{$f$ and $g$, inaccessible to the algorithm}
    \alginput{Oracle access to $\hat{f}(x) \in (1 \pm \eps)f(x)$ for every $x$}
    \alginput{Oracle access to $\hat{g}(x) \in (1 \pm \eps)g(x)$ for every $x$}
    \algoutput{$Y = \E_{x \sim \mu}\left[\max\left\{0, 1 - \frac{f(x)}{g(x)}\right\}\right] \pm 4\eps$}
    \algsuccpr{$5/6$}
    \begin{code}
        \algitem $M \gets \ceil{6/\eps^2}$.
        \begin{For}{$i$ from $1$ to $M$}
            \algitem $x_i \sim \mu$.
            \algitem $\hat{p}_i \gets \hat{f}(x_i)$.
            \algitem $\hat{q}_i \gets \hat{g}(x_i)$.
            \algitem $X_i \gets \min\left\{1, \frac{\hat{p}_i}{\hat{q}_i}\right\}$.
        \end{For}
        \algitem Let $\bar{X} = \frac{1}{M} \sum_{i=1}^M X_i$.
        \algitem Return $Y = 1 - \bar{X}$.
    \end{code}
\end{algo}

\begin{lemma}[\procnameZestimateZboundedZratio] \label{lemma:est-dtv-internal-random-var}
    Let $\mu$ be a distribution over $\Omega$ and let $f, g : \Omega \to [0,1]$ be two inaccessible functions. Assume that we have oracle access to functions $\hat{f}, \hat{g} : \Omega \to [0,1]$ such that for every $x \in \Omega$, $\hat{f}(x) \in (1 \pm \eps)f(x)$ and $\hat{g}(x) \in (1 \pm \eps)g(x)$. Algorithm \ref{fig:alg:est-bounded-ratio} estimates $\E_\mu\left[\max\left\{0, 1 - \frac{f(x)}{g(x)}\right\}\right]$ within $\pm 4\eps$ and success probability $5/6$, at the cost of $O(1 / \eps^2)$ oracle calls.
\end{lemma}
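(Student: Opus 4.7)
The key rewriting is
\[
\max\!\left\{0,\,1-\tfrac{f(x)}{g(x)}\right\}=1-\min\!\left\{1,\tfrac{f(x)}{g(x)}\right\},
\]
so that $\E_\mu\!\left[\max\{0,1-f/g\}\right]=1-\E_\mu\!\left[\min\{1,f/g\}\right]$, and the algorithm's output $Y=1-\bar X$ is, up to sign, an empirical mean estimator of the latter quantity. The plan is therefore to bound (a) the pointwise bias of each $X_i$ coming from replacing $f(x_i)/g(x_i)$ by $\hat f(x_i)/\hat g(x_i)$, and (b) the sampling deviation of $\bar X$ from its expectation via Chebyshev.

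For (a), write $r=f(x)/g(x)$ and $\hat r=\hat f(x)/\hat g(x)$. From the oracle guarantees,
\[
\hat r\in\left[\tfrac{1-\eps}{1+\eps}\,r,\;\tfrac{1+\eps}{1-\eps}\,r\right],
\qquad\text{so}\qquad |\hat r-r|\le \tfrac{2\eps}{1-\eps}\,r.
\]
I will check by a short case split on whether $r,\hat r$ lie above or below $1$ that $|\min\{1,\hat r\}-\min\{1,r\}|\le 3\eps$ whenever $\eps<\tfrac13$ (the case $r<1\le\hat r$ and its symmetric counterpart force $r\ge\tfrac{1-\eps}{1+\eps}$, hence $1-r\le\tfrac{2\eps}{1+\eps}$; the case where both sides are below $1$ uses the multiplicative bound together with $r\le 1$). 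Taking expectations over $x\sim\mu$ yields
\[
\bigl|\E[X_i]-\E_\mu[\min\{1,f/g\}]\bigr|\le 3\eps.
\]

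For (b), since $X_i\in[0,1]$ we have $\Var[X_i]\le 1/4$, so $\Var[\bar X]\le 1/(4M)\le \eps^2/24$. Chebyshev then gives
\[
\Pr\!\left[|\bar X-\E[\bar X]|>\eps\right]\le \tfrac{1}{24}<\tfrac16.
\]
Combining, with probability at least $5/6$,
\[
\bigl|Y-\E_\mu[\max\{0,1-f/g\}]\bigr|
=\bigl|\E_\mu[\min\{1,f/g\}]-\bar X\bigr|
\le 3\eps+\eps=4\eps,
\]
as required. The sample/oracle cost is $M=O(1/\eps^2)$, with two oracle queries per sample.

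The only genuinely delicate step is the pointwise bias estimate in (a), since $\min\{1,\cdot\}$ is not multiplicatively Lipschitz across the threshold $1$; the Lipschitz-in-absolute-value property is what saves us, but one has to verify the four sign cases to see that the clip never inflates the error above $O(\eps)$. Everything else is routine concentration and arithmetic.
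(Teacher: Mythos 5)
Your proof follows the same route as the paper's: rewrite $\max\{0,1-f/g\}=1-\min\{1,f/g\}$, bound the pointwise clipping/approximation bias by $3\eps$ via a case split around the threshold $1$, and apply Chebyshev to the sample mean to get an extra $\eps$ of sampling error, combining to $4\eps$. Your bias argument (using $1$-Lipschitzness of $\min\{1,\cdot\}$ to reduce to $|\hat r - r|\le \tfrac{2\eps}{1-\eps}r$ and then a four-way sign split) is a slightly cleaner and tighter version of the paper's case analysis, and your use of $\Var[X_i]\le 1/4$ rather than $\le 1$ is a harmless sharpening, but the structure is identical.
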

\begin{proof}
    It suffices to show that $\bar{X}$ estimates $\E_\mu\left[\min\left\{1, \frac{f(x)}{g(x)}\right\}\right]$ within $\pm 4\eps$-error with probability at least $5/6$.

    We explicitly bound the additive error in a single trial. If $\frac{f(x)}{g(x)} \ge \frac{1 + \eps}{1 - \eps}$, then $\frac{\hat{f}(x)}{\hat{g}(x)} \ge 1$, and hence $\min\left\{1, \frac{\hat{f}(x)}{\hat{g}(x)}\right\} = \min\left\{1, \frac{f(x)}{g(x)}\right\} = 1$. If $\frac{f(x)}{g(x)} \le \frac{1 + \eps}{1 - \eps}$, then the error $(\frac{1 \pm \eps}{1 \pm \eps} - 1) \frac{f(x)}{g(x)}$ is bounded by $\pm 3\eps$.

    For $q = \ceil{6/\eps^2}$, let $X_1,\ldots,X_q$ be independent samples of $\min\left\{1, \frac{\hat{f}(x)}{\hat{g}(x)}\right\}$, each costing two oracle calls. Let $\bar{X} = \frac{1}{q}\sum_{i=1}^q X_i$. Clearly, all $X_i$s are bounded between $0$ and $1$, hence their variance is bounded by $1$ as well. An average over $\ceil{6/\eps^2}$ trials has variance $\Var[\bar{X}] \le \frac{1}{6}\eps^2$, and hence by Chebyshev inequality, the probability to deviate by more than $\eps$ is bounded by $1/6$.

    Overall, with probability at least $5/6$,
    \[  \bar{X}
        = \E_\mu\left[\min\left\{1, \frac{\hat{f}(x)}{\hat{g}(x)}\right\}\right] \pm \eps
        = \left(\E_\mu\left[\min\left\{1, \frac{f(x)}{g(x)}\right\}\right] \pm 3\eps \right) \pm \eps
        = \E_\mu\left[\min\left\{1, \frac{f(x)}{g(x)}\right\}\right] \pm 4 \eps
        \qedhere \]
\end{proof}

\begin{algo}
    \procname{$\textsf{Estimate-$\dtv$}(\eps; \mu,\tau)$}
    \label{fig:alg:estimate-dtv}
    \algoracle{The $(\hat{\eps},\hat{\eps})$-peek oracles of $\mu$ and $\tau$ for $\hat{\eps} = \frac{1}{6}\eps$}
    \begin{code}
        \algitem Let $f_\mu$ be a non-accessible, arbitrary $\hat{\eps}$-truncated function of $\mu$, implicitly defined by the output of the peek oracle.
        \algitem Let $f_\tau$ be a non-accessible, arbitrary $\hat{\eps}$-truncated function of $\tau$, implicitly defined by the output of the peek oracle.
        \algitem Consider the following functions:
        \begin{Codeblock*}
            \item $f(x)$: $f_\mu$ (not accessible).
            \item $g(x)$: $f_\tau$ (not accessible).
            \item $\hat{f}(x)$ is the oracle call to the $(\hat{\eps},\hat{\eps})$-peek oracle in $\mu$.
            \item $\hat{g}(x)$ is the oracle call to the $(\hat{\eps},\hat{\eps})$-peek oracle in $\tau$.
        \end{Codeblock*}
        \algitem Let $X^\mu \gets \procnameZestimateZboundedZratioHREF(\mu, \hat{\eps}, f, g; \hat{f}, \hat{g})$.
        \algitem Let $X^\tau \gets \procnameZestimateZboundedZratioHREF(\tau, \hat{\eps}, g, f; \hat{g}, \hat{f})$.
        \algitem Return $\frac{1}{2}X^\mu + \frac{1}{2}X^\tau$.
    \end{code}
\end{algo}

\begin{lemma} \label{lemma:peekaboo-est-dtv}
    Let $\mu$ and $\tau$ be two distributions over $\Omega = \{1,\ldots,N\}$. Algorithm \ref{fig:alg:estimate-dtv} estimates $\dtv(\mu, \tau)$ within $\pm \eps$-additive error at the cost of $O(1/\eps^2)$ samples and $O(1/\eps^2)$ calls to the $\frac{1}{6}\eps$-peek oracle.
\end{lemma}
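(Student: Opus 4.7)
My plan is to combine Lemma~\ref{lemma:c-truncation-additive-err-2c}, which expresses $\dtv(\mu,\tau)$ in terms of two expectations over truncated functions, with Lemma~\ref{lemma:est-dtv-internal-random-var}, which estimates precisely this kind of expectation. First, I would invoke Observations~\ref{obs:c-eps-peek-as-querying-c-eps-approximation} and~\ref{obs:c-eps-approximation-as-eps-approx-of-c-truncate} to fix, for the sake of analysis, specific $\hat\eps$-truncated functions $f_\mu$ and $f_\tau$ (where $\hat\eps = \eps/6$) that are implicitly defined by the two peek oracles. The consistency guarantee of the peek oracle ensures that these ``ground truth'' functions are globally well-defined rather than depending on which elements happened to be queried.

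Once $f_\mu$ and $f_\tau$ are fixed, applying Lemma~\ref{lemma:c-truncation-additive-err-2c} with $c = \hat\eps$ yields $\dtv(\mu,\tau) = \tfrac{1}{2}(A_\mu + A_\tau) \pm 2\hat\eps$, where $A_\mu$ and $A_\tau$ are the two expectations in the statement. Lemma~\ref{lemma:est-dtv-internal-random-var} then guarantees that each of $X^\mu$ and $X^\tau$ estimates its corresponding $A$-quantity to within additive error $\pm 4\hat\eps$ with success probability $5/6$. A union bound gives, with probability at least $2/3$, that $\tfrac{1}{2}(X^\mu + X^\tau) \in \tfrac{1}{2}(A_\mu + A_\tau) \pm 4\hat\eps$; combining this with the $\pm 2\hat\eps$ truncation bound via the triangle inequality, the algorithm's output will differ from $\dtv(\mu,\tau)$ by at most $6\hat\eps = \eps$.

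The complexity is immediate from Lemma~\ref{lemma:est-dtv-internal-random-var}: each of the two invocations of \procnameZestimateZboundedZratio draws $O(1/\hat\eps^2) = O(1/\eps^2)$ samples from its respective distribution and makes the same number of peek-oracle calls. The one piece that will require some care is aligning the $(f, g)$ roles in the two invocations (note the deliberate swap $(f,g) \leftrightarrow (g,f)$ between the $X^\mu$ and $X^\tau$ calls) with the placement of $f_\mu, f_\tau$ versus $\mu, \tau$ in the identity of Lemma~\ref{lemma:c-truncation-additive-err-2c}; this is routine bookkeeping once the truncated functions have been fixed at the outset, but it is the only nontrivial step in the proof.
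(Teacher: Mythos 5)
Your proposal is correct and follows essentially the same route as the paper's proof: fix the $\hat\eps$-truncated functions via Observations~\ref{obs:c-eps-peek-as-querying-c-eps-approximation} and~\ref{obs:c-eps-approximation-as-eps-approx-of-c-truncate}, expand $\dtv(\mu,\tau)$ via Lemma~\ref{lemma:c-truncation-additive-err-2c} with $c=\hat\eps$, estimate each of the two expectations to within $\pm 4\hat\eps$ with probability $5/6$ via Lemma~\ref{lemma:est-dtv-internal-random-var}, and combine with a union bound to get error $\pm 6\hat\eps = \pm\eps$ with probability $2/3$. The ``bookkeeping'' concern you flagged about matching $(f,g)$ roles between Algorithm~\ref{fig:alg:estimate-dtv} and Lemma~\ref{lemma:c-truncation-additive-err-2c} is a legitimate one — the labeling of $f,g$ in Algorithm~\ref{fig:alg:estimate-dtv} appears to be transposed relative to what the identity in Lemma~\ref{lemma:c-truncation-additive-err-2c} requires (the call $\procnameZestimateZboundedZratio(\mu,\hat\eps,f_\mu,f_\tau;\ldots)$ estimates $\E_\mu[\max\{0,1-f_\mu/f_\tau\}]$, while the identity wants $\E_\mu[\max\{0,1-f_\tau/\mu\}]$), so you were right to anticipate needing care there.
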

\begin{proof}[Proof]
    By Observation \ref{obs:c-eps-peek-as-querying-c-eps-approximation}, the $\hat{\eps}$-peek oracles (for $\mu$ and $\tau$) can be seen as query oracles of $(\hat{\eps},\hat{\eps})$-approximation functions $h_\mu$ and $h_\tau$. By Observation \ref{obs:c-eps-approximation-as-eps-approx-of-c-truncate}, these $h_\mu$ and $h_\tau$ can be seen as $(1 \pm \hat{\eps})$-approximations of $\hat{\eps}$-truncated functions $f_\mu$, $f_\tau$

    By Lemma \ref{lemma:est-dtv-internal-random-var}, with probability at least $2/3$ (a union bound over two $5/6$-success events):
    \begin{eqnarray*}
        X^\mu
        &=& \E_\mu\left[ \max\left\{0, \frac{f_\tau(x)}{f_\mu(x)} \right\} \right] \pm 4\hat{\eps} \\
        X^\tau
        &=& \E_\tau\left[ \max\left\{0, \frac{f_\mu(x)}{f_\tau(x)} \right\} \right] \pm 4\hat{\eps}
    \end{eqnarray*}

    If this happens, then:
    \begin{eqnarray*}
        \frac{1}{2}X^\mu + \frac{1}{2}X^\tau
        &=& \frac{1}{2}\E[X^\mu] + \frac{1}{2}\E[X^\tau] \pm 4\hat{\eps} \\
        \text{[Lemma \ref{lemma:c-truncation-additive-err-2c}]} &=& (\dtv(\mu,\tau) \pm 2\hat{\eps}) \pm 4\hat{\eps}
        = \dtv(\mu,\tau) \pm 6\hat{\eps}
        = \dtv(\mu,\tau) \pm \eps
    \end{eqnarray*}
\end{proof}

We now recall Theorem \ref{th:ubnd-est-dtv} and prove it.
\thZubndZestZdtv*
\begin{proof}
    This is an application of Lemma \ref{lemma:generic-application-single-distribution} over the $O(1/\eps^2)$-sample algorithm for estimating $\dtv(\mu,\tau) \pm \eps$ stated in Lemma \ref{lemma:peekaboo-est-dtv}.
\end{proof}

\subsection{Non-tolerant testing of equivalence}
\label{subsec:eps-test-equivalence::of-sec:applications}

In this subsection we present a non-tolerant $\eps$-test for equivalence of two distributions $\mu$ and $\tau$. Our result reduces the polynomial degree of $1/\eps$ in comparison to the trivial reduction from estimating the distance between $\mu$ and $\tau$ within $\pm \frac{1}{2}\eps$-additive error.

\begin{algo}
    \label{fig:alg:peekaboo-non-tolerant-core}
    \procname{$\textsf{Equivalence-Test-core}(\eps; \mu,\tau)$}
    \algoracle{The $(\frac{1}{16}\eps,\frac{1}{16}\eps)$-peek oracles for $\mu$ and $\tau$}
    
    \begin{code}
        \begin{For}{$\ceil{3/\eps}$ times}
            \algitem Draw $x \sim \mu$.
            \algitem Call the $(\frac{1}{16}\eps,\frac{1}{16}\eps)$-peek oracle for $\mu$, $x$ to obtain $\hat{p}$.
            \algitem Call the $(\frac{1}{16}\eps,\frac{1}{16}\eps)$-peek oracle for $\tau$, $x$ to obtain $\hat{q}$.
            \begin{If}{$\abs{\hat{q} / \hat{p} - 1} > \eps/4$}
                \algitem Return \reject.
            \end{If}
        \end{For}
        \algitem Return \accept.
    \end{code}
\end{algo}

\begin{lemma}[Based on \cite{RS09}] \label{lemma:peekaboo-non-tolerant-new}
    Let $\mu$ and $\tau$ be two distributions over $\Omega = \{1,\ldots,N\}$. Algorithm \ref{fig:alg:peekaboo-non-tolerant-core} distinguishes between $\tau=\mu$ and $\dtv(\tau,\mu) > \eps$ using $O(1/\eps)$ independent samples from $\mu$ and $O(1/\eps)$ calls to the $(\frac{1}{16}\eps, \frac{1}{16}\eps)$-peek oracle.
\end{lemma}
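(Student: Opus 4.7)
The plan is to establish the complexity bound together with the two correctness conditions: that with probability at least $2/3$ the algorithm accepts when $\mu=\tau$ (completeness), and rejects when $\dtv(\mu,\tau)>\eps$ (soundness). The complexity claim is immediate, since the algorithm runs $\ceil{3/\eps}$ iterations, each drawing one sample from $\mu$ and making two $(\tfrac{1}{16}\eps,\tfrac{1}{16}\eps)$-peek queries, giving $O(1/\eps)$ samples and $O(1/\eps)$ oracle calls.

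For soundness I would first show that the ``heavy-violator'' set $B=\{x:\mu(x)>(1+\eps/2)\tau(x)\}$ has $\mu$-mass exceeding $\eps/2$ whenever $\dtv(\mu,\tau)>\eps$. Write $\dtv(\mu,\tau)=\sum_{x:\mu(x)>\tau(x)}(\mu(x)-\tau(x))$; for $x$ with $\tau(x)<\mu(x)\le(1+\eps/2)\tau(x)$ the summand is at most $(\eps/2)\mu(x)$, so those elements together contribute at most $\eps/2$. The remaining contribution comes from $B$, giving $\mu(B)\ge\sum_B(\mu-\tau)>\eps/2$. I would then verify that for every $x\in B$ the algorithm rejects: when both oracles return positive estimates one has $\hat q/\hat p\le (1+\eps/16)/((1-\eps/16)(1+\eps/2))<1-\eps/4$ for sufficiently small $\eps$; and when either oracle returns $0$ (permitted by Definition~\ref{def:c-eps-peek-oracle} only on the low-CDF set), the resulting ratio is $0$ or undefined and thus treated as rejection. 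Hence each iteration rejects with probability at least $\mu(B)>\eps/2$, and the probability of surviving all $\ceil{3/\eps}$ iterations is at most $(1-\eps/2)^{3/\eps}\le e^{-3/2}<1/3$.

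For completeness, assume $\mu=\tau$, so in particular $\CDF_\mu=\CDF_\tau$. Whenever the drawn $x$ satisfies $\CDF_\mu(x)\ge\eps/16$ both oracles must return values in $(1\pm\eps/16)\mu(x)$, so $\hat q/\hat p\in(1\pm\eps/16)/(1\mp\eps/16)\subseteq 1\pm\eps/7$, comfortably below the $\eps/4$ rejection threshold. The only iteration-level ``bad event'' is thus that the sampled $x$ has $\CDF_\mu(x)<\eps/16$, which happens with probability at most $\eps/16$ by the definition of CDF (the total $\mu$-mass of elements with CDF below any threshold $c$ is at most $c$). A union bound over the $\ceil{3/\eps}$ iterations bounds the failure probability by $3/16<1/3$, delivering acceptance with probability at least $2/3$.

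The only delicate point in the argument is the CDF-truncation clause in Definition~\ref{def:c-eps-peek-oracle}, which allows the oracle to return $0$ on low-CDF elements. On the soundness side this is entirely benign, since a $0$ only promotes rejection, which is exactly what we want. On the completeness side it is absorbed by the fact that the $\mu$-mass of the low-CDF region is at most $\eps/16$, so setting $c$ to a small enough constant fraction of $\eps$ keeps the union-bound cost below the $1/3$ budget even against the $\Theta(1/\eps)$ iteration count. Once this bookkeeping is done the rest of the argument is routine.
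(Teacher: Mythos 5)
Your proof is correct and follows essentially the same structure as the paper's: complexity is immediate, completeness is a union bound over the low-CDF bad event, and soundness isolates a violator set of $\mu$-mass exceeding $\eps/2$ on which the per-iteration rejection probability is large. Two bookkeeping differences are worth flagging. First, you use the set $B=\{x:\mu(x)>(1+\eps/2)\tau(x)\}$ while the paper uses $A=\{x:\tau(x)<(1-\eps/2)\mu(x)\}\subseteq B$; both have mass $>\eps/2$ and both lead to the same kind of ratio computation, so this is interchangeable. Second, and more substantively, you claim the rejection rate per iteration is $\geq\mu(B)$ by treating the case $\hat p=0$ (undefined ratio) as a rejection. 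That interpretation is not actually specified by the algorithm, which just tests $\abs{\hat q/\hat p - 1}>\eps/4$. The paper sidesteps this entirely: it only counts rejections coming from samples $x\in A$ with $\CDF_\mu(x)\geq\eps/16$, for which $\hat p\in(1\pm\eps/16)\mu(x)>0$ is guaranteed, and shows this alone has probability $\geq\frac{1}{2}\eps-\frac{1}{16}\eps=\frac{7}{16}\eps$ per iteration, which is still enough to push the survival probability below $1/3$. Your version would be airtight if you similarly restricted to the high-CDF part of $B$ rather than appealing to a behavior the pseudocode leaves unspecified. The rest of the calculation (including the $3\hat\eps$-style slack for completeness and the Chernoff-free $(1-p)^q\le e^{-pq}$ tail bound for soundness) matches the paper's up to harmless constant choices.
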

\begin{proof}
    Let $\hat{\eps} = \frac{1}{16}\eps$. By Observation \ref{obs:c-eps-peek-as-querying-c-eps-approximation}, the $(\hat{\eps},\hat{\eps})$-peek oracles for $\mu$ and $\tau$ can be seen as query oracles to $(\hat{\eps}, \hat{\eps})$-approximation functions $f_\mu$ and $f_\tau$ (respectively).
    
    If $\mu = \tau$ and $\CDF_\mu(x) \ge \hat{\eps}$, then we expect that $\abs{\frac{f_\tau(x)}{f_\mu(x)} - 1} = \abs{\frac{1 \pm \hat{\eps}}{1 \pm \hat{\eps}} - 1} \le 3\hat{\eps} < \frac{1}{4}\eps$.

    If $\tau(x) < (1 - \eps / 2)\mu(x)$ and $\CDF_\mu(x) \ge \hat{\eps}$, then we expect that $1 - \frac{f_\tau(x)}{f_\mu(x)} \ge 1 - \frac{1 \pm \hat{\eps}}{1 \pm \hat{\eps}} \cdot \left(1 - \frac{1}{2}\eps\right) > \frac{1}{4}\eps$.

    If $\mu = \tau$, then in every iteration, the probability to draw $x$ for which $\CDF_\mu(x) \ge \hat{\eps}$ is at least $1 - \hat{\eps}$. By the union bound, the probability to reject is at most $\ceil{3/\eps} \cdot \hat{\eps} \le \frac{4}{\eps} \cdot \frac{1}{16}\eps = \frac{1}{4}$.

    For $\dtv(\mu, \tau) > \eps$, let $A = \{ x : \tau(x) < (1 - \eps/2)\mu(x) \}$. By definition of the total variation distance,
    \begin{eqnarray*}
        \dtv(\tau,\mu)
        = \sum_{\mu(x) > \tau(x)} (\mu(x) - \tau(x))
        &=& \sum_{\mu(x) > \tau(x)} \mu(x)\left(1 - \frac{\tau(x)}{\mu(x)}\right) \\
        &=& \E_{x\sim \mu}\left[\max\left\{0, 1 - \frac{\tau(x)}{\mu(x)}\right\}\right] \\
        &\le& \mu(A) \cdot 1 + \mu(\neg A) \cdot \frac{1}{2}\eps
        \le \mu(A) + \frac{1}{2}\eps
    \end{eqnarray*}
    Since $\dtv(\tau,\mu) > \eps$, we obtain that $\mu(A) > \frac{1}{2}\eps$. The probability to draw an $A$-sample from $\mu$ which has $\CDF_\mu(x) \ge \hat{\eps}$, and hence reject, is at least $1 - \left(1 - (\frac{1}{2}\eps - \hat{\eps})\right)^{\ceil{3/\eps}} = 1 - \left(1 - \frac{7}{16}\eps\right)^{\ceil{3/\eps}} \ge 1 - e^{-21/16} > \frac{2}{3}$.
\end{proof}

\begin{lemma} \label{lemma:almost-ubnd-nontol-equivalence}
    Let $\mu$, $\tau$ be two distributions over $\Omega = \{1,\ldots,N\}$. There exists an algorithm for distinguishing, with probability at least $5/8$, between the case where $\mu=\tau$ and the case where $\dtv(\mu,\tau) > \eps$ using:
    \begin{itemize}
        \item $O((\log \log N / \eps + 1/\eps^6) \cdot \poly(\log \eps^{-1}))$ conditional samples at worst-case.
        \item $O((\log \log N / \eps + 1/\eps^5) \cdot \poly(\log \eps^{-1}))$ conditional samples in expectation if $\mu=\tau$.
    \end{itemize}
\end{lemma}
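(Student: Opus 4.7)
The plan is to reduce to the peek-oracle-based equivalence test of Algorithm~\ref{fig:alg:peekaboo-non-tolerant-core} (whose correctness is Lemma~\ref{lemma:peekaboo-non-tolerant-new}), which uses $O(1/\eps)$ samples from $\mu$ and $O(1/\eps)$ calls to the $(\hat\eps,\hat\eps)$-peek oracles of $\mu$ and $\tau$ for $\hat\eps=\frac{1}{16}\eps$. Each such peek-oracle call will be simulated by invoking \procnameZmainZsingleHREF (Theorem~\ref{th:estimation-task}) with parameters $(c,\eps)$ set to $(\hat\eps,\hat\eps)$, amplified through the median of $O(\log\eps^{-1})$ independent executions to reduce the per-call error to $O(\eps)$ (by Observation~\ref{obs:median-amplification}\ref{median-amplification:log-to-2c}), and cached so that repeated queries on the same $x$ return the same answer (this is needed because of the consistency requirement of Definition~\ref{def:c-eps-peek-oracle}). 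By a union bound over the $O(1/\eps)$ simulated peek calls, the simulation is faithful with probability bounded away from zero, and combined with the $2/3$ success probability of the underlying test we obtain the desired $5/8$ (this is the same pattern as in Lemma~\ref{lemma:generic-application-k-distributions-testing}).

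For the worst-case complexity bound, I would simply invoke Corollary~\ref{cor:worst-case-complexity-of::th:estimation-task} with $c=\hat\eps$: each individual \procnameZmainZsingle call costs $O(\log\log N)+O(\poly\log\eps^{-1})\cdot (1/\eps^3+1/\eps^5)=O(\log\log N)+\poly\log\eps^{-1}\cdot O(1/\eps^5)$. Multiplying by $O(1/\eps)$ outer peek calls and the $O(\log \eps^{-1})$ amplification factor yields the claimed $O((\log\log N/\eps+1/\eps^6)\cdot\poly\log\eps^{-1})$ worst-case sample complexity.

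For the sharper expected bound under $\mu=\tau$, the key observation is that the elements $x$ entering every simulated peek-oracle call are exactly the samples $x\sim\mu$ drawn inside Algorithm~\ref{fig:alg:peekaboo-non-tolerant-core}, and when $\mu=\tau$ such an $x$ is equally a sample from $\tau$. Therefore Corollary~\ref{cor:expected-complexity-of::th:estimation-task} applies to every peek query of \emph{either} distribution, giving an expected per-call cost of $O(\log\log N)+\poly\log\eps^{-1}\cdot O(1/\eps^4)$. Multiplying by the $O(1/\eps)$ iterations of the outer loop and the $O(\log\eps^{-1})$ amplification factor produces the desired $O((\log\log N/\eps+1/\eps^5)\cdot\poly\log\eps^{-1})$ expected bound. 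The main delicate point in the writeup is precisely justifying this application of Corollary~\ref{cor:expected-complexity-of::th:estimation-task}: it requires that the element being fed to \procnameZmainZsingle was drawn (unconditionally) from the very distribution being queried, which holds automatically for the $\mu$-peek calls and holds for the $\tau$-peek calls exactly because we are in the $\mu=\tau$ scenario, so no additional reasoning about stopping times of the outer loop is needed.
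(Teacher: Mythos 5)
Your proposal is correct and follows essentially the same route as the paper: simulate the $(\hat\eps,\hat\eps)$-peek calls of Algorithm~\ref{fig:alg:peekaboo-non-tolerant-core} with median-amplified calls to \procnameZmainZsingleHREF (with $c=\hat\eps$), then read off the worst-case bound from Corollary~\ref{cor:worst-case-complexity-of::th:estimation-task} (the paper routes this through Lemma~\ref{lemma:generic-application-k-distributions}, which is proved from that corollary, so it is the same computation), and sharpen the $\mu=\tau$ expected-case bound via Corollary~\ref{cor:expected-complexity-of::th:estimation-task} by noting that under $\mu=\tau$ every queried element is an unconditional sample from the very distribution being queried. The arithmetic with $c=\hat\eps=\eps/16$ checks out in both cases, and the observation you flag as ``the main delicate point'' is exactly the one the paper makes to justify the expected-case bound.
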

\begin{proof}
    We simulate every peek call of Algorithm \ref{fig:alg:peekaboo-non-tolerant-core} using \procnameZmainZsingleHREF and amplify its confidence by taking the median of $O(\log \eps^{-1})$ independent calls. Lemma \ref{lemma:generic-application-k-distributions} implies the correctness and the worst-case complexity of this reduction. Since the success probability of the core algorithm is at least $2/3$ and the reduction error is at most $1/24$, the success probability of the result algorithm is at least $5/8$.
    
    Observe that if $\mu=\tau$ then 
    every call to \procnameZmainZsingleHREF (for answering a peek call) is performed on a value $x$ that was sampled from a distribution that is identical to the one for which it is queried. This allows the use of Corollary \ref{cor:expected-complexity-of::th:estimation-task} to obtain the expected-case complexity of the reduction when $\mu=\tau$.
\end{proof}

We recall Theorem \ref{th:ubnd-nontol-equivalence} and prove it.
\thZubndZnontolZequivalence*

\begin{proof}
    Let $Q = O((\log \log N / \eps + 1/\eps^5) \cdot \poly(\log \eps^{-1}))$ be the expected number of samples of the algorithm that is guaranteed by Lemma \ref{lemma:almost-ubnd-nontol-equivalence} in the case where $\mu=\tau$ (for any choice of $\mu$). Consider the following algorithm: we run $45$ independent instances of the algorithm of Lemma \ref{lemma:almost-ubnd-nontol-equivalence} and take the majority answer, with the exception that if we make our $(540Q+1)$st sample, then we immediately reject and terminate the run.

    If $\mu = \tau$, then by Markov's inequality, the probability to draw the $(540Q+1)$st sample is smaller than $1/12$. Hence, with probability $11/12$ the core algorithm runs successfully, and by Observation \ref{obs:majority-amplification-ad-hoc}, it accepts with probability at least $3/4$. By the union bound, we accept with probability at least $2/3$.
    
    If $\dtv(\mu, \tau) > \eps$, then either we terminate after $540Q+1$ queries and reject, or the core algorithm runs successfully and rejects with probability at least $3/4$. The probability to reject is at least $3/4 > 2/3$.
\end{proof}

\section{Lower bounds}
\label{sec:lbnds}

\subsection{Tight lower bound for the $(c,\eps)$-estimation task}
\label{subsec:tight-lbnd-c-eps-estimation::of-sec:lbnds}

In this subsection we show a tight lower bound of $\Omega(\log \log N)$ for estimating the probability mass of individual elements using conditional samples. Since the demonstrating distributions are uniform over their support, the expected case and the worst case are identical.

For some integer $1 \le k \le \log N$, we define $D_k$ as the following distribution over inputs (in themselves distributions over $\{1,\ldots,N\}$): we draw a set $K \subseteq \{1,\ldots,N\}$ such that every element belongs to $K$ with probability $2^{-k}$ independently, and then return the uniform distribution over $K$.
\begin{lemma}
    Let $1 \le k \le \log N - \log \log N - 8$. With probability $1 - o(1/N)$ over the drawing of $\mu$ from $D_k$, every element in the support of $\mu$ has mass in the range $\left(1 \pm \frac{1}{9}\right) \cdot \frac{2^k}{N}$.
\end{lemma}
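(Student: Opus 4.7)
The plan is short and direct. Since $\mu$ drawn from $D_k$ is the uniform distribution over a random set $K\subseteq\{1,\ldots,N\}$, every element of the support has mass exactly $1/|K|$, so the entire lemma reduces to a concentration statement for $|K|$. Here $|K|\sim\Bin(N,2^{-k})$, with $\E[|K|]=N\cdot 2^{-k}$. The hypothesis $k\le \log N-\log\log N -8$ gives
\[
    \E[|K|] \;=\; N\cdot 2^{-k}\;\ge\; 2^{\log\log N + 8}\;=\;256\log N,
\]
so the expectation is comfortably large.

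I would then plug $r=1/10$ into the multiplicative Chernoff bound (stated earlier in the Preliminaries) to get
\[
    \Pr\!\left[|K|\notin (1\pm \tfrac{1}{10})\,N\cdot 2^{-k}\right]
    \;\le\; 2e^{-\tfrac{1}{300}\cdot N\cdot 2^{-k}}
    \;\le\; 2e^{-(256/300)\log N},
\]
which is $o(1/N)$ regardless of the logarithm base (the $2^8$ slack in the hypothesis is precisely what turns a polynomial decay into $1/N^{1+\Omega(1)}$). The numerical choice $r=1/10$ is dictated by the elementary arithmetic identity $\frac{1}{1\pm 1/10}\subseteq 1\pm 1/9$: on the complementary (good) event, $1/|K|\in(1\pm 1/9)\cdot 2^k/N$, and because all support elements share this common mass, the conclusion holds simultaneously for all of them. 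The edge case $|K|=0$, which occurs with negligible probability as well, makes the statement vacuously true since the support is empty.

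There is no real obstacle here; the only two things to keep an eye on are the slack conversion between a $1/10$ window on $|K|$ and a $1/9$ window on $1/|K|$ (explaining the asymmetric constants), and checking that the constant $8$ in the upper bound on $k$ is large enough to push the Chernoff exponent strictly above $\ln N$ so that the tail is truly $o(1/N)$ rather than $O(1/N)$.
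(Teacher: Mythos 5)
Your proof is correct and matches the paper's own argument almost verbatim: apply the multiplicative Chernoff bound with $r = 1/10$ to $|K|\sim\Bin(N,2^{-k})$ to get the $2e^{-\frac{1}{300}N2^{-k}}$ tail, then convert the $(1\pm 1/10)$ window on $|K|$ into a $(1\pm 1/9)$ window on $1/|K|$. One small caution about your parenthetical "regardless of the logarithm base": the hypothesis $k\le\log N-\log\log N-8$ is tightest when $\log=\log_2$ (giving $\E[|K|]\ge 256\log_2 N$ and exponent $\approx N^{-1.23}$), so the constant $8$ is what makes it go through there; for bases below $2$ the bound would not hold, but the paper's convention is $\log_2$ and the claim is fine.
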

\begin{proof}
    By Chernoff's bound, $\Pr\left[\abs{K} \in \left(1 \pm \frac{1}{10}\right) 2^{-k} N\right] \ge 1 - 2e^{-\frac{1}{300} \cdot 2^{-k} N} = 1 - o(1/N)$. If this happens, then every element in the support of $\mu$ has probability mass $\frac{1}{\abs{K}} = \frac{1}{1 \pm 1/10} \cdot \frac{2^k}{N} = \left(1 \pm \frac{1}{9}\right) \frac{2^k}{N}$.
\end{proof}

Let $k_\mathrm{min} = \floor{\frac{1}{3}\log N}$ and $k_\mathrm{max} = \ceil{\frac{2}{3}\log N}$. We use $k_\mathrm{min}$ and $k_\mathrm{max}$ to define the ``composed'' distribution over inputs: $D$ draws $k$ uniformly in the range $\{k_\mathrm{min},\ldots,k_\mathrm{max}\}$ and then returns the pair $(k,\mu)$ where $\mu$ is an input distribution drawn from $D_k$.

\begin{observation} \label{obs:estimation-to-calc-k}
    Let $(k,\mu) \sim D$. A conditional-sampling algorithm that draws $x \sim \mu$ and estimates $\mu(x)$ within $1 \pm \frac{1}{9}$-factor with probability at least $p$ can correctly obtain $k$ with probability at least $p - o(1)$ for sufficiently large $N$.
\end{observation}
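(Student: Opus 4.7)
The plan is to exhibit a deterministic decoder that maps the estimator's output $\hat p$ to a candidate value $\hat k$, and show that $\hat k = k$ whenever both (i) $\mu$ enjoys the support-concentration guarantee of the preceding lemma and (ii) the estimator succeeds. Combining these two events via a union bound will yield the $p-o(1)$ bound.

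First I would observe that the intervals $I_{k'}:=\left[(8/9)^2,\,(10/9)^2\right]\cdot 2^{k'}/N$ are pairwise disjoint over integer $k'$: indeed $(10/9)^2=100/81<128/81=2\cdot(8/9)^2$, so consecutive intervals do not overlap. Hence the decoder $\hat k(\hat p):=$ the unique integer $k'$ with $\hat p\in I_{k'}$ (arbitrary if none exists) is well-defined on any point that lies in some $I_{k'}$.

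Next I would apply the preceding lemma across the full range of $k$. For $k_{\min}\le k\le k_{\max}=\ceil{(2/3)\log N}$ we have $k\le \log N-\log\log N-8$ for sufficiently large $N$, so the lemma gives $\Pr[G\mid k]\ge 1-o(1/N)$, where $G$ denotes the event ``every element of $\mathrm{supp}(\mu)$ has mass in $(1\pm 1/9)\cdot 2^k/N$''. Averaging over $k$ we obtain $\Pr[G]\ge 1-o(1)$ under $(k,\mu)\sim D$. Conditioned on $G$, every realization of $x\sim\mu$ satisfies $\mu(x)\in(1\pm 1/9)\cdot 2^k/N$, and if the estimator succeeds then $\hat p\in(1\pm 1/9)\mu(x)\subseteq(1\pm 1/9)^2\cdot 2^k/N\subseteq I_k$, whence $\hat k(\hat p)=k$.

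Finally, a union bound combines the failure probability $o(1)$ of $G$ with the failure probability $1-p$ of the estimator to give $\Pr[\hat k=k]\ge p-o(1)$, as required. There is no real technical obstacle in this argument: the only things to verify are the numerical disjointness of the intervals $I_{k'}$ and that the range $[k_{\min},k_{\max}]$ of $k$ lies within the regime where the preceding lemma applies, both of which are immediate.
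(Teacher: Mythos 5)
Your proposal is correct and follows essentially the same route as the paper: invoke the preceding lemma (which requires $k_{\max}\le\log N-\log\log N-8$, valid for large $N$) to get the support-concentration event with probability $1-o(1)$, observe that on that event a successful $(1\pm\frac19)$-estimate determines $k$ via a simple decoder, and apply a union bound. The only cosmetic difference is the decoder: you use membership in the explicitly disjoint intervals $[(8/9)^2,(10/9)^2]\cdot 2^{k'}/N$, while the paper rounds $\log(N/\hat p)$ to the nearest integer; these are equivalent in spirit, and your disjointness check $(10/9)^2<2(8/9)^2$ plays the same role as the paper's check that $\log(1\pm\frac14)$ stays strictly within $\pm\frac12$.
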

\begin{proof}
    If $N$ is sufficiently large, then $k_\mathrm{max} \le \log N - \log \log N - 8$. With probability $1-o(1/N) = 1-o(1)$, the mass of individual elements in $\mu$ is in the range $\left(1 \pm \frac{1}{9}\right)\frac{2^k}{N}$. Hence, with probability $1-p-o(1)$, the algorithm obtains an estimation $\hat{p} \in \left(1 \pm \frac{1}{9}\right)\left(1 \pm \frac{1}{9}\right)\frac{2^k}{N} = \left(1 \pm \frac{1}{4} \right)\frac{2^k}{N}$.
    
    In this case, the algorithm can retrieve $k$ using $\hat{k} =\mathrm{round}(\log (N/\hat{p}))$ since $\log (N/\hat{p}) = \log 2^k + \log \left(1 \pm \frac{1}{4}\right) = k \pm 0.42$. Hence, the rounding of $\log (N/\hat{p})$ to the nearest integer results in $k$.
\end{proof}

By Yao's principle \cite{yao77}, every probabilistic algorithm can be seen as a distribution over deterministic algorithms, and a lower bound against all deterministic algorithms using a single distribution over inputs translates to a lower bound against all probabilistic algorithms. A deterministic querying algorithm can be characterized as a decision tree, where every internal node (including the root) holds a query, and every edge corresponds to a possible outcome.

\subsubsection*{Our interim models and additional notations}

For our lower bound we investigate the relationship of three models. The first is not related to distributions at all, and is just a model for the plain binary search task for a value $k$ that is drawn uniformly from the set $I=\{k_\mathrm{min},\ldots,k_\mathrm{max}\}$. The second model, a uniform ``conditional'' sampling model, uses the responses to the comparison queries with $k$ to provide additional simulated responses to a conditional sampling oracle, although at this point no actual distribution is used.

The third model, a ``leaking'' conditional sampling model, draws a distribution $\mu$ over $\Omega=\{1,\ldots,N\}$ (whose size is $2^{\Theta(|I|)}$) using $D_k$, and complements the comparison queries with actual conditional samples. In particular, the expressiveness of algorithms under this last model is at least as strong as the expressiveness of algorithms that only take conditional samples from $\mu$. By Observation \ref{obs:estimation-to-calc-k}, an estimation of an element drawn from $\mu$ with high probability reveals the value of $k$. To finalize, we show that the behavior of an algorithm under the leaking model is very close to its behavior under the uniform model (which is fully simulated from just the comparison queries), and hence a working estimation algorithm provides an algorithm that with high probability solves the binary search problem for $k$. This implies the lower bound of $\Omega(\log |I|)=\Omega(\log\log N)$.

\begin{definition}[The $n$-range binary search model]
    For a parameter $n$ and a fixed well-ordered set $I$ of size $n$, the input of the algorithm is some $k \in I$, which is inaccessible. In every step, the algorithm chooses some $s$ and queries the predicate ``$s \le k$''. In the end, the algorithm chooses $k' \in I$. The algorithm \emph{succeeds} if $k' = k$.
\end{definition}

The following observation is well known, and easy to prove by considering the possible number of leaves of a bounded depth binary tree.

\begin{observation} \label{obs:binary-search-lbnd}
    Every algorithm in the $n$-range binary search model whose success probability is strictly greater than $1/2$, over a uniformly random choice of $k\in I$, must make $q > \log n - 1$ queries.
\end{observation}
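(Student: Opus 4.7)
The plan is to apply a standard information-theoretic/decision-tree argument, reducing first from probabilistic to deterministic algorithms via Yao's principle, and then bounding the number of possible outputs of a $q$-query algorithm.

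Concretely, I would first observe that a probabilistic algorithm whose success probability on a uniformly random $k \in I$ is strictly greater than $1/2$ can be viewed as a distribution over deterministic algorithms. By averaging over the internal randomness and swapping the order of expectations, there exists a deterministic algorithm (using the same worst-case query budget $q$) whose success probability on a uniformly random $k \in I$ is also strictly greater than $1/2$. Thus it suffices to prove the lower bound against deterministic algorithms.

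Next, I would represent such a deterministic algorithm as a binary decision tree of depth at most $q$: each internal node is labeled by the threshold $s$ that the algorithm queries, and its two out-edges correspond to the two possible answers to ``$s \le k$''. Each leaf of this tree is reached for a uniquely determined transcript of answers and produces a fixed output $k' \in I$. Since the tree has depth at most $q$, there are at most $2^q$ leaves and hence at most $2^q$ distinct outputs across all inputs $k \in I$.

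I then complete the argument by noting that the algorithm succeeds on an input $k$ only if the (unique) leaf reached on $k$ outputs precisely $k'=k$. Two distinct inputs producing the same output can therefore not both be ``success'' inputs for that output value, so the set of $k \in I$ on which the algorithm succeeds has size at most $2^q$. Consequently the success probability over uniform $k$ is bounded by $2^q / n$, and the strict inequality $2^q / n > 1/2$ yields $2^q > n/2$, i.e.\ $q > \log n - 1$, as claimed. There is no real obstacle here; the only thing to be careful about is the reduction to deterministic algorithms (making sure the query budget $q$ is preserved as a worst-case rather than an expected bound), after which the decision-tree leaf-count argument is immediate.
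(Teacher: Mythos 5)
Your proof is correct and matches the approach the paper indicates (the paper explicitly says the observation is "easy to prove by considering the possible number of leaves of a bounded depth binary tree"). The averaging step to reduce to deterministic algorithms is a standard and appropriate prefix, and the leaf-counting bound $2^q > n/2 \Rightarrow q > \log n - 1$ is exactly the intended argument.
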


We define a common framework for the two conditional sampling models that we define shortly: the uniform conditional model and the leaking conditional model.

\begin{definition}[Common framework for conditional sampling models]
    For a parameter $N$, the input of the algorithm is $k \in \{k_\mathrm{min},\ldots,k_\mathrm{max}\}$, which is inaccessible, and a distribution $\mu$ over $\{1,\ldots,N\}$. In every step, the algorithm chooses a non-empty subset $B \subseteq \{1,\ldots,N\}$ and receives a pair $(b,y)$ where $b \in \{ \text{``$\le$''}, \text{``$>$''} \}$ and $y \in B \cup \{\text{``err''}\}$. The behavior of $(b,y)$ given $k$, $\mu$, $B$ and the execution path so-far is determined by the specific model.
\end{definition}
Note that the upper-bound algorithm for the estimation task in this paper interfaces with a model that has no $b$ component in the answers to its queries. The leaking model that we define below provides conditional query access to a specific drawn distribution along with some additional information given through the additional component. Also, the leaking model does not require a logical guarantee that $B$ has strictly positive probability mass in the input distribution $\mu$ (a guarantee that our upper-bound algorithm satisfies). The option for an $\text{``err''}$ answer for the $y$ component is used by the leaking model to also handle zero probability condition sets.

Every algorithm in the common framework defined above can be described as a decision tree whose internal nodes (including the root) hold the condition set $B$ and whose edges are labeled with the possible outcomes $(b,y)$.

\begin{definition}[Characterization of a decision node]
    A decision node $u$ of a decision tree $A$ is characterized by:
    \begin{itemize}
        \item $\ell_u$ (short form: $\ell$), the node-distance of $u$ from the root. ($\ell=1$ for the root).
        \item A sequence $(b_{u,1},y_{u,1}),\ldots,(b_{u,\ell_u-1},y_{u,\ell_u-1})$ (short form: $(b_1,y_1),\ldots,(b_{\ell-1},y_{\ell-1})$) describing the path from the root to $u$.
        \item A non-empty condition set $B_u$ (short form: $B$).
    \end{itemize}
\end{definition}

\begin{definition}[Set of already-seen elements, $Y_\mathrm{old}$]
    Let $u$ be a decision node characterized by $(\ell_u, (b_{u,i},y_{u,i})_{1\le i \le \ell_u-1},B_u)$. The \emph{set of already-seen elements} is $Y_\mathrm{old}(u) \overset{\mathrm{def}}= \{y_{u,1},\ldots,y_{u,\ell_u-1}\} \setminus \{\text{``err''}\}$ (short form: $Y_\mathrm{old}$).
\end{definition}

\begin{definition}[Set of ruled-out elements, $Y_\mathrm{out}$]
    Let $u$ be a decision node characterized by $(\ell_u, (b_{u,i},y_{u,i})_{1\le i \le \ell_u-1},B_u)$. The \emph{set of ruled-out elements} is $Y_\mathrm{out}(u) \overset{\mathrm{def}}= \bigcup_{1 \le j \le \ell_u-1 : y_{u,j} = \text{``err''}} B_{u_j}$ (short form: $Y_\mathrm{out}$).
\end{definition}

The following defines the set of elements for which a node query can provide new information.

\begin{definition}[Net condition set, net condition size]
    Let $u$ be a decision node characterized by $(\ell_u, (b_{u,i},y_{u,i})_{1\le i \le \ell_u-1},B_u)$. The \emph{net condition set} of $u$ is $B'_u \overset{\mathrm{def}}= B \setminus (Y_\mathrm{old}(u) \cup Y_\mathrm{out}(u))$ (short form: $B'$). The \emph{net condition size} of $u$ is $s_u = \abs{B'_u}$ (short form: $s$).
\end{definition}

Based on the above notations we define the uniform conditional model and the leaking conditional sampling model.

\begin{definition}[The uniform conditional model]
    This model is based on the framework for conditional sampling models. Let $u$ be a decision node characterized by $(\ell, (b_i,y_i)_{1\le i \le \ell-1}, B)$. The behavior of $(b,y)$, which is the outcome of the query to be made by $u$, is defined as follows:
    \begin{itemize}
        \item If $s_u \le 2^k$, then the outcome of the query is $(\text{``$\le$''}, y)$ for $y$ uniformly drawn from $B_u \cap Y_\mathrm{old}(u)$ if it is not empty, and otherwise it is $(\text{``$\le$''},\text{``err''})$.
        \item If $s_u > 2^k$, then the outcome of the query is $(\text{``$>$''}, y)$ for $y$ uniformly drawn from $B_u \setminus Y_\mathrm{out}(u)$ if it is not empty, and otherwise it is $(\text{``$>$''},\text{``err''})$.
    \end{itemize}
    In the end, the algorithm chooses $k' \in \{1,\ldots,n\}$. The algorithm \emph{succeeds} if $k' = k$.
\end{definition}

Note that this is essentially a simulation model, as it gives its query answers without taking any samples from $\mu$. The following lemma indeed connects it to the ``pure binary search'' model.

\begin{lemma} \label{lemma:uniform-condition-to-binary-search}
    Every $q$-query algorithm in the uniform conditional model is behaviorally identical to a $q$-query probabilistic algorithm in the $n$-range binary search model, where $n = k_\mathrm{max} - k_\mathrm{min} + 1$. Specifically, such an algorithm is equivalent to a distribution over (deterministic) binary decision trees that only use queries on whether $s\leq k$ for some $s$ (i.e., use only the $b$ components of the answers provided by the uniform conditional model).
\end{lemma}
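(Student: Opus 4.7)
The plan is to simulate any deterministic $q$-query decision tree $\mathcal{A}$ in the uniform conditional model by a probabilistic $q$-query algorithm $\mathcal{B}$ in the $n$-range binary search model having the identical input--output distribution for every fixed $k$. The probabilistic version of the lemma then follows at once, since a probabilistic algorithm in the uniform conditional model is by definition a distribution over deterministic decision trees, and applying the construction tree-by-tree yields a distribution over deterministic binary-search decision trees.

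The core observation is that the $b$ component of each query answer is a deterministic function of $k$ and the net condition size $s_u$, while $s_u$ itself is a deterministic function of the execution path $(b_1,y_1),\ldots,(b_{\ell-1},y_{\ell-1})$ together with $\mathcal{A}$'s transition rule. Explicitly, $b = \text{``$\le$''}$ iff $s_u \le 2^k$, and since $k$ is an integer this is equivalent to the binary-search predicate $\lceil \log_2 s_u\rceil \le k$. Hence $\mathcal{B}$ can recover $b$ with a single binary-search query on $s = \lceil \log_2 s_u\rceil$, a quantity it computes purely from the path.

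Equally important, conditioned on the path and on $b$, the second component $y$ is drawn from a distribution that no longer depends on $k$: it is uniform over $B_u \cap Y_\mathrm{old}(u)$ when $b = \text{``$\le$''}$ and uniform over $B_u \setminus Y_\mathrm{out}(u)$ when $b = \text{``$>$''}$, deterministically returning ``err'' when the relevant set is empty. Both sets are known to $\mathcal{B}$ from the recorded path, so $\mathcal{B}$ can produce $y$ using its own internal randomness and append $(b,y)$ to the path before continuing. The query budget is preserved by construction: each conditional query is replaced by exactly one binary-search query, and at the final leaf $\mathcal{B}$ outputs whatever $\mathcal{A}$ would have output on the same path.

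Distributional equality then follows by a routine induction on $\ell$: if the first $\ell-1$ pairs sampled by $\mathcal{B}$ are jointly distributed as in $\mathcal{A}$ (conditioned on $k$), then by the two observations above the $\ell$th pair has the correct conditional law, so the invariant persists through step $q$ and the final outputs agree. There is no substantive obstacle; the only delicate point is bookkeeping -- verifying that $B_u$, $Y_\mathrm{old}(u)$, $Y_\mathrm{out}(u)$, and therefore $s_u$, are genuinely $k$-free functions of the path and of $\mathcal{A}$'s deterministic transition rule, so that the binary-search query is the unique channel through which information about $k$ enters the simulation.
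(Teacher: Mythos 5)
Your proposal is correct and captures the same core observation as the paper: once you condition on the node $u$ and the comparison bit $b$, the distribution of the $y$-component depends only on the ($k$-free) path data $B_u$, $Y_\mathrm{old}(u)$, $Y_\mathrm{out}(u)$, so $b$ is the only channel through which $k$ enters, and each conditional query collapses to a single threshold query $\lceil \log_2 s_u\rceil \le k$. The paper packages the same argument slightly differently by pre-sampling one outgoing edge per $b$-value at every node and then pruning to a deterministic binary-search tree, whereas you sample $y$ on the fly inside a probabilistic simulator and then appeal to the standard decomposition of a probabilistic algorithm into a mixture of deterministic ones; these are equivalent.
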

\begin{proof}
    Consider a decision tree in the common conditional framework, in which every edge is labeled by a pair $(b,y)$ for some $b\in\{ \text{``$\le$''}, \text{``$>$''} \}$ and $y\in B_u$. For every node $u$ in the decision tree, consider the possible distributions over its children under the uniform conditional model conditioned on the value of $b$ and on the algorithm reaching this node.

    If $b=\text{``$\le$''}$, this means in particular that $s_u\leq 2^k$. If $B_u \cap Y_\mathrm{old}(u)=\emptyset$ then the edge labeled by $(\text{``$\le$''},\text{``err''})$ is taken with probability $1$, and otherwise the outgoing edge is chosen uniformly from the set of edges whose labels are in the set $\{(\text{``$\le$''},y):y\in B_u \cap Y_\mathrm{old}(u)\}$.

    If $b=\text{``$>$''}$, this means in particular that $s_u>2^k$. If $B_u \setminus Y_\mathrm{out}(u)=\emptyset$ then the edge labeled by $(\text{``$>$''},\text{``err''})$ is taken with probability $1$, and otherwise the outgoing edge is chosen uniformly from the set of edges whose labels are in the set $\{(\text{``$>$''},y):y\in B_u \setminus Y_\mathrm{out}(u)\}$.

    The common theme here is that the identity of $u$ and the value of $b$ by themselves determine a set of outgoing edges, from which one is uniformly picked, without any dependency on the other parameters of the input. This means that a run of this $q$-query decision tree can be alternatively described by the following process:
    \begin{itemize}
        \item For every node $u$ of the tree, one edge is picked uniformly from the set of the relevant outgoing edges with $b=\text{``$\le$''}$, and one edge is picked uniformly from the set of the relevant outgoing edges with $b=\text{``$>$''}$.
        \item Then, all edges in the tree that were not picked in the previous step are removed, after which all nodes with no remaining path to the root are removed as well. In the remaining tree, the edge labels are trimmed to include only the $b$ component, which refers to a comparison of some $k'=\lceil\log s\rceil$ with $k$. The result of this process is a deterministic binary decision tree that can be run under the binary search model.
        \item Finally, the resulting tree is run with respect to the input $k$. \qedhere
    \end{itemize}
\end{proof}

\begin{definition}[The leaking conditional sampling model]
    This model is based on the framework for conditional sampling models. Let $u$ be a decision node characterized by $(\ell, (b_i,y_i)_{1\le i \le \ell-1}, B)$. The behavior of $(b,y)$, which is the outcome of the query to be made by $u$, is defined as follows:
    \begin{itemize}
        \item $b$ is ``$\le$'' if $s_u \le 2^k$ and ``$>$'' if $s_u > 2^k$.
        \item If $\mu(B) > 0$ then $y$ is drawn from $\mu$ when conditioned on $y \in B$, and otherwise $y = \text{``err''}$.
    \end{itemize}
\end{definition}

Clearly, the leaking conditional sampling model is not weaker than any reasonable variant of the classic conditional sampling model, and hence it is suitable for lower bound statements. All such models behave the same when $\mu(B) > 0$, but the fallback behavior when $\mu(B) = 0$ is explicitly defined by every model. The return of an error message in the case where $\mu(B) = 0$ provides the most information among common fallback behaviors (uniform choice, minimum, etc.), which makes it the best choice for lower bound statements.

In the following, $A$ is a decision tree of height $q$ corresponding to a deterministic algorithm in the common framework of conditional sampling models. Our random variables are:
\begin{itemize}
    \item $u_1,\ldots,u_{q+1}$ -- the nodes on the execution path. 
    \item $u$ -- an alias for the leaf $u_{q+1}$.
    \item $(b_1,y_1),\ldots,(b_q,y_q)$ -- the outcomes of the queries. In other words, for every $1 \le i \le q$, $(b_i,y_i)$ is the label on the edge from $u_i$ to $u_{i+1}$. Note that $y_1,\ldots,y_q$ are generally random even that the analyzed algorithm is deterministic.
    \item $K$ -- the support of the input distribution that is drawn according to $D_k$ (following the random choice of $k$). It plays a role only in the analysis of the leaking model.
\end{itemize}

In the following, we refer to the set $\Lambda$ that refers to the combination of the choice of $k$, the support $K$ (relevant for the leaking model), and the outcome (the leaf reached) of a run of the given deterministic algorithm. The two distributions that we analyze over $\Lambda$ are $\mathcal U$, the one resulting from the uniform model, and $\mathcal L$, the one resulting from the leaking model.

In particular, note the following well-known common bound.

\begin{lemma} \label{lemma:adaptive-yao-template-bound}
    Let $\mathcal U$ and $\mathcal L$ be two distributions over $\Lambda$. If there exists an event $E \subseteq \Lambda$ for which $\mathcal L(x) > (1 - \eps)\mathcal U(x)$ for every $x \in E$, then $\dtv(\mathcal U,\mathcal L) \le \eps + \Pr_{\mathcal U}[\neg E]$.
\end{lemma}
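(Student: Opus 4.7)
The plan is to prove this via the standard representation of total variation distance as a one-sided sum, namely
\[ \dtv(\mathcal U, \mathcal L) = \sum_{x \in \Lambda : \mathcal U(x) > \mathcal L(x)} \bigl(\mathcal U(x) - \mathcal L(x)\bigr), \]
which follows directly from the definition stated earlier in the paper. The strategy is to split this sum according to whether $x \in E$ or $x \notin E$, and bound each piece separately using a different ingredient.

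For the $x \in E$ piece, I would invoke the hypothesis $\mathcal L(x) > (1-\eps)\mathcal U(x)$, which rearranges to $\mathcal U(x) - \mathcal L(x) < \eps \mathcal U(x)$ for every $x \in E$ (including those where $\mathcal U(x) \le \mathcal L(x)$, where the left side is non-positive). Summing this over $x \in E$ with $\mathcal U(x) > \mathcal L(x)$ gives a bound of $\eps \mathcal U(E) \le \eps$. For the $x \notin E$ piece, I would use the trivial pointwise bound $\mathcal U(x) - \mathcal L(x) \le \mathcal U(x)$, so that this partial sum is at most $\mathcal U(\neg E) = \Pr_{\mathcal U}[\neg E]$.

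Combining the two pieces yields
\[ \dtv(\mathcal U, \mathcal L) \le \eps \cdot \mathcal U(E) + \mathcal U(\neg E) \le \eps + \Pr_{\mathcal U}[\neg E], \]
which is exactly the claim. There is no real obstacle here: the argument is essentially two lines once the one-sided-sum form of $\dtv$ is invoked, and the only subtlety is making sure to treat the $x \in E$ with $\mathcal U(x) \le \mathcal L(x)$ correctly (they contribute nothing to the one-sided sum so they can be safely discarded).
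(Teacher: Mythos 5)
Your proof is correct and complete. The paper states this lemma as a ``well-known common bound'' and does not include a proof, so there is nothing in the paper to compare against; the one-sided-sum representation of $\dtv$ combined with the split over $E$ and $\neg E$ is exactly the standard argument one would expect, and you handle the edge case ($x \in E$ with $\mathcal U(x) \le \mathcal L(x)$) correctly by noting such terms are dropped from the one-sided sum.
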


We also use some shorthand. In particular, a set of leaves (or more generally, nodes) of the analyzed algorithm (given as a decision tree) is identified with the event of reaching a node from this set. Also, the notation $\Pr_{\mathcal U}[E|k]$ refers to the probability of an event $E$ (usually given by a set of leaves) when conditioned on the event of the specific $k$ being drawn from the range $\{k_\mathrm{min},\ldots,k_\mathrm{max}\}$.

\subsubsection*{Analysis for the uniform model}

\begin{definition}[The set of improbable elements, $A_\mathrm{small}$]
    Let $u$ be a decision node characterized by $(\ell,(b_i,y_i)_{1\le i \le \ell-1}, B)$. Let $u_1,\ldots,u_\ell$ be its path from the root (where $u_1$ is the root and $u_\ell = u$). The \emph{set of small elements} with respect to $u$ and some $k$ is $A_\mathrm{small}(u,k) = \bigcup_{1 \le i \le \ell : s_{u_i} \le 2^k / \branchZlbndZexpZdiv} B'_{u_i}$ (short form: $A_\mathrm{small}$).
\end{definition}

\begin{definition}[The good events, $\mathcal G_k$, $\mathcal G^{(1)}_k$, $\mathcal G^{(2)}_k$, $\mathcal G^{(3)}_k$]
    Let $u$ be a leaf. Let $u_1,\ldots,u_q,u_{q+1}$ be its path from the root (where $u_1$ is the root and $u_{q+1} = u$). We define the following good events about $u$:
    \begin{itemize}
        \item $\mathcal G^{(1)}_k$: for every $1 \le i \le q$, $s_{u_i} \notin \left(\frac{1}{\branchZlbndZexpZdiv}, \branchZubndZexp\right) \cdot 2^k$.
        \item $\mathcal G^{(2)}_k$: for every $1 \le i \le q$, $y_i \notin A_\mathrm{small}(u_i,k)$.
        \item $\mathcal G^{(3)}_k$: for every $1 \le i \le q$, if $s_{u_i} \ge 2^k \cdot \branchZubndZexp$, then $y_i \in B_{u_i} \setminus Y_\mathrm{old}(u_i)$.
        \item $\mathcal G_k$: the intersection $\mathcal G^{(1)}_k \wedge \mathcal G^{(2)}_k \wedge \mathcal G^{(3)}_k$.
    \end{itemize}
\end{definition}

\begin{lemma}
\label{lemma:tons-of-locally-good-k-values}
    Let $A$ be a decision tree representing a deterministic algorithm in the common framework for conditional sampling algorithms that makes $q \le \branchZubndZq$ queries. There exists a set $G \subseteq \{k_\mathrm{min}, \ldots, k_\mathrm{max}\}$ of size at least $\left(1 - \branchZfracZbadZkZchoices\right)n$, for $n = k_\mathrm{max} - k_\mathrm{min} + 1$, such that for every $k \in G$, considering the (random) leaf $u$ that the execution path reaches in the uniform model, $\Pr_{\mathcal U}\left[u \in \mathcal G^{(1)}_k \cond k \right] \ge 1 - \frac{\log \log \log N}{15 \log \log N}$.
\end{lemma}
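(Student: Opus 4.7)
The plan is to invoke Lemma~\ref{lemma:uniform-condition-to-binary-search}, which identifies any $q$-query algorithm in the uniform conditional model with a distribution over deterministic comparison-based decision trees for binary search on $k$. After this reduction, every internal node $v$ of a sampled tree is labeled by a value $s_v$ (the net condition size inherited from the original conditional query) and asks whether $s_v \le 2^k$, so that $\neg\mathcal{G}^{(1)}_k$ translates into ``some $s_v$ along the path for $k$ lies in $\left(2^k/\log N,\, 8\log^3 N\cdot 2^k\right)$''. Because successive comparison answers monotonically refine the candidate range of $k$, the set $I_v$ of $k$-values that reach a given internal node $v$ forms a contiguous interval in $\{k_\mathrm{min},\ldots,k_\mathrm{max}\}$, and intervals at the same depth are disjoint.

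Next, I would carry out a double-counting argument over bad (node,$k$) pairs for a fixed deterministic tree. Expressed in $k$-space the bad window for $v$ is $\left(\log s_v - 3 - 3\log\log N,\ \log s_v + \log\log N\right)$, an interval containing at most $4\log\log N + O(1)$ integers, so $v$ is bad for at most that many values of $k$. Since a tree of depth at most $q$ has at most $2^q - 1$ internal nodes, the total number of bad (node,$k$) pairs is at most $(4\log\log N + O(1))(2^q - 1)$. Invoking $q \le \log\log N - 2\log\log\log N$, so that $2^q \le \log N/(\log\log N)^2$, this is $O(\log N/\log\log N)$. Taking expectation over the tree distribution preserves the bound, and a union bound over the at most $q$ positions of the realized path therefore yields
\[
    \sum_{k=k_\mathrm{min}}^{k_\mathrm{max}} \Pr_{\mathcal{U}}\!\left[\neg\mathcal{G}^{(1)}_k \mid k\right] \;\le\; (4\log\log N + O(1))\cdot 2^q \;=\; O\!\left(\frac{\log N}{\log\log N}\right).
\]
Dividing by $n = k_\mathrm{max} - k_\mathrm{min} + 1 = \Theta(\log N)$ gives mean value $O(1/\log\log N)$ for $\Pr_{\mathcal{U}}[\neg\mathcal{G}^{(1)}_k \mid k]$.

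Finally, I would apply Markov's inequality at threshold $T = \frac{\log\log\log N}{15\log\log N}$: the fraction of $k$ with $\Pr_{\mathcal{U}}[\neg\mathcal{G}^{(1)}_k \mid k] > T$ is at most $O(1/\log\log\log N)$, which for $N$ sufficiently large is at most $1/\log\log\log N$, and the complement yields the required set $G$. The main obstacle is the careful bookkeeping in the counting step---checking that $I_v$ is indeed contiguous under the binary-search reduction, that taking expectation over the tree distribution is legitimate (the total counting bound is a deterministic inequality per realized tree), and that the constant slack between the rough $(4\log\log N)\cdot 2^q/n$ and the stated $\frac{1}{15\log\log N}$ threshold is absorbed by taking $N$ large. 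What makes the averaging argument go through is the arithmetic of the two parameters: the bad window has only logarithmic width in $\log N$, while the permitted tree size $2^q$ is sub-polylogarithmic in $N$, so their product divided by $n$ is $o(1)$.
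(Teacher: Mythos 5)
Your proposal follows essentially the same route as the paper: pass to a distribution over deterministic binary-search trees via Lemma~\ref{lemma:uniform-condition-to-binary-search}, double-count bad $(v,k)$ pairs to bound $\sum_k \Pr_{\mathcal U}[\neg\mathcal G^{(1)}_k \mid k]$ by (bad-window width) $\cdot$ (tree size), and apply Markov to the average over $k$. The contiguity and depth-disjointness of the sets $I_v$ that you flag as the ``main obstacle'' are true but unnecessary: once a tree is fixed, the set of nodes on the $k$-path is a subset of all tree nodes, so the raw count of bad $(v,k)$ pairs already dominates the number of bad $k$'s without any structure on which $k$'s reach which node.

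One substantive caveat: the constant slack is \emph{not}, as you assert, ``absorbed by taking $N$ large.'' Your estimate (and the paper's intended one) gives $\E_k\bigl[\Pr_{\mathcal U}[\neg\mathcal G^{(1)}_k \mid k]\bigr] \approx \frac{15}{\log\log N}$, since the $(v,k)$-pair count of order $5\log\log N \cdot 2^q \le \frac{5\log N}{\log\log N}$ is divided by $n \ge \frac{1}{3}\log N$. Markov at threshold $\frac{\log\log\log N}{15\log\log N}$ then yields a bad-$k$ fraction of roughly $\frac{225}{\log\log\log N}$, which exceeds $\frac{1}{\log\log\log N}$ by a constant that does not vanish as $N \to \infty$. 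This mirrors an arithmetic slip in the paper's own proof, where the displayed chain ends with $\frac{15\log N/\log\log N}{\log N} = \frac{1}{15\log\log N}$, whereas the left-hand side actually simplifies to $\frac{15}{\log\log N}$. The lemma is still morally true: the gap closes either by weakening the stated threshold (e.g., to $\frac{15\log\log\log N}{\log\log N}$, with matching adjustments downstream in Lemma~\ref{lemma:tons-of-ks-with-hp-good-leaf}) or by tightening the hypothesis to $q \le \log\log N - 2\log\log\log N - 8$, which shrinks $2^q$ by a factor $2^8 > 225$. Either fix is easy, but it must be made explicitly rather than waved away by letting $N$ grow.
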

\begin{proof}
    As observed in Lemma \ref{lemma:uniform-condition-to-binary-search}, a decision tree in the uniform model behaves as a distribution over deterministic binary search trees, where every node $u$ in such a tree corresponds to a comparison of $k$ with some $s_u$, receiving an answer $b\in \{ \text{``$\le$''}, \text{``$>$''} \}$.

    A binary decision tree of edge-height $q \le \branchZubndZq$ has exactly $2^q - 1 < \frac{\log N}{(\log \log N)^2}$ decision nodes. For every decision node $u_i$, there are at most $\ceil{\log \branchZlbndZexpZdiv + \log \branchZubndZexp} \le 5 \log \log N$ ``bad'' choices of $k$ for which $s_{u_i} \in \left(\frac{1}{\branchZlbndZexpZdiv}, \branchZubndZexp\right) \cdot 2^k$. Considering the whole tree, there are at most $\frac{5 \log N}{\log \log N}$ such bad choices.

    For every $k$, let $p_k$ be the probability to choose a binary tree that $k$ is bad with respect to it. By linearity of expectation, $\sum_{k = k_\mathrm{min}}^{k_\mathrm{max}} p_k$ is the expected number of $k$s that are bad with respect to the drawn binary tree, which is bounded by $\frac{5 \log N}{\log \log N}$.
    
    For a uniform drawing of $k$ between $k_\mathrm{min}$ and $k_\mathrm{max}$, $\E_k\left[p_k\right] \le \frac{5 \log N / \log \log N}{n} \le \frac{15 \log N / \log \log N}{\log N} = \frac{1}{15 \log \log N}$. The last transition is correct since $n = k_\mathrm{max} - k_\mathrm{min} + 1 \ge \frac{1}{3}\log N$.

    By Markov's inequality, there are at most $\frac{n}{\log \log \log N}$ choices of $k$ for which $p_k \ge \frac{\log \log \log N}{15 \log \log N}$.
\end{proof}

\begin{lemma} \label{lemma:tons-of-ks-with-hp-good-leaf}
    Let $A$ be a decision tree representing a deterministic algorithm in the common framework for conditional sampling algorithms that makes $q \le \branchZubndZq$ queries. There exists a set $G \subseteq \{k_\mathrm{min},\ldots,k_\mathrm{max}\}$ of size at least $\left(1 - \branchZfracZbadZkZchoices \right)n$, for $n = k_\mathrm{max} - k_\mathrm{min} + 1$, such that for every $k \in G$, considering the (random) leaf $u$ that the execution path reaches in the uniform model, $\Pr_{\mathcal U}\left[u \in \mathcal G_k \cond k\right] \ge 1 - \branchZfracZbadZleafZforZgoodZk$.
\end{lemma}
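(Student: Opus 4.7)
The plan is to take $G$ to be the very same set of $k$-values guaranteed by Lemma~\ref{lemma:tons-of-locally-good-k-values}, so that $\Pr_{\mathcal U}[\mathcal G^{(1)}_k \mid k] \ge 1 - \frac{\log\log\log N}{15\log\log N}$ already holds, and then to upgrade this to $\Pr_{\mathcal U}[\mathcal G_k \mid k] \ge 1 - \branchZfracZbadZleafZforZgoodZk$ by showing that, conditioned on $\mathcal G^{(1)}_k$, the events $\neg\mathcal G^{(2)}_k$ and $\neg\mathcal G^{(3)}_k$ together have probability $o(\log\log\log N/\log\log N)$, small enough to be absorbed into the slack between $\frac{1}{15}$ and $\frac{1}{10}$ via a union bound.

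The bound for $\mathcal G^{(3)}_k$ is direct. Under $\mathcal G^{(1)}_k$, each query whose net size is ``large'' has $s_{u_i} \ge 2^k \cdot \branchZubndZexp$, and in the uniform model $y_i$ is drawn uniformly from $B_{u_i} \setminus Y_\mathrm{out}(u_i)$, a set of size at least $s_{u_i}$. Since $|Y_\mathrm{old}(u_i) \cap B_{u_i}| \le q$, the single-step failure probability is at most $q / s_{u_i}$, and a union bound over the at most $q$ large steps yields $\Pr[\neg\mathcal G^{(3)}_k \mid \mathcal G^{(1)}_k] \le q^2/(2^k \cdot \branchZubndZexp)$. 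Because $k \ge k_\mathrm{min} \ge \frac{1}{3}\log N - 1$, this is utterly negligible in the regime $q \le \branchZubndZq$.

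The more delicate part is $\mathcal G^{(2)}_k$, where a naive union bound fails: in the ``small'' branch $y_i$ is uniform on $B_{u_i} \cap Y_\mathrm{old}(u_i)$, which may be too small to give a useful estimate against $|A_\mathrm{small}(u_i,k)| \le q \cdot 2^k / \branchZlbndZexpZdiv$. I proceed by induction on $i$ with hypothesis $E_{i-1} = \{\, y_j \notin A_\mathrm{small}(u_j,k) \text{ for all } j < i \,\}$. For a large step, $y_i$ is uniform on a set of size $\ge s_{u_i} \ge 2^k \cdot \branchZubndZexp$, so the step failure probability is at most $|A_\mathrm{small}(u_i,k)| / s_{u_i} \le q/(\branchZlbndZexpZdiv \cdot \branchZubndZexp) = q/(8\log^4 N)$. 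For a small step, either $y_i = \mathrm{err}$ (harmless) or $y_i = y_j$ for some $j < i$; I claim the latter case is deterministically safe under $E_{i-1}$. Indeed, $A_\mathrm{small}(u_i,k) \setminus A_\mathrm{small}(u_j,k)$ is a union of sets $B'_{u_t}$ with $j \le t < i$: for $t > j$ one has $y_j \in Y_\mathrm{old}(u_t)$, whence $y_j \notin B'_{u_t}$ by the very definition of the net condition set, and for $t = j$ the small-case sampling rule forces $y_j \in Y_\mathrm{old}(u_j) \cup \{\mathrm{err}\}$, again disjoint from $B'_{u_j}$. Combined with $y_j \notin A_\mathrm{small}(u_j,k)$ from the induction hypothesis, we obtain $y_j \notin A_\mathrm{small}(u_i,k)$. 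A union bound over the at most $q$ large steps then gives $\Pr[\neg\mathcal G^{(2)}_k \mid \mathcal G^{(1)}_k] \le q^2/(8\log^4 N)$.

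A final union bound over the three events, together with the observation that for $q \le \branchZubndZq \le \log\log N$ both $q^2/(8\log^4 N)$ and $q^2/(2^k \cdot \branchZubndZexp)$ are $o(\log\log\log N/\log\log N)$, yields $\Pr_{\mathcal U}[\mathcal G_k \mid k] \ge 1 - \branchZfracZbadZleafZforZgoodZk$ for all sufficiently large $N$. The hardest part of the argument is the small-$s_{u_i}$ branch of $\mathcal G^{(2)}_k$: without the combinatorial identity ``$y_j \notin B'_{u_t}$ whenever $t \ge j$ in the small case'', a direct probability bound on $y_i \in A_\mathrm{small}(u_i,k)$ would be hopeless, since the support of $y_i$ can itself be tiny.
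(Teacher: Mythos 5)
Your proof is correct and takes essentially the same route as the paper: you inherit $G$ from Lemma~\ref{lemma:tons-of-locally-good-k-values} and then union-bound the remaining failure modes of $\mathcal G^{(2)}_k$ and $\mathcal G^{(3)}_k$ over the at most $q$ queries on top of $\mathcal G^{(1)}_k$. Your inductive treatment of the small branch of $\mathcal G^{(2)}_k$ in fact makes rigorous what the paper dispatches with ``zero, by definition of the model'': as you observe, a small-step answer is a previously seen $y_j$, which could have landed inside some $B'_{u_t}$ with $t<j$ during an earlier \emph{large} step, and excluding this requires exactly your hypothesis $E_{i-1}$ (i.e., that all prior large steps avoided $A_\mathrm{small}$) rather than a purely local property of the sampling rule. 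Two cosmetic nits: the set difference $A_\mathrm{small}(u_i,k)\setminus A_\mathrm{small}(u_j,k)$ is the union over $j<t\le i$, not $j\le t<i$; and your separate $t=j$ subcase is redundant, since when $B'_{u_j}$ contributes it is already contained in $A_\mathrm{small}(u_j,k)$ and hence covered by $E_{i-1}$.
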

\begin{proof}
    We use the set $G$ provided by Lemma \ref{lemma:tons-of-locally-good-k-values}. For $k\in G$, with probability at least $1 - \frac{\log \log \log N}{15 \log \log N}$, we reach a leaf $u$ (on the $q+1$st level) for which, for every $1 \le i \le q$, $s_{u_i} \notin \left(\frac{1}{\branchZlbndZexpZdiv}, \branchZubndZexp\right) \cdot 2^k$. Also, for every $1 \le i \le q$, the probability that $y_i \in A_\mathrm{small}(u_i,k)$ is bounded by:
    \begin{itemize}
        \item Zero if $s_{u_i} \le 2^k / \branchZlbndZexpZdiv$, by definition of the model.
        \item $\frac{\abs{A_\mathrm{small}(u_i,k)}}{\abs{B_{u_i} \cap Y_\mathrm{old}(i)} + s_{u_i}} \le \frac{q}{8 \log^4 N}$ if $s_{u_i} \ge 2^k \cdot \branchZubndZexp$.
    \end{itemize}

    Also, if $s_{u_i} \ge 2^k \cdot \branchZubndZexp$, then the probability to obtain an already-seen $y_i$ is bounded by $\frac{\abs{B_{u_i} \cap Y_\mathrm{old}(i)}}{\abs{B_{u_i} \cap Y_\mathrm{old}(i)} + s_{u_i}} \le \frac{q}{q + 2^k \cdot \branchZubndZexp} \le \frac{1}{\log N}$.

    By a union bound, the probability of $u$ to be good is at least $1 - \left(\frac{\log \log \log N}{15 \log \log N} + \frac{q^2}{8 \log^4 N} + \frac{q}{\log N}\right) \ge 1 - \branchZfracZbadZleafZforZgoodZk$ for $N$ large enough.
\end{proof}

\subsubsection*{Analysis for the leaking model}

\begin{lemma} \label{lemma:A-small-disjoint-olds}
    Let $k$ be fixed and let $u \in \mathcal G_k$ be a leaf (node of depth $q+1$) whose path from the root is $u_1,\ldots,u_q,u_{q+1}$. For every $1 \le i \le q$, $A_\mathrm{small}(u_i,k)$ is disjoint from $Y_\mathrm{old}(u)$.
\end{lemma}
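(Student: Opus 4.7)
The plan is to unfold the two relevant definitions and then split into two cases depending on where in the execution path the ``old'' element under consideration appeared relative to the index being tested. Concretely, fix $1 \le i \le q$ and suppose some $y \in A_\mathrm{small}(u_i,k) \cap Y_\mathrm{old}(u)$; I derive a contradiction. Since $y\in Y_\mathrm{old}(u)$, there exists some $1 \le j \le q$ with $y = y_j \ne \text{``err''}$, and since $y\in A_\mathrm{small}(u_i,k)$, there is some $1 \le i' \le i$ with $s_{u_{i'}} \le 2^k/\log N$ and $y \in B'_{u_{i'}}$. The goal is to show that the presence of $y_j$ in $B'_{u_{i'}}$ is incompatible with the good event $\mathcal G_k$.

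I would then split into two cases by comparing $i'$ with $j$. If $i' > j$, then by the time the algorithm reaches $u_{i'}$, the element $y_j$ has already been observed along the execution path; hence $y_j \in Y_\mathrm{old}(u_{i'})$, which directly contradicts $y_j \in B'_{u_{i'}} = B_{u_{i'}} \setminus (Y_\mathrm{old}(u_{i'}) \cup Y_\mathrm{out}(u_{i'}))$. This case is purely definitional and uses neither of the good-event clauses.

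If on the other hand $i' \le j$, then the union defining $A_\mathrm{small}(u_{i'},k)$ is a sub-union of the one defining $A_\mathrm{small}(u_j,k)$, so $A_\mathrm{small}(u_{i'},k) \subseteq A_\mathrm{small}(u_j,k)$. Because the size-$s_{u_{i'}} \le 2^k/\log N$ index $i'$ is one of the terms contributing to $A_\mathrm{small}(u_{i'},k)$, we have $B'_{u_{i'}} \subseteq A_\mathrm{small}(u_{i'},k) \subseteq A_\mathrm{small}(u_j,k)$. But $\mathcal G^{(2)}_k$ (which is implied by $u \in \mathcal G_k$) asserts $y_j \notin A_\mathrm{small}(u_j,k)$, again contradicting $y = y_j \in B'_{u_{i'}}$.

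No step here looks like a real obstacle; the ``hard part'' is only keeping the index bookkeeping straight, in particular remembering that $A_\mathrm{small}(u_{i'},k)$ is defined by running over \emph{all} ancestors of $u_{i'}$ (including $u_{i'}$ itself) whose net condition size is small, which is precisely what gives the monotonicity $A_\mathrm{small}(u_{i'},k) \subseteq A_\mathrm{small}(u_j,k)$ used in the second case.
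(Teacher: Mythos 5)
Your proof is correct and matches the paper's argument in substance: both decompose by comparing the element's index with the index of the contributing small node, using the $Y_\mathrm{old}$ exclusion built into $B'$ in one case and monotonicity of $A_\mathrm{small}$ together with $\mathcal G^{(2)}_k$ in the other. Your split on $i'$ versus $j$ is a slightly finer and cleaner presentation (each case uses exactly one ingredient), but it is the same underlying idea, not a different route.
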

\begin{proof}
    Let $(b_1,y_1),\ldots,(b_q,y_q)$ be the outcome sequence. By definition, $Y_\mathrm{old} = \{y_1,\ldots,y_q\} \setminus \{\text{``err''}\}$.
    
    For $i \le j$, the definition of $\mathcal G_k$ eliminates the possibility that $y_i \in A_\mathrm{small}(u_j,k)$.

    For $i > j$, if $y_i \in A_\mathrm{small}(u_j,k)$, then due to monotonicity, $y_i \in A_\mathrm{small}(u_i,k)$ as well, a contradiction to the definition of $\mathcal G_k$.
\end{proof}

\begin{observation} \label{obs:pr-good-u-given-k-in-the-uniform-model}
    For a given $k$ and a leaf $u \in \mathcal G_k$ whose path from the root is $u_1,\ldots,u_q,u_{q+1}$,
    \[\Pr_{\mathcal U}\left[u | k\right] = \prod_{i=1}^q \begin{cases}
        \begin{array}{>{\displaystyle}c>{\displaystyle}l>{\displaystyle}l}
            1 & s_{u_i} \le 2^k / \branchZlbndZexpZdiv, & B_{u_i} \cap Y_\mathrm{old} = \emptyset \\
            \frac{1}{\abs{B_{u_i} \cap Y_\mathrm{old}}} & s_{u_i} \le 2^k / \branchZlbndZexpZdiv, & B_{u_i} \cap Y_\mathrm{old} \ne \emptyset \\
            \frac{1}{\abs{B_{u_i} \setminus Y_\mathrm{out}}} & s_{u_i} \ge 2^k \cdot \branchZubndZexp
        \end{array}
    \end{cases}\]
\end{observation}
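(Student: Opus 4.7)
The plan is to unfold the probability of reaching the leaf $u$ as a product of transition probabilities along its root-to-leaf path, and then verify each factor against the behavioral rules of the uniform model using the assumption $u \in \mathcal{G}_k$.

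First, I would observe that by the Markov property of the model (each edge's outcome depends only on the current node $u_i$ and the fixed value $k$, not on the earlier randomness except through the identity of $u_i$ itself), we have
\[
\Pr_{\mathcal{U}}[u \mid k] \;=\; \prod_{i=1}^{q} \Pr_{\mathcal{U}}\!\left[(b_i, y_i) \text{ is the label of the edge } u_i \to u_{i+1} \,\middle|\, u_i, k\right].
\]
So the whole task reduces to computing each single-step transition probability and checking that it matches the corresponding case in the displayed formula.

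Next, I would split by the two sub-events of $\mathcal{G}^{(1)}_k$. Since $u \in \mathcal{G}_k \subseteq \mathcal{G}^{(1)}_k$, for every $1 \le i \le q$ we have $s_{u_i} \notin (2^k/\log N,\ 2^k \cdot 8\log^3 N)$, so exactly one of the two ranges $s_{u_i} \le 2^k/\log N$ or $s_{u_i} \ge 2^k \cdot 8\log^3 N$ holds (covering the two non-trivial rows of the case expression). In the first range, $s_{u_i} \le 2^k$, so the model rule forces $b_i = \text{``$\le$''}$ and draws $y_i$ uniformly from $B_{u_i} \cap Y_\mathrm{old}(u_i)$ when this set is non-empty (otherwise $y_i = \text{``err''}$ deterministically), giving exactly the first two cases of the formula. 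In the second range, $s_{u_i} > 2^k$, so $b_i = \text{``$>$''}$ and $y_i$ is drawn uniformly from $B_{u_i} \setminus Y_\mathrm{out}(u_i)$; this set contains the $s_{u_i} > 0$ elements of $B'_{u_i}$, so it is non-empty and the transition probability is $1/\lvert B_{u_i} \setminus Y_\mathrm{out}(u_i)\rvert$, matching the third case.

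The only small subtlety is to confirm that the outcome actually realized on the path to $u$ is consistent with being a valid (positive-probability) draw from the asserted uniform distribution, so that the denominator in the formula is the correct normalization. In the first range this is immediate from the rule; in the second range it is exactly the content of $\mathcal{G}^{(3)}_k$, which guarantees $y_i \in B_{u_i} \setminus Y_\mathrm{old}(u_i) \subseteq B_{u_i} \setminus Y_\mathrm{out}(u_i)$ and in particular $y_i \ne \text{``err''}$. There is no real obstacle here — the entire argument is a per-edge bookkeeping exercise made possible because the good event $\mathcal{G}_k$ rules out precisely the ambiguous middle regime of $s_{u_i}$.
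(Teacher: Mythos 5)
Your proof is correct and takes essentially the same route as the paper: decompose $\Pr_{\mathcal U}[u\mid k]$ into a product of per-step transition probabilities (the behavior at $u_i$ is a deterministic function of $u_i$ and $k$, so the telescoping is exact), use membership in $\mathcal G^{(1)}_k$ to exclude the middle regime for each $s_{u_i}$, and then read off the model's rules in each remaining case.

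One small slip worth flagging: the containment $B_{u_i}\setminus Y_\mathrm{old}(u_i)\subseteq B_{u_i}\setminus Y_\mathrm{out}(u_i)$ that you invoke when appealing to $\mathcal G^{(3)}_k$ is false in general, since $Y_\mathrm{out}$ is not a subset of $Y_\mathrm{old}$ ($Y_\mathrm{out}$ is a union of entire condition sets and can vastly exceed the at-most-$(i-1)$ elements of $Y_\mathrm{old}$). However, this step is not actually needed. Once $B_{u_i}\setminus Y_\mathrm{out}(u_i)\ne\emptyset$ is established --- and your earlier observation that $s_{u_i}\ge 2^k\cdot 8\log^3 N\ge 1$ forces $B'_{u_i}\ne\emptyset$, hence $B_{u_i}\setminus Y_\mathrm{out}(u_i)\ne\emptyset$, does this cleanly --- the model in that range draws $y_i$ uniformly from exactly $B_{u_i}\setminus Y_\mathrm{out}(u_i)$ by definition, so the realized $y_i$ automatically lies in that set and in particular is not an error; $\mathcal G^{(3)}_k$ plays no role in this observation. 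Incidentally, the paper derives nonemptiness less directly, via $Y_\mathrm{out}(u_i)\subseteq A_\mathrm{small}(u_i,k)$, so your route through $s_{u_i}>0$ is the more elementary one.
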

\begin{proof}
    If $s_{u_i} \le 2^k / \branchZlbndZexpZdiv$, then by definition, $y_i$ is uniformly drawn from  $B_{u_i} \cap Y_\mathrm{old}$, unless this intersection is empty, and in this case $y_i$ is ``$\mathrm{err}$'' with probability $1$.

    If $s_{u_i} \ge 2^k \cdot \branchZubndZexp$, then since $Y_\mathrm{out}(u_i) \subseteq A_\mathrm{small}(u_i,k)$, $\abs{B_{u_i} \setminus Y_\mathrm{out}} \ge \left(1 - \frac{1}{8 \log N}\right) \abs{B_{u_i}} > 0$. Hence, by definition $y_i$ is uniformly drawn from $B_{u_i} \setminus Y_\mathrm{out}\neq\emptyset$.
\end{proof}

\begin{lemma} \label{lemma:good-u-is-multiplicative-large}
    For every $k\in\{k_\mathrm{min},\ldots,k_\mathrm{max}\}$ and leaf $u \in \mathcal G_k$ (which is good with respect to $k$), $\Pr_{\mathcal L}\left[u \cond k\right] \ge \left(1 - \frac{3q}{\log N} \right)\Pr_{\mathcal U}\left[u \cond k\right]$.
\end{lemma}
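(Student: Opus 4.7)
Fix $k$ and $u\in\mathcal G_k$ with path $u_1,\ldots,u_{q+1}$, and write $W_i=|B_{u_i}\cap K|$. The plan is to express $\Pr_{\mathcal L}[u\mid k]$ as an expectation over the random support $K\sim D_k$ and compare it with the explicit product formula for $\Pr_{\mathcal U}[u\mid k]$ from Observation~\ref{obs:pr-good-u-given-k-in-the-uniform-model}. By $\mathcal G_k^{(1)}$ every step falls into either the small-$s$ set $U=\{i:s_{u_i}\le 2^k/\log N\}$ or the large-$s$ set $L=\{i:s_{u_i}\ge 8\log^3 N\cdot 2^k\}$; by $\mathcal G_k^{(3)}$ every $i\in L$ is non-err with a fresh $y_i$, and an argument analogous to Lemma~\ref{lemma:A-small-disjoint-olds} using $\mathcal G_k^{(2)}$ and the monotonicity of $A_\mathrm{small}$ shows that no (past or future) non-err $y_j$ lies in $B'_{u_i}$ for $i\in U$. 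Consequently $|Y_\mathrm{old}(u)|=|L|$, and $|Y_\mathrm{out}(u)|=\sum_{i\in U\cap N_\mathrm{err}}s_{u_i}\le q\cdot 2^k/\log N$, since err'd steps have $a_i=0$ and each such step adds precisely $B'_{u_i}$ to $Y_\mathrm{out}(u)$.

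The leaf is reachable in $\mathcal L$ only if the consistency event $E_K=\{Y_\mathrm{old}(u)\subseteq K,\ Y_\mathrm{out}(u)\cap K=\emptyset\}$ holds, so I will write
\[ \Pr_{\mathcal L}[u\mid k]=\Pr[E_K]\cdot\E\!\left[\prod_{i\in N_\mathrm{nonerr}} \frac{1}{W_i}\,\middle|\,E_K\right] \]
with $\Pr[E_K]=(2^{-k})^{|L|}(1-2^{-k})^{|Y_\mathrm{out}(u)|}$ and, conditional on $E_K$, the remaining elements of $\Omega$ in $K$ independently with probability $2^{-k}$. I decompose $W_i=a_i+|F_i|+Z_i$, where $a_i=|B_{u_i}\cap Y_\mathrm{old}(u_i)|$, $F_i=B'_{u_i}\cap(Y_\mathrm{old}(u)\setminus Y_\mathrm{old}(u_i))$ is deterministic given $u$ (and empty for $i\in U$), and $Z_i=|B''_{u_i}\cap K|$ is binomial over the truly fresh elements $B''_{u_i}$. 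For the small-$s$ steps I introduce the nice event $F_K=\bigwedge_{i\in U}\{Z_i=0\}$, which satisfies $\Pr[F_K\mid E_K]\ge 1-q/\log N$ via a union bound using $|B''_{u_i}|\cdot 2^{-k}\le 1/\log N$ for $i\in U$.

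Under $E_K\cap F_K$ each small-$s$ step contributes exactly the uniform-model factor of Observation~\ref{obs:pr-good-u-given-k-in-the-uniform-model} ($1/a_i$ for $i\in U\cap N_\mathrm{nonerr}$; $\mathbf 1[a_i=0]$ for $i\in U\cap N_\mathrm{err}$, matching $\mathcal U$'s err-factor of $1$). For the large-$s$ steps, Jensen's inequality (combined with FKG for taking the product across $i\in L$, valid because every $1/W_i$ is a decreasing function of $K$ and conditioning on $F_K$ just fixes some coordinates to $0$ in the product measure) yields
\[ \E\!\left[\prod_{i\in L}\frac{1}{W_i}\,\middle|\,E_K\cap F_K\right]\ge\prod_{i\in L}\frac{1}{a_i+|F_i|+s_{u_i}\cdot 2^{-k}}, \]
and multiplying by the $(2^{-k})^{|L|}$ factor from $\Pr[E_K]$ turns each denominator into $(a_i+|F_i|)\cdot 2^k+s_{u_i}$, matching the uniform-model factor $a_i+s_{u_i}$ up to a per-step ratio $\ge 1-2q\cdot 2^k/s_{u_i}\ge 1-1/\log^2 N$, using $s_{u_i}\cdot 2^{-k}\ge 8\log^3 N$ and $a_i+|F_i|\le 2q$.

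Summing the multiplicative losses---$q/\log N$ from restricting to $F_K$, $|Y_\mathrm{out}(u)|\cdot 2^{-k}\le q/\log N$ from the $(1-2^{-k})^{|Y_\mathrm{out}(u)|}$ factor (measured against the trivial err-factor $1$ in Observation~\ref{obs:pr-good-u-given-k-in-the-uniform-model}), and $|L|/\log^2 N\le q/\log^2 N$ summed over the large-$s$ Jensen step---yields a total multiplicative shortfall of at most $3q/\log N$ for $N$ sufficiently large. The main technical obstacle will be the bookkeeping that matches the $(2^{-k})^{|L|}$ factor from $\Pr[E_K]$ against the absence of any $2^{-k}$ factors in Observation~\ref{obs:pr-good-u-given-k-in-the-uniform-model}: each fresh large-$s$ $y_i$ exactly ``spends'' one factor of $2^{-k}$ to be included in $K$, while small-$s$ repeat-$y_i$'s spend none.
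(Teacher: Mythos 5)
Your proposal is essentially correct, and it follows the same overall strategy as the paper (condition on a high-probability event about $K$ under which the leaking model's per-step factors can be compared term-by-term with those of the uniform model from Observation~\ref{obs:pr-good-u-given-k-in-the-uniform-model}), but it diverges in one technical step. The paper's good events $G_1,G_2,G_3$ correspond to your $E_K$ (which bundles the ``$Y_\mathrm{old}(u)\subseteq K$'' part of $G_1$ together with the ``$Y_\mathrm{out}(u)\cap K=\emptyset$'' slice of $G_2$) and $F_K$ (the remainder of $G_2$), except that you drop the paper's $G_3$ entirely. Where the paper imposes a pointwise Chernoff upper bound on each $|B_{u_i}\cap K|$ via $G_3$ --- so that $1/W_i$ is lower-bounded deterministically on the conditioned event --- you replace this by controlling the conditional expectation of $\prod_{i\in L}1/W_i$ through Jensen (convexity of $x\mapsto 1/x$) together with FKG (all $1/W_i$ are non-increasing in $K$, and the measure conditioned on $E_K\cap F_K$ is still a product Bernoulli measure on the free coordinates). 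This is a genuine, valid alternative: it spares you proving a concentration bound (and the union-bound bookkeeping that goes with it in the paper's $\Pr[\neg G_3\mid k,G_1]\leq q/\log N$ calculation), at the price of invoking FKG, which is a less elementary tool. The remaining bookkeeping in your sketch --- $|Y_\mathrm{old}(u)|=|L|$ via $\mathcal G_k^{(2)}$ and $\mathcal G_k^{(3)}$, $F_i=\emptyset$ for small-$s$ nodes via Lemma~\ref{lemma:A-small-disjoint-olds}, $|B_{u_i}\setminus Y_\mathrm{out}(u_i)|=a_i+s_{u_i}$, the $(1-2^{-k})^{|Y_\mathrm{out}(u)|}\geq 1-q/\log N$ step, and the per-step ratio $1-2q\cdot 2^k/s_{u_i}\geq 1-1/\log^2 N$ --- all check out, and the three losses combine to $1-3q/\log N$ as claimed. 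If you were to write this up in full, the two points worth making explicit are that $Y_\mathrm{old}(u)\cap Y_\mathrm{out}(u)=\emptyset$ (needed for your $\Pr[E_K]$ formula and implicit in your disjointness arguments), and that Jensen requires $W_i\geq 1$ on the conditioned event, which holds because $y_i\in F_i\subseteq K$ for every $i\in L$ under $E_K$.
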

\begin{proof}
    Let $u_1,\ldots,u_q,u_{q+1}$ be the path from the root $u_1$ to $u_{q+1}=u$, and let $(b_1,y_1),\ldots,(b_q,y_q)$ be the sequence of answers in this path. Let $t$ be the number of indexes for which $s_{u_i} \ge 2^k \cdot \branchZubndZexp$. Recall that since $u \in \mathcal G_k$, $t$ is also the number of unique non-error values in $y_1,\ldots,y_q$.
    
    We define the following good events corresponding to an input distribution $\mu \sim D_k$ (which is fully determined by the set $K$):
    \begin{itemize}
        \item $G_1$: $\{y_1,\ldots,y_q\}\setminus \{\text{``err''}\} \subseteq K$.
        \item $G_2$: $A_\mathrm{small}(u_q,k)$ is disjoint from $K$ (in the rest of this proof, non-indexed instances of $A_\mathrm{small}$ refer to this set).
        \item $G_3$: for every $1 \le i \le q$, if $s_{u_i} \ge 2^k \cdot \branchZubndZexp$, then $\abs{B_{u_i} \cap K} \le \left(1 + \frac{1}{\log N}\right)2^{-k}\abs{B_{u_i}  \setminus Y_\mathrm{out}}$.
    \end{itemize}

    By the chain rule,
    \[  \Pr_{\mathcal L}\left[u \cond k \right]
        \ge \Pr_{\mathcal L}\left[u \wedge G_1 \wedge G_2 \wedge G_3 \cond k \right]
        = \Pr_{\mathcal L}\left[G_1 \cond k \right] \cdot \Pr_{\mathcal L}\left[G_2 \wedge G_3 \cond k, G_1 \right] \cdot \Pr_{\mathcal L}\left[u \cond k, G_1, G_2, G_3 \right]
    \]

    Clearly, $\Pr_{\mathcal L}[G_1|k] = (2^{-k})^t$.

    For bounding the probability for $G_2$, we note that by Markov's inequality, the probability that $A_\mathrm{small}$ is disjoint from $K$ is at least $1 - 2^{-k} \cdot \abs{A_\mathrm{small}} \ge 1 - \frac{q}{\log N}$. This holds also when conditioned on $G_1$ since $A_\mathrm{small} \cap \{y_1,\ldots,y_q\} = \emptyset$ (Lemma \ref{lemma:A-small-disjoint-olds}).

    For nodes with $s_{u_i} \ge 2^k \cdot \branchZubndZexp$, since $Y_\mathrm{out}(u_i) \subseteq A_\mathrm{small}(u_i,k)$, we obtain that $\abs{B_{u_i} \setminus Y_\mathrm{out}} \ge \left(1 - \frac{1}{8 \log N}\right) \abs{B_{u_i}}$ (and by definition of $\mathcal G_k$ we cannot get $y_i=\text{``err''}$ for such nodes).
    
    By Chernoff's inequality and a union bound,
    \begin{eqnarray*}
        \Pr_{\mathcal L}\left[\neg G_3 \cond k,G_1 \right]
        &\le& q \cdot \Pr\left[\Bin\left(\abs{B_u} - \left(t + \abs{A_\mathrm{small}}\right), 2^{-k}\right) + t > \left(1 + \frac{1}{\log N}\right) 2^{-k}\abs{B_u \setminus Y_\mathrm{out}} \right] \\
        (*) &\le& q \cdot \Pr\left[\Bin\left(\abs{B_u}, 2^{-k}\right) + t > \left(1 + \frac{1}{\log N}\right)2^{-k}\abs{B_u \setminus Y_\mathrm{out}} \right] \\
        (**) &\le& q \cdot \Pr\left[\Bin\left(\abs{B_u}, 2^{-k}\right) + t > \left(1 + \frac{3}{4\log N}\right)2^{-k}\abs{B_u} \right] \\
        (*{*}*) &\le& q \cdot \Pr\left[\Bin\left(\abs{B_u}, 2^{-k}\right) > \left(1 + \frac{1}{2 \log N}\right)2^{-k}\abs{B_u} \right] \\
        &\le& q \cdot e^{-\frac{1}{12 \log^2 N} \cdot \branchZubndZexp}
        < \frac{q}{\log N}
    \end{eqnarray*}
    $(*)$: since the random variable $\Bin\left(\abs{B_u}, 2^{-k}\right)$ has ``more opportunities'' to be bigger than a given bound $\Bin\left(\abs{B_u} - (t + \abs{A_\mathrm{small}}), 2^{-k}\right)$.\\
    $(**)$: since $\abs{B_u \setminus Y_\mathrm{out}} \ge \left(1 - \frac{1}{8\log N}\right)\abs{B_u}$ and $\left(1 + \frac{1}{\log N}\right)\left(1 - \frac{1}{8\log N}\right) \ge \left(1 + \frac{3}{4\log N}\right)$ for large enough $N$.\\
    $(*{*}*)$: since $t \le \log \log N \le \frac{1}{8\log N} \cdot 2^{-k}\abs{B_u}$ and $\left(1 + \frac{3}{4\log N}\right)\left(1- \frac{1}{3\log N}\right) \ge \left(1 + \frac{1}{2\log N}\right)$ for large enough $N$.

    By the union bound, $\Pr_{\mathcal L}\left[G_2 \wedge G_3 \cond k, G_1\right] \ge 1 - \frac{2q}{\log N}$.
    
    Finally,
    \begin{eqnarray*}
        \Pr_{\mathcal L}\left[u \cond k, G_1, G_2, G_3 \right]
        \!&=&\! \prod_{i=1}^q \Pr_{\mathcal L}\left[y_i \cond k, G_1, G_2, G_3 \right] \\
        \!&\ge&\! \prod_{i=1}^q \begin{cases}
            \begin{array}{>{\displaystyle}c>{\displaystyle}l>{\displaystyle}l}
                1 & s_{u_i} \le 2^k / \branchZlbndZexpZdiv, & B_{u_i} \cap Y_\mathrm{old} = \emptyset \\
                \frac{1}{\abs{B_{u_i} \cap Y_\mathrm{old}}} & s_{u_i} \le 2^k / \branchZlbndZexpZdiv, & B_{u_i} \cap Y_\mathrm{old} \ne \emptyset \\
                \frac{1}{\left(1 + \frac{1}{\log N}\right)2^{-k} \abs{B_{u_i} \setminus Y_\mathrm{out}}} & s_{u_i} \ge 2^k \cdot \branchZubndZexp
            \end{array}
        \end{cases} \\
        \text{[Observation \ref{obs:pr-good-u-given-k-in-the-uniform-model}]} \!&=&\! \frac{2^{kt}}{\left(1 + \frac{1}{\log N}\right)^t} \Pr_{\mathcal U}[u|k]
        \ge 2^{kt} \left(1 - \frac{t}{\log N}\right) \Pr_{\mathcal U}[u|k]
        \ge 2^{kt} \left(1 - \frac{q}{\log N}\right) \Pr_{\mathcal U}[u|k]
    \end{eqnarray*}

    Combined,
    \[  \Pr_{\mathcal L}\left[u \cond k\right] \ge 2^{-kt} \cdot \left(1 - \frac{2q}{\log N}\right) \cdot 2^{kt} \left(1 - \frac{q}{\log N}\right)\Pr_{\mathcal U}[u|k]
        \ge \left(1 - \frac{3q}{\log N}\right) \Pr_{\mathcal U}[u|k] \qedhere \]
\end{proof}

\begin{lemma} \label{lemma:leaking-close-to-uniform}
    Consider a deterministic algorithm making $q\leq \branchZubndZq$ queries in the common framework for conditional sampling algorithms. For at least $\left(1 - \branchZfracZbadZkZchoices\right)n$ choices of $k$, where $n = k_\mathrm{max} - k_\mathrm{min} + 1$, the distance between the distributions over execution paths of the algorithm, when executed on either the leaking model or on the uniform model, is bounded by $\frac{1}{\log \log N}$ when considering $\mu \sim D_k$.
\end{lemma}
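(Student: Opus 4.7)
The plan is to combine the previous two lemmas through the generic bound of Lemma~\ref{lemma:adaptive-yao-template-bound}. Fix any deterministic algorithm of height $q \le \branchZubndZq$, viewed as a decision tree in the common framework, and let $G \subseteq \{k_{\min},\ldots,k_{\max}\}$ be the set of ``good'' values supplied by Lemma~\ref{lemma:tons-of-ks-with-hp-good-leaf}, which already has size at least $\left(1 - \branchZfracZbadZkZchoices\right)n$. I will prove the $d_{\mathrm{TV}}$ bound pointwise: for every $k \in G$, I will show that the distribution over leaves produced by $\mathcal{L}$ conditioned on $k$ is close to the one produced by $\mathcal{U}$ conditioned on $k$. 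Since the choice of $k$ is common to both models (drawn uniformly from $\{k_{\min},\ldots,k_{\max}\}$), the pointwise bound immediately lifts to the unconditioned distance claimed in the lemma.

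For a fixed $k \in G$, I would instantiate Lemma~\ref{lemma:adaptive-yao-template-bound} with $\Lambda$ being the set of leaves of the tree, $\mathcal{U}(\cdot) = \Pr_{\mathcal{U}}[\cdot \mid k]$, $\mathcal{L}(\cdot) = \Pr_{\mathcal{L}}[\cdot \mid k]$, and $E = \mathcal{G}_k$ (viewed as a set of leaves). Lemma~\ref{lemma:good-u-is-multiplicative-large} then provides precisely the multiplicative comparison $\mathcal{L}(u) \ge (1 - 3q/\log N)\mathcal{U}(u)$ for every $u \in E$, giving $\eps = 3q/\log N$ in the template; Lemma~\ref{lemma:tons-of-ks-with-hp-good-leaf} supplies the complementary bound $\mathcal{U}(\neg E) = \Pr_{\mathcal{U}}[u \notin \mathcal{G}_k \mid k] \le \branchZfracZbadZleafZforZgoodZk$. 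Adding these two contributions yields
\[
d_{\mathrm{TV}}\bigl(\mathcal{U}(\cdot \mid k), \mathcal{L}(\cdot \mid k)\bigr) \le \frac{3q}{\log N} + \frac{\log \log \log N}{10 \log \log N}.
\]

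Plugging in the hypothesis $q \le \branchZubndZq \le \log \log N$ shows that the first term $3q/\log N$ is dwarfed by $1/\log \log N$ for $N$ sufficiently large, and the second term is also $o(1/\log\log N)$-dominated in the sense needed: explicitly, both summands are eventually below $1/\log \log N$, which is the desired target. The main subtlety here is just bookkeeping: one must be careful that the multiplicative bound in Lemma~\ref{lemma:good-u-is-multiplicative-large} is stated per leaf $u \in \mathcal{G}_k$ (not per path prefix), and that the ``good event'' exploited in Lemma~\ref{lemma:tons-of-ks-with-hp-good-leaf} is defined identically in both models (it depends only on the structure of the decision tree and the value of $k$, not on the distribution $\mu$ drawn in the leaking model), so the $\mathcal{U}$-probability of $\mathcal{G}_k$ can be directly transferred to control the truncation error of the template. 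Since there is no further randomness in the argument, the entire proof reduces to invoking the three cited lemmas in sequence, with no nontrivial obstacle beyond keeping the error parameters aligned.
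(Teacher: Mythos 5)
Your proof matches the paper's exactly: fix the good set $G$ from Lemma~\ref{lemma:tons-of-ks-with-hp-good-leaf}, and for each $k\in G$ apply the template bound of Lemma~\ref{lemma:adaptive-yao-template-bound} with $E=\mathcal G_k$, feeding in the per-leaf multiplicative comparison of Lemma~\ref{lemma:good-u-is-multiplicative-large} and the probability bound $\Pr_{\mathcal U}[\neg\mathcal G_k\mid k]\le\frac{\log\log\log N}{10\log\log N}$ from Lemma~\ref{lemma:tons-of-ks-with-hp-good-leaf}, yielding $\frac{3q}{\log N}+\frac{\log\log\log N}{10\log\log N}$. Two small remarks. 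Your aside that the pointwise bound ``lifts to the unconditioned distance'' is unnecessary and a slight misreading: the lemma is already a per-$k$ statement. Second, your closing sentence that ``both summands are eventually below $1/\log\log N$'' is not literally correct -- $\frac{\log\log\log N}{10\log\log N}>\frac{1}{\log\log N}$ once $\log\log\log N>10$ -- but this is precisely the same loose step the paper takes in its own proof, and it is harmless because Theorem~\ref{th:lbnd-tight-c-eps-task} only uses that the distance is $o(1)$, which your bound does establish.
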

\begin{proof}
    By Lemma \ref{lemma:tons-of-ks-with-hp-good-leaf}, $\Pr_{\mathcal U}\left[\mathcal G_k | k\right] \ge 1 - \branchZfracZbadZleafZforZgoodZk$ for $\left(1 - \branchZfracZbadZkZchoices\right)n$ choices of $k$. By Lemma \ref{lemma:good-u-is-multiplicative-large}, if $u \in \mathcal G_k$, then $\Pr_{\mathcal L}[u|k] \ge \left(1 - \frac{3q}{\log N}\right)\Pr_{\mathcal U}[u|k]$. Hence, by Lemma \ref{lemma:adaptive-yao-template-bound}, the total variation distance between the distribution of the respective runs is bounded by $\frac{3q}{\log N} + \branchZfracZbadZleafZforZgoodZk \le \frac{3\log \log N}{\log N} + \branchZfracZbadZleafZforZgoodZk \le \frac{1}{\log \log N}$ for these choices of $k$.
\end{proof}

We are now ready to prove our lower bound. Note that in particular it applies to algorithms solving the $(\frac19,\frac19)$-estimation task.

\begin{theorem} \label{th:lbnd-tight-c-eps-task}
    Every conditional sampling algorithm that, with probability at least $p$ for a fixed $p>\frac12$, can estimate an element drawn from $\mu$ within a factor of $1\pm\frac19$, must draw $\Omega(\log \log N)$ conditional samples.
\end{theorem}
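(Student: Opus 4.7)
The plan is to apply Yao's principle with the composed distribution $D$ defined earlier (draw $k$ uniformly from $\{k_\mathrm{min},\ldots,k_\mathrm{max}\}$, then draw $\mu$ from $D_k$). Suppose for contradiction there were a probabilistic conditional-sampling algorithm $\mathcal A$ that draws $q \le \branchZubndZq$ samples and estimates $\mu(x)$ within a $(1\pm\frac{1}{9})$-factor with probability at least $p>\frac12$ (over $x\sim\mu$ and the algorithm's coins). By Yao's principle, there must exist a deterministic algorithm $A$ with the same sample bound and success probability at least $p$ over $D$. Since the leaking conditional model is at least as expressive as any reasonable conditional sampling model, $A$ can be run in the leaking model and still succeeds with probability $\ge p$ over $D$. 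Using Observation~\ref{obs:estimation-to-calc-k}, composing $A$ with the $\hat k=\mathrm{round}(\log(N/\hat p))$ post-processing yields a deterministic algorithm $A'$ in the leaking model whose output $\hat k$ equals $k$ with probability at least $p-o(1)$ over $D$.

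Next I would transport the success of $A'$ from the leaking model to the uniform model. By Lemma~\ref{lemma:leaking-close-to-uniform}, for at least $(1-\branchZfracZbadZkZchoices)n$ values of $k$, the total variation distance between the distribution of execution paths in the leaking model and the uniform model is at most $\frac{1}{\log\log N}$. Averaging over the uniform choice of $k$, the success probability of $A'$ in the uniform model differs from its success probability in the leaking model by at most $\branchZfracZbadZkZchoices+\frac{1}{\log\log N}=o(1)$. Hence $A'$ succeeds in recovering $k$ in the uniform model with probability at least $p-o(1)>\frac12$ for sufficiently large $N$.

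Finally, I invoke Lemma~\ref{lemma:uniform-condition-to-binary-search}: any $q$-query algorithm in the uniform model can be realized as a distribution over deterministic $n$-range binary search trees of height $q$, where $n=k_\mathrm{max}-k_\mathrm{min}+1=\Theta(\log N)$. Thus $A'$ corresponds to a probabilistic binary search algorithm of depth $q$ that recovers the uniformly drawn $k$ with probability strictly greater than $\frac12$. By Observation~\ref{obs:binary-search-lbnd}, any such algorithm requires $q>\log n-1=\Omega(\log\log N)$ queries, contradicting our assumption that $q\le\branchZubndZq$ once $N$ is large enough. This yields the claimed $\Omega(\log\log N)$ lower bound.

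The main obstacle has already been absorbed into Lemma~\ref{lemma:leaking-close-to-uniform}, which is where most of the technical effort lives (controlling the three bad events $\mathcal G_k^{(1)},\mathcal G_k^{(2)},\mathcal G_k^{(3)}$ and comparing the two path distributions via Lemma~\ref{lemma:adaptive-yao-template-bound}). Given that lemma, the remaining work is a clean chain of reductions: estimation $\Rightarrow$ recovering $k$ in the leaking model $\Rightarrow$ recovering $k$ in the uniform model $\Rightarrow$ binary search, each losing only $o(1)$ in success probability, against a hard lower bound of $\log n-1$ for binary search.
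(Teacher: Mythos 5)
Your proof is correct and takes essentially the same route as the paper's: reduce estimation to recovering $k$ via Observation~\ref{obs:estimation-to-calc-k}, transfer the success probability from the leaking model to the uniform model via Lemma~\ref{lemma:leaking-close-to-uniform}, and conclude with the binary-search lower bound via Lemma~\ref{lemma:uniform-condition-to-binary-search} and Observation~\ref{obs:binary-search-lbnd}. The only cosmetic difference is that you spell out the Yao derandomization step explicitly, whereas the paper relegates it to the preamble of the subsection.
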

\begin{proof}
    By Observation \ref{obs:estimation-to-calc-k}, such an algorithm can compute $k$ with probability at least $p-o(1)$ when its input $(k,\mu)$ is drawn from $D$ (that is, $k$ is uniformly drawn from $\{k_\mathrm{min},\ldots,k_\mathrm{max}\}$ and then $\mu$ is drawn from $D_k$). By Lemma \ref{lemma:leaking-close-to-uniform}, unless $q > \branchZubndZq$, the chosen $k$ is with probability $1-o(1)$ such that the same algorithm, when executed in the uniform conditional sampling model, has its distribution over runs $o(1)$-close to the one produced by the leaking model. Hence the algorithm can compute $k$ with probability $p-o(1) > \frac12$ under the uniform conditional model as well. By Lemma \ref{lemma:uniform-condition-to-binary-search} and Observation \ref{obs:binary-search-lbnd}, $\branchZubndZq$ queries do not suffice for computing $k$ in this model with success probability greater than $\frac12$, and hence the algorithm must make strictly more than $\branchZubndZq = \Omega(\log \log N)$ queries.
\end{proof}

\subsection{Lower bound estimation task under weaker models}
\label{subsec:lbnd-c-eps-est::of-sec:lbnds}

Recall Theorem \ref{th:ubnd-nontol-equivalence}:
\thZubndZnontolZequivalence*

The proof of Theorem \ref{th:ubnd-nontol-equivalence} assumes that the $(c,\eps)$-peek oracle can be simulated using $T$ conditional samples in expectation where $c=\eps$, and obtains the upper bound of $T \cdot \tilde{O}(1 / \eps)$ conditional samples for an $\eps$-test of equivalence.

We now review two well-investigated distribution testing models that are more restrictive than the full one in which our estimator operates. For each of them we use a known lower bound on the equivalence testing task along with the above observation to provide a corresponding lower bound for the $(c,\eps)$-estimation task.

\begin{definition}[The subcube conditional oracle]
    A set $A \subseteq \{0,1\}^n$ is a \emph{subcube} if there exist $A_1,\ldots,A_n \subseteq \{0,1\}$ for which $A = A_1 \times \cdots \times A_n$. The \emph{subcube conditional oracle} is the restriction of the conditional oracle to answer only subcube condition sets.
\end{definition}


For product distributions, \cite[Theorem 43]{baynets17} shows a lower bound of $\tilde{\Omega}(n)$ samples on $(\eps/2,\eps)$-tolerant equivalence testing of product distributions over $\{0,1\}^n$ (the size of the sample set is $N=2^n$). \cite{jhw18} improves the $\eps$-dependency of the lower bound. As observed in \cite{adar2024improved}, subcube conditional sampling has no additional power over unconditional sampling when the input distributions are guaranteed to be product distributions. This implies the following bound.

\begin{corollary}
    Every algorithm that solves the $(c,\eps)$-estimation task using subcube sampling must make at least $\tilde{\Omega}(\log N)$ subcube queries in expectation for every sufficiently small $\eps > 0$ and $c>0$.
\end{corollary}

\begin{definition}[The interval conditional oracle]
    A set $A \subseteq \{1,\ldots,N\}$ is an \emph{interval} if there exist $1 \le a \le b \le N$ for which $A = \{ i : a \le i \le b \}$. The \emph{interval conditional oracle} is the restriction of the conditional oracle to answer only interval condition sets.
\end{definition}

For interval conditions, \cite{crs15} show a lower bound of $\tilde{\Omega}(\log N)$ interval queries for uniformity testing, which is a special case of equivalence testing. This implies the following bound.

\begin{corollary}
    Every algorithm that solves the $(c,\eps)$-estimation task using interval conditions must make at least $\tilde{\Omega}(\log N)$ subcube queries in expectation for every sufficiently small $\eps > 0$ and $c>0$.
\end{corollary}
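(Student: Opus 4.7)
The plan is to reduce from the $\tilde{\Omega}(\log N)$ uniformity lower bound of \cite{crs15} via the same black-box pipeline used implicitly in the preceding subcube corollary. Specifically, I would show that a $(c,\eps)$-estimator of expected cost $T$ that happens to use only interval condition queries would yield an $\eps$-test for equivalence of distributions that likewise uses only interval queries and whose complexity is $\tilde O(T/\eps)$. Once $\eps$ and $c$ are fixed to sufficiently small constants this contradicts the known interval-model lower bound unless $T = \tilde\Omega(\log N)$.

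Concretely, I would instantiate the chain of reductions behind Theorem \ref{th:ubnd-nontol-equivalence}: Algorithm \ref{fig:alg:peekaboo-non-tolerant-core} (Lemma \ref{lemma:peekaboo-non-tolerant-new}) builds an $\eps$-test for equivalence using $O(1/\eps)$ unconditional samples from each of $\mu,\tau$ together with $O(1/\eps)$ calls to their $(\tfrac{1}{16}\eps,\tfrac{1}{16}\eps)$-peek oracles; each peek call is in turn simulated by $O(\log \eps^{-1})$ invocations of a $(c,\eps)$-estimator with median amplification, as in Lemma \ref{lemma:generic-application-k-distributions}. The reduction treats the estimator as a black box, and the only other primitive it needs—an unconditional draw from $\mu$ or $\tau$—is realizable in the interval model by conditioning on the full interval $\{1,\dots,N\}$. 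Hence if the hypothetical estimator uses only interval conditions, so does the resulting equivalence tester, and the specialization to uniformity (taking $\tau$ uniform) falls within the scope of the \cite{crs15} lower bound.

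The only nontrivial bookkeeping is reconciling the ``expected'' convention of our estimator with the ``worst-case'' convention under which the \cite{crs15} lower bound is typically stated. I would handle this exactly as in the proof of Theorem \ref{th:ubnd-nontol-equivalence}: run a constant number of independent copies of the simulated tester, rejecting immediately once a Markov-based sample budget of $\tilde O(T/\eps)$ is exceeded, and take the majority vote via Observation \ref{obs:majority-amplification-ad-hoc} to boost the $5/8$ success probability of the black-box reduction back to $2/3$. This converts an expected-$T$ interval-condition estimator into a worst-case-$\tilde O(T/\eps)$ interval-condition equivalence tester with standard success probability, so the $\tilde\Omega(\log N)$ lower bound of \cite{crs15} directly yields $T = \tilde\Omega(\eps\cdot\log N) = \tilde\Omega(\log N)$ for any fixed small $\eps>0$ and $c>0$. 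I expect this expected-to-worst-case conversion to be the main (still routine) obstacle; once it is in place, every remaining step is an unchanged line-by-line transcription of the fully conditional upper bound proof, with the single added observation that each non-estimator query it performs is already an interval query.
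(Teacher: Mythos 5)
Your proposal is correct and follows the paper's own route: the paper also derives this corollary by instantiating the reduction behind Theorem~\ref{th:ubnd-nontol-equivalence} (simulate the peek oracle via the hypothetical interval-query estimator, note that unconditional draws are interval queries on $\{1,\dots,N\}$, obtain an interval-query $\eps$-test of equivalence at cost $T\cdot\tilde{O}(1/\eps)$, and invoke the $\tilde{\Omega}(\log N)$ uniformity lower bound of \cite{crs15}). Your handling of the expected-vs.-worst-case discrepancy via Markov truncation and majority amplification matches the mechanism already used in the proof of Theorem~\ref{th:ubnd-nontol-equivalence}, so it is consistent with what the paper leaves implicit.
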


Note that the polylogarithmic algorithm from \cite{CFGM16} in particular applies to both the subcube conditional model and the interval conditional model. The above corollaries in particular imply a limit on the possibility for its improvement.

\subsection{Lower bound for testing label-invariant properties}
\label{subsec:lbnd-lbl-inv::of-sec:lbnds}

We show that there exist a label-invariant property that has an $\Omega(\log N / \eps)$ lower bound for $\eps$-testing using the conditional model for every sufficiently small $\eps > 0$. We show that some $k$-bit string property is linearly hard to test in an ad-hoc testing model, and encode string instances related to this property in the histogram of distributions over a domain of size $N = 2^{\Omega(\eps k)}$.

\begin{definition}[Notations]~

    \begin{itemize}
        \item Let $X$ be a set. We use $2^X$ to denote the set of all subsets of $X$.
        \item Let $I$ be a set of integers. For an integer $k$, we use $k - I$ to denote the set $\{ k-i : i \in I \}$.
        \item Let $I$ be a set of integers. For an integer $k$, we use $\neg_k I$ to denote the set $\{1,\ldots,k\} \setminus I$.
    \end{itemize}
\end{definition}

\begin{definition}[$q$-uniform family]
    A family $\mathcal I \subseteq 2^{\{1,\ldots,k\}}$ is \emph{$q$-uniform} if, for every subset $J \subseteq \{1,\ldots,k\}$ of size $q$, the intersection of $J$ with a uniformly drawn set $I \sim \mathcal I$ is uniformly distributed over $2^J$.
\end{definition}

\begin{definition}[$k$-paired set]
    A set $I \subseteq \{1,\ldots,k\}$ is \emph{$k$-paired} if $(k+1)-I = \neg_k I$.
\end{definition}

\begin{definition}[paired $q$-uniform family] \label{def:paired-q-uniform}
    For an even $k$, a family $\mathcal I \subseteq 2^{\{1,\ldots,k\}}$ is \emph{paired $q$-uniform} if every $I \in \mathcal I$ is a $k$-paired set, and for every subset $J \subseteq \{1,\ldots,k\}$ of size $q$ which is disjoint from $(k+1) - J$, the intersection of $J$ with a uniformly drawn set $I \sim \mathcal I$ is uniformly distributed over $2^J$.
\end{definition}

\begin{observation} \label{obs:q-uniform-pairing}
    Let $\mathcal I \subseteq \{1,\ldots,k\}$ be a $q$-uniform family. The family $\mathcal I' = \{ I \cup ((2k+1) - (\neg_k I)) : I \in \mathcal I\} \subseteq 2^{\{1,\ldots,2k\}}$ is a paired $q$-uniform family.
\end{observation}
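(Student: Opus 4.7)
The plan is to verify the two defining properties of a paired $q$-uniform family separately, for the constructed $\mathcal I'$: first the pairing condition on every element, and then the uniformity condition on appropriate $J$-sized projections. The first part is a direct set-theoretic computation, while the second part reduces the sampling in $\{1,\ldots,2k\}$ to the known $q$-uniformity of $\mathcal I$ in $\{1,\ldots,k\}$ via an explicit reflection.

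For the pairing, fix $I \in \mathcal I$ and let $I' = I \cup ((2k+1) - \neg_k I)$. Using the identity $(2k+1) - ((2k+1) - S) = S$ for $S \subseteq \{1,\ldots,2k\}$, one computes $(2k+1) - I' = ((2k+1) - I) \cup \neg_k I$. On the other hand, since $I \subseteq \{1,\ldots,k\}$ while $(2k+1) - \neg_k I \subseteq \{k+1,\ldots,2k\}$, the set $I'$ splits cleanly across the two halves $\{1,\ldots,k\}$ and $\{k+1,\ldots,2k\}$, and taking the complement in $\{1,\ldots,2k\}$ half-by-half produces exactly $\neg_k I \cup ((2k+1) - I)$. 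Hence $(2k+1) - I' = \neg_{2k} I'$, so $I'$ is $2k$-paired.

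For the uniformity condition, let $J \subseteq \{1,\ldots,2k\}$ have $|J|=q$ and be disjoint from $(2k+1)-J$. Write $J_1 = J \cap \{1,\ldots,k\}$ and $J_2 = J \cap \{k+1,\ldots,2k\}$, and set $J^* = J_1 \cup ((2k+1) - J_2) \subseteq \{1,\ldots,k\}$. The disjointness hypothesis forces $J_1 \cap ((2k+1) - J_2) = \emptyset$, so $|J^*| = |J_1| + |J_2| = q$, which makes $J^*$ an admissible test set for the $q$-uniformity of $\mathcal I$. Thus, drawing $I \sim \mathcal I$ uniformly, $I \cap J^*$ is uniform over $2^{J^*}$.

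It remains to observe that the map $I \cap J^* \mapsto I' \cap J$ is a bijection between $2^{J^*}$ and $2^J$. For $j \in J_1$ we have $j \in I'$ iff $j \in I$, which equals $j \in I \cap J^*$. For $j \in J_2$, by construction $j \in I'$ iff $(2k+1)-j \notin I$, i.e., iff $(2k+1)-j \notin I \cap J^*$ (since $(2k+1)-j \in (2k+1)-J_2 \subseteq J^*$); this is a coordinate-wise flip. Composed with the identity on $J_1$, the overall map is a bijection, so pushing forward the uniform distribution on $2^{J^*}$ yields the uniform distribution on $2^J$, completing the verification. The only place requiring care is the index bookkeeping between the two halves and their reflections, which is the main (and only) obstacle.
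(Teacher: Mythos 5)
Your proof is correct: the half-by-half computation of $(2k+1)-I'$ establishes the pairing, the identification $J^* = J_1 \cup ((2k+1)-J_2)$ with $|J^*|=q$ (using disjointness of $J$ from $(2k+1)-J$) reduces to the $q$-uniformity of $\mathcal I$, and the coordinate-wise bijection (identity on $J_1$, reflect-and-flip on $J_2$) transports the uniform distribution on $2^{J^*}$ to the uniform distribution on $2^J$. The paper states this observation without proof, and your argument is precisely the standard verification one would supply.
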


\begin{observation} \label{obs:q-uniform-pairing-all-J}
    For an even $k$, let $\mathcal I \subseteq 2^{\{1,\ldots,k\}}$ be a paired $q$-uniform family. For every $J \subseteq \{1,\ldots,k\}$ of size less than $q$, $I'\subseteq J$ for which $\Pr_{I\sim\mathcal I}[I\cap J=I']>0$ and $j$ for which $\{j,k+1-j\}\cap J=\emptyset$, if we uniformly draw $I \sim \mathcal I$, then $\Pr_{I\sim\mathcal I}[j \in I | J\cap I = I'] = \frac{1}{2}$.
\end{observation}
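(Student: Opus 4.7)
The plan is to reduce the question on the arbitrary set $J$ (which may contain pairs $\{i,k+1-i\}$) to a set of ``free'' representatives to which Definition \ref{def:paired-q-uniform} applies directly, and then extend by $j$ and apply the paired $q$-uniformity.

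First I would decompose $J = J_{\mathrm{sing}} \cup J_{\mathrm{pair}}$, where $i \in J_{\mathrm{sing}}$ iff $k+1-i \notin J$, and $J_{\mathrm{pair}}$ is the union of complete pairs $\{i,k+1-i\} \subseteq J$. Because every $I \in \mathcal I$ is $k$-paired, $I$ contains exactly one of $\{i,k+1-i\}$ for each such pair; consequently the only $I' \subseteq J$ with $\Pr_{I\sim\mathcal I}[I \cap J = I'] > 0$ are those which already pick exactly one element from each pair in $J_{\mathrm{pair}}$. Let $\tilde J$ consist of $J_{\mathrm{sing}}$ together with one representative (say, the smaller element) of each pair in $J_{\mathrm{pair}}$, and let $\tilde I' \subseteq \tilde J$ be the corresponding set one obtains from $I'$ by translating its ``wrong representative'' entries through the pairing. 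Then the events $\{I \cap J = I'\}$ and $\{I \cap \tilde J = \tilde I'\}$ are identical, so it suffices to prove $\Pr_{I\sim\mathcal I}[j \in I \mid I \cap \tilde J = \tilde I'] = 1/2$.

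Next I would verify that $\tilde J \cup \{j\}$ is disjoint from $(k+1)-(\tilde J \cup \{j\})$: by construction no two elements of $\tilde J$ are paired, and the hypothesis $\{j,k+1-j\} \cap J = \emptyset$ gives $j,k+1-j \notin \tilde J$. Also $|\tilde J \cup \{j\}| \le |J_{\mathrm{sing}}| + |J_{\mathrm{pair}}|/2 + 1 \le |J|+1 \le q$. Since a paired $q$-uniform family can only exist non-trivially when $k \ge 2q$, we can greedily extend $\tilde J \cup \{j\}$ to a set $J^\ast \subseteq \{1,\ldots,k\}$ of size exactly $q$ that remains disjoint from $(k+1)-J^\ast$: at each stage at most $2(q-1)$ elements (the current set and its mirror) are forbidden, so a valid element exists.

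Applying Definition \ref{def:paired-q-uniform} to $J^\ast$, the intersection $I \cap J^\ast$ is uniform over $2^{J^\ast}$, and therefore its restriction $I \cap (\tilde J \cup \{j\})$ is uniform over $2^{\tilde J \cup \{j\}}$. In particular $\Pr[j \in I \mid I \cap \tilde J = \tilde I'] = 1/2$, which via the equivalence of events from the first step yields the desired conclusion. The only step that required any real care was the first one — reducing $J$ to $\tilde J$ so that the hypotheses of Definition \ref{def:paired-q-uniform} actually apply; once that is done, the extension argument and the uniformity conclusion are routine.
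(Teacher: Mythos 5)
Your proof is correct and follows essentially the same route as the paper: reduce $J$ to a pair-free set of representatives (the paper's $J'$, your $\tilde J$), observe that for $k$-paired $I$ the conditioning events $\{I\cap J=I'\}$ and $\{I\cap\tilde J=\tilde I'\}$ coincide, and then invoke Definition~\ref{def:paired-q-uniform}. You additionally make explicit the (correct and necessary) extension of $\tilde J\cup\{j\}$ to a set of size exactly $q$ disjoint from its mirror before applying the definition, a step the paper's proof applies tacitly; this is a small improvement in rigor rather than a different approach.
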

\begin{proof}
    Set $J'=J\setminus(\{1,\ldots,k/2\}\cap((k+1)-J))$. In words, $J'$ is the result of taking $J$ and removing every $j\leq k/2$ for which $\{j,k+1-j\}\subseteq J$. Note that for a random choice over family of paired sets, the events $I\cap J=I'$ and $I\cap J'=I'\cap J'$ are identical. Also note that $J'\cup\{j\}$ and $(k+1)-(J'\cup \{j\})$ are disjoint by the assertion on $j$. Hence, 
    \begin{eqnarray*}
        \Pr_{I\sim\mathcal I}[j \in I | J\cap I = I']
        &=& \Pr_{I\sim\mathcal I}[j \in I | J'\cap I = J'\cap I'] \\
        \text{[Chain rule]}
        &=& \frac{\Pr_{I\sim\mathcal I}[(j\in I)\wedge (J'\cap I = J'\cap I')]}{\Pr_{I\sim\mathcal I}[J'\cap I = J'\cap I']}\\
        &=& \frac{\Pr_{I\sim\mathcal I}[(J'\cup\{j\})\cap (I\cup\{j\}) = (J'\cap I')\cup\{j\}]}{\Pr_{I\sim\mathcal I}[J'\cap I = J'\cap I']}\\
        \text{[Definition \ref{def:paired-q-uniform}]} &=& \frac{2^{-|J'\cup\{j\}|}}{2^{-|J'|}} = \frac12
    \end{eqnarray*}
\end{proof}

\begin{definition}[$\eps$-pairwise far families]
    Two families $\mathcal I_1, \mathcal I_2 \subseteq 2^{\{1,\ldots,k\}}$ are \emph{$\eps$-pairwise far} if for every $I_1 \in \mathcal I_1$ and $I_2 \in \mathcal I_2$, $\abs{I_1 \Delta I_2} > \eps k$.
\end{definition}

\begin{lemma} \label{lemma:union-with-mirrored-negation-keeps-distance}
    Let $I_1, I_2 \subseteq \{1,\ldots,k\}$ be $\eps$-far subsets. Let $J_1 = I_1 \cup ((2k+1) - \neg_k I_1)$ and $J_2 = I_2 \cup ((2k+1) - \neg_k I_2)$. In this setting, $J_1$ and $J_2$ are $\eps$-far as well.
\end{lemma}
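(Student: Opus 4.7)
The plan is to establish the stronger identity $\abs{J_1 \Delta J_2} = 2\abs{I_1 \Delta I_2}$, from which the conclusion is immediate: if $\abs{I_1 \Delta I_2} > \eps k$ then $\abs{J_1 \Delta J_2} > 2 \eps k = \eps \cdot \abs{\{1,\ldots,2k\}}$, so $J_1$ and $J_2$ are $\eps$-far as subsets of a $2k$-element domain.

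The first step is to decompose each $J_i$ with respect to the partition $\{1,\ldots,2k\} = \{1,\ldots,k\} \cup \{k+1,\ldots,2k\}$. By the definition $J_i = I_i \cup ((2k+1) - \neg_k I_i)$, the restriction $J_i \cap \{1,\ldots,k\}$ equals $I_i$, while $J_i \cap \{k+1,\ldots,2k\}$ equals the image of $\neg_k I_i$ under the mirror involution $\sigma : j \mapsto 2k+1-j$ (which bijects $\{1,\ldots,k\}$ with $\{k+1,\ldots,2k\}$).

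The second step is to split the symmetric difference along the same partition:
\[
\abs{J_1 \Delta J_2} \;=\; \abs{(J_1 \Delta J_2) \cap \{1,\ldots,k\}} \;+\; \abs{(J_1 \Delta J_2) \cap \{k+1,\ldots,2k\}}.
\]
The first summand is exactly $\abs{I_1 \Delta I_2}$. For the second summand, pulling back through $\sigma$ gives $\abs{(\neg_k I_1) \Delta (\neg_k I_2)}$. The last step is to note that complementation within a fixed ground set preserves symmetric differences, since $x \in (\neg_k A) \Delta (\neg_k B)$ iff exactly one of $x \in A$, $x \in B$ holds, i.e., iff $x \in A \Delta B$. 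Hence the second summand also equals $\abs{I_1 \Delta I_2}$, and adding the two contributions yields the desired identity.

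There is no real obstacle here; the argument is purely set-theoretic bookkeeping that exploits (i) the disjointness of the two halves of $\{1,\ldots,2k\}$ in the pairing construction, and (ii) the elementary fact that complementation is an isometry for the symmetric-difference metric.
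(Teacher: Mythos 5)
Your proof is correct and follows essentially the same route as the paper's: decompose the symmetric difference along the two halves of $\{1,\ldots,2k\}$, observe the first half contributes $\abs{I_1\Delta I_2}$, pull the second half back through the mirror bijection to $\abs{(\neg_k I_1)\Delta(\neg_k I_2)}$, and use that complementation preserves symmetric differences. The resulting identity $\abs{J_1\Delta J_2}=2\abs{I_1\Delta I_2}$ is exactly the paper's calculation.
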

\begin{proof}
    \begin{eqnarray*}
        \abs{J_1 \Delta J_2}
        &=& \abs{\left(I_1 \cup ((2k+1) - (\neg_k I_1))\right) \Delta \left(I_2 \cup ((2k+1) - (\neg_k I_2))\right)} \\
        &=& \underbrace{\abs{I_1 \Delta I_2}}_{\text{in $\{1,\ldots,k\}$}} + \underbrace{\abs{((2k+1) - (\neg_k I_1)) \Delta ((2k+1) - (\neg_k I_2))}}_{\text{in $\{k+1,\ldots,2k\}$}} \\
        &=& \abs{I_1 \Delta I_2} + \abs{(\neg_k I_1) \Delta (\neg_k I_2)} \\
        &=& 2 \abs{I_1 \Delta I_2}
        > 2 \cdot \eps k
        = \eps \cdot (2k)
    \end{eqnarray*}
\end{proof}

\begin{definition}[Weighted sampling oracle]
    Let $I \subseteq \{1,\ldots,k\}$ be a subset. The \emph{weighted sampling} oracle for $I$ gets a weight function $w : \{1,\ldots,k\} \to [0,1]$ as its input, and its output is an index $i \in \{1,\ldots,k\}$ and a bit $b \in \{0,1\}$ distributed as follows:
    \begin{itemize}
        \item If $\sum_{i=1}^k w(i) > 0$, then the probability to draw the index $i$ is $\frac{w(i)}{\sum_{i=1}^k w(i)}$. The oracle returns $(i, 1)$ if $i \in I$ and $(i,0)$ if $i \notin I$.
        \item If $\sum_{i=1}^k w(i) = 0$, then the oracle indicates an error.
    \end{itemize}
\end{definition}

\begin{lemma} \label{lemma:applying-paired-q-uniformness-to-adaptivity}
    Let $\mathcal I$ be a paired $q$-uniform family of subsets of $\{1,\ldots,k\}$. A sequence of $q$ weighted sampling oracle calls with inputs $w_1,\ldots,w_q$ to a uniformly chosen $I \sim \mathcal I$ results in a sequence of pairs $(j_1,b_1), \ldots, (j_q,b_q)$ where the $j_i$s are indexes and the $b_i$s are bits. In this setting, for every $1 \le i \le q$ for which $\{j_i, (k+1)-j_i\} \cap \{j_1,\ldots,j_{i-1}\}=\emptyset$, the bit $b_i$ is uniformly distributed, even when conditioned on the values of $(b_1,\ldots,b_{i-1})$. Additionally, for every other $i$, the bit $b_i$ is a function of $j_1,\ldots,j_{i-1}$ and $b_1,\ldots,b_{i-1}$ (which does not depend on $\mathcal I$ or $I$ at all). This holds even if the input $w_i$ can be chosen based on the result of the previous $i-1$ calls.
\end{lemma}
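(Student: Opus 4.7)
The plan is a Bayesian analysis of the posterior distribution of $I$ after $i-1$ query rounds, combined with the paired $q$-uniformity property via Observation \ref{obs:q-uniform-pairing-all-J}. I would proceed by induction on $i$, but the bulk of the work is in analyzing step $i$ given the full history $(j_1,b_1),\ldots,(j_{i-1},b_{i-1})$.

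First, I would establish the posterior structure. Let $J = \{j_1, \ldots, j_{i-1}\}$ and $I' = \{j_\ell : b_\ell = 1, \ell < i\} \subseteq J$. The key observation is that for each $\ell$, the oracle draws $j_\ell$ from a distribution depending only on $w_\ell$ (which is an adaptive function of the prior history), so $j_\ell$ is conditionally independent of $I$ given that history; meanwhile $b_\ell = \mathbf{1}[j_\ell \in I]$ is deterministic given $j_\ell$ and $I$. A direct Bayes' rule computation, in which the $j_\ell$ factors contribute identical terms independent of $I$ and the $b_\ell$ factors simply restrict to consistent $I$'s, then yields that the posterior distribution of $I$ given the history is uniform over $\{I_0 \in \mathcal I : I_0 \cap J = I'\}$.

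For the ``repeat'' case where $j_i = j_\ell$ or $(k+1) - j_i = j_\ell$ for some $\ell < i$: in the first subcase, $b_i = \mathbf{1}[j_\ell \in I] = b_\ell$; in the second, since every $I \in \mathcal I$ is $k$-paired (so $(k+1) - I = \neg_k I$), we obtain $b_i = \mathbf{1}[j_i \in I] = 1 - \mathbf{1}[(k+1) - j_i \in I] = 1 - b_\ell$. In both subcases $b_i$ is determined by the observed data with no residual dependence on the random draw of $I$, as required. For the ``new'' case where $\{j_i, (k+1) - j_i\} \cap J = \emptyset$: since $|J| \le i - 1 < q$, I would invoke Observation \ref{obs:q-uniform-pairing-all-J} with the set $J$, the (positive-probability) subset $I'$, and the index $j = j_i$, yielding $\Pr_{I \sim \mathcal I}[j_i \in I \mid I \cap J = I'] = \tfrac{1}{2}$. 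Combining this with the posterior characterization and the conditional independence of $j_i$ from $I$ given the prior history, I would conclude $\Pr[b_i = 1 \mid j_1, \ldots, j_i, b_1, \ldots, b_{i-1}] = \tfrac{1}{2}$.

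The main subtlety is handling the adaptive dependence of $w_i$ on past answers cleanly in the Bayesian computation, ensuring that the randomness in the oracle's draws does not leak information about $I$ beyond what is conveyed by the previous $b_\ell$'s. Once the posterior is correctly identified as uniform over consistent sets, the pairing property handles the overlapping case and Observation \ref{obs:q-uniform-pairing-all-J} handles the disjoint case essentially mechanically.
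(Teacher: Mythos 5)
Your proof is correct and follows essentially the same route as the paper's: identify $I'=\{j_1,\ldots,j_{i-1}\}\cap I$, handle the overlapping case deterministically via the $k$-paired property, and invoke Observation~\ref{obs:q-uniform-pairing-all-J} for a fresh index. You are somewhat more explicit than the paper in spelling out the Bayesian argument that conditioning on the full transcript collapses to conditioning on $I\cap J=I'$ (the paper takes this step for granted), which is a reasonable filling-in rather than a different approach.
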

\begin{proof}
    Note that $\mathcal I$ is non-empty since any paired $q$-uniform family must consist of at least $2^q$ sets.
    
    Consider the $i$th call ($1 \le i \le q$) to the weighting sampling oracle. It uses internal randomness and adaptivity to choose an index $j_i$ and query its belonging to the input set $I$. Let $I' = \{ j_1,\ldots,j_{i-1}\} \cap I$ be the knowledge about past queried indexes. If $j_i, (k+1)-j_i \notin \{j_1,\ldots,j_{i-1}\}$ then, due to $\mathcal I$ being paired $q$-uniform and Observation \ref{obs:q-uniform-pairing-all-J}, $\Pr_{I \sim \mathcal I}\left[j_i \in I \cond I \cap \{j_1,\ldots,j_{i-1}\} = I'\right] = \frac{1}{2}$.

    On the other hand, if $j_i=j_{i'}$ for some $i'<i$ then $b_i=b_{i'}$ deterministically, and if $k+1-j_i=j_{i'}$ for some $i'<i$ then $b_i=1-b_{i'}$ deterministically, irrespective of $I$ or $\mathcal I$.
\end{proof}

\begin{lemma} \label{lemma:parameterized-lbnd-paired-secret-sharing}
    Let $k \ge 2$, $q \ge 2$. Let $\mathcal I \subseteq 2^{\{1,\ldots,k\}}$ be a paired $q$-uniform family. If there exists another paired $q$-uniform family $\mathcal I' \subseteq 2^{\{1,\ldots,k\}}$ which is $\eps$-pairwise far from $\mathcal I$, then every $\eps$-testing algorithm for distinguishing between $\mathcal I$ and being $\eps$-far from $\mathcal I$ must make more than $q$ calls to the weighted sampling oracle.
\end{lemma}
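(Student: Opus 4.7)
The plan is to apply Yao's principle against the two input distributions $\mathcal I$ and $\mathcal I'$ and conclude via Lemma \ref{lemma:applying-paired-q-uniformness-to-adaptivity} that any deterministic $q$-query algorithm sees the same distribution of transcripts on both, even though the correct answers on them must differ. By assumption $\mathcal I'$ is $\eps$-pairwise far from $\mathcal I$, so every $I'\in\mathcal I'$ is $\eps$-far from satisfying the property ``membership in $\mathcal I$'', while every $I\in\mathcal I$ trivially satisfies it. Hence, a correct $\eps$-tester must output \accept with probability at least $2/3$ on $I\sim\mathcal I$ and \reject with probability at least $2/3$ on $I'\sim\mathcal I'$.

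Fix any deterministic $q$-query algorithm $\mathcal B$; view it as a decision tree whose internal nodes issue weighted-sampling queries with adaptively chosen weights $w_1,\dots,w_q$, and whose output (\accept/\reject) depends only on the transcript $((j_1,b_1),\dots,(j_q,b_q))$. The key step is to show that the distribution of this transcript when the hidden set is $I\sim\mathcal I$ is identical to the distribution when the hidden set is $I'\sim\mathcal I'$. For this I couple the two processes: since $\mathcal B$ is deterministic and the weight vector $w_i$ depends only on the previous transcript, I can argue inductively on $i$. The marginal distribution of the index $j_i$ given the transcript so far depends only on $(w_1,\dots,w_{i-1})$ and the previous answers, i.e.\ only on the transcript, so it is the same in both experiments. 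Conditionally on $j_i$, Lemma \ref{lemma:applying-paired-q-uniformness-to-adaptivity} states that if $\{j_i,k+1-j_i\}$ is disjoint from $\{j_1,\dots,j_{i-1}\}$, then $b_i$ is an unbiased coin independent of the history, and otherwise $b_i$ is a fixed function of $(j_1,\dots,j_{i-1},b_1,\dots,b_{i-1})$ that does not mention $I$ or the family at all. In both cases the conditional law of $b_i$ is identical between the $\mathcal I$-experiment and the $\mathcal I'$-experiment.

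Consequently, $\mathcal B$ outputs \accept with the same probability on $I\sim\mathcal I$ as on $I'\sim\mathcal I'$. By Yao's minimax principle, the same holds for randomized testers (averaging over their internal coins). But a correct $\eps$-tester must accept with probability $\ge 2/3$ in the first case and with probability $\le 1/3$ in the second, a contradiction. Therefore no $q$-query $\eps$-tester for $\mathcal I$ exists, i.e.\ at least $q+1$ weighted-sampling queries are required.

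The main obstacle is the careful bookkeeping in the coupling argument, particularly verifying that the ``deterministic'' case in Lemma \ref{lemma:applying-paired-q-uniformness-to-adaptivity} produces bits by the \emph{same} rule under both families; once one notices that this rule ($b_i=b_{i'}$ for $j_i=j_{i'}$, and $b_i=1-b_{i'}$ for $j_i=k+1-j_{i'}$) is dictated solely by the ``paired'' structure shared by $\mathcal I$ and $\mathcal I'$, the inductive step goes through cleanly.
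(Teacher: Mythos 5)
Your proposal is correct and follows the same route as the paper: invoke Lemma \ref{lemma:applying-paired-q-uniformness-to-adaptivity} to show that $q$ oracle calls give identically distributed transcripts under $\mathcal I$ and $\mathcal I'$, while a valid $\eps$-tester must behave differently on the two. The paper phrases this as a single mixture distribution $\mathcal U$ and notes the tester's output becomes independent of the hidden bit, whereas you unpack a Yao-style coupling and transcript induction, but that inductive content is already exactly what Lemma \ref{lemma:applying-paired-q-uniformness-to-adaptivity} provides, so the arguments are essentially the same.
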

\begin{proof}
    Let $\mathcal U = \frac{1}{2} \cdot (\mathrm{uni}(\mathcal I) \times \{1\}) + \frac{1}{2} \cdot (\mathrm{uni}(\mathcal I') \times \{0\})$ be the distribution that uniformly chooses $b \in \{0,1\}$, and then uniformly draws a set $I$ from $\mathcal I$ if $b=1$ and from $\mathcal I'$ if $b=0$. If $\mathcal A$ is an $\eps$-test for the property $\mathcal I$, then $\Pr_{(b,I) \sim \mathcal U}\left[(\mathcal A(I) = \accept) \leftrightarrow b \right] > \frac{1}{2}$, since it should accept every $I \in \mathcal I$ with probability strictly greater than $\frac{1}{2}$ and reject every $I \in \mathcal I'$ with probability strictly greater than $\frac{1}{2}$.

    If $\mathcal A$ makes at most $q$ calls to the weighted sampling oracle, then by Lemma \ref{lemma:applying-paired-q-uniformness-to-adaptivity}, it receives an identical distribution of outputs regardless of whether $I$ is drawn from $\mathcal I$ or from $\mathcal I'$. This implies that $b$ and $\mathcal A(I)$ are independent, and thus $\Pr_{(b,I) \sim \mathcal U}\left[(\mathcal A(I) = \accept) \leftrightarrow b \right] = \frac{1}{2}$. This is a contradiction, and hence $\mathcal A$ must make more than $q$ oracle calls.
\end{proof}

\begin{lemma}[Lemma 22 in \cite{pcpp19}] \label{lemma:cite:pcpp-code-ensembles-exist}
    A set $\{v_1\ldots,v_{3r}\}$ of random vectors in $\{0,1\}^{4r}$ satisfies with probability $1-o(1)$ the following two conditions: $\mathrm{Span}\{v_1,\ldots,v_{3r}\}$ is a $\frac{1}{30}$-distance code, and $\mathrm{Span}\{v_{r+1},\ldots,v_{3r}\}$ is a $\frac{1}{10}$-dual distance code.
\end{lemma}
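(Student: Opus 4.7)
The plan is to establish each of the two conditions via a standard first-moment argument on the random linear code, then combine them by a union bound. The independence of the $v_i$'s as uniform vectors in $\{0,1\}^{4r}$ makes both calculations straightforward.

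For the first condition, I would note that any fixed nonzero $\{0,1\}$-combination of $v_1,\ldots,v_{3r}$ is itself uniformly distributed in $\{0,1\}^{4r}$: if coefficient $a_i=1$ for some $i$, then fixing all other $v_j$ and flipping any coordinate of $v_i$ is a bijection on the resulting sum. There are at most $2^{3r}-1$ such nonzero combinations. By the entropy bound $\sum_{w=0}^{\delta n}\binom{n}{w}\le 2^{n H(\delta)}$, the probability that a fixed uniform vector in $\{0,1\}^{4r}$ has Hamming weight below $\frac{1}{30}\cdot 4r$ is at most $2^{4r H(1/30) - 4r}$. Hence the expected number of nonzero codewords of weight less than $\frac{1}{30}\cdot 4r$ is at most $2^{3r}\cdot 2^{4r(H(1/30)-1)} = 2^{r(4H(1/30)-1)}$, which is $2^{-\Omega(r)}$ because $H(1/30)<\frac{1}{4}$. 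Markov's inequality then shows that $\mathrm{Span}\{v_1,\ldots,v_{3r}\}$ is a $\frac{1}{30}$-distance code with probability $1-o(1)$.

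For the second condition, recall that the dual distance of $C_2 = \mathrm{Span}\{v_{r+1},\ldots,v_{3r}\}$ is the minimum Hamming weight of a nonzero $u \in \{0,1\}^{4r}$ with $u\cdot v_i=0$ for all $i\in\{r+1,\ldots,3r\}$. For any fixed nonzero $u$, the inner product $u\cdot v_i$ is a uniform bit, independently across the $2r$ vectors, so the probability that $u$ annihilates all of them is $2^{-2r}$. The expected number of nonzero $u$ of weight less than $\frac{1}{10}\cdot 4r$ that are orthogonal to all $v_{r+1},\ldots,v_{3r}$ is then bounded by $2^{4r H(1/10)}\cdot 2^{-2r} = 2^{r(4H(1/10)-2)}$, which is $2^{-\Omega(r)}$ since $H(1/10)<\frac{1}{2}$. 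Markov again yields the $1-o(1)$ bound, and a final union bound over the two events finishes the proof.

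The main obstacle, such as it is, is simply the numerical verification of the two entropy inequalities $H(1/30)<1/4$ and $H(1/10)<1/2$. These inequalities are what force the specific constants $\frac{1}{30}$ and $\frac{1}{10}$ in the statement, since they must balance against the code rates $3r/4r=3/4$ and the parameter $2r/4r=1/2$ that controls the annihilation probability for the dual side. Both inequalities hold with room to spare (numerically, $H(1/30)\approx 0.21$ and $H(1/10)\approx 0.47$), so each bad event fails with probability exponentially small in $r$, comfortably giving the $1-o(1)$ overall guarantee.
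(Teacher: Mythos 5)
Your first-moment argument is correct and is the standard Gilbert--Varshamov-style proof for random linear codes: uniformity of nonzero $\{0,1\}$-combinations, the entropy bound $\sum_{w\le\delta n}\binom{n}{w}\le 2^{nH(\delta)}$, the numerical checks $H(1/30)\approx 0.21<1/4$ and $H(1/10)\approx 0.47<1/2$, and Markov plus a union bound. Note, however, that the paper does not prove this lemma itself; it is cited verbatim as Lemma 22 of \cite{pcpp19}, so there is no in-paper proof to compare against. Your reconstruction is the expected argument and, as a small bonus, your Markov bound on bad nonzero combinations (including those summing to zero) also yields linear independence of $v_1,\ldots,v_{3r}$ with probability $1-o(1)$, which the downstream lemma implicitly uses when it asserts the resulting families have size exactly $2^{2r}$.
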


\begin{lemma}[Direct application of Lemma \ref{lemma:cite:pcpp-code-ensembles-exist}] \label{lemma:hard-to-distinguish-string-prop-pair-exists}
    For every sufficiently large $r$, there exist two families $\mathcal J_1$ and $\mathcal J_2$ of subsets of $\{1,\ldots,4r\}$, each of them having size $2^{2r}$, such that both of them are $q$-uniform for $q=\ceil{2r/5}$, which are $\frac{1}{30}$-pairwise far from each other. Additionally, the two families contain no members with fewer than $\ceil {2r/15}$ elements.
\end{lemma}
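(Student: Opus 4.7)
The plan is to identify each subset of $\{1,\ldots,4r\}$ with its characteristic vector in $\{0,1\}^{4r}$ and build $\mathcal J_1,\mathcal J_2$ as two distinct cosets of a suitable linear code. Concretely, I will apply Lemma \ref{lemma:cite:pcpp-code-ensembles-exist} to obtain (for $r$ large enough) vectors $v_1,\ldots,v_{3r}\in\{0,1\}^{4r}$ such that the outer span $C':=\mathrm{Span}\{v_1,\ldots,v_{3r}\}$ has primal distance at least $\ceil{2r/15}$ and the inner span $C:=\mathrm{Span}\{v_{r+1},\ldots,v_{3r}\}$ has dual distance strictly greater than $q=\ceil{2r/5}$. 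The stated constants $\frac{1}{30}$ and $\frac{1}{10}$ give these bounds directly once integrality is taken into account, and in particular force $v_1,\ldots,v_{3r}$ to be linearly independent, so that $v_1,v_2\notin C$ and $v_1+C\neq v_2+C$. I will then set
\[\mathcal J_1 := v_1 + C,\qquad \mathcal J_2 := v_2 + C,\]
each of size $\abs{C}=2^{2r}$ (under the characteristic-vector identification).

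The $q$-uniformity of each $\mathcal J_i$ is a standard consequence of the dual distance of $C$ exceeding $q$: for any $J\subseteq\{1,\ldots,4r\}$ with $\abs{J}=q$, no nonzero vector of $C^\perp$ is supported inside $J$, which is equivalent to the coordinate-projection map $\pi_J$ sending the linear code $C$ surjectively (and hence uniformly, by linearity) onto $\{0,1\}^{J}$. Translating by $v_i$ preserves this uniformity, so under the identification with characteristic vectors the distribution of $J\cap I$ for a uniform $I\sim\mathcal J_i$ is uniform on $2^J$.

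For $\frac{1}{30}$-pairwise farness, any $I_1\in\mathcal J_1$ and $I_2\in\mathcal J_2$ have symmetric-difference vector $(v_1+v_2)+(w_1+w_2)$ for some $w_1,w_2\in C$; since $v_1+v_2\notin C$ by linear independence, this is a nonzero element of $C'$, whose Hamming weight is at least $\ceil{2r/15}>\frac{4r}{30}$, giving $\abs{I_1\Delta I_2}>\frac{1}{30}\cdot 4r$. Finally, each individual member of $\mathcal J_i$ is of the form $v_i+w$ with $w\in C$, which is again a nonzero element of $C'$ (as $v_i\notin C$) and so has Hamming weight at least $\ceil{2r/15}$, giving the minimum-size condition.

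I do not expect any nontrivial obstacle here: the construction, the distance calculation, and the coset-projection argument are one-line consequences of the code properties supplied by Lemma \ref{lemma:cite:pcpp-code-ensembles-exist}. The only care required is verifying that the rounded bounds (primal distance at least $\ceil{2r/15}$ and dual distance strictly greater than $\ceil{2r/5}$) do follow from the stated fractions once $r$ is large enough, which is the standard sense in which random linear code ensembles are used.
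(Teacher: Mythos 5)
The paper gives no explicit proof of this lemma (it is labeled a ``direct application'' of Lemma~\ref{lemma:cite:pcpp-code-ensembles-exist}), so there is nothing to compare against word-for-word; your coset construction $\mathcal J_1 = v_1 + C$, $\mathcal J_2 = v_2 + C$ with $C = \mathrm{Span}\{v_{r+1},\ldots,v_{3r}\}$ is exactly the construction the authors intend, and the three verifications (dual distance $\Rightarrow$ $q$-uniformity via projection surjectivity; $C'$-distance $\Rightarrow$ pairwise farness since $I_1\Delta I_2$ is a nonzero codeword of $C'$; $C'$-distance $\Rightarrow$ lower bound on member size) are all the right ones. One claim, however, is wrong as stated: the primal/dual distance conditions of Lemma~\ref{lemma:cite:pcpp-code-ensembles-exist} do \emph{not} force $v_1,\ldots,v_{3r}$ to be linearly independent. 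A dependent spanning set still produces a code, and neither a positive minimum distance of $C'$ nor a large dual distance of $C$ constrains the dimensions (Singleton only yields $\dim C \geq 2r/5 - 1$, far from $2r$). Since you need $\dim C = 2r$ to get $\lvert\mathcal J_i\rvert = 2^{2r}$, and you need $v_1, v_2, v_1+v_2 \notin C$ to make the cosets proper and distinct, linear independence has to be obtained separately — either because the source lemma in \cite{pcpp19} asserts it (as is typical when specifying a random $[4r,3r]$ code by a generator matrix), or by the elementary observation that $3r$ uniformly random vectors in $\{0,1\}^{4r}$ are independent with probability $1 - 2^{-\Omega(r)}$, which can be union-bounded with the distance events. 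Your final paragraph's remark that ``the only care required'' is checking the rounded distance bounds understates the needed bookkeeping: beyond the integrality step, $q$-uniformity genuinely requires dual distance strictly exceeding $q = \lceil 2r/5\rceil$ and pairwise farness requires a strict inequality $> 2r/15$, so one must check that the convention in \cite{pcpp19} delivers the strict versions (or perturb the constants slightly). Neither issue threatens the construction, but both are places where ``direct'' hides small arguments you should have spelled out rather than asserted.
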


\begin{lemma} \label{lemma:hard-to-distinguish-string-prop-pair-exists-paired}
    For every sufficiently large $r$, there exist two families $\mathcal I_1$ and $\mathcal I_2$ of subsets of $\{1,\ldots,8r\}$, each of them having size $2^{2r}$, such that both of them are paired $q$-uniform for $q=\ceil{2r/5}$, which are $\frac{1}{30}$-pairwise far from each other. Additionally, the two families contain only members with exactly $4r$ elements.
\end{lemma}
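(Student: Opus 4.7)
The plan is to take the two $q$-uniform families $\mathcal J_1, \mathcal J_2$ over $\{1,\ldots,4r\}$ provided by Lemma \ref{lemma:hard-to-distinguish-string-prop-pair-exists} and apply the pairing transformation of Observation \ref{obs:q-uniform-pairing} with $k = 4r$ to each, producing $\mathcal I_1, \mathcal I_2 \subseteq 2^{\{1,\ldots,8r\}}$ via the map $J \mapsto J \cup ((8r+1) - \neg_{4r} J)$.

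First I would invoke Observation \ref{obs:q-uniform-pairing} directly to conclude that each $\mathcal I_b$ is paired $q$-uniform with $q = \ceil{2r/5}$. Next I would verify cardinalities: the pairing map is injective (one recovers $J$ by intersecting the image with $\{1,\ldots,4r\}$), so $|\mathcal I_b| = |\mathcal J_b| = 2^{2r}$. The size-exactly-$4r$ condition is a one-line count: the appended part $(8r+1) - \neg_{4r} J$ has cardinality $|\neg_{4r} J| = 4r - |J|$, lies entirely in $\{4r+1,\ldots,8r\}$ and is therefore disjoint from $J$, giving every image set exactly $|J| + (4r - |J|) = 4r$ elements regardless of $|J|$.

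For pairwise distance, Lemma \ref{lemma:union-with-mirrored-negation-keeps-distance} applies verbatim to any $J_1 \in \mathcal J_1$ and $J_2 \in \mathcal J_2$: since they are $\tfrac{1}{30}$-far as subsets of $\{1,\ldots,4r\}$, their images are $\tfrac{1}{30}$-far as subsets of $\{1,\ldots,8r\}$.

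The main obstacle was already absorbed into the prior lemma (the random-code construction of Lemma \ref{lemma:cite:pcpp-code-ensembles-exist} feeding Lemma \ref{lemma:hard-to-distinguish-string-prop-pair-exists}), so the present statement reduces to mechanical bookkeeping. Notably, the ``at least $\ceil{2r/15}$ elements'' condition from Lemma \ref{lemma:hard-to-distinguish-string-prop-pair-exists} is not needed for this lemma, since the pairing transformation automatically makes all image sets have size exactly half the new universe; that auxiliary bound is presumably collected here only because it will be used elsewhere.
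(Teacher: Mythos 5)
Your proposal is correct and follows exactly the paper's argument: take the $q$-uniform families $\mathcal J_1,\mathcal J_2$ from Lemma~\ref{lemma:hard-to-distinguish-string-prop-pair-exists}, apply the pairing map of Observation~\ref{obs:q-uniform-pairing} with $k=4r$, and cite Lemma~\ref{lemma:union-with-mirrored-negation-keeps-distance} for the pairwise distance. Your side remarks (injectivity of the map preserving cardinality; the size-exactly-$4r$ count; the unused lower-size condition from the preceding lemma) are accurate and consistent with the paper.
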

\begin{proof}
    Let $\mathcal J_1$ and $\mathcal J_2$ be two $q$-uniform families of subsets of $\{1,\ldots,4r\}$ that are $\frac{1}{30}$-pairwise far, whose existence is guaranteed by Lemma \ref{lemma:hard-to-distinguish-string-prop-pair-exists}.

    Let $\mathcal I_1 = \{ J \cup ((8r+1) - (\neg_{4r} J)) : J \in \mathcal J_1\}$ and $\mathcal I_2 = \{ J \cup ((8r+1) - (\neg_{4r} J)) : J \in \mathcal J_2 \}$. By Observation \ref{obs:q-uniform-pairing}, $\mathcal I_1$ and $\mathcal I_2$ are paired $q$-uniform families. Note that they are $\frac{1}{30}$-far by Lemma \ref{lemma:union-with-mirrored-negation-keeps-distance}.

    Every $I \in \mathcal I_1 \cup \mathcal I_2$ has size exactly $4r$ since there exists some $J \in \mathcal J_1 \cup \mathcal J_2$ for which $I = J \cup ((8r+1) - (\neg_{4r} J))$ and hence $\abs{I} = \abs{J} + \abs{\neg_{4r} J} = \abs{J} + (4r - \abs{J}) = 4r$.
\end{proof}

We now define the distributions whose histograms can encode subsets of $\{1,\ldots,8r\}$ as above, and for which we can perform a reduction from the conditional testing model.

\begin{definition}[Non-empty $k$-partition]
    A $k$-tuple $\mathcal S = (S_1,\ldots,S_k)$ is a \emph{non-empty $k$-partition} if the sets $S_1,\ldots,S_k$ are non-empty and mutually disjoint.

    If $\Omega = \bigcup_{i=1}^k S_i$, then we say that $\mathcal S$ is a non-empty $k$-partition \emph{of $\Omega$}.
\end{definition}

\begin{definition}[Chunk distribution]
    Let $\mathcal S = (S_1,\ldots,S_k)$ be a non-empty $k$-partition. Let $I \subseteq \{1,\ldots,k\}$ be a non-empty set of indexes. The \emph{chunk distribution} $\mu_{\mathcal S, I}$ over $\bigcup_{i=1}^k S_i$ is defined such that for every $i \in I$, $\mu_{\mathcal S, I}(S_i) = \frac{1}{\abs{I}}$ and the restriction of $\mu_{\mathcal S, I}$ to $S_i$ is the uniform distribution over $S_i$. More precisely, for every $i \in I$ and $j \in S_i$, $\mu_{\mathcal S,I}(j) = \frac{1}{|I| \cdot |S_i|}$, and $\mu_{\mathcal S,I}(j) = 0$ for every $j \notin \bigcup_{i \in I} S_i$.
\end{definition}

\begin{definition}[Set of chunk distributions] \label{def:set-of-chunk-distributions}
    Let $\mathcal S$ be a non-empty $k$-partition. Let $\mathcal I$ be a family of non-empty subsets of $\{1,\ldots,k\}$. The \emph{set of chunk distributions} with respect to $\mathcal S$ and $\mathcal I$ is the set $\mathcal H_{\mathcal S, \mathcal I} = \{ \mu_{\mathcal S, I} : I \in \mathcal I\}$, where $\mu_{\mathcal S, I}$ is the chunk distribution corresponding to $\mathcal S$ and $I$.
\end{definition}

\begin{definition}[The histogram property $\mathcal P_{\mathcal S, \mathcal I}$]
    Let $\mathcal S$ be a non-empty $k$-partition of a subset of a domain set $\Omega$ and $\mathcal I$ be a family of non-empty subsets of $\{1,\ldots,k\}$. The {histogram property with parameters $\mathcal S$ and $\mathcal I$} is the property $\mathcal P_{\mathcal S,\mathcal I}$ of all distributions $\mu$ over $\Omega$ that are a permutation of a distribution in $\mathcal H_{\mathcal S, \mathcal I}$.
\end{definition}

\begin{observation}
    $\mathcal P_{\mathcal S, \mathcal I}$ is label-invariant.
\end{observation}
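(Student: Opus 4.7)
The plan is to unpack the definition of $\mathcal P_{\mathcal S, \mathcal I}$ directly: it is declared as the set of distributions over $\Omega$ which are a permutation of some element of $\mathcal H_{\mathcal S, \mathcal I}$, so closure under relabeling is essentially built into the definition, and the entire proof reduces to observing that the composition of two permutations is itself a permutation.

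Concretely, I would proceed as follows. First, I would fix an arbitrary $\mu \in \mathcal P_{\mathcal S, \mathcal I}$ and invoke the definition of $\mathcal P_{\mathcal S, \mathcal I}$ to obtain a chunk distribution $\nu \in \mathcal H_{\mathcal S, \mathcal I}$ and a permutation $\sigma$ of $\Omega$ such that $\mu = \sigma \nu$ (i.e., $\mu(x) = \nu(\sigma^{-1}(x))$ for every $x \in \Omega$). Second, I would fix an arbitrary permutation $\pi$ of $\Omega$ and form the relabeled distribution $\pi \mu$. Third, I would compute $\pi \mu = \pi (\sigma \nu) = (\pi \circ \sigma)\, \nu$, which exhibits $\pi \mu$ as a permutation (namely, via $\pi \circ \sigma$) of the same $\nu \in \mathcal H_{\mathcal S, \mathcal I}$. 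Hence $\pi \mu \in \mathcal P_{\mathcal S, \mathcal I}$, as required.

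There is essentially no technical obstacle here; the statement is a tautology once one reads Definition~\ref{def:set-of-chunk-distributions} together with the definition of $\mathcal P_{\mathcal S, \mathcal I}$ as the set of all permutations of members of $\mathcal H_{\mathcal S, \mathcal I}$. The only mild subtlety worth recording explicitly is that the permutation group acts on distributions, and this action is associative, which is what justifies the identification $\pi(\sigma \nu) = (\pi\circ \sigma)\nu$. The conclusion is that for every $\mu \in \mathcal P_{\mathcal S, \mathcal I}$ and every permutation $\pi$ of $\Omega$, we have $\pi \mu \in \mathcal P_{\mathcal S, \mathcal I}$, which is precisely the label-invariance of $\mathcal P_{\mathcal S, \mathcal I}$.
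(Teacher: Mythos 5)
Your argument is correct, and the paper gives this observation without any proof at all (it is immediate from the definition of $\mathcal P_{\mathcal S,\mathcal I}$ as the set of all permutations of members of $\mathcal H_{\mathcal S,\mathcal I}$). Your fill-in — compose the given permutation $\sigma$ with the relabeling $\pi$ and observe that $\pi\circ\sigma$ is again a permutation — is exactly the intended one-line reasoning.
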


\begin{definition}[$\rho$-increasing partition]
    A non-empty $k$-partition $\mathcal S = (S_1,\ldots,S_k)$ is $\rho$-increasing if for every $2 \le i \le k$, $\abs{S_i} \ge \rho \abs{S_{i-1}}$.
\end{definition}

\begin{lemma} \label{lemma:unisize-string-to-hist-distances}
    Let $\mathcal S$ be a $(1+\eps)$-increasing non-empty $k$-partition and let $I_1, I_2 \subseteq \{1,\ldots,k\}$ be two $\frac{1}{30}$-pairwise far subsets of size exactly $\frac{1}{2}k$. In this setting, $\mu_{\mathcal S, I_1}$ is $\frac{1}{120}\eps$-far from any permutation of $\mu_{\mathcal S, I_2}$.
\end{lemma}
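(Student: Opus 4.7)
My plan is to reduce the claim to a comparison of the sorted probability mass vectors. Let $v_1, v_2 \in \mathbb{R}^N$ be the mass vectors of $\mu_{\mathcal S, I_1}$ and $\mu_{\mathcal S, I_2}$ (viewed as length-$N$ vectors, with $0$ entries for elements outside the support) sorted in decreasing order. By the rearrangement inequality, $\min_{\pi} \dtv(\mu_{\mathcal S, I_1}, \pi \mu_{\mathcal S, I_2}) = \frac{1}{2}\|v_1 - v_2\|_1$, so it suffices to show $\|v_1-v_2\|_1 > \eps/60$.

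To evaluate $\|v_1-v_2\|_1$, I would use a layer-cake identity. Write $m_\ell = 2/(k|S_\ell|)$ for the common mass value of the elements of $S_\ell$ (when $\ell \in I_j$) and $N_j(\ell) = \sum_{i \in I_j,\, i \leq \ell} |S_i|$ for the combined size of the first $\ell$ blocks of $\mu_{\mathcal S, I_j}$. The counting function $n_j(t) := |\{r : v_j(r) > t\}|$ is a step function equal to $N_j(\ell)$ on $t \in (m_{\ell+1}, m_\ell]$ (with $m_{k+1}=0$); integrating $|n_1-n_2|$ over these intervals, and noting that the $\ell=0$ piece contributes $0$ since $N_j(0)=0$, yields
\[ \|v_1-v_2\|_1 = \sum_{\ell=1}^{k} (m_\ell - m_{\ell+1})\, |N_1(\ell) - N_2(\ell)|. \]
Because $|S_{\ell+1}| \geq (1+\eps)|S_\ell|$ we have $m_\ell - m_{\ell+1} = m_\ell(1 - |S_\ell|/|S_{\ell+1}|) \geq \frac{\eps}{2} m_\ell$ whenever $\eps \leq 1$ (which we may assume in context; the $\ell = k$ term is trivial since $m_{k+1}=0$). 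Substituting $m_\ell = 2/(k|S_\ell|)$ and defining $D(\ell) := N_1(\ell) - N_2(\ell) = \sum_{i \leq \ell} |S_i|\, e_i$ with $e_i := \mathbf{1}[i \in I_1] - \mathbf{1}[i \in I_2]$, this becomes
\[ \|v_1-v_2\|_1 \geq \frac{\eps}{k} \sum_{\ell=1}^{k} \frac{|D(\ell)|}{|S_\ell|}. \]

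The concluding step is a short telescoping argument. Since $D(\ell) - D(\ell-1) = |S_\ell|\, e_\ell$, whenever $e_\ell \neq 0$ (i.e., $\ell \in I_1 \Delta I_2$) the triangle inequality combined with $|S_{\ell-1}| \leq |S_\ell|$ yields $|D(\ell)|/|S_\ell| + |D(\ell-1)|/|S_{\ell-1}| \geq 1$. Summing over all $|I_1 \Delta I_2|$ such indices and noting that each term $|D(j)|/|S_j|$ can appear at most twice in this total (once as $\ell = j$, once as $\ell = j+1$), I obtain $\sum_\ell |D(\ell)|/|S_\ell| \geq |I_1 \Delta I_2|/2 > k/60$. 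Combining with the previous bound gives $\|v_1-v_2\|_1 > \eps/60$, and hence $\dtv > \eps/120$. The main obstacle is setting up the layer-cake computation correctly and handling its boundary conventions ($m_0 = \infty$ is harmless because $N_j(0) = 0$, and $m_{k+1}=0$ closes the last interval); once that reduction is made, the $(1+\eps)$-increasing hypothesis produces the $\eps$-factor routinely, and the telescoping converts the combinatorial lower bound on $|I_1 \Delta I_2|$ into the desired total variation lower bound.
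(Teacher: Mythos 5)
Your proof is correct, but it takes a genuinely different route from the paper's. The paper argues element-by-element: for each $i \in I_1 \setminus I_2$ and $x \in S_i$, it shows directly that $|\mu_{\mathcal S,I_1}(x) - \mu_{\mathcal S,I_2}(\pi(x))| \ge \frac12\eps\,\mu_{\mathcal S,I_1}(x)$ no matter where $\pi$ sends $x$ --- because if $\pi(x)$ has nonzero mass it lies in some $S_{i'}$ with $i'\ne i$, so the masses differ by a factor of $|S_i|/|S_{i'}|$, which is $\le 1-\frac12\eps$ or $\ge 1+\frac12\eps$ by the $(1+\eps)$-increasing hypothesis. Summing over $i\in I_1\setminus I_2$ gives $\dtv \ge \frac{\eps}{2k}|I_1\setminus I_2|$, the symmetric argument gives $\frac{\eps}{2k}|I_2\setminus I_1|$, and taking the max yields $\frac{\eps}{4k}|I_1\Delta I_2|>\frac{\eps}{120}$. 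You instead first reduce to sorted mass vectors (the $\ell_1$-optimal-matching fact), re-express $\|v_1-v_2\|_1$ via a layer-cake decomposition in the cumulative counts $N_j(\ell)$, pull out the $\eps$-factor from the spacing $m_\ell - m_{\ell+1}\ge\frac{\eps}{2}m_\ell$, and finish with a neat telescoping argument showing $\sum_\ell |D(\ell)|/|S_\ell|\ge |I_1\Delta I_2|/2$. You land on exactly the same bound. The paper's proof is more elementary (no rearrangement fact, no layer-cake), while yours cleanly separates the optimization over $\pi$ (dispatched once, up front) from the combinatorics, and the telescoping digests the whole symmetric difference in one go rather than taking a maximum of the two one-sided differences. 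One small caution: ``the rearrangement inequality'' usually refers to the sum-of-products form; what you need here is the $\ell_1$ statement that sorting both vectors minimizes $\|a-\pi b\|_1$, which is standard (a simple exchange argument) but worth stating precisely.
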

\begin{proof}
    Let $\mathcal S = (S_1,\ldots,S_k)$, $\Omega = \bigcup_{i=1}^k S_i$, some $j \in I_1 \setminus I_2$, $x \in A_j$, $y \in \bigcup_{i \in I_2} S_i$ and $j' \ne j$ for which $y \in A_{j'}$. Note that:
    \begin{eqnarray*}
        \mu_{\mathcal S, I_2}(y)
        = \frac{1}{\abs{I_2} \abs{S_{j'}}}
        = \frac{\abs{S_j}}{\abs{S_{j'}}} \cdot \frac{1}{\abs{I_1} \abs{S_j}}
        = \frac{\abs{S_j}}{\abs{S_{j'}}} \cdot \mu_{\mathcal S, I_1}(x)
    \end{eqnarray*}

    We have two cases with respect to the order of $j'$ and $j$.
    \begin{align*}
        & j' > j:
        &~& \frac{\abs{S_j}}{\abs{S_{j'}}} \le (1 + \eps)^{j-j'} \le (1 + \eps)^{-1} \le 1 - \frac{1}{2}\eps,
        &~& \mu_{\mathcal S, I_2}(y) \le \left(1 - \frac{1}{2}\eps\right)\mu_{\mathcal S, I_1}(x)
        \\
        & j' < j:
        &~& \frac{\abs{S_j}}{\abs{S_{j'}}} \ge (1 + \eps)^{j-j'} \ge (1 + \eps)^{+1} \ge 1 + \frac{1}{2}\eps,
        &~& \mu_{\mathcal S, I_2}(y) \ge \left(1 + \frac{1}{2}\eps\right)\mu_{\mathcal S, I_1}(x)
    \end{align*}

    In both cases, $\abs{\mu_{\mathcal S, I_1}(x) - \mu_{\mathcal S, I_2}(y)} \ge \frac{1}{2}\eps \mu_{\mathcal S, I_1}(x)$. This bound holds for every $x$ in the support of $\mu_{\mathcal S, I_1}$ and hence, for every permutation $\pi$ over $\Omega$,
    \begin{eqnarray*}
        \dtv(\mu_{\mathcal S, I_1}, \pi \mu_{\mathcal S, I_2})
        &=& \frac{1}{2} \sum_{i=1}^k \sum_{x \in S_i} \abs{\mu_{\mathcal S, I_1}(x) - \mu_{\mathcal S, I_2}(\pi(x))} \\
        &\ge& \frac{1}{2} \sum_{i \in I_1 \setminus I_2}  \sum_{x \in S_i} \abs{\mu_{\mathcal S, I_1}(x) - \mu_{\mathcal S, I_2}(\pi(x))} \\
        &\ge& \frac{1}{2} \sum_{i \in I_1 \setminus I_2} \sum_{x \in S_i} \frac{1}{2}\eps \mu_{\mathcal S, I_1}(x)
        = \frac{1}{4}\eps \sum_{i \in I_1 \setminus I_2} \abs{S_i} \cdot \frac{1}{\abs{I_1} \abs{S_i}}
        = \frac{1}{2k}\eps \cdot \abs{I_1 \setminus I_2}
    \end{eqnarray*}

    By a symmetric analysis we can obtain that $\dtv(\mu_{\mathcal S, I_2}, \pi \mu_{\mathcal S, I_1}) \ge \frac{1}{2k}\eps \cdot \abs{I_2 \setminus I_1}$.

    Since $\dtv$ is invariant under both-sides permutation ($\dtv(\pi \mu_1, \pi \mu_2) = \dtv(\mu_1, \mu_2)$) we obtain that:
    \begin{eqnarray*}
        \min_\pi \dtv(\mu_{\mathcal S, I_1}, \pi \mu_{\mathcal S, I_2})
        &=& \min_\pi \dtv(\mu_{\mathcal S, I_2}, \pi \mu_{\mathcal S, I_1}) \\
        &\ge& \frac{1}{2k}\eps \max\left\{ \abs{I_1 \setminus I_2}, \abs{I_2 \setminus I_1} \right\} \\
        &\ge& \frac{1}{2k}\eps \cdot \frac{1}{2}\abs{I_1 \Delta I_2}
        > \frac{1}{4k}\eps \cdot \frac{1}{30}k
        = \frac{1}{120}\eps
    \end{eqnarray*}
\end{proof}

Next, we show that the conditional oracle for chunk distributions can be simulated using the weighted oracle (Algorithm \ref{fig:alg:init-cond-simulator} to initialize, \ref{fig:alg:simulate-cond-by-weighted-sampling} to simulate).

\begin{algo}
    \procname{$\procnameZinitializeZcondZsimulator(k, \mathcal S, I)$}
    \label{fig:alg:init-cond-simulator}
    \alginput{$I \subseteq \{1,\ldots,k\}$, accessible only through the weighted sampling oracle}
    \begin{code}
        \algitem Let $I' \gets \emptyset$ be the (initially empty) partial knowledge about elements in $I$.
        \algitem Let $J' \gets \emptyset$ be  the (initially empty) partial knowledge about elements outside $I$.
         \algitem Return $(k, \mathcal S, I, I', J')$.
    \end{code}
\end{algo}

\begin{algo}
    \procname{$\procnameZsimulateZcond(\mathit{obj}, C)$}
    \label{fig:alg:simulate-cond-by-weighted-sampling}
    \alginput{An object $\mathit{obj}$ created by $\procnameZinitializeZcondZsimulator$}
    \alginput{A condition set $C$}
    \algcontract{Side effects}{The algorithm may alter the $I'$, $J'$ components of $\mathit{obj}$}
    \algoutput{A sample $x \sim \mu_{\mathcal S, I}$ conditioned on $C$, or $x \sim C$ uniformly if $\mu_{\mathcal S, I}(C) = 0$}
    \begin{code}
        \algitem Let $k, \mathcal S, I, I', J'$ be the components of $\mathit{obj}$ as a $5$-tuple.
        \algitem Let $S_1,\ldots,S_k$ be the components of $\mathcal S$ as a $k$-tuple.
        \begin{While}{not explicitly terminated}
            \algitem Let $\hat{C} \gets \{ 1 \le i \le k : (S_i \cap C \ne \emptyset) \wedge (i \notin J') \wedge ((k+1) - i \notin I') \}$.
            \algpushcomment{$\mu_{\mathcal S,I}(C)=0$}
            \begin{If}{$\hat{C} = \emptyset$}
                \algitem Draw $x \sim C$ uniformly.
                \algitem Return $x$.
            \end{If}
            \begin{Else}
                \algitem Let $w$ be the weight function for which:
                \begin{Codeblock*}
                    \algitem If $i \in \hat{C}$, then $w(i) = \frac{\abs{S_i \cap C}}{\abs{S_i}}$.
                    \algitem If $i \notin \hat{C}$, then $w(i) = 0$.
                \end{Codeblock*}
                \algitem Call the weight sampling oracle for $I$ with $w$ to obtain $(i, b)$.
                \algpushcomment{($i \in I$)}
                \begin{If}{$b = 1$}
                    \algitem Add $i$ to $I'$.
                    \algitem Draw $x \sim S_i \cap C$ uniformly.
                    \algitem Return $x$.
                \end{If}
                \algpushcomment{($i \notin I$)}
                \begin{Else}
                    \algitem Add $i$ to $J'$.
                \end{Else}
            \end{Else}
        \end{While}
    \end{code}
\end{algo}

\begin{lemma} \label{lemma:cond-simulator-correct}
    For every $I \subseteq \{1,\ldots,k\}$ that is a member of a paired family, the distribution of output of a sequence starting with a single call to \procnameZinitializeZcondZsimulator (Algorithm \ref{fig:alg:init-cond-simulator}) with $k$, $\mathcal S$, $I$ followed by $q$ calls to \procnameZsimulateZcond (Algorithm \ref{fig:alg:simulate-cond-by-weighted-sampling}) over the produced object with the condition sets $C_1, \ldots, C_q$ is identical to the distribution of output of a sequence that, for each $1 \le i \le q$, draws $x_i$ from $\mu_{\mathcal S, I}$ conditioned on $C_i$ with the fallback of uniformly drawing $x_i$ from $C_i$ if $\mu_{\mathcal S, I}(C_i) = 0$. This bound holds also for an adaptive choice of each $C_i$ based on $x_1,\ldots,x_{i-1}$.
\end{lemma}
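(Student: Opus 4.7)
The plan is to reduce the multi-call statement to a per-call analysis and to show that the simulator's auxiliary state never biases the output. The cornerstone invariant I would establish is that $\hat{C} \cap I = \{i \in I : S_i \cap C \neq \emptyset\}$ throughout any execution, regardless of the history. To prove it, I would first observe by construction that $I' \subseteq I$ (updates to $I'$ happen only on $b = 1$) and $J' \subseteq \neg_k I$ (updates to $J'$ happen only on $b = 0$). Combined with the paired property $(k+1) - I = \neg_k I$, these give that for any $i \in I$: $i \notin J'$ automatically (since $J' \cap I = \emptyset$), and $(k+1) - i \notin I$ by pairing, hence $(k+1) - i \notin I' \subseteq I$; thus $i \in \hat{C}$ iff $S_i \cap C \neq \emptyset$.

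Armed with this invariant, for a single call with $\mu_{\mathcal S, I}(C) > 0$, fix any $i^* \in I$ with $S_{i^*} \cap C \neq \emptyset$. The invariant keeps $i^* \in \hat{C}$ across all iterations, so each iteration accepts with probability at least $w(i^*) / \sum_{j=1}^k w(j) > 0$ and the loop terminates almost surely. Conditional on acceptance, the chosen $i$ is distributed over $\hat{C} \cap I$ proportionally to $w(i) = |S_i \cap C| / |S_i|$, after which $x$ is uniform on $S_i \cap C$; a one-line computation shows this matches $\mu_{\mathcal S, I}(\cdot \mid C)$, since both assign each $x \in S_i \cap C$ the probability $1/(|I| \cdot |S_i| \cdot \mu_{\mathcal S, I}(C))$. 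When $\mu_{\mathcal S, I}(C) = 0$, every index with $S_i \cap C \neq \emptyset$ lies in $\neg_k I$, so the weighted-sampling oracle always reports $b = 0$; $\hat{C}$ strictly shrinks each iteration and empties within $k$ steps, triggering the fallback uniform draw from $C$, which matches the target fallback.

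The multi-call statement then follows by induction on $q$: the oracle is independent across calls, the per-call analysis above is insensitive to the starting $(I', J')$, and the adaptive dependence of $C_j$ on $x_1, \ldots, x_{j-1}$ is respected on both sides. The main obstacle, and the place where the paired hypothesis is indispensable, is exactly the invariance of $\hat{C} \cap I$ under state updates: without pairing, the ``$(k+1) - i \notin I'$'' clause could exclude admissible indices of $I$ from $\hat{C}$ and bias the conditional distribution on accepted chunks. The paired hypothesis lets the simulator deduce $(k+1)-i \notin I$ from $i \in I'$ without ever ``accidentally'' ruling out a chunk that does carry probability under $\mu_{\mathcal S, I}$.
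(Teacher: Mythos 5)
Your proof is correct and takes the same route as the paper: view the inner loop as a rejection sampler whose effective proposal over chunk indices is unchanged by the history-tracking, then match the conditional law per call and conclude by independence across oracle calls. Where the paper is terse, asserting that "using the sets $I'$ and $J'$ only affects the query complexity, and not the distribution of the output, since we ignore zeroes," you supply the missing justification: the invariant $\hat{C} \cap I = \{i \in I : S_i \cap C \neq \emptyset\}$, which holds precisely because $I$ is $k$-paired so that $J' \subseteq \neg_k I$ and $(k+1) - I' \subseteq \neg_k I$ never exclude an index of $I$. Your version is thus a bit more rigorous on the one point the paper glosses over, and it correctly isolates the paired hypothesis as the place where the argument would break without it.
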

\begin{proof}
    Observe that, if $\mu_{\mathcal S, I}(C) > 0$, then using the sets $I'$ and $J'$ only affects the query complexity, and not the distribution of the output, since we ignore zeroes. Hence, Algorithm \ref{fig:alg:simulate-cond-by-weighted-sampling} is identical to a rejection sampler of the distribution over $\{1,\ldots,k\}$ defined by $w$ with respect to the event $I$.
    
    If $\mu_{\mathcal S, I}(C) = 0$, then in every iteration of the while loop in which $\hat{C} \ne \emptyset$, the call to the weighted oracle results in an output of the form $(i,0)$ for some $i \in \hat{C}$. Since the algorithm keeps history, every such $i$ is then excluded from further iterations, and hence after at most $k$ steps the set $\hat{C}$ becomes empty. When this happens, the algorithm exits the loop and uniformly draws $x \sim C$.
\end{proof}

\begin{lemma} \label{lemma:numeric-bound-for-simulating-cond-by-weighted-paired}
    Let $q \ge 5$ and $\mathcal I$ be a paired $4q$-uniform family of subsets of $\{1,\ldots,k\}$. If $I$ is drawn uniformly from $\mathcal I$, then with probability at least $\frac{9}{10}$, the simulator uses at most $4q$ weighted sampling oracle calls to simulate a sequence of $q$ conditional samples from $\mu_{\mathcal S, I}$ (according to the scheme of Lemma \ref{lemma:cond-simulator-correct}).
\end{lemma}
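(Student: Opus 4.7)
The plan is to show that within the first $4q$ weighted sampling oracle calls, each call yields $b=1$ with probability at least $1/2$, so that with good probability all $q$ conditional samples complete within the $4q$ budget. The argument combines Lemma~\ref{lemma:applying-paired-q-uniformness-to-adaptivity} with a case analysis showing that every non-fresh query made by the simulator is forced to return $b=1$, giving a clean stochastic-dominance coupling with iid Bernoulli trials.

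First, I would classify each of the first $4q$ queries as either \emph{fresh} (neither the drawn index $j_i$ nor its pair $k+1-j_i$ appeared in any earlier query) or \emph{non-fresh}. By Lemma~\ref{lemma:applying-paired-q-uniformness-to-adaptivity}, on fresh queries the bit $b_i$ is uniform in $\{0,1\}$ and independent of the past. For non-fresh queries, I would verify that the simulator's maintenance of $I'$ and $J'$ together with the paired property forces $b_i=1$: the exclusions defining $\hat{C}$ rule out the potentially bad sub-cases ($j_i\in J'$, which would force $b_i=0$, and $k+1-j_i\in I'$, which would also force $b_i=0$), leaving only the cases $j_i\in I'$ (trivially $b_i=1$) and $k+1-j_i\in J'$, in which the paired property $(k+1)-I=\neg_k I$ forces $j_i\in I$ and hence $b_i=1$.

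Combining both cases, I couple the simulator's execution with iid $X_1,\ldots,X_{4q}\sim\mathrm{Ber}(1/2)$ such that $b_i=X_i$ on fresh queries; on non-fresh queries $b_i=1\ge X_i$ trivially. Since the simulator completes a conditional sample upon the first $b=1$ in its current while loop (or earlier, if $\hat{C}$ empties first), the total number $T$ of oracle calls used across all $q$ conditional samples is at most the position $T^{**}$ of the $q$-th $1$ in $X_1,X_2,\ldots$, which is a negative-binomial random variable. Using $\E[T^{**}]=2q$ and $\Var(T^{**})=2q$ (by Lemma~\ref{lemma:geo-mgf} applied to the sum of $q$ independent $\Geo(1/2)$ variables), Chebyshev's inequality yields
\[\Pr[T>4q]\le \Pr[T^{**}>4q]\le \frac{\Var(T^{**})}{(2q)^2} = \frac{2q}{4q^2}=\frac{1}{2q}\le \frac{1}{10}\]
for $q\ge 5$, as required.

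The main obstacle is the non-fresh case analysis: a priori one might worry that the simulator could accidentally re-query some index whose result is deterministically $0$, breaking the $1/2$ lower bound. Closing this loophole requires tracing the definition of $\hat{C}$ together with the paired-set property of $I\in\mathcal I$; once this is in hand, the rest is a routine Chebyshev bound on a negative-binomial random variable.
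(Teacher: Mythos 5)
Your proposal is correct and takes essentially the same approach as the paper: both reduce to showing that each weighted-sampling oracle call terminates its current while loop with probability at least $1/2$ (non-fresh queries are forced to return $b=1$ because the definition of $\hat{C}$ together with the paired property $(k+1)-I=\neg_k I$ rules out both failure cases, and fresh queries are fair coins by Lemma~\ref{lemma:applying-paired-q-uniformness-to-adaptivity}), then bound the tail of the resulting success-counting process. The only difference is the final concentration step: you apply Chebyshev to the negative binomial $T^{**}$ and get $\Pr[T^{**}>4q]\le\frac{1}{2q}$, which is exactly tight at $q=5$, whereas the paper applies a Chernoff bound to $\Bin(4q,1/2)$ and gets $e^{-q/2}<\frac{1}{10}$ with some slack.
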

\begin{proof}
    Consider the $i$th sampling oracle call for $1 \le i\le 4q$ (inside the $j$th call of Algorithm \ref{fig:alg:simulate-cond-by-weighted-sampling} for some $1\le j \le q$). The probability to terminate is at least $\frac{1}{2}$: if we query an already-queried index (or its paired index) then we always terminate (because we take care to never query an index that is already known to be zero), and if we query a new bit, then the probability that its value is $1$ is exactly $\frac{1}{2}$, even if conditioned on past queries, due to the $4q$-uniformness and Observation \ref{obs:q-uniform-pairing-all-J}.

    The probability to make $4q$ oracle calls before the $q$th termination is bounded by (Chernoff):
    \[  \Pr\left[\Bin(4q, 1/2) < q\right]
        \le e^{-2 (2q - q)^2 / (4q)}
        = e^{-q/2} < \frac{1}{10} \qedhere \]
\end{proof}

\begin{restatable}{lemma}{lemmaZtechnicalZsizesZofZAi} \label{lemma:technical-sizes-of-Ai}
    Consider the sequence where $N_1 = 1$ and for every $i \ge 2$, $N_i = \ceil{(1 + 120\eps)N_{i-1}}$. For every $N \ge 1$, $\eps < \frac{1}{120}$ and $k \le \ln N / (240 \eps)$, $\sum_{i=1}^k N_i < \frac{\sqrt{N} \log_2 N}{\eps^2}$.
\end{restatable}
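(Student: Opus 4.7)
The plan is to bound the recursion $N_i \le (1+120\eps)N_{i-1}+1$ (which holds because the ceiling adds at most $1$) by comparing it to a clean geometric sequence, sum the resulting geometric series, and finally invoke the hypothesis on $k$ to control the exponent.

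First I would remove the ``$+1$'' by shifting: setting $M_i = N_i + \frac{1}{120\eps}$, the recursion becomes $M_i \le (1+120\eps)M_{i-1}$, since $(1+120\eps)\cdot \frac{1}{120\eps} = \frac{1}{120\eps}+1$. Combined with $M_1 = 1+\frac{1}{120\eps} \le \frac{2}{120\eps}$ (using $\eps < \frac{1}{120}$), iterating the inequality yields
\[ N_i \le M_i \le (1+120\eps)^{i-1}\cdot \frac{2}{120\eps}. \]

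Then I would sum the geometric series to obtain
\[ \sum_{i=1}^k N_i \le \frac{2}{120\eps}\cdot \frac{(1+120\eps)^k - 1}{120\eps} \le \frac{2\,(1+120\eps)^k}{(120\eps)^2} = \frac{(1+120\eps)^k}{7200\,\eps^2}. \]
To bound the exponential factor, the standard estimate $1+120\eps \le e^{120\eps}$ together with the hypothesis $k \le \ln N / (240\eps)$ gives $(1+120\eps)^k \le e^{120\eps k} \le e^{\frac{1}{2}\ln N} = \sqrt{N}$. Substituting this in gives $\sum_{i=1}^k N_i \le \frac{\sqrt{N}}{7200\,\eps^2}$, and since $\log_2 N \ge 1$ for $N \ge 2$ (and the statement is vacuous for $N = 1$, where the hypothesis forces $k=0$), we get the desired strict inequality with a large slack.

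There is no real obstacle here: the only subtlety is correctly handling the ceiling, which is done cleanly by the shift $M_i = N_i + 1/(120\eps)$ so that the perturbation caused by the ``+1'' is absorbed into a pure multiplicative recurrence. Everything else is a geometric sum and a one-line exponential estimate.
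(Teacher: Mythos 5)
Your proof is correct and takes a genuinely different, and arguably cleaner, route than the paper's. The paper passes to the partial sums $M_t=\sum_{i=1}^t N_i$, derives the recurrence $M_t\le (N_1+t-1)+(1+120\eps)M_{t-1}$, and unrolls it by induction, arriving at a bound of the form $(N_1+k)\frac{(1+120\eps)^k}{120\eps}$; the factor of $k$ created by the accumulated ``$+1$'' terms becomes the $\ln N$ in the final expression $\frac{\sqrt N\ln N}{\eps^2}$. Your shift $M_i=N_i+\frac{1}{120\eps}$ absorbs the ceiling's ``$+1$'' into a pure geometric recurrence $M_i\le(1+120\eps)M_{i-1}$ (this is the standard fixed-point trick for an affine recurrence), so each $N_i$ is individually bounded by a geometric term and the sum telescopes without any lingering factor of $k$. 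The payoff is a tighter final bound $\sqrt N/(7200\eps^2)$ with no logarithmic factor, after which the $\log_2 N\ge 1$ step is pure slack; the paper instead needs that logarithm to absorb its factor of $k$. Both are fine for the stated lemma. Your remark about $N=1$ is reasonable but slightly imprecise: if $k=0$ were allowed the inequality $0<0$ would actually fail rather than being vacuous; the paper tacitly assumes $N\ge 2$ (its final step is annotated that way), and the lemma is only ever invoked with $N$ large, so nothing breaks.
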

We prove Lemma \ref{lemma:technical-sizes-of-Ai} in Appendix \ref{apx:tldr}.

\begin{theorem}[Lower bound for label-invariant testing] \label{th:lbnd-label-invariant-testing-new}
    For every sufficiently small $\eps > 0$ and every sufficiently large $N$, there exists a label-invariant property of distributions over $\{1,\ldots,N\}$ for which every $\eps$-test must draw $\Omega(\log N / \eps)$ conditional samples.
\end{theorem}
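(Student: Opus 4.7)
The plan is to realize the abstract combinatorial hardness of Lemma~\ref{lemma:hard-to-distinguish-string-prop-pair-exists-paired} as a concrete label-invariant property in the distribution setting, by converting paired $q$-uniform families on $\{1,\ldots,8r\}$ into histogram properties via the chunk construction of Definition~\ref{def:set-of-chunk-distributions}, and then pulling the weighted-oracle lower bound (Lemma~\ref{lemma:parameterized-lbnd-paired-secret-sharing}) back to the conditional model through the simulator of Lemmas~\ref{lemma:cond-simulator-correct} and \ref{lemma:numeric-bound-for-simulating-cond-by-weighted-paired}. Concretely, I would set $r = \lfloor \ln N / (1920\,\eps) \rfloor$ and $k = 8r$, and take $\mathcal I_1, \mathcal I_2 \subseteq 2^{\{1,\ldots,8r\}}$ from Lemma~\ref{lemma:hard-to-distinguish-string-prop-pair-exists-paired}: both paired $q_W$-uniform with $q_W = \lceil 2r/5\rceil$, each member of size exactly $4r$, and the two families $\tfrac{1}{30}$-pairwise far.

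Next I would build the partition $\mathcal S = (S_1,\ldots,S_{8r})$ of a subset of $\Omega = \{1,\ldots,N\}$ with $|S_i| = N_i$, where $N_1 = 1$ and $N_i = \lceil (1+120\eps) N_{i-1}\rceil$. By Lemma~\ref{lemma:technical-sizes-of-Ai} the total size $\sum N_i < \sqrt{N}\log_2 N / \eps^2$, which is $\le N$ for sufficiently large $N$, so the partition fits into $\Omega$ (any unused elements receive mass $0$). The label-invariant property is then $\mathcal P := \mathcal P_{\mathcal S,\mathcal I_1}$ (Definition~\ref{def:set-of-chunk-distributions}). Since $\mathcal S$ is $(1+120\eps)$-increasing and all sets in $\mathcal I_1 \cup \mathcal I_2$ have the same size $k/2 = 4r$ and are $\tfrac{1}{30}$-pairwise far between the two families, Lemma~\ref{lemma:unisize-string-to-hist-distances} (applied with $120\eps$ in the role of $\eps$) guarantees that every distribution in $\mathcal P_{\mathcal S,\mathcal I_2}$ is $\tfrac{1}{120}\cdot 120\eps = \eps$-far from every permutation of every distribution in $\mathcal P$, hence $\eps$-far from $\mathcal P$.

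For the lower bound, suppose toward contradiction an $\eps$-tester for $\mathcal P$ uses $Q$ conditional samples with success probability $2/3$. If $Q > q_W / 4 = \Omega(r)$ we are already done, so assume $4Q \le q_W$, which gives the $4Q$-uniformity hypothesis required by Lemma~\ref{lemma:numeric-bound-for-simulating-cond-by-weighted-paired}. Consider the Yao-style input distribution that picks $b \in \{1,2\}$ uniformly and then a uniform $I \sim \mathcal I_b$, and translate the conditional tester into a weighted-oracle tester by running \procnameZinitializeZcondZsimulator followed by repeated calls to \procnameZsimulateZcond as in Lemma~\ref{lemma:cond-simulator-correct}, aborting and outputting an arbitrary default answer as soon as $4Q$ weighted oracle calls have been issued. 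By Lemma~\ref{lemma:numeric-bound-for-simulating-cond-by-weighted-paired}, this abort occurs with probability at most $1/10$, so the resulting weighted-oracle algorithm is an $\eps$-tester for distinguishing $\mathcal I_1$ from $\mathcal I_2$ with success probability at least $2/3 - 1/10 > 1/2$, using at most $4Q$ weighted oracle calls. Applying Lemma~\ref{lemma:parameterized-lbnd-paired-secret-sharing} to $\mathcal I_1, \mathcal I_2$ yields $4Q > q_W = \lceil 2r/5\rceil$, so $Q = \Omega(r) = \Omega(\log N / \eps)$.

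The main obstacle is bookkeeping the parameter chain: I need the multiplicative gap $(1+120\eps)$ between consecutive chunk sizes to be exactly the right amount so that the resulting total-variation distance, after being scaled down by $\tfrac{1}{120}$ in Lemma~\ref{lemma:unisize-string-to-hist-distances}, matches the tester's $\eps$ while simultaneously allowing $k = 8r$ chunks to fit inside $\{1,\ldots,N\}$ via Lemma~\ref{lemma:technical-sizes-of-Ai}; this forces $r = \Theta(\log N / \eps)$, which then has to be reconciled with both the $\lceil 2r/5 \rceil$-uniformity level available from Lemma~\ref{lemma:hard-to-distinguish-string-prop-pair-exists-paired} and the $4Q$-uniformity demand of the simulator. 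The rest is a direct Yao-style reduction, modulo checking that the simulator's possible $\text{``err''}$ fallback on probability-zero sets does not leak any information (which is already built into the definition of \procnameZsimulateZcond and into Lemma~\ref{lemma:cond-simulator-correct}).
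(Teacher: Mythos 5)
Your proposal is correct and takes essentially the same route as the paper: encode a pair of paired $q$-uniform, $\tfrac1{30}$-pairwise-far string families (Lemma~\ref{lemma:hard-to-distinguish-string-prop-pair-exists-paired}) as histogram properties via a $(1+120\eps)$-increasing chunk partition, transfer the distance via Lemma~\ref{lemma:unisize-string-to-hist-distances}, and pull back the weighted-oracle hardness (Lemma~\ref{lemma:parameterized-lbnd-paired-secret-sharing}) through the simulator (Lemmas~\ref{lemma:cond-simulator-correct} and~\ref{lemma:numeric-bound-for-simulating-cond-by-weighted-paired}). Your parameter chain ($r=\lfloor \ln N/(1920\eps)\rfloor$, $k=8r$, $q_W=\lceil 2r/5\rceil$) coincides numerically with the paper's ($q=2\lfloor\tfrac12\ln N/(4800\eps)\rfloor$, $r=\tfrac52 q$, $k=20q=8r$), just presented in the opposite order; you also make explicit the harmless case split ``if $Q>q_W/4$ we are done, else $4Q\le q_W$ and downward monotonicity of paired $q$-uniformity applies,'' which the paper leaves implicit. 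One small note: you correctly instantiate Lemma~\ref{lemma:unisize-string-to-hist-distances} with $\eps'=120\eps$ to obtain a distance strictly greater than $\tfrac1{120}\cdot 120\eps = \eps$, so that a genuine $\eps$-tester is forced to distinguish the two histogram properties; the paper's proof text reads ``$\tfrac1{120}\eps$-far'' and then considers a ``$\tfrac1{120}\eps$-testing algorithm,'' which appears to be a typo that your version resolves without affecting the stated asymptotics.
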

\begin{proof}
    If $N \le 1/\eps^5$, then we can use the trivial lower-bound $\Omega(1 / \eps^2) = \omega(\log N / \eps)$ of distinguishing between the uniform distribution over $\{1,2\}$ and the distribution that draws $1$ with probability $\frac{1}{2} + \frac{1}{2}\eps$ and $2$ with probability $\frac{1}{2} - \frac{1}{2}\eps$. In the following we assume that $N > 1/\eps^5$. For sufficiently small $\eps$, this implies that $\sqrt{N} \ln N / \eps^2 \le N$.

    Let $\Omega = \{1,\ldots,N\}$, $q = 2 \floor{ \frac{1}{2}\ln N / (4800 \cdot \eps)}$, $r = \frac{5}{2} q$, $k' = 4r = 10q$, $k = 2k' = 20q$. If $q = 0$ then the lower bound of one query is trivial. Hence, in the following we assume that $N$ is sufficiently large to have $q \ge 1$.

    Observe that $k \le \ln N / (240\eps)$. Let $\mathcal S = (S_1,\ldots,S_k)$ be the following $1 + 120\eps$-increasing non-empty $k$-partition: let $N_1 = 1$ and $N_i = \ceil{(1 + 120\eps)N_{i-1}}$ for every $2 \le i \le k$. The size of $S_i$ is $N_i$ for $1 \le i \le k-1$ and at least $N_k$ for $i=k$. Such a partition exists since Lemma \ref{lemma:technical-sizes-of-Ai} guarantees that $\sum_{i=1}^k N_i \le \frac{\sqrt{N} \ln N}{\eps^2} \le N$.
    
    By Lemma \ref{lemma:hard-to-distinguish-string-prop-pair-exists-paired}, there exists two paired $q$-uniform ($q=\frac{2}{5}r$) properties $\mathcal I_1$ and $\mathcal I_2$ of subsets of $\{1,\ldots,k\}$ ($k=8r$) that are $\frac{1}{30}$-pairwise far and that only consist of subsets of size $\frac{1}{2}k = k'$.
    
    By Lemma \ref{lemma:parameterized-lbnd-paired-secret-sharing}, every algorithm that distinguishes between $\mathcal I_1$ and $\mathcal I_2$ with success probability greater than $1/2$ must make at least $q$ calls to the weighted sampling oracle.
    
    By Lemma \ref{lemma:unisize-string-to-hist-distances}, for every $\mu_1 \in \mathcal P_{\mathcal S, \mathcal I_1}$ and $\mu_2 \in \mathcal P_{\mathcal S, \mathcal I_2}$, $\mu_2$ is $\frac{1}{120}\eps$-far from every relabeling of $\mu_1$.

    Consider a $\frac{1}{120}\eps$-testing algorithm $\mathcal A$ for $\mathcal P_{\mathcal S, \mathcal I_1}$. In particular, $\mathcal A$ distinguishes between $\mathcal P_{\mathcal S, \mathcal I_1}$ and $\mathcal P_{\mathcal S, \mathcal I_2}$ with success probability at least $2/3$.
    
    By Lemma \ref{lemma:numeric-bound-for-simulating-cond-by-weighted-paired}, we can construct an algorithm $\mathcal A'$ that distinguishes between $\mathcal I_1$ and $\mathcal I_2$ by simulating the conditional sampling calls of $\mathcal A$ using weighted sampling. With probability at least $9/10$, the number of weighted samples used by the simulation is at most four times the number of conditional samples drawn by $\mathcal A$. If we terminate the simulation after reaching this bound, it can still distinguish between $\mathcal I_1$ and $\mathcal I_2$ with probability at least $2/3 - 1/10 > 1/2$.
    
    Since $\mathcal I_1$ and $\mathcal I_2$ are indistinguishable using $q$ or fewer weighted samples with any success probability greater than $1/2$, $\mathcal A$ must draw strictly more than $q/4 = \Omega(\log N / \eps)$ conditional samples.
\end{proof}

\begin{corollary} \label{cor:lbnd-histogram}
    For every sufficiently small $\eps > 0$ and every sufficiently large $N$, every algorithm that solves that $\eps$-histogram learning task must draw $\Omega(\log N / \eps)$ conditional samples.
\end{corollary}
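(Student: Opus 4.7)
The plan is to reduce from the label-invariant testing lower bound of Theorem \ref{th:lbnd-label-invariant-testing-new}. The intuition is that $\eps$-histogram learning is at least as hard as $\eps$-testing of any label-invariant property, since one can test a label-invariant $\mathcal P$ by first learning the histogram of $\mu$ and then accepting iff the learned histogram is close to one induced by a member of $\mathcal P$. This is exactly the reasoning behind Corollary~\ref{cor:universal-tester-label-invariant}, and we use it in the opposite direction here.

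Concretely, let $\mathcal P$ be the hard label-invariant property provided by Theorem \ref{th:lbnd-label-invariant-testing-new} for a given sufficiently small $\eps>0$, so that every $\eps$-test for $\mathcal P$ requires $\Omega(\log N/\eps)$ conditional samples. Suppose $\mathcal A$ is an algorithm solving the $\eps'$-histogram learning task (for $\eps' = \eps/4$) using $q(N,\eps')$ conditional samples. Build the following $\eps$-tester for $\mathcal P$: run $\mathcal A$ on $\mu$ to produce a distribution $\tau$ with $\dhist(\mu;\tau)\le \eps'$; then accept iff there exists some $\mu'\in\mathcal P$ with $\dhist(\mu';\tau)\le \eps'$ (this last check is computational and uses no further samples).

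For completeness, if $\mu\in\mathcal P$, we may take $\mu'=\mu$, so the tester accepts. For soundness, suppose the tester accepts via some $\mu'\in\mathcal P$. Applying Lemma \ref{lemma:dtv-permutation-le-twice-dhist} to both $\dhist(\mu;\tau)\le\eps'$ and $\dhist(\mu';\tau)\le\eps'$ yields permutations $\pi,\pi'$ with $\dtv(\mu,\pi\tau)\le 2\eps'$ and $\dtv(\mu',\pi'\tau)\le 2\eps'$. Invariance of $\dtv$ under applying a common permutation to both sides and the triangle inequality give $\dtv(\mu,(\pi\circ (\pi')^{-1})\mu')\le 4\eps' = \eps$. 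Since $\mathcal P$ is label-invariant, $(\pi\circ (\pi')^{-1})\mu'\in\mathcal P$, so $\mu$ is $\eps$-close to $\mathcal P$. Contrapositively, if $\mu$ is $\eps$-far from $\mathcal P$ the tester rejects. Thus $\mathcal A$ yields an $\eps$-tester for $\mathcal P$ using $q(N,\eps')$ samples, so $q(N,\eps/4)=\Omega(\log N/\eps)$, and substituting $\eps$ for $\eps'$ (a constant-factor change) gives $q(N,\eps)=\Omega(\log N/\eps)$.

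The argument has no real obstacle: the only subtle point is invoking Lemma \ref{lemma:dtv-permutation-le-twice-dhist} twice and using label-invariance to combine the two permutations, exactly as in the proof of Corollary \ref{cor:universal-tester-label-invariant}. The constants $\eps'=\eps/4$ and the need to pass success probability through the tester are routine; since both the learner and the lower-bound tester can be taken to succeed with probability $2/3$, no amplification is needed.
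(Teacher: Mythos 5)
Your proposal is correct and follows the same reduction the paper uses: a histogram learner yields a universal label-invariant tester (the mechanism of Corollary~\ref{cor:universal-tester-label-invariant}), so the $\Omega(\log N/\eps)$ lower bound of Theorem~\ref{th:lbnd-label-invariant-testing-new} transfers to histogram learning. You have merely unpacked the two applications of Lemma~\ref{lemma:dtv-permutation-le-twice-dhist} and the constant-factor bookkeeping that the paper leaves implicit.
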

\begin{proof}
    This holds since we can $\eps$-test any label-invariant property by $\eps/4$-histogram learning the input distribution (see Corollary \ref{cor:universal-tester-label-invariant}).
\end{proof}

\newpage
\appendix

\section{Summary of paper notations}
\label{apx:notation-table}

The following table summarizes the specific notation (mostly set in Subsection \ref{sec:prelims:subsec:paper-specific-notations}) that is used throughout this paper.

\begin{center}
    \begin{tabularx}{\textwidth}{ |l|l|X|l| }
        \hline
        \textbf{Notation} & \textbf{Name}  & \textbf{Short description} & \textbf{Definition} \\
        \hline
        $L_x$ & The $x$-light set & Set of $\mu(y) \le \mu(x)$ & \ref{def:LMH-x} \\
        \hline
        $M_x$ & The $x$-medium set & Set of $\mu(x) < \mu(y) < 1.2 \mu(x)$ & \ref{def:LMH-x} \\
        \hline
        $H_x$ & The $x$-heavy set & Set of $\mu(y) \ge 1.2\mu(x)$ & \ref{def:LMH-x} \\
        \hline
        $\eta_{c,\eps}$ & The target error & $\min\left\{\paperZtargeterrZpreamble, \paperZtargeterrZevbeta, \paperZtargeterrZestsingle \right\}$ & \ref{def:target-error} \\
        \hline 
        $f_x$, $f_{x,c,e}$ & The target function & The acceptance probability of a canonical procedure (\procnameZtargetZtestHREF) that distinguishes between $L_x$ and $H_x$ & \ref{def:target-function} \\
        \hline
        $V_x$, $V_{x,c,\eps}$ & The target set & Contains all of $L_x$, a random subset of $M_x$ (using $f_x$), and disjoint from $H_x$ & \ref{def:V-x} \\
        \hline
        $s_x$, $s_{x,c,\eps}$ & The scale mass & $\E[\mu(V_x)]$ & \ref{def:s-x} \\
        \hline
        $w_x$, $w_{x,c,\eps}$ & The weight of $x$ & $\mu(x) + s_x$, & \ref{def:w-x} \\
        & & ``the CDF of $x$ and possibly a bit more'' & \\
        \hline
        $\alpha$ & The filter density & A parameter for the filter set & \ref{def:A-alpha} \\
        \hline
        $A_\alpha$ & The filter set & Every element except $x$ belongs to $A_\alpha$ with probability $\alpha$, iid & \ref{def:A-alpha} \\
        \hline
        $V_{x,\alpha}$, $V_{x,\alpha,c,\eps}$ & The filtered target set & $V_x \cap A_\alpha$ & \ref{def:V-x-alpha} \\
        \hline
        $\beta_{x,\alpha}$, $\beta_{x,\alpha,c,\eps}$ & The filtered density & $\Pr_\mu\left[\neg x \cond V_{x,\alpha} \cup \{x\}\right]$, & \ref{def:beta-x-alpha} \\
        &  & equals $\mu\left(V_{x,\alpha}\right)/\mu\left(V_{x,\alpha}\cup\{x\}\right)$ &  \\
        \hline
        $\gamma_x$ & The goal magnitude & $\mu(x) / s_x$, a good $\alpha$ is $\Theta(\gamma_x)$ & \ref{def:gamma-x} \\
        \hline
        $\kappa$ & Used in \procnameZtargetZtestZexplicitHREF & Hard-coded to \paperZvalofZkappa & Section \ref{sec:estimation-procs} \\
        \hline
        $h(\beta)$ & \multirow{2}{\widthof{The filtered target set}}{Truncated assessment function} & $h(\beta) = \min\{ \beta / (1 - \beta), T\}$, & Section \ref{sec:est-mu-x-using-alpha} \\
        & & $T = 8 \ln \eps^{-1} + 100$ & \\
        \hline
        $\Ct[X | B]$ & Contribution of $X$ over $B$ & $\E[X | B] \cdot \Pr[B]$ & \ref{def:contribution} \\
        \hline
        $\dhist(\mu;\tau)$ & Histogram divergence & Minimum $\eps$ for which there exists a permutation $\pi$ over the domain such that $\Pr\left[\mu(x) \notin (1 \pm \eps)\tau(\pi(x)) \right] \le \eps$ & \ref{def:histogram-divergence} \\
        \hline
    \end{tabularx}
\end{center}

\section{Procedural dependency chart}\label{apx:call-chart}

The following diagram describes the calling dependencies of the various procedures defined in Sections \ref{sec:our-algorithm} through \ref{sec:est-mu-x-using-alpha}. These are grouped by function. Procedures in gray are not called from any procedure outside their group, but they may themselves call an outside procedure (following the outgoing arrows) at the behest of the procedure that called them from their own group.

\includegraphics{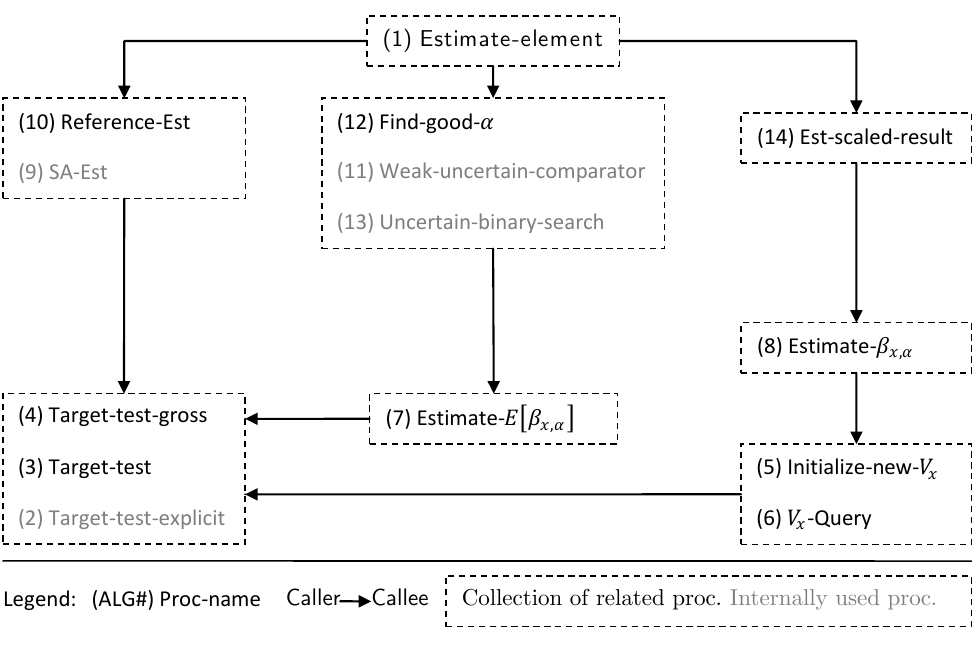}

\section{Reducing the extreme constants in the estimator at a cost}
\label{apx:target-test-galactic}

The target-test (Algorithm \ref{fig:alg:target-test-explicit}) presented in Section \ref{sec:estimation-procs} requires impractical constant factors, which carry over to the estimator. In this appendix we show that, at the cost of an additional $O(\log \frac{1}{\eps c})$-penalty applied to the $\log \log N$-factor of \procnameZmainZsingleHREF, we can significantly reduce them.

Most of the algorithmic parts require that, for every $y \in \Omega$:
\[ \abs{\Pr[y \in V_x] - \Pr[\procnameZtargetZtestHREF(x,y) = \accept]} \le \eta_{c,\eps} \]

Apart from that, Procedure $\procnameZestZexpectedZbeta$ requires that, for every $y \in \Omega$:
\[ \abs{\Pr[y \in V_x] - \Pr[\procnameZtargetZtestZgrossHREF(x,y) = \accept]} \le \frac{1}{10^8} \]

We observe that, if \procnameZtargetZtestZgross would be identical to \procnameZtargetZtest, then it would still satisfy the second constraint, since for $y \in L_x \cup H_x$ this follows from the first constraint (and from the bound $\eta_{c,\eps} \le 10^8$), and for $y \in M_x$, $\Pr[y \in V_x]$ and $\Pr[\procnameZtargetZtestHREF(x,y) = \accept]$ are identical by definition and hence their difference is zero. Hence, it suffices to implement \procnameZtargetZtest more efficiently and use it instead of \procnameZtargetZtestZgross. Note that this is where the $O(\log \frac{1}{c \eps})$-penalty comes from: the binary search makes $O(\log \log N)$ calls to \procnameZestZexpectedZbeta, and every such a call under this scheme now uses \procnameZtargetZtest, whose implementation requires $O(\log \frac{1}{c \eps})$ conditional samples rather than $O(1)$ conditional samples.

At this point we present \textsf{Target-test-lightweight}, which is a cheaper implementation of $\procnameZtargetZtest$ (but does not have a compatible ``gross'' version), and show that it satisfies the required constraints.

\begin{algo}
    \procname{$\textsf{Target-test-lightweight}(\mu, c, \eps; x, y)$}
    \label{fig:alg:target-test-classic}
    \alginput{$y \in \Omega$}
    \algoutput{\accept or \reject}
    \begin{code}
        \algpushcomment{Technical guarantee}
        \begin{If}{$y = x$}
            \algitem \reject
        \end{If}
        \algitem Let $\ell \gets \ceil{968 \ln \eta_{c,\eps}^{-1}}$.
        \algitem Draw $z_1,\ldots,z_\ell$ independent samples from $\mu$ conditioned on $\{x, y\}$.
        \algitem Let $Y = \abs{\{i : z_i = y \}}$.
        \begin{If}{$Y < \frac{23}{44} \ell$}
            \algitem \accept.
        \end{If}
        \begin{Else}
            \algitem \reject.
        \end{Else}
    \end{code}
\end{algo}

\begin{lemma}[\textsf{Target-test-lightweight}] \label{lemma:alg:target-function-classic-correct}
    Algorithm \ref{fig:alg:target-test-classic} uses $O(\log \frac{1}{\eps c})$ conditional samples, accepts with probability at least $1-\eta_{c,\eps}$ if the input belongs to $L_x$ and rejects with probability at least $1-\eta_{c,\eps}$ if the input belongs to $H_x$.
\end{lemma}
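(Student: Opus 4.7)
The plan is a direct Chernoff-bound analysis on the binomial count $Y$, analogous to Lemma \ref{lemma:target-test-explicit-base-calc} but with the randomization over the threshold $t$ replaced by a single hard-coded threshold $\frac{23}{44}$. The sample complexity is immediate: the algorithm takes exactly $\ell = \lceil 968\ln \eta_{c,\eps}^{-1}\rceil$ pair-conditional samples, and since $\eta_{c,\eps} = \min\{\frac14 c\eps, 10^{-9}, \eps^5/(10^{20}(\ln \eps^{-1})^5)\} \ge \mathrm{poly}(c,\eps)$, we have $\log\eta_{c,\eps}^{-1} = O(\log\frac{1}{\eps c})$, giving the stated bound.

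For correctness, let $p = \mu(y)/(\mu(x)+\mu(y))$, so $Y \sim \mathrm{Bin}(\ell, p)$. The threshold $\frac{23}{44}$ was chosen to lie exactly midway between the worst case $p = \frac12 = \frac{22}{44}$ that arises for $y\in L_x$ and the worst case $p = \frac{1.2}{2.2} = \frac{6}{11} = \frac{24}{44}$ that arises for $y\in H_x$. In both cases the required deviation from $\E[Y] = p\ell$ is at least $\ell/44$.

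For $y \in L_x$ we have $p \le \frac12$, and the rejection event ``$Y \ge \frac{23}{44}\ell$'' entails $Y - \E[Y] \ge \frac{\ell}{44}$; by the additive Chernoff bound,
\[
\Pr[Y \ge \tfrac{23}{44}\ell] \;\le\; e^{-2(\ell/44)^2/\ell} \;=\; e^{-\ell/968} \;\le\; e^{-\ln\eta_{c,\eps}^{-1}} \;=\; \eta_{c,\eps}.
\]
Symmetrically, for $y\in H_x$ we have $p \ge \frac{6}{11}$, and the acceptance event ``$Y < \frac{23}{44}\ell$'' entails $\E[Y] - Y > \frac{\ell}{44}$, giving the same bound $e^{-\ell/968} \le \eta_{c,\eps}$ by the lower-tail version of the additive Chernoff inequality.

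There is essentially no obstacle here: the only thing to verify is the arithmetic that $\frac{23}{44}$ is bounded away from both extremal values of $p$ by at least $\frac{1}{44}$, and that the constant $968 = 2\cdot 44^2$ makes the Chernoff exponent exactly $\ln\eta_{c,\eps}^{-1}$. The earlier part of the appendix has already argued that replacing \procnameZtargetZtestZgrossHREF by \procnameZtargetZtestHREF preserves the $10^{-8}$ bound required by \procnameZestZexpectedZbetaHREF, so this lemma indeed suffices to substitute the lightweight test into the whole pipeline at the advertised $O(\log \frac{1}{\eps c})$-per-call cost.
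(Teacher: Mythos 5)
Your proof is correct and matches the paper's proof essentially line for line: both reduce to the additive Chernoff bound for $Y\sim\Bin(\ell,p)$ with $p\le\frac12$ (resp.\ $p\ge\frac{6}{11}$), deviation $\frac{23}{44}-\frac12=\frac{6}{11}-\frac{23}{44}=\frac{1}{44}$, and exponent $2(1/44)^2\ell=\ell/968\ge\ln\eta_{c,\eps}^{-1}$, and both get the $O(\log\frac{1}{\eps c})$ sample count from $\eta_{c,\eps}$ being $\poly(c,\eps)$. The added remark about $\frac{23}{44}$ sitting midway between the two extremal values of $p$ is a nice piece of intuition but is not a different argument.
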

\begin{proof}
    Observe that $Y \sim \Bin\left(\ell, \frac{\mu(y)}{\mu(x) + \mu(y)}\right)$, for $\ell \ge 968 \ln (1/\eta_{c,\eps})$.

    If $y \in L_x$, then $\mu(y) \le \mu(x)$ and $\E[Y] \le \frac{1}{2}\ell$. The probability to reject is bounded by $e^{-2\left(\frac{23}{44} - \frac{1}{2}\right)^2 \ell} \le e^{-\frac{1}{968} \ell} \le e^{-\frac{968}{968} \ln (1/\eta_{c,\eps})} = \eta_{c,\eps}$.
    
    If $y \in H_x$, then $\mu(y) \ge 1.2\mu(x)$ and $\E[Y] \ge \frac{6}{11}\ell$. The probability to accept is bounded by $e^{-2\left(\frac{23}{44} - \frac{6}{11}\right)^2 \ell} \le e^{-\frac{1}{968} \ell} \le e^{-\frac{968}{968} \ln (1/\eta_{c,\eps})} = \eta_{c,\eps}$.

    Since $\eta_{c,\eps}$ is the minimum of three expressions that are all polynomial in $c$ and $\eps$, the complexity is $O\left(\log \eta_{c,\eps}^{-1}\right) = O\left(\log \frac{1}{\eps c}\right)$.
\end{proof}

\section{Another generic application lemma}
\label{apx:one-more-lemma}

We provide here a variant of Lemmas \ref{lemma:generic-application-single-distribution}, \ref{lemma:generic-application-k-distributions} and \ref{lemma:generic-application-k-distributions-testing}. While not used in the application examples that we provided Section \ref{sec:applications}, we believe that it has potential for future similar applications.

\begin{definition}[$\eps$-explicit sampling oracle]
\label{def:eps-explicit-sampling-oracle}
    Let $\mu$ be an input distribution over a set $\Omega$. The \emph{$\eps$-explicit sampling oracle} for $\mu$ has no additional input, and outputs a pair $(x,p)$, where $x \in \Omega$ distributes like $\mu$ and with probability $1$, $p \in (1 \pm \eps)\mu(x)$.
    
    The oracle guarantees \emph{consistency}, which means that if some element $y$ is drawn more than once, then all pairs of the form $(y,\cdot)$ have the same second entry.
\end{definition}
Note that the $\eps$-explicit sampling oracle is a restricted case of the $r$-lying $(c,\eps)$-explicit sampling oracle (Definition \ref{def:r-lying-c-eps-explicit-sampling-oracle}) when using $r=c=0$. In particular, following Observation \ref{obs:r-lying-c-eps-explicit-decomposition}, this oracle can be thought of as the result of sampling and receiving the corresponding values of an arbitrary (possibly probabilistic) $\eps$-approximation function $g_\mathrm{truth} : \Omega \to [0,1]$ along with the samples.

\begin{lemma} \label{lemma:generic-application-k-distributions-sampling-only}
    Consider an algorithm $\mathcal A$ whose input is a $k$-tuple $\vec{\mu} = (\mu_1,\ldots,\mu_k)$ of distributions over $\Omega_1,\ldots,\Omega_k$ (respectively), and its output is an element of a discrete set $R$. Assume that $\mathcal A$ makes at most $q$ calls to the $\eps$-explicit sampling oracle. Let $N = \max\left\{ \abs{\Omega_1}, \ldots, \abs{\Omega_k}\right\}$.

    Assume that for every input $\vec{\mu}$ there exists a set $R_{\vec{\mu}} \subseteq R$ for which $\Pr\left[\mathcal A(\vec{\mu}) \in R_{\vec{\mu}}\right] > \frac{2}{3}$ whenever the algorithm is supplied with $\eps$-explicit sampling oracles for the input distributions.

    In this setting, there exists an algorithm $\mathcal A'$ in the fully conditional model whose sample complexity is $O(q \log q \log \log N + \frac{q}{\eps^4} \poly(\log q, \log \eps^{-1}))$, such that for every input $\vec{\mu}$, $\Pr\left[\mathcal A'(\vec{\mu}) \in R_{\vec{\mu}}\right] > \frac{13}{24}$.
\end{lemma}

\begin{proof}
    We run $\mathcal A$ and simulate the outcome of the $\eps$-sampling oracle. In each call to the $\eps$-sampling oracle for $\mu_i$, we unconditionally draw $x_i \sim \mu_i$ and call \procnameZmainZsingleHREF with parameters $(\mu_i,1/24q,\eps)$ on $x_i$ (Theorem \ref{th:estimation-task}). We amplify the success probability to $1 - \frac{1}{24q}$ using the median of $\ceil{30 \ln (12q)}$ such calls (Observation \ref{obs:median-amplification}\ref{median-amplification:log-to-2c}). Each time we estimate the probability mass of an element, we record it in a ``history''. If the same element is sampled again later, we use the history record rather than calling the \procnameZmainZsingleHREF procedure again. This guarantees the consistency of the oracle (required by Definition \ref{def:eps-explicit-sampling-oracle}).

    The probability to draw an element $x_i$ for which $\CDF_{\mu_i}(x_i) < 1/24q$ is clearly bounded by $1/24q$. Therefore, by the union bound, the probability to draw such a rare element within $q$ samples is bounded by $1/24$.
    
    The probability to have a wrong estimation for any $x_i$, assuming that $\CDF_{\mu_i}(x_i) \ge 1/24q$ for all $1 \le i \le q$, is bounded by $q \cdot \frac{1}{24q} = \frac{1}{24}$. Hence, the probability to correctly simulate the $\eps$-explicit sampling oracle is at least $11/12$. This is by the union bound over two bad events: a $1/24$ bound for the event of drawing a hard-to-estimate element, and another $1/24$ bound for the event of an incorrect estimation for an estimable element. If the simulation is correct, then the output of the simulated $\mathcal A$ belongs to $R_{\vec{\mu}}$ with probability at least $2/3$. Overall, the probability of the simulation to output an element in $R_{\vec{\mu}}$ is at least $2/3 - 1/12 = 7/12$.

    By Corollary \ref{cor:expected-complexity-of::th:estimation-task} (using $c=\frac{1}{24q}$), the expected complexity of a single estimation of $x$ is $O(\log \log N) + O\left(\frac{\poly(\log q, \log \eps^{-1})}{\eps^4}\right)$. We repeat this $O(\log q)$ times for amplification of $q$ oracle calls. Therefore, the expected sample complexity is at most $O(q \log q \log \log N + \frac{q}{\eps^4} \poly(\log q, \log \eps^{-1}))$.

    By Markov's inequality, with probability at least $23/24$, the actual sample complexity is at most $24$ times the expected complexity, which is asymptotically equivalent. Overall, with probability at least $7/12-1/24 = 13/24$, the algorithm terminates after $O(q \log q \log \log N + \frac{q}{\eps^4} \poly(\log q, \log \eps^{-1}))$ samples and outputs an element in $R_{\vec{\mu}}$.
\end{proof}

\section{Technical analysis of the filtered target set}
\label{apx:tech-SHEET}

\subsection{Concentration inequalities for the filtered target set $V_{x,\alpha}$}

We prove here some Chernoff-like inequalities for the mass of $V_{x,\alpha}$, derived (unsurprisingly) by first proving a bound on the expectation of its Moment Generating Function.

\begin{lemma}[Moment Generating Function of $\mu(V_{x,\alpha})$] \label{apxlemma:V-x-alpha-MGF}
    For every $r \le 1$ (possibly negative) and $0 < \alpha \le 1$, $\E\left[e^{\frac{r}{1.2 \mu(x)} \left(V_{x,\alpha} - \E[\mu(V_{x,\alpha})]\right)}\right] \le e^{\frac{5r^2}{8 \mu(x)} \E[\mu(V_{x,\alpha})]}$.
\end{lemma}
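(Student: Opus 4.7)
The plan is to decompose $\mu(V_{x,\alpha})$ into a sum of independent, bounded contributions and apply the standard MGF-based Chernoff argument. First I would write $\mu(V_{x,\alpha}) = \sum_{y \in \Omega\setminus\{x\}} \mu(y) Z_y$, where $Z_y$ are mutually independent Bernoulli indicators of $y \in V_{x,\alpha}$ with parameters $p_y = \alpha$ for $y \in L_x$, $p_y = \alpha f_x(y)$ for $y \in M_x$, and $p_y = 0$ for $y \in H_x$ (using independence of $V_x$ and $A_\alpha$ and the defining properties of both). The key structural observation is that $p_y > 0$ forces $y \in L_x \cup M_x$, which means $\mu(y) \le 1.2\mu(x)$.

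Next I would compute, setting $t = r/(1.2\mu(x))$ and using independence,
\[
\E\left[e^{t(\mu(V_{x,\alpha}) - \E[\mu(V_{x,\alpha})])}\right]
= \prod_{y} \E\left[e^{t\mu(y)(Z_y - p_y)}\right].
\]
Each factor satisfies the elementary bound $\E[e^{s(Z_y - p_y)}] \le \exp(p_y(e^s - 1 - s))$ obtained from $1 + x \le e^x$. Applying this with $s = t\mu(y)$ and observing that $|s| = |r|\mu(y)/(1.2\mu(x)) \le 1$ whenever $p_y > 0$, I can invoke the clean inequality $e^u - 1 - u \le \tfrac{3}{4}u^2$ valid for all $|u| \le 1$ (the $u \ge 0$ case follows from the series identity $(e^u - 1 - u)/u^2 = \sum_{k\ge 0} u^k/(k+2)!$ being increasing with value $e - 2 < 3/4$ at $u=1$; the $u \le 0$ case follows from $e^u - 1 - u \le u^2/2$ by grouping successive terms of the Taylor expansion).

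Plugging this in yields
\[
\sum_{y} p_y(e^{t\mu(y)} - 1 - t\mu(y)) \le \tfrac{3}{4} t^2 \sum_y p_y \mu(y)^2
\le \tfrac{3}{4} t^2 \cdot 1.2\mu(x) \sum_y p_y\mu(y)
= \frac{0.9\, r^2}{1.2\mu(x)} \E[\mu(V_{x,\alpha})],
\]
where I used $\mu(y) \le 1.2\mu(x)$ on the support of $p_y$ and $\sum_y p_y \mu(y) = \E[\mu(V_{x,\alpha})]$. Simplifying $0.9/1.2 = 3/4$ and verifying $3/4 \le 5/8 \cdot (1.2) $... actually the cleaner check is $\frac{3/4}{1.2} = \frac{5}{8}$, giving exactly the claimed bound $\E[\mu(V_{x,\alpha})] \cdot \frac{5r^2}{8\mu(x)}$ in the exponent.

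The only point where one has to be careful is the constant factor balancing: the factor $1.2$ from $\mu(y) \le 1.2\mu(x)$ appears twice (once in the scaling of $t$ and once when bounding $\sum p_y \mu(y)^2$), and these combine with the constant from the pointwise inequality $e^u - 1 - u \le \tfrac{3}{4}u^2$ to exactly yield the target $5/8$. This is the only spot where a suboptimal choice would lose the stated constant, so I would double-check the inequality $e^u - 1 - u \le \tfrac{3}{4}u^2$ at $u=1$ explicitly rather than relying on a looser textbook bound such as $\le \tfrac{1}{2}u^2 e^{|u|}$.
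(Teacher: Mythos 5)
Your proof is correct and follows essentially the same route as the paper's: factor the MGF over the independent per-element Bernoulli contributions, pass through $1+x\le e^x$, and apply the pointwise bound $e^u-1-u\le\tfrac{3}{4}u^2$, with the two occurrences of the $1.2$ factor combining with the $3/4$ to land exactly on $5/8$. One small precision note: you state the pointwise inequality for $|u|\le 1$ and claim $|s|\le 1$, which would require $|r|\le 1$, whereas the lemma only assumes $r\le 1$ and allows $r$ to be arbitrarily negative; this is harmless because your own derivation of the $u\le 0$ branch (via $e^u-1-u\le\tfrac{1}{2}u^2$, which holds for all $u\le 0$) does not actually use $u\ge -1$, so the composite inequality $e^u-1-u\le\tfrac{3}{4}u^2$ is valid on the one-sided domain $u\le 1$, which is exactly what the paper invokes and all that is needed.
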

\begin{proof}
    For every $y \in M_x$, let $X_y$ be a random variable that gets $\mu(y)$ with probability $f_x(y)\alpha$ and $0$ otherwise, and let $X = \sum_{y \in M_x} X_y$. Clearly, $\E[X] = \alpha \sum_{y \in M_x} f_x(y)\mu(y) = \E[\mu(V_{x,\alpha})] - \mu(L_x)$.

    Recall that $e^x \le 1 + x + \frac{3}{4}x^2$ for every $x \le 1$, and let $\lambda \le \frac{1}{1.2 \mu(x)}$ (possibly negative).
    \begin{eqnarray*}
        \E\left[e^{\lambda(X_y - \E[X_y])}\right]
        &=& e^{-\lambda \alpha f_x(y) \mu(y)} \cdot (\alpha f_x(y) e^{\lambda \mu(y)} + (1 - \alpha f_x(y))) \\
        &=& e^{-\lambda \alpha f_x(y)\mu(y)} \cdot (\alpha f_x(y) (e^{\lambda \mu(y)} - 1) + 1) \\
        \text{[Since $\lambda \mu(y) \le 1$]} &\le& e^{-\lambda \alpha f_x(y)\mu(y)} \cdot \left(\alpha f_x(y) (1 + (\lambda\mu(y)) + \frac{3}{4}(\lambda\mu(y))^2 - 1) + 1\right) \\
        &=& e^{-\lambda \alpha f_x(y)\mu(y)} \cdot \left(\alpha f_x(y) ((\lambda\mu(y)) + \frac{3}{4}(\lambda\mu(y))^2) + 1\right)
    \end{eqnarray*}
    
    We use the well-known bound $t+1 \le e^t$ to obtain:
    \begin{eqnarray*}
        \E\left[e^{\lambda(X_y - \E[X_y])}\right] &\le& e^{-\lambda \alpha f_x(y)\mu(y)} \cdot e^{\alpha f_x(y) ((\lambda\mu(y)) + \frac{3}{4}(\lambda\mu(y))^2)} \\
        &=& e^{\frac{3}{4}\alpha f_x(y)(\lambda\mu(y))^2} \\
        &=& e^{\frac{3}{4} \lambda^2 \mu(y) \cdot (\alpha f_x(y) \mu(y))}
        \le e^{\frac{3}{4} \cdot 1.2 \lambda^2 \cdot \mu(x) \cdot \E[X_y]}
        = e^{\frac{9}{10} \lambda^2 \cdot \mu(x) \cdot \E[X_y]}
    \end{eqnarray*}
    Since the $X_y$s are independent, this implies that $e^{\lambda (X-\E[X])} \le e^{\frac{9}{10} \lambda^2 \mu(x) \cdot \E[X]}$. We choose $\lambda = \frac{r}{1.2 \cdot \mu(x)}$ to obtain the desired bound.
\end{proof}

\begin{lemma} \label{apxlemma:V-alpha-chernoff-small-delta-up}
    For every $0 \le \delta \le 1$, $\Pr[\mu(V_{x,\alpha}) \ge (1 + \delta) \E[\mu(V_{x,\alpha})]] \le e^{-\frac{1}{4\mu(x)}\delta^2\E[\mu(V_{x,\alpha})]}$.
\end{lemma}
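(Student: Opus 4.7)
The plan is the standard Chernoff--Cram\'er argument, using the Moment Generating Function bound already established in Lemma \ref{apxlemma:V-x-alpha-MGF} as the main input, and optimizing the free parameter $r$ to produce the constant $1/4$ in the exponent.

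First, for a parameter $0 < r \le 1$ to be chosen shortly, set $\lambda = \frac{r}{1.2\mu(x)} > 0$. Applying Markov's inequality to the non-negative random variable $e^{\lambda(\mu(V_{x,\alpha}) - \E[\mu(V_{x,\alpha})])}$ gives
\[
\Pr\!\left[\mu(V_{x,\alpha}) \ge (1+\delta)\E[\mu(V_{x,\alpha})]\right]
\le e^{-\lambda \delta \E[\mu(V_{x,\alpha})]} \cdot \E\!\left[e^{\lambda(\mu(V_{x,\alpha}) - \E[\mu(V_{x,\alpha})])}\right].
\]
Plugging in the MGF bound from Lemma \ref{apxlemma:V-x-alpha-MGF} and simplifying yields
\[
\Pr\!\left[\mu(V_{x,\alpha}) \ge (1+\delta)\E[\mu(V_{x,\alpha})]\right]
\le \exp\!\left(\frac{\E[\mu(V_{x,\alpha})]}{\mu(x)}\left(\frac{5r^2}{8}-\frac{r\delta}{1.2}\right)\right).
\]

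The second step is to choose $r$ to make the coefficient of $\E[\mu(V_{x,\alpha})]/\mu(x)$ as negative as possible. The quadratic $\frac{5r^2}{8}-\frac{r\delta}{1.2}$ is minimized at $r = \frac{2\delta}{3}$, and since $\delta \le 1$ this choice satisfies $r \le 1$, which is the regime in which Lemma \ref{apxlemma:V-x-alpha-MGF} applies. Substituting this value of $r$ produces an exponent of $-\frac{5}{18}\cdot\frac{\delta^2\E[\mu(V_{x,\alpha})]}{\mu(x)}$, and since $\frac{5}{18} > \frac{1}{4}$, the claimed bound with the constant $\frac{1}{4}$ follows a fortiori.

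The only obstacle is bookkeeping: verifying that the chosen $r$ stays inside the range $r\le 1$ required by the MGF lemma, and checking that the numerical minimization yields a constant at least $\frac14$. Both checks are routine once the parameter $\lambda = r/(1.2\mu(x))$ is introduced, so the bulk of the work is packaged into the prior Lemma \ref{apxlemma:V-x-alpha-MGF}.
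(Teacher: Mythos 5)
Your proof is correct and follows essentially the same Chernoff--Markov argument as the paper, driven by Lemma \ref{apxlemma:V-x-alpha-MGF}. The only (immaterial) difference is the choice of parameter: you optimize the quadratic and take $r = \tfrac{2}{3}\delta$ (yielding the sharper constant $\tfrac{5}{18}$), while the paper simply picks $r = \tfrac{3}{4}\delta$ (yielding $\tfrac{35}{128}$); both exceed $\tfrac{1}{4}$, so either works.
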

\begin{proof}
    We use Lemma \ref{apxlemma:V-x-alpha-MGF} using $r=\frac{3}{4} \delta$ to obtain $\E\left[e^{\frac{5\delta}{8 \mu(x)} \left(V_{x,\alpha} - \E[\mu(V_{x,\alpha})]\right)}\right] \le e^{\frac{45 \delta^2}{128 \mu(x)} \E[\mu(V_{x,\alpha})]}$.
    
    We now use Chernoff-Markov bound:
    \begin{eqnarray*}
        \Pr\left[\mu(V_{x,\alpha}) \ge (1 + \delta)\E[\mu(V_{x,\alpha})]\right]
        &=& \Pr\left[\mu(V_{x,\alpha}) - \E[\mu(V_{x,\alpha})] \ge \delta \E[\mu(V_{x,\alpha})]\right] \\
        &=& \Pr\left[e^{\frac{5\delta}{8 \mu(x)}\left(\mu(V_{x,\alpha}) - \E[\mu(V_{x,\alpha})]\right)} \ge e^{\frac{5\delta^2}{8 \mu(x)}\E[\mu(V_{x,\alpha})]}\right] \\
        &\le& e^{-\frac{5\delta^2}{8 \mu(x)}\E[\mu(V_{x,\alpha})]}\E\left[e^{\frac{5\delta}{8 \mu(x)}\left(\mu(V_{x,\alpha}) - \mu(\E[V_{x,\alpha})]\right)}\right] \\
        &\le& e^{-\frac{5\delta^2}{8 \mu(x)}\E[\mu(V_{x,\alpha})]}e^{\frac{45\delta^2}{128\mu(x)}\E[\mu(V_{x,\alpha})]} \\
        &=& e^{-\frac{35}{128\mu(x)}\delta^2\E[\mu(V_{x,\alpha})]}
        \le e^{-\frac{\delta^2}{4\mu(x)}\E[\mu(V_{x,\alpha})]}
    \end{eqnarray*}
\end{proof}

\begin{lemma} \label{apxlemma:V-alpha-chernoff-small-delta-down}
    For every $0 \le \delta \le 1$, $\Pr[\mu(V_{x,\alpha}) \le (1 - \delta) \E[\mu(V_{x,\alpha})]] \le e^{-\frac{1}{4\mu(x)}\delta^2\E[\mu(V_{x,\alpha})]}$.
\end{lemma}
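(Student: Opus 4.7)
The plan is to mirror the proof of the upper-tail version (Lemma A.3), exploiting the fact that Lemma A.1 already gives the Moment Generating Function bound for negative values of $r$ as well. Specifically, I would apply Lemma A.1 with $r = -\frac{3}{4}\delta$, which is valid since $-\frac{3}{4}\delta \le 1$, to obtain
\[
  \E\!\left[e^{-\frac{5\delta}{8\mu(x)}\left(\mu(V_{x,\alpha}) - \E[\mu(V_{x,\alpha})]\right)}\right] \le e^{\frac{45\delta^2}{128\mu(x)}\E[\mu(V_{x,\alpha})]}.
\]

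Then I would carry out the standard Chernoff--Markov calculation in the lower-tail direction: rewrite the event $\mu(V_{x,\alpha}) \le (1-\delta)\E[\mu(V_{x,\alpha})]$ as $-(\mu(V_{x,\alpha}) - \E[\mu(V_{x,\alpha})]) \ge \delta\E[\mu(V_{x,\alpha})]$, exponentiate by the positive constant $\frac{5\delta}{8\mu(x)}$, and apply Markov's inequality together with the MGF bound above. The exponents combine to yield $e^{-\frac{35}{128\mu(x)}\delta^2 \E[\mu(V_{x,\alpha})]}$, which is bounded by $e^{-\frac{1}{4\mu(x)}\delta^2 \E[\mu(V_{x,\alpha})]}$ since $35/128 > 1/4$.

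There is no real obstacle here: the work was already done in Lemma A.1, which was carefully stated for $r \le 1$ of either sign precisely so that both tail bounds fall out with the same numerical constants. The only thing to verify is the sign bookkeeping when switching from $r > 0$ to $r < 0$, which preserves the quadratic term $r^2$ inside the MGF exponent and merely flips the linear term, exactly as in the symmetric Chernoff derivation.
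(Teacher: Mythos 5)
Your proposal matches the paper's proof essentially verbatim: apply Lemma A.1 with $r = -\tfrac{3}{4}\delta$ to get the MGF bound, exponentiate the lower-tail event, use Markov, and observe that $35/128 > 1/4$. This is exactly the argument the paper gives.
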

\begin{proof}
    We use Lemma \ref{apxlemma:V-x-alpha-MGF} using $r=-\frac{3}{4}\delta$ to obtain $\E\left[e^{-\frac{5\delta}{8\mu(x)} \left(V_{x,\alpha} - \E[\mu(V_{x,\alpha})]\right)}\right] \le e^{\frac{45\delta^2}{128\mu(x)} \E[X_y]}$.
    
    We now use Chernoff-Markov bound:
    \begin{eqnarray*}
        \Pr\left[\mu(V_{x,\alpha}) \le (1 -\delta)\E[\mu(V_{x,\alpha})]\right]
        &=& \Pr\left[\mu(V_{x,\alpha}) - \E[\mu(V_{x,\alpha})] \le -\delta\E[\mu(V_{x,\alpha})]\right] \\
        &=& \Pr\left[e^{-\frac{5\delta}{8 \mu(x)}\left(\mu(V_{x,\alpha}) - \E[\mu(V_{x,\alpha})]\right)} \ge e^{\frac{5\delta^2}{8 \mu(x)}\E[\mu(V_{x,\alpha})]}\right] \\
        &\le& e^{-\frac{5\delta^2}{8 \mu(x)}\E[\mu(V_{x,\alpha})]}\E\left[e^{-\frac{5\delta}{8 \mu(x)}\left(\mu(V_{x,\alpha}) - \mu(\E[\mu(V_{x,\alpha})]\right)}\right] \\
        &\le& e^{-\frac{5\delta^2}{8 \mu(x)}\E[\mu(V_{x,\alpha})]}e^{\frac{45\delta^2}{128\mu(x)}\E[\mu(V_{x,\alpha})]} \\
        &=& e^{-\frac{35}{128\mu(x)}\delta^2\E[\mu(V_{x,\alpha})]}
        \le e^{-\frac{\delta^2}{4\mu(x)}\E[\mu(V_{x,\alpha})]}
    \end{eqnarray*}
\end{proof}

\begin{lemma} \label{apxlemma:V-alpha-chernoff-large-delta}
    For every $\delta > 0$, $\Pr[\mu(V_{x,\alpha}) \ge (2 + \delta) \E[\mu(V_{x,\alpha})]] \le e^{-\frac{1}{2\mu(x)}\delta\E[\mu(V_{x,\alpha})]}$.
\end{lemma}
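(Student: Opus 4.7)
The plan is to apply the Moment Generating Function bound from Lemma \ref{apxlemma:V-x-alpha-MGF} with the boundary choice $r=1$ (which is the largest allowed value and thus best suited for the large-deviation regime $\mu(V_{x,\alpha}) \ge (2+\delta)\E[\mu(V_{x,\alpha})]$), and then invoke the Chernoff--Markov trick. Concretely, writing $E = \E[\mu(V_{x,\alpha})]$ for brevity, Lemma \ref{apxlemma:V-x-alpha-MGF} with $r=1$ gives
\[
\E\!\left[e^{\frac{1}{1.2\mu(x)}(\mu(V_{x,\alpha}) - E)}\right] \le e^{\frac{5}{8\mu(x)}E}.
\]

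Next, I would rewrite the target event as $e^{\frac{1}{1.2\mu(x)}(\mu(V_{x,\alpha}) - E)} \ge e^{\frac{1+\delta}{1.2\mu(x)}E}$ and apply Markov's inequality, yielding
\[
\Pr[\mu(V_{x,\alpha}) \ge (2+\delta)E] \le \exp\!\left(\frac{E}{\mu(x)}\left(\frac{5}{8} - \frac{1+\delta}{1.2}\right)\right).
\]

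The final step is a purely arithmetic check: one verifies that $\frac{1+\delta}{1.2} - \frac{5}{8} = \frac{5}{24} + \frac{5\delta}{6} - \frac{\delta}{2} \cdot 0 \ldots$ — more directly, that $\frac{1+\delta}{1.2} - \frac{5}{8} \ge \frac{\delta}{2}$, equivalently $\frac{5}{24} + \frac{\delta}{3} \ge 0$, which plainly holds for every $\delta \ge 0$. Substituting this back gives the stated bound $e^{-\frac{\delta}{2\mu(x)}E}$.

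There is no real obstacle here; the only mild care is in picking $r=1$ rather than optimizing $r$ over $(0,1]$ (an optimization that would give a tighter constant but is unnecessary for this statement), and in making sure the slack $\frac{5}{24}$ in the arithmetic check is nonnegative so that the inequality holds uniformly in $\delta > 0$ (including small $\delta$, where the standard quadratic Chernoff bounds of Lemmas \ref{apxlemma:V-alpha-chernoff-small-delta-up}--\ref{apxlemma:V-alpha-chernoff-small-delta-down} would in fact be stronger). Thus this lemma is really the ``heavy tail'' companion of the earlier two, covering the regime where $\delta$ may be unbounded.
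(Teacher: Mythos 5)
Your proposal is correct and takes essentially the same route as the paper: invoke Lemma \ref{apxlemma:V-x-alpha-MGF} at $r=1$, apply the Chernoff--Markov exponentiation, and verify the arithmetic that $\frac{5}{8} - \frac{1+\delta}{1.2} \le -\frac{\delta}{2}$ (the paper breaks this check into the two steps $\le -\frac{\delta}{1.2}\cdot\frac{E}{\mu(x)} \le -\frac{\delta}{2}\cdot\frac{E}{\mu(x)}$, which is the same inequality). The only blemish is the garbled fragment ``$\frac{5\delta}{6} - \frac{\delta}{2}\cdot 0\ldots$'' mid-sentence, but you then restate the check correctly as $\frac{5}{24} + \frac{\delta}{3} \ge 0$, so the argument goes through.
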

\begin{proof}
    We use Lemma \ref{apxlemma:V-x-alpha-MGF} using $r=1$ to obtain $\E\left[e^{\frac{1}{1.2\mu(x)} \left(\mu(V_{x,\alpha}) - \E[\mu(V_{x,\alpha})]\right)}\right] \le e^{\frac{5}{8\mu(x)} \E[\mu(V_{x,\alpha})]}$.

    We now use Chernoff-Markov bound:
    \begin{eqnarray*}
        \Pr\left[\mu(V_{x,\alpha}) \ge (2 + \delta)\E[\mu(V_{x,\alpha})]\right]
        &=& \Pr\left[\mu(V_{x,\alpha}) - \E[\mu(V_{x,\alpha})] \ge (1 + \delta)\E[\mu(V_{x,\alpha})]\right] \\
        &=& \Pr\left[e^{\frac{1}{1.2\mu(x)}\left(\mu(V_{x,\alpha}) - \E[\mu(V_{x,\alpha})]\right)} \ge e^{\frac{1}{1.2\mu(x)}(1 + \delta)\E[\mu(V_{x,\alpha})]}\right] \\
        &\le& e^{-\frac{1}{1.2\mu(x)}(1 + \delta)\E[\mu(V_{x,\alpha})]}\E\left[e^{\frac{1}{1.2\mu(x)}\left(\mu(V_{x,\alpha}) - \E[\mu(V_{x,\alpha})]\right)}\right] \\
        &\le& e^{-\frac{1}{1.2\mu(x)}(1 + \delta)\E[\mu(V_{x,\alpha})]}e^{\frac{5}{8\mu(x)}\E[\mu(V_{x,\alpha})]} \\
        &\le& e^{-\frac{\delta}{1.2\mu(x)}\E[\mu(V_{x,\alpha})]}
        \le e^{-\frac{\delta}{2\mu(x)}\E[\mu(V_{x,\alpha})]}
    \end{eqnarray*}
\end{proof}

\subsection{Expectation inequalities}

\begin{lemma} \label{apxlemma:one-over-conditional-V-x-by-A-x}
    For every $0 < \alpha \le 1$, $\E\left[\frac{1}{\mu(x) + \mu(V_{x,\alpha})}\right] \le \frac{1}{\mu(x)} \cdot \left(e^{-\frac{1}{16}a} + \frac{1}{1 + \frac{1}{2}a}\right)$, where $a = \alpha/\gamma_x$.
\end{lemma}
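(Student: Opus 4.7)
The plan is to split the expectation according to whether $\mu(V_{x,\alpha})$ is close to its expected value or not, using a one-sided Chernoff bound on the lower tail.

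First, I would record the key identity $\E[\mu(V_{x,\alpha})] = \alpha s_x = a\mu(x)$, which follows directly from the definitions of $s_x$ and $\gamma_x$ (noting that $a = \alpha/\gamma_x = \alpha s_x/\mu(x)$). Next I would define the event $E = \{\mu(V_{x,\alpha}) \ge \tfrac{1}{2}\E[\mu(V_{x,\alpha})]\} = \{\mu(V_{x,\alpha}) \ge \tfrac{1}{2}a\mu(x)\}$, and apply Lemma \ref{apxlemma:V-alpha-chernoff-small-delta-down} with $\delta = 1/2$ to obtain
\[ \Pr[\neg E] \le e^{-\frac{1}{4\mu(x)}\cdot (1/2)^2 \cdot a\mu(x)} = e^{-\frac{1}{16}a}. \]

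Then I would bound the integrand case by case. On $E$ the denominator is at least $\mu(x)(1 + \tfrac{1}{2}a)$, so $\frac{1}{\mu(x)+\mu(V_{x,\alpha})} \le \frac{1}{\mu(x)(1 + \frac{1}{2}a)}$; on $\neg E$ we use the trivial bound $\frac{1}{\mu(x)+\mu(V_{x,\alpha})} \le \frac{1}{\mu(x)}$. Splitting the expectation accordingly and using $\Pr[E] \le 1$ along with the tail bound gives
\[ \E\left[\frac{1}{\mu(x)+\mu(V_{x,\alpha})}\right] \le \frac{1}{\mu(x)}\left(\frac{1}{1 + \frac{1}{2}a} + e^{-\frac{1}{16}a}\right), \]
which is exactly the claim.

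There is no real obstacle: the argument is a one-line split of the expectation plus a direct invocation of the lower-tail concentration lemma already proved (Lemma \ref{apxlemma:V-alpha-chernoff-small-delta-down}). The only thing to double-check is the match between the constants ($\delta = 1/2$ giving $\delta^2/4 = 1/16$) and the identification $\E[\mu(V_{x,\alpha})]/\mu(x) = a$ used inside the exponent.
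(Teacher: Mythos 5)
Your proposal is correct and follows exactly the paper's argument: split the expectation on the event $\{\mu(V_{x,\alpha}) \ge \tfrac12\E[\mu(V_{x,\alpha})]\}$, apply Lemma \ref{apxlemma:V-alpha-chernoff-small-delta-down} with $\delta = 1/2$ to bound the bad-event probability by $e^{-a/16}$, and bound the integrand trivially on the bad event and by $1/(\mu(x)(1+\tfrac12 a))$ on the good event. The identification $\E[\mu(V_{x,\alpha})] = a\mu(x)$ and the constant tracking are all correct.
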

\begin{proof}
    \begin{eqnarray*}
        && \E\left[\frac{1}{\mu(x) + \mu(V_x \cap A_\alpha)}\right] \\
        &\le& \Pr\left[\mu(V_{x,\alpha}) < \frac{1}{2}\E[\mu(V_{x,\alpha})]\right] \cdot \frac{1}{\mu(x)} + \Pr\left[\mu(V_{x,\alpha}) \ge \frac{1}{2}\E[\mu(V_{x,\alpha})]\right] \cdot \frac{1}{\mu(x) + \frac{1}{2}\E[\mu(V_{x,\alpha})]} \\
        (*) &\le& e^{-\frac{1}{16\mu(x)}\E[\mu(V_{x,\alpha})]} \cdot \frac{1}{\mu(x)} + 1 \cdot \frac{1}{\mu(x) + \frac{1}{2}\E[\mu(V_{x,\alpha})]} \\
        (**) &=& \frac{e^{-\frac{1}{16}a}}{\mu(x)} + \frac{1}{\mu(x) + \frac{1}{2}a\mu(x)}
        = \frac{1}{\mu(x)} \cdot \left(e^{-\frac{1}{16}a} + \frac{1}{1 + \frac{1}{2}a}\right)
    \end{eqnarray*}
    $(*)$: By Lemma \ref{apxlemma:V-alpha-chernoff-small-delta-down} (Chernoff $\Pr[\mu(V_{x,\alpha}) < (1-\delta)\E[\mu(V_{x,\alpha})]]$ with $\delta=\frac{1}{2}$),\\$(**)$: Since $\E[\mu(V_{x,\alpha})] = a\mu(x)$.
\end{proof}

At this point we recall and prove Lemma \ref{lemma:est-beta-expected-time}.
\lemmaZestZbetaZexpectedZtime*
\begin{proof}
    Recall that $w_x = \mu(x) + s_x$.
    
    Case I: $\mu(x) \ge \frac{1}{2}w_x$:
    \begin{eqnarray*}
        \E\left[\frac{\mu(A_\alpha \cup \{x\})}{\mu\left((V_x \cap A_\alpha) \cup \{x\}\right)}\right]
        \le \E\left[\frac{\mu(A_\alpha \cup \{x\})}{\mu(x)}\right]
        \le \E\left[\frac{1}{\mu(x)}\right] = \frac{1}{\mu(x)} \le \frac{2}{w_x}
    \end{eqnarray*}

    Case II: $\mu(x) < \frac{1}{2} w_x$ and hence $s_x \ge \frac{1}{2}w_x$. Let $a = \alpha / \gamma_x$ (hence $\E[\mu(V_{x,\alpha})] = a\mu(x)$).
    \begin{eqnarray*}
        \E\left[\frac{\mu(A_\alpha \cup \{x\})}{\mu((V_x \cap A_\alpha) \cup \{x\})}\right]
        &=& \E\left[\frac{\mu((V_x \cap A_\alpha) \cup \{x\}) + \mu(A_\alpha \setminus V_x)}{\mu((V_x \cap A_\alpha) \cup \{x\})}\right] \\
        &=& 1 + \E\left[\frac{\mu(A_\alpha \setminus V_x)}{\mu((V_x \cap A_\alpha) \cup \{x\})}\right] \\
        &=& 1 + \sum_U \Pr[V_x = U] \E\left[\frac{\mu(A_\alpha \setminus U)}{\mu((U \cap A_\alpha) \cup \{x\})}\right]
    \end{eqnarray*}
    
    Since $U$ is hard-coded, the random set $A_\alpha \setminus U$, which contains every element in $\Omega \setminus (U\cup \{x\})$ with probability $\alpha$ independently, is independent of the random set $A_\alpha \cap U$, which contains every element in $U$ with probability $\alpha$ independently. Hence,
    \begin{eqnarray*}
        \E\left[\frac{\mu(A_\alpha \cup \{x\})}{\mu((V_x \cap A_\alpha) \cup \{x\})}\right]
        &=& 1 + \sum_U \Pr[V_x = U] \E[\mu(A_\alpha \setminus U)] \E\left[\frac{1}{\mu((U \cap A_\alpha) \cup \{x\})}\right] \\
        &\le& 1 + \sum_U \Pr[V_x = U] \cdot \alpha \E\left[\frac{1}{\mu((U \cap A_\alpha) \cup \{x\})}\right] \\
        &=& 1 + \alpha \E\left[\frac{1}{\left(\mu(V_x) \cap A_\alpha\right) \cup \{x\})}\right] \\
        &=& 1 + \alpha \E\left[\frac{1}{\mu(x) + \mu(V_{x,\alpha})}\right]
    \end{eqnarray*}
    
    We can use Lemma \ref{apxlemma:one-over-conditional-V-x-by-A-x} to obtain that:
    \begin{eqnarray*}
        \E\left[\frac{\mu(A_\alpha \cup \{x\})}{\mu((V_x \cap A_\alpha) \cup \{x\})}\right] &\le& 1 + \alpha \cdot \frac{e^{-\frac{1}{16}a} + \frac{1}{1+\frac{1}{2}a}}{\mu(x)} \\
        \text{[$\alpha = a\mu(x)/s_x$]} &=& 1 + \frac{a\mu(x)}{s_x} \cdot \frac{e^{-\frac{1}{16}a} + \frac{1}{1+\frac{1}{2}a}}{\mu(x)} \\
        &=& 1 + \frac{1}{s_x} \cdot a \left(e^{-\frac{1}{16}a} + \frac{1}{1+\frac{1}{2}a}\right) \\
        (*) &\le& 1 + \frac{16e^{-1} + 2}{s_x}
        \le 1 + \frac{8}{s_x}
        \le \frac{9}{s_x}
        \le \frac{20}{w_x}
    \end{eqnarray*}
    $(*)$: since $ae^{-a/16} = 16(a/16)e^{-(a/16)} \le 16 (\sup_{t \ge 0} te^{-t}) = 16e^{-1}$ and $\frac{a}{1+a/2} \le 2$ for $a > 0$.
\end{proof}

\begin{lemma} \label{apxlemma:lbound-beta}
    For every $0 < \alpha \le 1$, $\E[\beta_{x,\alpha}] \ge \left(1 - e^{-a/9}\right)\left(1 - \frac{3}{3 + a} \right)$, where $a = \alpha/\gamma_x$.
\end{lemma}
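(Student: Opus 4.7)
The plan is to exploit concentration of $\mu(V_{x,\alpha})$ around its mean $\E[\mu(V_{x,\alpha})]=a\mu(x)$, via Lemma \ref{apxlemma:V-alpha-chernoff-small-delta-down}, together with the fact that on an event where $\mu(V_{x,\alpha})$ is not too much smaller than its expectation, $\beta_{x,\alpha}=\mu(V_{x,\alpha})/(\mu(x)+\mu(V_{x,\alpha}))$ is bounded away from zero by an explicit increasing function of $\mu(V_{x,\alpha})/\mu(x)$.

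Concretely, I would set the threshold at $\frac{1}{3}\E[\mu(V_{x,\alpha})]$ and define the event $E=\{\mu(V_{x,\alpha})\ge \frac{1}{3}a\mu(x)\}$. Applying Lemma \ref{apxlemma:V-alpha-chernoff-small-delta-down} with $\delta=2/3$ gives
\[
\Pr[\neg E]\le e^{-\frac{(2/3)^2}{4\mu(x)}\cdot a\mu(x)}=e^{-a/9},
\]
so $\Pr[E]\ge 1-e^{-a/9}$. On $E$, since the map $t\mapsto t/(\mu(x)+t)$ is non-decreasing, $\beta_{x,\alpha}\ge \frac{(a/3)\mu(x)}{\mu(x)+(a/3)\mu(x)}=\frac{a}{3+a}=1-\frac{3}{3+a}$.

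To finish, I would use $\beta_{x,\alpha}\ge 0$ deterministically to write
\[
\E[\beta_{x,\alpha}]\ge \Pr[E]\cdot\left(1-\frac{3}{3+a}\right)\ge \left(1-e^{-a/9}\right)\left(1-\frac{3}{3+a}\right),
\]
which matches the claim exactly.

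There is no real obstacle here; the only subtle choice is the threshold constant $1/3$. Smaller thresholds would weaken the second factor while slightly improving the Chernoff exponent, and larger thresholds do the reverse; the value $1/3$ is precisely what reproduces the clean form $(1-e^{-a/9})(1-3/(3+a))$ needed downstream. All other steps are one-line applications of monotonicity of $t\mapsto t/(\mu(x)+t)$ and the already-established large-deviation bound.
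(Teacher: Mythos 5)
Your proof is correct and takes essentially the same route as the paper: threshold at $\frac{1}{3}\E[\mu(V_{x,\alpha})]$, apply the lower-tail Chernoff bound, and use monotonicity of $t\mapsto t/(\mu(x)+t)$ followed by $\beta_{x,\alpha}\ge 0$ off the good event. In fact you correctly invoke Lemma \ref{apxlemma:V-alpha-chernoff-small-delta-down} here, whereas the paper's proof cites Lemma \ref{apxlemma:V-alpha-chernoff-small-delta-up} at the corresponding step, which appears to be a citation typo since it is the lower tail that is needed.
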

\begin{proof}
    \begin{eqnarray*}
        \E\left[\frac{\mu(V_{x,\alpha})}{\mu(x) + \mu(V_{x,\alpha})}\right]
        &\ge& \Pr\left[\mu(V_{x,\alpha}) \ge \left(1 - \frac{2}{3}\right) \E[\mu(V_{x,\alpha})]\right] \cdot \frac{\left(1-\frac{2}{3}\right)\E[\mu(V_{x,\alpha})]}{\mu(x) + \left(1-\frac{2}{3}\right)\E[\mu(V_{x,\alpha})]} \\
        \text{[Lemma \ref{apxlemma:V-alpha-chernoff-small-delta-up}]} &\ge& \left(1 - e^{-\frac{(2/3)^2}{4\mu(x)}\E[\mu(V_{x,\alpha})]}\right) \cdot \frac{\frac{1}{3}\E[\mu(V_{x,\alpha})]}{\mu(x) + \frac{1}{3}\E[\mu(V_{x,\alpha})]} \\
        &=& \left(1 - e^{-\frac{1}{9\mu(x)} \cdot a\mu(x)}\right) \cdot \frac{\frac{1}{3} \cdot a\mu(x)}{\mu(x) + \frac{1}{3}a\mu(x)} \\
        &=& \left(1 - e^{-\frac{1}{9}a}\right) \cdot \frac{a}{3 + a}
        = \left(1 - e^{-\frac{1}{9}a}\right) \cdot \left(1 - \frac{3}{3 + a}\right)
    \end{eqnarray*}
\end{proof}

\begin{lemma} \label{apxlemma:ubound-beta}
    For every $0 < \alpha \le 1$, $\E[\beta_{x,\alpha}] \le \frac{2\sqrt{a^2 + a}}{1 + a + \sqrt{a^2 + a}}$, where $a = \alpha/\gamma_x$.
\end{lemma}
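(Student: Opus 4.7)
The plan is to reduce the claim to a pointwise inequality in the random variable $Z = \mu(V_{x,\alpha})/\mu(x)$, so that $\beta_{x,\alpha} = Z/(1+Z)$ and, by Definition \ref{def:gamma-x}, $\E[Z] = \alpha\E[\mu(V_x)]/\mu(x) = \alpha s_x/\mu(x) = a$. As a preliminary algebraic simplification I would rewrite the target bound in the equivalent form $\E[\beta_{x,\alpha}] \le 2\sqrt{a^2+a} - 2a$, obtained by rationalizing $\frac{2\sqrt{a^2+a}}{1+a+\sqrt{a^2+a}}$ using $(1+a)^2 - (a^2+a) = 1+a$.

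The main technical step will be to establish the deterministic inequality
\[ \frac{Z}{1+Z} + 2Z \;\le\; 2\sqrt{Z(Z+1)} \qquad (Z \ge 0). \]
Both sides are nonnegative, so squaring is legitimate; after clearing the $(1+Z)^2$ denominator this reduces to $Z(3+2Z)^2 \le 4(1+Z)^3$, which in turn follows from the polynomial identity $4(1+Z)^3 - Z(3+2Z)^2 = 3Z+4$, manifestly nonnegative on $Z \ge 0$.

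Taking expectations of the pointwise inequality and applying Cauchy--Schwarz with the split $\sqrt{Z(Z+1)} = \sqrt{Z}\cdot\sqrt{Z+1}$ then yields
\[ \E[\beta_{x,\alpha}] + 2a \;\le\; 2\E\!\left[\sqrt{Z}\cdot\sqrt{Z+1}\right] \;\le\; 2\sqrt{\E[Z]\cdot\E[Z+1]} \;=\; 2\sqrt{a(a+1)}, \]
and rearranging using the algebraic identity from the first paragraph closes the proof.

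The main obstacle is identifying the ``right'' pointwise majorant of $Z/(1+Z)$ of the form $c_1\sqrt{Z(Z+1)} + c_2 Z$, so that a single application of Cauchy--Schwarz produces exactly the $\sqrt{a(a+1)}$-term needed to match the stated bound; once this majorant is written down, everything else is routine. Observe that no concentration information about $\mu(V_{x,\alpha})$ is consumed---only the first moment $\E[Z] = a$ enters---which is consistent with this bound being (intentionally) looser than the naive Jensen bound $a/(1+a)$ but of an algebraic form that presumably meshes better with subsequent use.
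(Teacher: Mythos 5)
Your proof is correct, and it takes a genuinely different route from the paper's. The paper splits on the event $\{\mu(V_{x,\alpha}) \le k_a a \mu(x)\}$ with the optimally tuned threshold $k_a = 1 + \sqrt{1 + a^{-1}}$, bounds the conditional value by monotonicity of $t \mapsto t/(1+t)$ on the low side and by $1$ on the high side, and controls the high side with Markov's inequality; the algebra then collapses exactly to the stated bound. You instead establish the pointwise inequality
\[
\frac{Z}{1+Z} + 2Z \;\le\; 2\sqrt{Z(Z+1)}, \qquad Z\ge 0,
\]
(the squaring and cancellation leading to $4(1+Z)^3 - Z(3+2Z)^2 = 3Z+4 \ge 0$ checks out, after the implicit division by $Z$ for $Z>0$; the $Z=0$ case is trivial equality) and then apply Cauchy--Schwarz to $\E\!\bigl[\sqrt{Z}\cdot\sqrt{Z+1}\bigr]$ to get $\E[\beta_{x,\alpha}] + 2a \le 2\sqrt{a(a+1)}$, which matches the rationalized form $2\sqrt{a^2+a}-2a$ of the target. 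Both proofs use exactly the same information about $Z$---the first moment $\E[Z]=a$ and nothing more, since Markov also needs only the first moment---so neither is more general. What your version buys is that Cauchy--Schwarz ``optimizes the threshold for free,'' replacing the paper's clever-but-opaque choice of $k_a$ with a structural argument; the cost is having to guess the right majorant $c_1\sqrt{Z(Z+1)} + c_2 Z$, which is its own small act of clairvoyance. The two proofs are of comparable length and transparency; yours is arguably cleaner to verify because each step is a named tool.
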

\begin{proof}
    Recall that $\E[\mu(V_{x,\alpha})] = a\mu(x)$. Let $k_a = 1 + \sqrt{1 + a^{-1}}$.
    \begin{eqnarray*}
        \E[\beta_{x,\alpha}]
        &=& \E\left[\frac{\mu(V_{x,\alpha})}{\mu(x) + \mu(V_{x,\alpha})}\right] \\
        &\le& \Pr[\mu(V_{x,\alpha}) \le k_a a \mu(x)] \cdot \frac{k_a a \mu(x)}{\mu(x) + k_a a \mu(x)} + \Pr[\mu(V_{x,\alpha}) > k_a a \mu(x)] \cdot 1
        \\
        &=& (1 - \Pr[\mu(V_{x,\alpha}) > k_a a\mu(x)]) \cdot \frac{k_a a \mu(x)}{\mu(x) + k_a a \mu(x)} + \Pr[\mu(V_{x,\alpha}) > k_a a \mu(x)] \cdot 1 \\
        &=& \frac{ak_a}{1 + ak_a} + \Pr[\mu(V_{x,\alpha}) > k_a a \mu(x)] \cdot \left(1 - \frac{ak_a}{1 + ak_a}\right)
    \end{eqnarray*}

    By Markov's inequality, $\Pr[\mu(V_{x,\alpha}) > k_a a \mu(x)] < \frac{1}{k_a}$, and hence:
    \begin{eqnarray*}
        \E[\beta_{x,\alpha}] &\le& \frac{ak_a}{1 + ak_a} + \frac{1}{k_a} \cdot \left(1 - \frac{ak_a}{1 + ak_a}\right) \\
        &=& \frac{1}{1+ak_a}\left(ak_a + \frac{1}{k_a}\right) \\
        &=& \frac{1}{1+a(1 + \sqrt{1 + a^{-1}})}\left(a(1 + \sqrt{1 + a^{-1}}) + \frac{1}{1 + \sqrt{1 + a^{-1}}}\right) \\
        &=& \frac{1}{1+a(1 + \sqrt{1 + a^{-1}})}\left(a(1 + \sqrt{1 + a^{-1}}) - a\left(1 - \sqrt{1 + a^{-1}}\right)\right) \\
        &=& \frac{2\sqrt{a^2 + a}}{1 + a + \sqrt{a^2 + a}} \\
    \end{eqnarray*}
\end{proof}

\begin{lemma} \label{apxlemma:beta-continuous-wrt-alpha}
    The function $\alpha \to \E[\beta_\alpha]$ is continuous in $0 < \alpha < 1$.
\end{lemma}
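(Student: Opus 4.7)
The plan is to reduce the claim directly to Observation \ref{obs:set-exp-value-continuity}, which already establishes continuity of $\alpha \mapsto \E[f(A_\alpha)]$ for any bounded $f : 2^\Omega \to [a,b]$ whenever each element's membership probability is continuous in $\alpha$. The only setup needed is to recognize $V_{x,\alpha}$ as exactly such a randomly-sampled set, and to exhibit the appropriate bounded $f$.

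First, I would unpack the membership probabilities of $V_{x,\alpha} = V_x \cap A_\alpha$. Since $V_x$ and $A_\alpha$ are independent and both are drawn by including each element independently (recall Definition \ref{def:V-x} and Definition \ref{def:A-alpha}), so is their intersection. Explicitly, for every $y \in \Omega$,
\[
    \Pr\bigl[y \in V_{x,\alpha}\bigr] \;=\; \alpha \cdot \Pr[y \in V_x] \;=\;
    \begin{cases}
        \alpha & y \in L_x, \\
        \alpha \, f_x(y) & y \in M_x, \\
        0 & y \in H_x \cup \{x\}.
    \end{cases}
\]
In every case this probability is linear (hence continuous) in $\alpha$.

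Next, I would write $\beta_{x,\alpha} = f(V_{x,\alpha})$ where
\[
    f(U) \;=\; \frac{\mu(U)}{\mu(x) + \mu(U)},
\]
noting that $f$ takes values in $[0,1]$ and hence is bounded. Then $\E[\beta_{x,\alpha}] = \E[f(V_{x,\alpha})]$, and Observation \ref{obs:set-exp-value-continuity} applies verbatim with $p_{y,\alpha}$ as above, yielding continuity of $\alpha \mapsto \E[\beta_{x,\alpha}]$ on $(0,1)$.

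There is no genuine obstacle here; the only thing to verify carefully is that $\mu(x) + \mu(U) > 0$ in the denominator of $f$, which holds whenever $\mu(x) > 0$ (the only case in which $\beta_{x,\alpha}$ is defined at all), making $f$ a well-defined bounded function on all of $2^\Omega$. The proof is therefore a single invocation of the generic observation after identifying the right $f$ and the right membership probabilities.
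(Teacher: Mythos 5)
Your proof is correct and takes exactly the same route as the paper, which simply invokes Observation \ref{obs:set-exp-value-continuity} after noting that $\beta_{x,\alpha}$ is bounded in $[0,1]$; you have merely spelled out the membership probabilities $p_{y,\alpha}=\alpha\Pr[y\in V_x]$ and the choice of $f$ more explicitly than the paper does.
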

\begin{proof}
    Implied from Observation \ref{obs:set-exp-value-continuity} since $\beta_\alpha$ is bounded between $0$ and $1$ with probability $1$.
\end{proof}

We prove now Lemma \ref{lemma:good-alphas-new}, which we recall here.
\lemmaZgoodZalphasZnew*

\begin{proof}
    We apply Lemma \ref{apxlemma:lbound-beta} and Lemma \ref{apxlemma:ubound-beta} to obtain the following bounds:
    \begin{itemize}
        \item $\E[\beta_{x,2 \cdot \gamma_x}] \le \frac{2\sqrt{2^2 + 2}}{1 + 2 + \sqrt{2^2 + 2}} < 0.9$, and by Observation \ref{obs:beta-monotone-wrt-alpha}, $\E[\beta_{x,\alpha}] < 0.9$ for all $\alpha \le 2\gamma_x$ as well.
        \item $\E[\beta_{x,41\gamma_x}] \ge \left(1 - e^{-41/9}\right)\left(1 - \frac{3}{3 + 41}\right) > 0.92$, and by Observation \ref{obs:beta-monotone-wrt-alpha}, $\E[\beta_{x,\alpha}] > 0.92$ for all $\alpha \ge 41\gamma_x$ as well.
    \end{itemize}

    Also, we use these lemmas to obtain:
    \begin{itemize}
        \item $\E[\beta_{x,2.3 \cdot \gamma_x}] \le \frac{2\sqrt{(2.3)^2 + 2.3}}{1 + 2.3 + \sqrt{(2.3)^2 + 2.3}} < 0.91$.
        \item $\E[\beta_{x,38 \cdot \gamma_x}] \ge \left(1 - e^{-38/9}\right)\left(1 - \frac{3}{3 + 38}\right) > 0.91$.
    \end{itemize}
    Since the mapping $\alpha \to \E[\beta_{x,\alpha}]$ is continuous (Lemma \ref{apxlemma:beta-continuous-wrt-alpha}), the Intermediate Value Theorem guarantees the existence of $2.3\gamma_x < \alpha_x < 38\gamma_x$ for which $\E\left[\beta_{x,\alpha}\right] = 0.91$.
\end{proof}

\subsection{Technical analysis of the assessment function $h(\beta_{x,\alpha})$}

In this appendix we prove the technical lemmas of Section \ref{sec:est-mu-x-using-alpha}. Recall that we use $h(\beta) \!=\! \min\left\{\frac{\beta}{1 - \beta}, T \right\}$ for $T = 8 \ln \eps^{-1} + 100$.

\begin{restatable}{lemma}{lemmaZtinyZtailZmuZVZxZalphaZnew} \label{lemma:tiny-tail-mu-V-x-alpha-new}
    Recall that $T = 8\ln \eps^{-1} + 100$ and let $a=\alpha/\gamma_x$. If $1 \le a \le 50$ and $\eps < \frac{1}{10}$, then $\Ct\left[\mu(V_{x,\alpha}) \cond \mu(V_{x,\alpha}) > T\mu(x) \right] \le \frac{1}{10} \eps \cdot a\mu(x)$.
\end{restatable}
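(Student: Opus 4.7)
The key idea is to decompose the tail contribution $\Ct[\mu(V_{x,\alpha}) \mid \mu(V_{x,\alpha}) > T\mu(x)]$ using the layer-cake representation and then apply the large-deviation bound of Lemma~\ref{apxlemma:V-alpha-chernoff-large-delta}. Writing $\E[\mu(V_{x,\alpha})] = \alpha s_x = a\mu(x)$, I would express the contribution as
\[
\Ct = T\mu(x)\cdot\Pr[\mu(V_{x,\alpha}) > T\mu(x)] + \int_{T\mu(x)}^\infty \Pr[\mu(V_{x,\alpha}) > t]\,dt.
\]
Substituting $t = (2+\delta)a\mu(x)$ (valid since $T/a \ge T/50 \ge 2$ once $\eps < 1/10$, which gives $T \ge 100$), the substitution gives a lower integration limit $\delta_0 = T/a - 2 \ge 0$.

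Next, I would plug in Lemma~\ref{apxlemma:V-alpha-chernoff-large-delta}: $\Pr[\mu(V_{x,\alpha}) > (2+\delta)a\mu(x)] \le e^{-\delta a/2}$. The integral evaluates to
\[
a\mu(x)\int_{\delta_0}^\infty e^{-\delta a/2}\,d\delta = 2\mu(x)\,e^{-\delta_0 a/2} = 2\mu(x)\,e^{a - T/2},
\]
and the boundary term is bounded by $T\mu(x)\,e^{a-T/2}$. Summing,
\[
\Ct \le (T+2)\mu(x)\,e^{a-T/2}.
\]

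Finally, I would verify the numerical bound using $a\le 50$ and $T = 8\ln\eps^{-1} + 100$: then $e^{a-T/2} \le e^{50}\cdot e^{-50}\eps^4 = \eps^4$, so $\Ct \le (8\ln\eps^{-1} + 102)\mu(x)\eps^4$. Since $a\ge 1$, it suffices to check $(8\ln\eps^{-1} + 102)\eps^3 \le 1/10$, which holds for all $\eps < 1/10$ because $\eps^3 \ln\eps^{-1}$ is monotone small in this range. This gives $\Ct \le \tfrac{1}{10}\eps \cdot a\mu(x)$ as required.

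The only mildly delicate step is ensuring that the substitution $\delta_0 = T/a - 2$ is nonnegative so the Chernoff tail bound applies over the whole integration range; this is where the specific constant $100$ in the definition of $T$ (together with $a\le 50$) is used. The rest is a routine Chernoff-plus-layer-cake calculation followed by a one-line numerical verification exploiting that $T$ has been chosen so that $e^{-T/2}$ beats $\eps^4$ even after absorbing $e^{a}$ for $a$ as large as $50$.
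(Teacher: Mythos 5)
Your layer-cake-plus-Chernoff route is the continuous analogue of the paper's dyadic decomposition, and the intermediate bound $\Ct \le (T+2)\mu(x)\,e^{a-T/2}$ is correct (the paper instead substitutes the smaller threshold $\hat T = 8\ln\eps^{-1}+2a \le T$ before decomposing, which makes the exponent $-2^{t+2}\ln\eps^{-1}$ come out without any residual $e^{a}$ factor; your version keeps the $e^{a}$ factor and relies on the $e^{-50}$ coming from the ``$+100$'' in $T$ to cancel it).

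However, your final numerical verification is wrong. Having written $e^{a-T/2}\le e^{a-50}\eps^{4}$ and wanting $(T+2)\mu(x)e^{a-50}\eps^{4}\le \tfrac1{10}\eps a\mu(x)$, you bound $e^{a-50}\le 1$ (using $a\le 50$) and separately $\tfrac1{a}\le 1$ (using $a\ge 1$), arriving at the claimed sufficient condition $(8\ln\eps^{-1}+102)\eps^{3}\le \tfrac1{10}$. That inequality is false: the left side is increasing in $\eps$ (its derivative is $\eps^{2}(298+24\ln\eps^{-1})>0$), and at $\eps=0.1$ it equals roughly $120.4\cdot 10^{-3}\approx 0.12>0.1$, so it also fails for $\eps$ slightly below $1/10$. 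The hand-wave ``because $\eps^{3}\ln\eps^{-1}$ is monotone small'' does not save this. The slip is that you discarded the $a$-dependence twice in opposite directions: the quantity you actually need to control is $(T+2)\eps^{3}\cdot \tfrac{e^{a-50}}{a}\le \tfrac1{10}$, and $\tfrac{e^{a-50}}{a}$ is increasing on $[1,50]$ with maximum $\tfrac1{50}$ at $a=50$. Using that gives the much weaker sufficient condition $(T+2)\eps^{3}\le 5$, which holds comfortably for $\eps<1/10$. With that one-line repair your proof goes through; as written, the last step is incorrect.
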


\begin{proof}
    Let $\hat{T} = 8\ln \eps^{-1} + 2a \le 8\ln \eps^{-1} + 100 = T$. 
    \begin{eqnarray*}
        \Pr\left[\mu(V_{x,\alpha}) > 2^t T\mu(x)\right]
        &\le& \Pr\left[\mu(V_{x,\alpha}) > 2^t \hat{T}\mu(x)\right] \\
        &=& \Pr\left[\mu(V_{x,\alpha}) > (2^t \hat{T}/a) \E[\mu(V_{x,\alpha})]\right] \\
        &=& \Pr\left[\mu(V_{x,\alpha}) > 2^t \left(\frac{8}{a} \ln \eps^{-1} + 2\right) \E[\mu(V_{x,\alpha})]\right] \\
        &\le& \Pr\left[\mu(V_{x,\alpha}) > \left(2 + 2^t \left(\frac{8}{a} \ln \eps^{-1}\right)\right) 
        \E[\mu(V_{x,\alpha})]\right] \\
        \text{[Lemma \ref{apxlemma:V-alpha-chernoff-large-delta} (Chernoff)]} &\le& e^{-\frac{2^{t+3} \ln \eps^{-1} / a}{2 \cdot \mu(x)} \cdot \E[\mu(V_{x,\alpha})]}
        = e^{-\frac{8 \cdot 2^t \ln \eps^{-1}}{2 a\mu(x)} \cdot a\mu(x)}
        = e^{-2^{t+2} \ln \eps^{-1}}
    \end{eqnarray*}

    We obtain that:
    \begin{eqnarray*}
        \Ct\left[\mu(V_{x,\alpha}) \cond \mu(V_{x,\alpha}) > T\mu(x) \right]
        &\le& \Ct\left[\mu(V_{x,\alpha}) \cond \mu(V_{x,\alpha}) > \hat{T}\mu(x) \right] \\
        &\le& \sum_{t=0}^\infty \Pr\left[2^t \hat{T} \mu(x) < \mu\left(V_x \cap A_\alpha\right) \le 2^{t+1} \hat{T} \mu(x) \right] \cdot 2^{t+1} \hat{T} \mu(x) \\
        &\le& 2\hat{T}\mu(x) \sum_{t=0}^\infty 2^t \Pr\left[\mu\left(V_x \cap A_\alpha\right) > 2^t \hat{T} \mu(x)\right] \\
        &\le& 2\hat{T}\mu(x) \sum_{t=0}^\infty 2^t \cdot e^{-2^{t+2} \ln \eps^{-1}} \\
        &\le& 2 \cdot (8 \ln \eps^{-1} + 2a) \mu(x) \sum_{t=0}^\infty e^{t \ln 2 - 2^{t+2} \ln \eps^{-1}}
    \end{eqnarray*}
    
    Since $t \ln 2 - 2^{t+2} \ln \eps^{-1} \le -(t+4)\ln \eps^{-1}$ for every $\eps < e^{-1}$ and $t \ge 0$, we obtain that:
    \begin{eqnarray*}
        \Ct\left[\mu(V_{x,\alpha}) \cond \mu(V_{x,\alpha}) > T\mu(x) \right] &\le& (16 \ln \eps^{-1} + 4a) \mu(x) \sum_{t=0}^\infty e^{-(t+4) \ln \eps^{-1}} \\
        &=& (16\ln \eps^{-1} + 4a) \mu(x) \cdot \eps^4\sum_{t=0}^\infty \eps^{t} \\
        \text{[$\eps < 1/2$]} &\le& (16 \ln \eps^{-1} + 4a) \mu(x) \cdot 2\eps^4 \\
        &=& \left(\frac{32 \ln \eps^{-1}}{a} + 8\right)\eps^3 \cdot \eps a \mu(x) \\
        \text{[$a \ge 1$]} &\le& (32 \ln \eps^{-1} + 8)\eps^3 \cdot \eps a \mu(x) \\
        \text{[$\eps < 1/10$]} &<& \frac{1}{10}\eps a\mu(x)
    \end{eqnarray*}
\end{proof}

\lemmaZEZhZbetaZapproxZEZbetaZoverZoneZminusZbetaZnew*
\begin{proof}
    Let $\beta_\mathrm{root} = 1 - \frac{1}{T+1}$ be the break-even point in the definition of $h(\beta) = \min\left\{ \frac{\beta}{1-\beta}, T \right\}$. This is the only non-differentiable point of $h$ in $(0,1)$.

    Let $a = \alpha/\gamma_x$, so that $\E[V_{x,\alpha}] = a\mu(x)$ and $\E\left[\frac{\beta_{x,\alpha}}{1-\beta_{x,\alpha}}\right] = a$.

    \begin{eqnarray*}
        0 \le \E\left[\frac{\beta_{x,\alpha}}{1-\beta_{x,\alpha}}\right] - \E\left[h(\beta_{x,\alpha})\right]
        &\le& \Ct\left[\frac{\beta_{x,\alpha}}{1-\beta_{x,\alpha}} \cond \frac{\beta_{x,\alpha}}{1-\beta_{x,\alpha}} > T \right] \\
        &=& \Ct\left[\frac{\mu(V_x \cap A_\alpha)}{\mu(x)} \cond \frac{\mu(V_x \cap A_\alpha)}{\mu(x)} > T \right] \\
        &=& \frac{1}{\mu(x)} \Ct\left[\mu(V_x \cap A_\alpha) \cond \mu(V_x \cap A_\alpha) > T\mu(x) \right] \\
        \text{[Lemma \ref{lemma:tiny-tail-mu-V-x-alpha-new}]} &\le& \frac{1}{\mu(x)} \cdot \frac{1}{10}\eps a \mu(x)
        = \frac{1}{10}\eps \E\left[\frac{\beta_{x,\alpha}}{1-\beta_{x,\alpha}}\right]
    \end{eqnarray*}

    By Observation \ref{obs:key-observation},  $\E\left[\frac{\beta_{x,\alpha}}{1 - \beta_{x,\alpha}}\right] = \alpha s_x / \mu(x)$, hence $\E[h(\beta_{x,\alpha})] = \left(1 \pm \frac{1}{10}\eps\right) \alpha s_x / \mu(x)$.
\end{proof}

\lemmaZhZerrZscaling*
\begin{proof}
    For $\beta < \frac{3}{T+1}$, note that $h'(\beta) \le \frac{1}{(1 - \beta)^2} \le 2$, hence $h(\beta \pm \delta) = h(\beta) \pm 2 \delta$ for $\beta < \frac{2}{T+1}$.

    Let $g(\beta) = \ln h(\beta)$. If $\frac{1}{T+1} < \beta < 1 - \frac{1}{T+1}$ then $g'(\beta) = \frac{h'(\beta)}{h(\beta)} = \frac{1}{\beta(1-\beta)} \le \frac{(T+1)^2}{T} \le T+3$. If $1 - \frac{1}{T+1} < \beta < 1$ then $g'(\beta) = 0$ since $h$ is fixed there. That is, $g$ is $(T+3)$-Lipschitz in $[\frac{1}{T+1}, 1]$.

    Hence, for every $\beta_1, \beta_2 \in [\frac{1}{T+1}, 1]$ for which $\abs{\beta_1 - \beta_2} \le \delta$, $\abs{g(\beta_1) - g(\beta_2)} \le (T+3)\delta \le \frac{1}{21}\eps$. This means that $\frac{h(\beta \pm \delta)}{h(\beta)} = e^{\pm \frac{1}{21}\eps} = 1 \pm \frac{1}{20}\eps$ in this range.
    
    Overall, $\abs{h(\beta \pm \delta) - h(\beta)} \le \max\left\{2\delta, \frac{1}{20}\eps h(\beta) \right\}$.
\end{proof}

\section{Long technical proofs}
\label{apx:tldr}

This paper contains some elementary statements whose proofs were omitted, such as encapsulations of long arithmetic calculations or simple inclusion-exclusions. To make it verifiable, we put here the proofs of these statements.

\lemmaZdtvZpermutationZleZtwiceZdhist*
\begin{proof}
    Let $\eps = \dhist(\mu ; \tau)$ and $\pi$ be a permutation that realizes this divergence. Let $H = \{ x : \mu(x) > (1 + \eps)\tau(\pi(x)) \}$ and $M = \{ x : \tau(x) < \mu(x) \le (1 + \eps)\tau(\pi(x)) \}$.
    \begin{eqnarray*}
        \dtv(\mu, \pi \tau)
        &=& \sum_{x : \mu(x) > \tau(\pi(x))} (\mu(x) - \tau(\pi(x))) \\
        &=& \sum_{x \in H} (\mu(x) - \tau(\pi(x))) + \sum_{x \in M} (\mu(x) - \tau(\pi(x))) \\
        &\le& \sum_{x \in H} \mu(x) + \sum_{x \in M} \eps \tau(\pi(x)) \\
        &=& \mu(H) + \eps \tau(M)
        \le \eps + \eps \cdot 1
        = 2\eps
    \end{eqnarray*}
\end{proof}

\obsZmedianZamplification*
\begin{proof}
    Let $X_i$ be the probability of the $i$th trial to be outside the desired range and $X = \sum_{i=1}^M X_i$ be the number of trials outside this range. If $X < \frac{1}{2}k$ then the median is inside the desired range. Hence, the probability that the median is wrong is bounded by $\Pr\left[Y \ge \frac{1}{2}k \right]$, where $Y \sim \Bin(k, 1/3)$.

    For parts \ref{median-amplification:9-to-5/6}, \ref{median-amplification:13-to-8/9} and \ref{median-amplification:47-to-99/100}, we explicitly bound the error probability:
    \[  \Pr\left[Y \ge \frac{1}{2}k \right]
        = \sum_{i=\ceil{k/2}}^k \binom{k}{i} \left(\frac{1}{3}\right)^i \left(\frac{2}{3}\right)^{k-i}
        = \frac{1}{3^k} \sum_{i=0}^{\floor{k/2}} \binom{k}{i} 2^i
    \]

    For a separation parameter $1 \le t \le \floor{k/2}$, we can bound the last expression using
    \begin{eqnarray*}
        \frac{1}{3^k} \sum_{i=0}^{\floor{k/2}} \binom{k}{i} 2^i
        &=& \frac{1}{3^k} \left(\sum_{i=0}^{t-1} \binom{k}{i} 2^i + 2^t \binom{k}{t} + \sum_{i=t+1}^\floor{k/2} \binom{k}{i} 2^i \right) \\
        &\le& \frac{1}{3^k} \left(\sum_{i=0}^{t-1} \binom{k}{t-1} 2^i + 2^t \binom{k}t + \sum_{i=t+1}^\floor{k/2} \binom{k}{i} 2^i \right) \\
        &\le& \frac{1}{3^k} \left(2^t \binom{k}{t-1} + 2^t \binom{k}t + \sum_{i=t+1}^\floor{k/2} \binom{k}{i} 2^i \right) 
        = \frac{1}{3^k} \left(2^t \binom{k+1}{t} + \sum_{i=t+1}^\floor{k/2} \binom{k}{i} 2^i \right)
    \end{eqnarray*}
    
    For part \ref{median-amplification:9-to-5/6} we separate in $t=3$:
    \[
        \frac{1}{3^9} \sum_{i=0}^4 \binom{9}{i} 2^i
        \le \frac{2^3 \binom{10}{3} + 2^4 \binom{9}{4}}{3^9}
        = \frac{960 + 2016}{19683}
        < \frac{1}{6}
    \]
    
    For part \ref{median-amplification:13-to-8/9} we separate in $t=5$:
    \[
        \frac{1}{3^{13}} \sum_{i=0}^6 \binom{13}{i} 2^i
        \le \frac{2^5 \binom{14}{5} + 2^6 \binom{13}{6}}{3^{13}}
        = \frac{64064 + 109824}{1594323}
        < \frac{1}{9}
    \]

    For part \ref{median-amplification:47-to-99/100} we separate in $t=23$:
    \[
        \frac{1}{3^{47}}\sum_{i=0}^{23} \binom{47}{i} 2^i
        \le \frac{2^{23} \binom{48}{23}}{3^{47}}
        \le \frac{(8.3887 \cdot 10^6) \cdot (3.0958 \cdot 10^{13})}{2.6588 \cdot 10^{22}}
        < \frac{1}{100}
    \]

    For part \ref{median-amplification:log-to-2c} and part \ref{median-amplification:log-to-24c} we use Chernoff bound:
    \[  \Pr[\mathrm{error}]
        \le \Pr\left[Y \ge \frac{1}{2}k \right]
        = \Pr\left[\Bin(k, 1/3) \ge \frac{1}{3}k + \frac{1}{6}k \right]
        \le e^{-2(k/6)^2 / k}
        =   e^{-k/18}
    \]

    In both parts, $k = \ceil{30 \ln c^{-1}} \ge 30 \ln c^{-1}$, hence $\Pr\left[Y \ge \frac{1}{2}k \right] \le e^{-(30/18)\ln c^{-1}} = e^{-(2/3) \ln c^{-1}} c$.

    \begin{itemize}
        \item Part \ref{median-amplification:log-to-2c}: if $c < 1/3$ then $e^{-(2/3) \ln c^{-1}} < \frac{1}{2}$.
        \item Part \ref{median-amplification:log-to-24c}: if $c < 1/150$ then $e^{-(2/3) \ln c^{-1}} < \frac{1}{24}$. \qedhere
    \end{itemize}
\end{proof}

\obsZmajorityZamplificationZadZhoc*
\begin{proof}
    For $k = 3$:
    \[
        \Pr\left[\Bin\left(k, \frac{5}{8}\right) \le \frac{1}{2}k\right]
        \le \Pr\left[\Bin\left(3, \frac{5}{8}\right) \le 1\right]
        = (3/8)^3 + 3 \cdot (5/8) \cdot (3/8)^2
        = \frac{162}{512}
        < \frac{1}{3}
    \]
    
    For $k \ge 45$:
    \[
        \Pr\left[\Bin\left(k, \frac{5}{8}\right) \le \frac{1}{2}k\right]
        \le e^{-2 \left(\frac{5}{8} - \frac{1}{2}\right)^2 k}
        = e^{-\frac{1}{32}k}
        \le e^{-45/32}
        < \frac{1}{4} \qedhere
    \]
\end{proof}

\lemmaZcZtruncationZadditiveZerrZtwoc*

\begin{proof}
    In this proof we use the contribution notation of Definition \ref{def:contribution}.
    
    Let:
    \begin{itemize}
        \item $L_\mu = \{ x \in \Omega : f_\mu(x) = 0 \}$.
        \item $L_\tau = \{ x \in \Omega : f_\tau(x) = 0 \}$.
        \item $H_\mu = \{ x \in \Omega : \mu(x) > \tau(x) \} \setminus (L_\mu \cup L_\tau)$.
        \item $H_\tau = \{ x \in \Omega : \tau(x) > \mu(x) \} \setminus (L_\mu \cup L_\tau)$.
        \item $M = \{ x \in \Omega : \tau(x) = \mu(x) \} \setminus (L_\mu \cup L_\tau)$.
    \end{itemize}

    Observe that:
    \begin{eqnarray*}
        \E_{x \sim \mu}\left[ \max\left\{0, 1 - \frac{f_\tau(x)}{\mu(x)} \right\} \right]
        &=& \Ct_{x \sim \mu}\left[1 \cond f_\tau(x) = 0 \right] + \Ct_{x \sim \mu}\left[\max\left\{0, 1 - \frac{\tau(x)}{\mu(x)} \right\} \cond f_\tau(x) \ne 0 \right] \\
        &=& \Ct_{x \sim \mu}\left[1 \cond x \in L_\tau \right] + \Ct_{x \sim \mu}\left[\max\left\{0, 1 - \frac{\tau(x)}{\mu(x)} \right\} \cond x \notin L_\tau \right] \\
        &=& \mu(L_\tau) + \Ct_{x \sim \mu}\left[\max\left\{0, 1 - \frac{\tau(x)}{\mu(x)} \right\} \cond x \notin L_\tau \right]
    \end{eqnarray*}
    
    Analogously,
    \[\E_{x \sim \tau}\left[ \max\left\{0, 1 - \frac{f_\mu(x)}{\tau(x)} \right\} \right] = \mu(L_\mu) + \Ct_{x \sim \tau}\left[\max\left\{0, 1 - \frac{\mu(x)}{\tau(x)} \right\} \cond x \notin L_\mu \right]\]
    
    By definition, $\dtv(\mu,\tau) = \frac{1}{2}\sum_{x \in \Omega} \abs{\mu(x) - \tau(x)}$. We use the partition $\Omega = L_\mu \cup L_\tau \cup H_\mu \cup H_\tau \cup M$.

    Set of elements that are negligible in at least one side:
    \begin{eqnarray*}
        \sum_{x \in L_\mu \cup L_\tau} \abs{\mu(x) - \tau(x)}
        &=& \sum_{x \in L_\mu} \abs{\mu(x) - \tau(x)} + \sum_{x \in L_\tau} \abs{\mu(x) - \tau(x)} - \sum_{x \in L_\mu \cap L_\tau} \abs{\mu(x) - \tau(x)} \\
        &=& (\tau(L_\mu) \pm \mu(L_\mu)) + (\mu(L_\tau) \pm \tau(L_\tau)) \pm (\mu(L_\mu \cap L_\tau) + \tau(L_\mu \cap L_\tau)) \\
        &=& (\tau(L_\mu) \pm c) + (\mu(L_\tau) \pm c) \pm 2c \\
        &=& \mu(L_\tau) + \tau(L_\mu) \pm 4c
    \end{eqnarray*}

    Observe that if $B$ is a set then $\mu(B) = \Ct_{x\sim \mu}[1 | x \in B]$. Hence,
    \begin{eqnarray*}
        \sum_{x \in L_\mu \cup L_\tau} \abs{\mu(x) - \tau(x)} &=& \Ct_{x\sim \mu}\left[1 \cond x \in L_\tau \right] + \Ct_{x\sim \tau}\left[1 \cond x \in L_\mu \right] \pm 4c \\
        \text{[$f_\tau(x) = 0$ for $x \in L_\tau$]} &=& \Ct_{x\sim \mu}\left[\max\left\{0, 1 - \frac{f_\tau(x)}{\mu(x)} \right\} \cond x \in L_\tau \right] \\
        \text{[$f_\mu(x) = 0$ for $x \in L_\mu$]} &&~ + \Ct_{x\sim \tau}\left[\max\left\{0, 1 - \frac{f_\tau(x)}{\mu(x)} \right\} \cond x \in L_\mu \right] \pm 4c
    \end{eqnarray*}

    Set of non-negligible elements in one side that are also heavier than in the other side:
    \begin{eqnarray*}
        \sum_{x \in H_\mu} \abs{\mu(x) - \tau(x)}
        &=& \sum_{x \in H_\mu} (\mu(x) - \tau(x)) \\
        &=& \sum_{x \in H_\mu} \left(\mu(x) \left(1 - \frac{\tau(x)}{\mu(x)} \right)\right) \\
        &=& \E_\mu \left[1_{H_\mu} \cdot \left(1 - \frac{\tau(x)}{\mu(x)}\right)\right] \\
        &=& \Ct_\mu \left[1_{H_\mu} \cdot \left(1 - \frac{\tau(x)}{\mu(x)}\right) \cond x \notin L_\tau \right] \\
        (*) &=& \Ct_\mu \left[\max\left\{0, 1 - \frac{\tau(x)}{\mu(x)}\right\} \cond x \notin L_\tau \right]
    \end{eqnarray*}
    $(*)$: holds since if $x \notin L_\tau, H_\mu$ then $\tau(x) \ge \mu(x)$ and the contribution is zero.
    
    By an analogous analysis,
    \[ \sum_{x \in H_\tau} \abs{\mu(x) - \tau(x)} = \Ct_\tau \left[\max\left\{0, 1 - \frac{\mu(x)}{\tau(x)}\right\} \cond x \notin L_\mu\right]\]

    For the equal-weights part $M$, clearly $\sum_{x \in M} \abs{\mu(x) - \tau(x)} = 0$.

    We sum the partial bounds to obtain:
    \begin{eqnarray*}
        2 \dtv(\mu,\tau)
        &=& \mu(L_\tau) + \Ct_\mu \left[\max\left\{0, 1 - \frac{\tau(x)}{\mu(x)}\right\} \cond x \notin L_\tau \right] 
        \\&&~ + \tau(L_\mu) + \Ct_\tau \left[\max\left\{0, 1 - \frac{\mu(x)}{\tau(x)}\right\} \cond x \notin L_\mu \right] \pm 4c \\
        &=& \E_\mu \left[\max\left\{0, 1 - \frac{f_\tau(x)}{\mu(x)}\right\} \right] + \E_\tau \left[\max\left\{0, 1 - \frac{f_\mu(x)}{\tau(x)}\right\} \right] \pm 4c
    \end{eqnarray*}
\end{proof}

\lemmaZtechnicalZsizesZofZAi*
\begin{proof}
    Observe that:
    \[
        \sum_{i=1}^k N_i
        = N_1 + \sum_{i=1}^{k-1} \ceil{(1 + 120\eps)N_i}
        \le N_1 + (1 + 120\eps) \sum_{i=1}^{k-1} N_i + (k - 1)
    \]
    
    Let $M_t = \sum_{i=1}^t N_t$. Then $M_1 = N_1$ and $M_t \le (N_1 + t - 1) + (1 + 120\eps)M_{t-1}$. By induction,
    \begin{eqnarray*}
        M_k
        &\le& \sum_{i=1}^{k-1} (1 + 120\eps)^{i-1} (N_1 + k - i) + (1 + 120\eps)^{k - 1} M_1 \\
        &\le& (N_1 + k) \sum_{i=1}^{k} (1 + 120\eps)^{i-1} \\
        &\le& (N_1 + k) \frac{(1 + 120\eps)^k}{120\eps}
        \le (N_1 + k) \frac{e^{120 \eps k}}{120\eps}
    \end{eqnarray*}

    We use $k \le \frac{\ln N}{240\eps}$ and $N_1 = 1$ to obtain:
    \[  M_k
        \le \left(1 + \frac{\ln N}{240\eps}\right) \frac{e^{\ln N / 2 + 1}}{120\eps}
        = \frac{e}{\eps^2} \left(\frac{1}{120}\eps + \frac{\ln N}{28800}\right) \sqrt{N}
        \underset{N \ge 2}\le \frac{\sqrt{N} \ln N}{\eps^2} \qedhere \]
\end{proof}

\newpage

\phantomsection
\addcontentsline{toc}{section}{Bibliography}

\bibliographystyle{alpha}
\bibliography{main}

\end{document}